\newcommand\pgfmathsinandcos[3]{%
  \pgfmathsetmacro#1{sin(#3)}%
  \pgfmathsetmacro#2{cos(#3)}%
}
\newcommand\LatitudePlane[3][current plane]{%
  \pgfmathsinandcos\sinEl\cosEl{#2} 
  \pgfmathsinandcos\sint\cost{#3} 
  \pgfmathsetmacro\yshift{\cosEl*\sint}
  \tikzset{#1/.style={cm={\cost,0,0,\cost*\sinEl,(0,\yshift)}}} %
}
\newcommand\DrawLatitudeCircle[2][1]{
  \LatitudePlane{\angEl}{#2}
  \tikzset{current plane/.prefix style={scale=#1}}
  \pgfmathsetmacro\sinVis{sin(#2)/cos(#2)*sin(\angEl)/cos(\angEl)}
  \pgfmathsetmacro\angVis{asin(min(1,max(\sinVis,-1)))}
  \draw[current plane] (\angVis:1) arc (\angVis:-\angVis-180:1);
  \draw[current plane,dashed] (180-\angVis:1) arc (180-\angVis:\angVis:1);
}
\numberwithin{equation}{section}
\theoremstyle{plain}
\newtheorem{theorem}{Theorem}[section]
\newtheorem{proposition}[theorem]{Proposition}
\newtheorem{lemma}[theorem]{Lemma}
\theoremstyle{definition}
\newtheorem{definition}[theorem]{Definition}
\newtheorem{remark}[theorem]{Remark}
\newcommand{\C}{\mathcal{C}}
\renewcommand{\L}{\Lambda}
\renewcommand{\k}{\kappa}
\renewcommand{\t}{\tau}
\newcommand{\R}{\mathbb{R}}
\newcommand{\mr}{\mathring}
\newcommand{\tC}{\tilde{\mathcal{C}}}
\newcommand{\Cym}{\mathfrak{C}}
\DeclareMathOperator{\tho}{\text{\rm\th}}
\DeclareMathOperator{\edt}{\text{\rm\dh}}
\begin{document}

\title{Conformal geometry and half-integrable spacetimes}

\author{Bernardo Araneda\footnote{Email: \texttt{bernardo.araneda@aei.mpg.de}} \\
Max-Planck-Institut f\"ur Gravitationsphysik \\ 
(Albert Einstein Institut), Am M\"uhlenberg 1, \\
D-14476 Potsdam, Germany}

\date{October 12, 2021}

\maketitle

\begin{abstract}
Using a combination of techniques from conformal and complex geometry, we show the
potentialization of 4-dimensional closed Einstein-Weyl structures which are half-algebraically special 
and admit a ``half-integrable'' almost-complex structure.
That is, we reduce the Einstein-Weyl equations to a single, conformally invariant, non-linear scalar equation,
that we call the ``conformal HH equation'', and we reconstruct the conformal structure 
(curvature and metric) from a solution to this equation.
We show that the conformal metric is composed of: a conformally flat part, a conformally half-flat part related 
to certain ``constants'' of integration, and a potential part that encodes the full non-linear curvature,
and that coincides in form with the Hertz potential from perturbation theory.
We also study the potentialization of the
Dirac-Weyl, Maxwell (with and without sources), and Yang-Mills systems.
We show how to deal with the ordinary Einstein equations by using a simple trick.
Our results give a conformally invariant, coordinate-free, generalization 
of the hyper-heavenly construction of Pleba\'nski and collaborators.
\end{abstract}

\section{Introduction}

One of the main connections between the notions of integrability in complex geometry and in 
differential equations is given by Penrose's twistor theory \cite{Penrose1976, MW},
which relates solutions to the self-dual (SD) Einstein vacuum equations 
to integrable almost-complex structures on twistor space. 
While in this approach the complex structure is defined on the latter space, 
one can also understand the Ricci-flat, SD conditions in terms of (orthogonal) 
complex structures on the 4-dimensional (4d) manifold, via Pleba\'nski's `heavenly equations' \cite{Plebanski1975}.
These equations encode the SD Einstein equations into a scalar PDE for a {\em potential}, 
in terms of which the metric and curvature can be fully reconstructed;
this approach then leads to a ``potentialization''.

The Ricci-flat, SD equations are equivalent to the hyperk\"ahler condition (see e.g. \cite{Dunajski}): 
there is a 2-sphere of orthogonal complex structures with respect to which the metric is K\"ahler.
Pleba\'nski's work \cite{Plebanski1975} involves complex 4-spaces, so one should talk about
a {\em complex hyperk\"ahler} structure, i.e. the ``complex structures'' are complex-valued maps\footnote{A complex 
hyperk\"ahler structure on a complex manifold is defined by three endomorphisms of the tangent bundle that satisfy
the quaternion algebra and that are parallel w.r.t the Levi-Civita connection (see \cite{MW}).}. 

Pleba\'nski's first heavenly equation translates the Ricci-flat condition 
to a non-linear scalar PDE (a Monge-Amp\`ere equation) 
for the K\"ahler scalar associated to a choice of complex structure. 
This form of the equation involves symmetrically both holomorphic and anti-holomorphic coordinates, 
or in other words, both eigenspaces of the complex structure.
Alternatively, the Ricci-flat equation can also be encoded in Pleba\'nski's second heavenly equation, 
which is a scalar (non-linear) ``wave-like'' equation such that only one of the eigenspaces of the 
chosen complex structure is manifest.
This second point of view is more appropriate for the generalization to the 
`hyper-heavenly equation' of Pleba\'nski and Robinson \cite{PlebanskiRobinson}.

In addition to leading to major simplifications in the study of the SD Einstein equations,  
Pleba\'nski's scalar potentials have found many applications in different areas of current interest, such 
as in supersymmetry \cite{Gaiotto}, algebraic geometry \cite{Bridgeland, Dunajski2020}, 
and scattering amplitudes \cite{Adamo}.
The self-duality condition, however, is a strong restriction on the curvature.
From the complex geometry perspective, self-duality is encoded in the hyperk\"ahler structure,
but the heavenly equations involve the choice of only one explicit complex structure out of the 2-sphere of them.
One may then wonder whether the potentialization result can be extended to 
the case where there is only {\em one} complex, non-K\"ahler structure.
The removal of the K\"ahler condition also allows to consider the possibility of a conformally invariant generalization, 
given that the integrability of an orthogonal almost-complex structure is a conformally invariant property.
In this work we study these questions, assuming a half-algebraically special (instead of self-dual) Weyl tensor.

\bigskip
These questions, and the results we obtain,
are closely related to (and inspired by) the hyper-heavenly (HH) construction of Pleba\'nski and collaborators 
\cite{PlebanskiRobinson, PlebanskiFinley, Boyer}, who extended the heavenly formalism to
the reduction of the full vacuum Einstein equations 
for an algebraically special (complex) 4-space to a single non-linear scalar PDE (the HH equation), by exploiting 
the existence of a special set of four complex coordinates adapted to a foliation by ``null strings''.

Our main result is the generalization of such a potentialization  
to a conformally invariant setting or, more specifically, to (4d) closed Einstein-Weyl 
structures\footnote{We will not be concerned here with conformal gravity in the sense of the Bach-flat equations.}, 
as well as a coordinate-free formulation based on the invariant structures 
associated to complex geometry. 
See \cite{Calderbank} for a review of Einstein-Weyl geometry.
We derive a (coordinate-free) ``conformal HH equation'', and reconstruct the conformal structure 
(curvature and metric) from a solution to this equation.
We comment on relations to perturbation theory,
and also study the potentialization of the
Dirac-Weyl, Maxwell (with and without sources), and Yang-Mills systems.

Our method is based on an involutive, maximally isotropic distribution on the tangent bundle.
Recall that, in dimension $n$, this is a subbundle $\tilde{L}$ such that $g|_{\tilde{L}}=0$ (isotropic), 
it is $(n/2)$-dimensional (maximal), and it satisfies $[\tilde{L},\tilde{L}]\subset\tilde{L}$ (involutive).
This immediately suggests a connection with complex geometry and twistor theory.
More precisely, for an $n$-dimensional orientable {\em Riemannian} manifold, we have the isomorphisms
(for a given choice of orientation)
\begin{align}
 \left\{ \begin{matrix} \text{orthogonal almost} \\ \text{complex structures} \end{matrix} \right\} 
 \;\; \cong \;\;
 \left\{ \begin{matrix} \text{maximal isotropic} \\ \text{subspaces} \end{matrix} \right\} 
 \;\; \cong \;\;
 \left\{ \begin{matrix} \text{projective} \\ \text{pure spinors} \end{matrix} \right\}  \label{isomorphisms}
\end{align}
see \cite[Chapter IV, Propositions 9.7 and 9.8]{Lawson}.
This means that, at least in the Riemannian setting, 
we can understand complex structures, isotropic subspaces, and projective (pure) spinors 
as different aspects of essentially the same object: twistor space.
This space can, itself, be given a natural almost-complex structure, and the Atiyah-Hitchin-Singer 
theorem \cite{Atiyah} establishes that, in 4d, this is integrable (and thus the twistor space is a complex three-fold)
if and only if the Weyl tensor is SD.

On the other hand, suppose that {\em a specific} element (as opposed to the whole space) 
in one of the spaces in \eqref{isomorphisms} has some special property,
then this translates into some special conditions for the associated elements of the others.
In particular, an {\em involutive} isotropic subspace corresponds, on the one hand, to an {\em integrable} 
almost-complex structure, and on the other hand, to a (projective, pure) spinor field satisfying some differential equation.
These equivalences give a unified perspective on the structures that lead to the integration procedure in this work.

\bigskip
An important point, however, is that the isomorphisms \eqref{isomorphisms} are valid for {\em Riemann} signature,
but in this work we are interested in working with generic signature (or complex metrics).
Let us restrict the discussion to four dimensions. 
All spinors in 4d are pure, so purity is not an extra condition here.
It is clear that a projective spinor defines a maximal isotropic subspace and viceversa,
so the issue is the equivalence with complex structures.
First, recall that an orthogonal almost-complex structure is a $(1,1)$ tensor $J$ that squares 
to $-1$ and that preserves the metric. 
Ignoring for the moment reality conditions, one can show \cite{Araneda2021} that 
in {\em any} signature (Riemann, Lorentz or split), 
this map is equivalent to {\em two} projective spinors with the same chirality. 
This is a point in a complex sphere 
$\mathbb{C}S^{2}=(\mathbb{CP}^{1}\times\mathbb{CP}^{1})\backslash\mathbb{CP}^{1}$.
Reality of $J$ then translates into reality conditions for spinors.
In Riemann and split signature, spinor complex conjugation preserves chirality so 
$J$ is real as long as its defining spinors are `real'. 
This reduces $\mathbb{C}S^{2}$ to a real sphere $\mathbb{CP}^{1}=S^{2}$
in the Riemannian case, and to a hyperboloid $\mathbb{CP}^{1}\backslash\mathbb{RP}^{1}$
in split signature, so, in both cases, $J$ is equivalent to only one projective spinor.
In Lorentz signature, by contrast, spinor complex conjugation 
changes chirality, and this translates into the fact that 
$J$ is necessarily complex-valued\footnote{There are different ways to see that $J$ is complex-valued, 
see e.g. the work of Flaherty \cite{Flaherty, Flaherty2}. See also remark \ref{Remark:Terminology} below.}.
Therefore, for Lorentz signature, the isotropic subspace $\tilde{L}$ defines only ``half'' of $J$. 
In the present work, either $J$ is fully determined by the geometry (Riemann and split signature),
or we can choose the other half at will (Lorentz signature and complex metrics); 
see section \ref{Sec:FullComplexStructure}.

\smallskip
In any case, the involutivity of $\tilde{L}$ implies some notion of ``integrability'' for $J$. 
In Riemann and split signature this is just the usual notion, whereas in Lorentz signature (and complex metrics),
we have ``half-integrability'': since the eigenspaces of $J$ are not related, only ``half of $J$'' is integrable.
We also mention that two actually different notions of ``halves'' are involved in this work, 
but we postpone this discussion to section \ref{Sec:Halves}.

\smallskip
Now, even though the heavenly equations may involve the choice of only one explicit complex structure, 
the hyperk\"ahler condition is still implicitly used, 
since the existence of a parallel spin frame is assumed (see \cite[Eq. (1.34)]{Plebanski1975}),
and this implies SD curvature. 
That is, even though only one point of the ``sphere'' $\mathbb{C}S^{2}$ is made explicit, 
the special structure of the rest still plays a central role.
In our work we choose a half-integrable $J$ but there are no other special conditions, apart 
from half-algebraic speciality of the Weyl tensor.
However, it turns out that a choice of $J$ is sufficient to construct a general formalism that,
when restricted to special conditions, somehow mimics, in a conformally invariant manner, the 
features of hyperk\"ahler that are needed.
The price to pay is, loosely speaking, that the whole construction is tied to such a specific $J$.
In more concrete terms, $J$ produces a conformally invariant connection, 
which we denote by $\C_{a}=\C_{AA'}$, see section \ref{Sec:ConnectionC}.
Half-integrability of $J$, together with algebraic speciality of the Weyl tensor, 
imply that the projective spinor field $o^{A}$ associated to the involutive eigenbundle satisfies
\begin{align}
 \C_{AA'}o^{B} = 0,  \qquad [o^{A}\C_{AA'}, o^{B}\C_{BB'}] = 0, \label{SpecialConditions}
\end{align}
where the commutator acts on {\em any} (primed or unprimed) spinor field.
These two conditions are sufficient to construct the two key objects that underlie our procedure in practice:
$(i)$ a special de Rham complex, whose local exactness leads to the existence of potentials, 
and $(ii)$ parallel frames for all the relevant fibre bundles involved. 
Notice that in twistor theory, operators of the form $o^{A}\nabla_{AA'}$ also play a special role since 
they represent differentiation along twistor surfaces; in addition, the analogue of the 
second equation in \eqref{SpecialConditions} corresponds to SD curvature.
The difference is that, in that case, the spinor field $o^{A}$
varies over the whole sphere since there are no preferred complex structures, 
whereas here $o^{A}$ is fixed and this changes $\nabla_{AA'}$ to $\C_{AA'}$.
We illustrate this in Fig. \ref{Figure:J} for the Riemannian case.

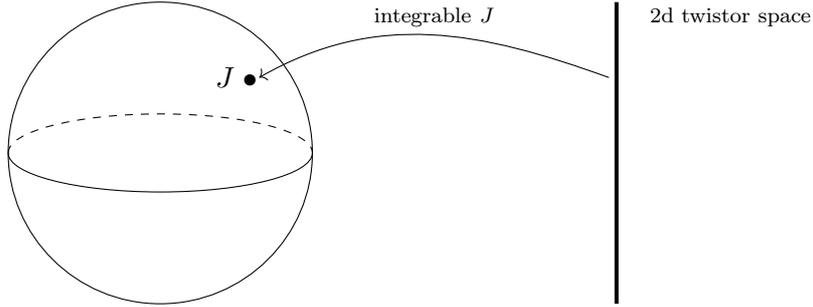
\begin{figure}
\centering
\begin{tikzpicture} 
\def\R{2} 
\def\angEl{15} 
\def\angAz{-100} 
\draw (0,0) circle (\R);
\DrawLatitudeCircle[\R]{0}
\node (J) at (1,1) {$J \; \bullet$};
\draw[<-] (1.3,1) to[out=30, in=160] (5.9,1);
\node (intJ) at (3.6,1.8) {\scriptsize integrable $J$};
\draw[line width=1.5pt] (6,-2) -- (6,2);
\node (TJ) at (7.5,1.8) {\scriptsize 2d twistor space};
\end{tikzpicture}
\caption{The 2-sphere $\mathbb{CP}^{1}=S^{2}$ of orthogonal almost-complex structures in Riemann signature. 
For a hyperk\"ahler geometry, all points in $S^{2}$ are equivalent in the sense that they all define integrable $J$'s.
In turn, each $J$ defines a 2d twistor space, which is a fiber of the 3d twistor space over $\mathbb{CP}^{1}$ 
(see \cite[Section 5.2]{Araneda2021}).
In the present paper, only one $J$ (i.e. one point in $S^{2}$, or more generically $\mathbb{C}S^2$) is privileged, 
and all geometric constructions, such as the connection $\C_{a}$, are adapted to this $J$. 
The only remaining twistor structure is a 2d twistor space associated to the (half-)integrability of $J$.}
\label{Figure:J}
\end{figure}

\bigskip
\noindent
{\bf Overview.}
In section \ref{Sec:Geometry} we give the necessary background for the rest of the work: 
Einstein-Weyl systems, almost-complex structures, the formalism of the complex-conformal connection $\C_{AA'}$, 
the integration machinery based on special de Rham complexes, and a notion of ``formal solutions'' and
``constants of integration'' that are needed in later sections.
In section \ref{Sec:LinearFields} we study linear fields: the Dirac-Weyl and Maxwell systems;
and in section \ref{Sec:YangMills} we generalize the construction to the non-linear Yang-Mills fields.
In section \ref{Sec:Einstein} we analyse closed Einstein-Weyl structures, deriving what we 
call the ``conformally invariant HH equation'', and showing also how to study the ordinary Einstein equations from this perspective.
In section \ref{Sec:Reconstruction} we reconstruct the conformal structure in terms of a solution to the conformal HH equation.
A summary and conclusions are given in section \ref{Sec:Conclusions}.
We include three appendices with additional identities and details of calculations.

\bigskip
\noindent
{\bf Notation and conventions.}
We work with 4d, orientable ``spacetimes'' which can be complex or real-analytic, with generic signature.
We follow the notation and conventions of Penrose and Rindler \cite{PR1, PR2}; 
in particular, we will use the abstract index formalism. 
Abstract indices are $a,b,c,...$, $A,B,C,...$, $A',B',C',...$ and $i,j,k,...$
for, respectively, spacetime, left and right spinors, and Yang-Mills objects.
We also follow the usual convention of associating tensor indices with pairs of spinor indices of 
opposite chirality, that is $a\equiv AA'$, $b\equiv BB'$, etc.
Where explicitly stated, concrete numerical indices will also be used,
and they are the boldface counterparts of their abstract versions, namely
${\bf a,b,}... =0,...,3$; ${\bf A,B,}... = 0,1$; ${\bf A',B'},... = 0',1'$; ${\bf i}, {\bf j}, ... =1,...,n$.
In addition, following also the notation in \cite{PR1}, we will often denote 
an arbitrary collection of {\em abstract spacetime/spinor} ({\em not} Yang-Mills) indices by $\mathscr{A}$, that is
(see \cite[pp. 87--91]{PR1})
\begin{equation}
 \varphi^{\mathscr{A}} \equiv \varphi^{DE...P'Q'...}_{KL...X'Y'...}. \label{NotationIndices}
\end{equation}

\smallskip
\noindent
{\bf Related work.} 
Our work is inspired in the hyper-heavenly (HH) construction of Pleba\'nski and collaborators 
\cite{PlebanskiRobinson, PlebanskiFinley, Boyer}. 
This construction was studied and developed further mainly in the 80s;
we mention in particular the works by Torres del Castillo \cite{TdC}, Hickman and McIntosh \cite{Hickman}, 
R\'ozga \cite{Rozga}, and Jeffryes \cite{Jeffryes}.
This last work analysed exhaustively several systems, in particular
the coupled Einstein-Yang-Mills equations, and we will find some parallelisms 
among Jeffryes' results and ours (we mention however that we do not treat the coupled Einstein-Yang-Mills system). 
Regarding the study of complex structures in relativity, we mention the works of Flaherty \cite{Flaherty, Flaherty2}.
The relation between complex structures and conformal connections was noticed by Bailey \cite{Bailey1}.
Finally, we would like to mention the works of Bailey \cite{Bailey2} and Penrose \cite{Penrose2DTa, Penrose2DTb} 
about a complex surface in projective twistor space (what we call a ``2d twistor space'').

\section{Complex and conformal geometry in 4d}\label{Sec:Geometry}

\subsection{Einstein-Weyl systems}\label{Sec:EinsteinWeyl}

Let $M$ be a manifold equipped with a metric $g_{ab}$. A conformal transformation is a map
\begin{equation}
 g_{ab} \to \Omega^{2}g_{ab}, \label{ConformalTransformation}
\end{equation}
where $\Omega$ is a positive-definite, smooth scalar field.
The conformal class of metrics associated to $g_{ab}$ is $[g_{ab}]=\{\Omega^{2}g_{ab} \;|\; \Omega\in C^{\infty}(M), \Omega>0\}$, 
and we call the pair $(M,[g_{ab}])$ a conformal structure.

A {\em Weyl connection} for the conformal structure is a torsion-free, linear connection 
$\nabla^{\rm w}_{a}$ such that for any representative 
$g_{ab}\in[g_{ab}]$, it holds $\nabla^{\rm w}_{a}g_{bc}=-2{\rm w}_{a}g_{bc}$ for some 1-form 
${\rm w}_{a}$, that we call the ``Weyl 1-form''.
It follows that for any other representative $\hat{g}_{ab}=\Omega^{2}g_{ab}$, one has 
$\nabla^{\rm w}_{a}\hat{g}_{ab}=-2\hat{{\rm w}}_{a}\hat{g}_{bc}$, where $\hat{{\rm w}}_{a}={\rm w}_{a}-\partial_{a}\log\Omega$.
A conformal structure equipped with a Weyl connection is called a Weyl manifold \cite{Calderbank}.
Given a particular $g_{ab}\in[g_{ab}]$, with Levi-Civita connection $\nabla_{a}$, 
the relation between $\nabla^{\rm w}_{a}$ and $\nabla_{a}$ when acting on an arbitrary tensor, say $T_{a}{}^{b}$, is
\begin{equation*}
 \nabla^{\rm w}_{a}T_{b}{}^{c} = \nabla_{a}T_{b}{}^{c} - K_{ab}{}^{d}T_{d}{}^{c} + K_{ad}{}^{c}T_{b}{}^{d},
\end{equation*}
where $K_{ab}{}^{c} = {\rm w}_{a}\delta_{b}^{c} + {\rm w}_{b}\delta_{a}^{c} - {\rm w}_{e}g^{ce}g_{ab}$.

The curvature tensor of $\nabla^{\rm w}_{a}$, which we denote by $R^{\rm w}_{abc}{}^{d}$, is introduced as usual by 
$[\nabla^{\rm w}_{a}, \nabla^{\rm w}_{b}]v^{d} = R^{\rm w}_{abc}{}^{d}v^{c}$ for any vector field $v^{a}$.
The {\em Einstein-Weyl equations} are the condition that the symmetric, trace-free part of the Ricci tensor of 
$\nabla^{\rm w}_{a}$ vanishes, which is equivalent to
\begin{equation}
 R^{\rm w}_{abc}{}^{b} + R^{\rm w}_{cba}{}^{b} = \lambda^{\rm w} g_{ab}, \label{EWeq}
\end{equation}
for some function $\lambda^{\rm w}$. Unlike the ordinary Einstein case, the function $\lambda^{\rm w}$ 
is generally {\em not} constant, unless one requires some other special conditions.

A {\em closed} Einstein-Weyl manifold is a Weyl manifold in which, besides \eqref{EWeq}, the 
Weyl 1-form is closed, ${\rm dw}=0$. It is, therefore, locally exact: there exists, locally, a scalar field $u$ such that
\begin{equation}
 {\rm w}_{a} = \partial_{a}\log u, \label{closedWeylform}
\end{equation}
where $u$ transforms under conformal rescaling \eqref{ConformalTransformation} as $u\to\Omega^{-1}u$.
In this work we will focus on closed Einstein-Weyl systems.
For this case, the function $\lambda^{\rm w}$ in \eqref{EWeq} {\em is} actually constant \cite{Calderbank}.

We give the spinor decomposition of $R^{\rm w}_{abc}{}^{d}$, for the case in which \eqref{closedWeylform} holds, 
in appendix \ref{App:WeylConnections}.

\subsection{Almost-complex structures}\label{Sec:AlmostComplexStructures}

For the purposes of this work,
an orthogonal almost-complex structure on a 4-manifold equipped with a metric $g_{ab}$
is a tensor field $J^{a}{}_{b}$ such that $J^{a}{}_{c}J^{c}{}_{b}=-\delta^{a}{}_{b}$ and $g_{cd}J^{c}{}_{a}J^{d}{}_{b}=g_{ab}$. 
Notice that these conditions are conformally invariant.
The usual definition also requires $J^{a}{}_{b}$ to be real-valued. 
Ignoring for the moment reality conditions, one can show \cite{Araneda2021} 
that any such map is equivalent to two independent projective spinors with the same chirality. 
More explicitly, choosing for concreteness negative chirality (i.e. unprimed spinors),
$J^{a}{}_{b}$ is equivalent to two projective spinors $[o^{A}]$, $[\iota^{A}]$ via
\begin{equation}\label{JGR}
 J^{a}{}_{b} = i(o^{A}\iota_{B}+\iota^{A}o_{B})\delta^{A'}{}_{B'},
\end{equation}
where we chose the normalization $o_{A}\iota^{A}=1$ (otherwise the factor $(o_{C}\iota^{C})^{-1}$ 
should be included in the right hand side of \eqref{JGR}, see \cite{Araneda2021}).
Recalling that $o^{A}$ and $\iota^{A}$ should really be thought of as {\em projective} spinors, 
if we want to preserve the normalization the possible rescalings are
\begin{equation}
 o^{A} \to \lambda o^{A}, \qquad \iota^{A}\to \lambda^{-1}\iota^{A}. \label{GHP}
\end{equation}
This is the basic `gauge freedom' associated to the `GHP formalism' (see e.g. \cite[Section 4.12]{PR1}); 
here we see that it arises from the inherent 
ambiguity in the representation \eqref{JGR} of an orthogonal almost-complex structure.

The $(+i)$- and $(-i)$-eigenspaces of $J^{a}{}_{b}$, here denoted $L$ and $\tilde{L},$ are respectively 
spanned by vectors of the form $\iota^{B}\nu^{B'}$ and $o^{B}\mu^{B'}$
for varying $\nu^{B'}$ and $\mu^{B'}$.

\medskip
Regarding reality conditions, we see that, since $J^{a}{}_{b}$ can always be expressed in the form \eqref{JGR} 
for some $[o^{A}], [\iota^{A}]$ (with $o_{A}\iota^{A}=1$),
these conditions are reflected in the possibilities of having `real spinors'.
For genuinely complex manifolds, there is no meaning of reality conditions for a 
tensor (or spinor) field, see \cite[Section 6.9]{PR2}.
For Riemannian signature, spinor complex conjugation $\dagger$ preserves chirality, 
so $J^{a}{}_{b}$ is real as long as $\iota_{A}=o^{\dagger}_{A}$.
Similarly, in split signature $J^{a}{}_{b}$ is real provided that the condition $\bar{\iota}_{A}=o_{A}$ 
holds\footnote{Even though the basic spinors in split signature are real, the ones appearing 
in \eqref{JGR} must be complexified versions (this follows from the fact that $o^{A},\iota^{A}$ are {\em principal} spinors 
for $J^{a}{}_{b}$), see \cite[Remark 3.9]{Araneda2021}.}. 
Finally, in Lorentz signature there is no invariant notion of real spinors, 
so $J^{a}{}_{b}$ cannot be real either.

For real $J^{a}{}_{b}$ (Riemann and split signature), 
the eigenspaces $L$ and $\tilde{L}$ are complex conjugated of each other: $\tilde{L}=\bar{L}$.
For complex $J^{a}{}_{b}$ (Lorentz signature and complex manifolds), $L$ and $\tilde{L}$ are independent.
For this reason, we use a tilde for objects associated to the eigenspace $\tilde{L}$ (see e.g. \eqref{ProjectionsC} below).
The fact that $L$ and $\tilde{L}$ are independent in Lorentz signature gives rise to the phenomenon 
of ``half-integrability'', which we will discuss in Section \ref{Sec:HalfIntegrability} below. 
This phenomenon has no analogue in Riemann signature; but it also appears in split signature, 
more precisely in {\em para}-complex geometry.

\smallskip
In this work we will not specify the metric signature, but only make some comments throughout regarding 
special features that are particular to each case.
One can also take the point of view that everything is {\em complexified} from the beginning, 
and then the different cases can be recovered by the imposition of appropriate reality structures, 
see \cite{Adamo17}, \cite{Woodhouse}.

\begin{remark}\label{Remark:Terminology}
After the completion of this work, we learned about some very interesting works in the literature 
\cite{Nurowski, TCh1, TCh2} dealing also with the Lorentzian analogue of a complex structure, 
in which the terminology of {\em (almost) null, Robinson}, and {\em optical structures} is used.
Since here we are leaving the signature unspecified, we will continue to use the term ``almost-complex structure'',
but the terminology of \cite{Nurowski, TCh1, TCh2} mentioned above is more appropriate for the Lorentzian case.
In addition, the formalism developed in \cite{TCh2} will likely be useful for analysing the possible generalization 
of some of the results in the present work to higher dimensions.
\end{remark}

\subsection{The complex-conformal connection $\mathcal{C}_{a}$}\label{Sec:ConnectionC}

There is an intimate link between Weyl connections and orthogonal almost-complex structures, 
as follows from Lee's construction \cite{HCLee}: 
there is a natural Weyl connection for conformal manifolds with a non-degenerate 2-form. 
We know that an orthogonal almost-complex structure $J^{a}{}_{b}$ gives automatically a non-degenerate 2-form,
namely, the fundamental form $\omega_{ab}=g_{bc}J^{c}{}_{a}$ associated to a conformal representative $g_{ab}$. 
The associated Weyl 1-form is called {\em Lee form}, and will be denoted by $f_{a}$.
In four dimensions, and in terms of the Levi-Civita connection $\nabla_{a}$ of $g_{ab}$, it is given by\footnote{In 
GHP notation, $f_{a}$ can be expressed as $f_{a}=\rho' \ell_{a}+\rho n_{a} - \tau'm_{a} - \tau\tilde{m}_{a}$, 
where $\ell_{a}=o_{A}\mu_{A'}$, $n_{a}=\iota_{A}\nu_{A'}$, $m_{a}=o_{A}\nu_{A'}$ and $\tilde{m}_{a}=\iota_{A}\mu_{A'}$, 
and $\mu_{A'}, \nu_{A'}$ is an arbitrary primed spin frame with $\mu_{A'}\nu^{A'}=1$.}
\begin{equation}
 f_{a} = -\tfrac{1}{2}J^{b}{}_{c}\nabla_{b}J^{c}{}_{a}. \label{LeeForm}
\end{equation}
We will denote the corresponding Weyl connection by $\nabla^{\rm f}_{a}$.
Recall that this is defined by the condition $\nabla^{\rm f}_{a}g_{bc}=-2f_{a}g_{bc}$.

\smallskip
In the following we give a brief review of the construction of the complex-conformal connection $\C_{a}$ 
that is central to this work; for further details we refer to \cite{Araneda18, Araneda20, Araneda2021}.

\smallskip
The gauge group for the frame bundle $F$ of the conformal structure is\footnote{For simplicity 
we consider real conformal rescalings.} $G={\rm Spin}(4)\times\mathbb{R}^{+}$,
where ${\rm Spin}(4)$ is the spin group appropriate to the choice of signature.
For concreteness let us focus on the complex case, 
${\rm Spin}(4,\mathbb{C}) = {\rm SL}(2,\mathbb{C})\times{\rm SL}(2,\mathbb{C})$, 
but the construction can also be done in the other cases as well, with minor modifications.
Spinor fields over $M$ (including tensors and scalars) transform under representations of $G$, 
and should be thought of as sections of vector bundles associated to $F$.
However, a choice of almost-complex structure $J^{a}{}_{b}$ gives a reduction of $G$ to the subgroup that preserves it.
Taking into account our choice of normalization for the spinors representing $J^{a}{}_{b}$, 
this subgroup is $G_{o}=\mathbb{C}^{\times}\times{\rm SL}(2,\mathbb{C})\times\mathbb{R}^{+}$.
Denote by $F_{o}$ the principal bundle over $M$ with structure group $G_{o}$. 
Spinor fields now transform under representations of $G_{o}$.
Such representations can be characterized by two real numbers, which we call {\em weights} and are defined as follows.
If, under the transformations \eqref{ConformalTransformation} and \eqref{GHP}, a spinor field 
$\varphi^{\mathscr{A}}$ (recall the notation \eqref{NotationIndices}) transforms as
\begin{equation}
 \varphi^{\mathscr{A}} \to \lambda^{p}\Omega^{w}\varphi^{\mathscr{A}}, \label{GaugeTransformation}
\end{equation}
where $\lambda\in\mathbb{C}^{\times}$, $\Omega\in\mathbb{R}^{+}$, then we say that the field
has `GHP weight' $p$ and `conformal weight' $w$.
Fields transforming this way are to be thought of as sections of vector bundles associated to $F_{o}$; 
these bundles will be denoted by $\mathbb{S}_{p,w}$. 
The notion of weights is simply a reminder of the `gauge freedom' associated to a choice of spinors 
and metric from the equivalence classes.

\smallskip
It will be convenient to think of $p$ and $w$ as functions that take fields and return real numbers, 
so that we will use the notation $p(\varphi^{\mathscr{A}})$ and $w(\varphi^{\mathscr{A}})$. 
For example, a metric in the conformal class has weights $p(g_{ab})=0$ and $w(g_{ab})=2$, 
and the almost-complex structure has weights $p(J^{a}{}_{b})=0$, $w(J^{a}{}_{b})=0$.
The spinors $o^{A}$ and $\iota^{A}$ have weights $p(o^{A})=1$, $w(o^{A})=w^{0}$ and $p(\iota^{A})=-1$, $w(\iota^{A})=w^{1}$,
where $w^{0}$ and $w^{1}$ can be chosen at will as long as they satisfy $w^{0}+w^{1}+1=0$. 
Here we will choose 
\begin{equation}
 w^{0}=0, \qquad w^{1}=-1. \label{ConfWeights}
\end{equation}

The Weyl connection $\nabla^{\rm f}_{a}$ gives a local connection 1-form in $F$.
In turn, the reduced bundle $F_{o}$ inherits a connection from the reduction.
This 1-form allows to define a covariant derivative on the associated bundles $\mathbb{S}_{p,w}$. 
We denote this covariant derivative by $\mathcal{C}_{a}$.
Its action on a generic spinor field $\varphi^{\mathscr{A}}$ with weights $p$ and $w$ is explicitly
\begin{equation}
 \C_{a}\varphi^{\mathscr{A}}  = \nabla^{\rm f}_{a}\varphi^{\mathscr{A}}+ (w f_{a} + p P_{a})\varphi^{\mathscr{A}},
 \label{ActionOfC}
\end{equation}
where $P_{a} := \iota_{B}\nabla^{\rm f}_{a}o^{B}$ (for the choice \eqref{ConfWeights}).
Under a transformation \eqref{GaugeTransformation}, it holds
\begin{equation}
 \mathcal{C}_{a}\varphi^{\mathscr{A}} \to \lambda^{p}\Omega^{w}\mathcal{C}_{a}\varphi^{\mathscr{A}}.
\end{equation}

Of central importance to this work are the projections of $\C_{a}$ to the eigenbundles of $J$. We define 
\begin{equation}
 \tC_{A'} := o^{A}\C_{AA'}, \qquad \C_{A'}:=\iota^{A}\C_{AA'}. \label{ProjectionsC}
\end{equation}
These operators correspond to covariant differentiation (of sections of $\mathbb{S}_{p,w}$)
along vectors in $\tilde{L}$ and $L$ respectively:
if $\tilde{P}, P$ are the projectors to $\tilde{L}, L$, then $\tC_{A'}$ and $\C_{A'}$ correspond 
to $\C_{\tilde{P}X}$ and $\C_{PX}$ for any vector field $X$.

In appendix \ref{App:identities} we give some details and identities for the curvature of $\mathcal{C}_{a}$ 
and the operators \eqref{ProjectionsC}, 
both for arbitrary spacetimes and for the special spacetimes considered in this work.

\subsection{Half-integrability}\label{Sec:HalfIntegrability}

In Riemann and split signature, an orthogonal almost-complex structure is said to be integrable 
if its $(-i)$-eigenbundle $\tilde{L}$ is involutive under the Lie bracket of vector fields. 
Since the $(+i)$-eigenbundle is the complex conjugate, it is then also involutive. 
In Lorentz signature, the eigenbundles are independent, so involutivity of one of them 
does not imply involutivity for the other. 
For this reason we talk about ``half-integrability'' in this work, and we say that $J$
is ``half-integrable'' if only one of its eigenbundles is involutive. 

\medskip
The following discussion applies to any signature.
Let $J$ be an orthogonal almost-complex structure, and let $o^{A}$, $\iota^{A}$ be the spinors representing it 
as in \eqref{JGR}. A calculation shows (see \cite{Araneda2021})
that the eigenbundle $\tilde{L}$ is involutive if and only if the spinor field $o^{A}$ satisfies the equation
\begin{equation}
 o^{A}o^{B}\nabla_{AA'}o_{B} = 0. \label{SFR}
\end{equation}
We call this equation the ``shear-free condition'' on $o^{A}$, since, in Lorentz signature, 
when \eqref{SFR} holds it implies that the real vector field $o^{A}\bar{o}^{A'}$ is tangent to a null geodesic congruence 
which is shear-free; see \cite[Chapter 7]{PR2}.
Using identity \eqref{Co}, we see that \eqref{SFR} is equivalent to the equation
\begin{equation}
 \mathcal{C}_{AA'}o^{B} = 0, \label{ParallelSpinor}
\end{equation}
thus, integrability is encoded in parallel spinors under $\mathcal{C}_{a}$.
We notice though that \eqref{ParallelSpinor} is actually a {\em non-linear} equation for $o^{B}$, 
since $\mathcal{C}_{AA'}$ also depends on $o^{B}$ via the associated $J$.

\medskip
Besides the integrability property \eqref{SFR}, in this work we also require $o^{A}$ to be a repeated 
principal spinor of the Weyl tensor, i.e. we require $o^{A}$ to satisfy
\begin{equation}
 \Psi_{ABCD}o^{B}o^{C}o^{D} = 0 \label{rPND}
\end{equation}
where $\Psi_{ABCD}$ is the anti-self-dual (ASD) Weyl curvature spinor of the conformal structure.
We then have the following:
\begin{proposition}[See \cite{Araneda20, Araneda2021}]
When \eqref{SFR} and \eqref{rPND} hold, the operator $\tC_{A'}$ defined in \eqref{ProjectionsC} satisfies
\begin{equation}
 [\tC_{A'},\tC_{B'}] = 0 \label{FlatConnection}
\end{equation}
when acting on any weighted spinor field $\varphi^{\mathscr{A}}$.
\end{proposition}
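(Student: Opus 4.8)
The plan is to compute the commutator $[\tC_{A'},\tC_{B'}]$ directly on an arbitrary weighted spinor field $\varphi^{\mathscr A}$ by expanding $\tC_{A'}=o^{A}\C_{AA'}$ and pushing the spinor $o^{A}$ through the connection, then collecting terms by type. First I would write
\begin{equation*}
 [\tC_{A'},\tC_{B'}]\varphi = o^{A}o^{B}\C_{AA'}\C_{BB'}\varphi + (o^{A}\C_{AA'}o^{B})\C_{BB'}\varphi - (A'\leftrightarrow B').
\end{equation*}
By the parallel-spinor equation \eqref{ParallelSpinor}, which holds because of \eqref{SFR}, the terms $o^{A}\C_{AA'}o^{B}$ vanish identically; so the whole commutator reduces to $o^{A}o^{B}[\C_{AA'},\C_{BB'}]\varphi$, the contraction of the full curvature of $\C_{a}$ with $o^{A}o^{B}$. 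The task is therefore to show that $o^{A}o^{B}$ annihilates the curvature two-form of $\C_{a}$.

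Next I would invoke the curvature identities for $\C_{a}$ collected in the appendix (referenced at the end of Section~\ref{Sec:ConnectionC}). The curvature of $\C_{a}$ acting on $\varphi^{\mathscr A}$ is built from: (i) the conformally-invariant pieces of the Riemann curvature of the Weyl connection $\nabla^{\rm f}_{a}$ — i.e. the ASD Weyl spinor $\Psi_{ABCD}$, the SD Weyl spinor, the trace-free Ricci (which drops out by the Einstein--Weyl condition \eqref{EWeq}, or at least its relevant symmetric part does), and the scalar curvature/$\lambda^{\rm w}$ term; and (ii) the curvature $2$-form of the $\mathbb{C}^{\times}$ factor, whose symmetric-spinor part is again controlled by $\Psi_{ABCD}$ and by derivatives of $o^{A}$. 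The key algebraic point is that when one contracts the ASD Weyl term with $o^{A}o^{B}$ and lets it act on a spinor, one is left with expressions proportional to $\Psi_{ABCD}o^{B}o^{C}$ or $\Psi_{ABCD}o^{B}o^{C}o^{D}$; the latter vanishes by the repeated-principal-spinor condition \eqref{rPND}, and a Leibniz/derivative argument (differentiating \eqref{rPND}, or using \eqref{ParallelSpinor} to move $o$'s freely) handles the former. The SD Weyl spinor carries only primed indices and so contributes a term symmetric in $A',B'$, which cancels in the antisymmetrization; likewise the scalar/$\lambda^{\rm w}$ term is pure trace and drops out. The $\mathbb{C}^{\times}$-curvature term involves $\C_{[AA'}P_{BB']}$, and using $P_{a}=\iota_{B}\nabla^{\rm f}_{a}o^{B}$ together with \eqref{ParallelSpinor} one sees its $o^{A}o^{B}$-contraction also reduces to curvature terms already treated.

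I expect the main obstacle to be bookkeeping the various curvature contributions of $\C_{a}$ — in particular disentangling the genuine Weyl-tensor pieces from the pieces generated by the non-integrability of the $\mathbb{C}^{\times}$-bundle and from the conformal (Lee-form) part — and verifying that every term that does not manifestly vanish by symmetry in $A'B'$ is in fact proportional to $\Psi_{ABCD}o^{B}o^{C}$ or $\Psi_{ABCD}o^{B}o^{C}o^{D}$ after the $o^{A}o^{B}$ contraction. This is essentially the step where \eqref{SFR} (to kill connection terms and to commute $o$'s past $\C_{a}$) and \eqref{rPND} (to kill the surviving curvature term) are both genuinely used; once the curvature is reduced to those two scalars, the conclusion \eqref{FlatConnection} is immediate. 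A cleaner alternative, which I would use to cross-check, is to observe that \eqref{ParallelSpinor} says $o^{A}$ trivializes the relevant associated line bundle, so $\tC_{A'}$ is (up to the weight it carries) just the restriction of a flat partial connection along the integrable distribution $\tilde L$, and the vanishing of $[\tC_{A'},\tC_{B'}]$ is the integrability/flatness of that partial connection — this is the coordinate-free shadow of the fact that \eqref{SFR}+\eqref{rPND} make $\tilde L$ into a foliation by "$\alpha$-surfaces" along which $\C_{a}$ is flat.
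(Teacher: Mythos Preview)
The paper does not prove this proposition in the body; it cites \cite{Araneda20,Araneda2021}. Your overall strategy---expand $\tC_{A'}=o^{A}\C_{AA'}$, use \eqref{ParallelSpinor} to kill the cross terms, and reduce to $o^{A}o^{B}[\C_{a},\C_{b}]=\epsilon_{A'B'}\Box^{\C}_{00}$---is exactly right and is the approach underlying the appendix identities (cf.\ \eqref{tildeC-tildeC}, \eqref{IdentityASDCC}). The SD part of the curvature drops out because it sits in $\epsilon_{AB}\Box^{\C}_{A'B'}$ and $o^{A}o^{B}\epsilon_{AB}=0$ (not by any symmetry in $A'B'$, as you wrote, but the conclusion is the same).

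There is, however, a genuine error in step~(i) of your curvature analysis: you appeal to the Einstein--Weyl condition \eqref{EWeq} to dispose of the trace-free Ricci piece. The proposition does \emph{not} assume \eqref{EWeq}; the Einstein--Weyl equations enter only in Section~\ref{Sec:Einstein}. The vanishing of the Ricci-type term $o^{A}o^{B}\tilde\Phi^{\rm f}_{ABC'D'}$ and of the weight-curvature pieces $o^{A}o^{B}\nabla_{A'(A}f^{A'}_{B)}$, $o^{A}o^{B}\nabla_{A'(A}P^{A'}_{B)}$ must come from the structure of $\C_{a}$ itself, not from any field equation. The clean way to see this is to exploit that $\Box^{\C}_{00}$ is a derivation (being a commutator) together with $\tC_{A'}o^{B}=0=\tC_{A'}\iota^{B}$ (eqs.\ \eqref{ParallelSpinor}, \eqref{HalfParallelSpinors}): these give $\Box^{\C}_{00}o^{B}=0=\Box^{\C}_{00}\iota^{B}$, so the unprimed-index action of $\Box^{\C}_{00}$ is trivial, and one is reduced to showing $\Box^{\C}_{00}$ kills weighted \emph{scalars} and primed spinors. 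Reading off components of $\Box^{\C}_{AB}o_{C}=0$ and $\Box^{\C}_{A'B'}o_{C}=0$ against \eqref{CurvatureCGeneral3}--\eqref{CurvatureCGeneral4} then yields the needed identities (this is the origin of \eqref{ooPhiw}--\eqref{ooXw}); the only piece not fixed by \eqref{SFR} alone is the ASD Weyl contribution, and that is where \eqref{rPND} is genuinely used. Your ``differentiate \eqref{rPND}'' suggestion for handling $\Psi_{ABCD}o^{B}o^{C}$ is not what is needed---$o^{A}o^{B}\Psi_{ABCD}$ is \emph{not} zero for type~II---but the combination that actually appears after the weight terms are accounted for is $o^{A}o^{B}o^{C}\Psi_{ABCD}$, which \eqref{rPND} kills directly.
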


Alternatively, \eqref{FlatConnection} can be written as $\tC^{A'}\tC_{A'}=0$. 
In practice, this is the condition we will use to find potentials: if a spinor field satisfies 
$\tC^{A'}\varphi_{A'}=0$, then, roughly speaking, 
$\tC^{A'}\tC_{A'}=0$ implies that there is $\psi$ such that $\varphi_{A'}=\tC_{A'}\psi$.
In order to make this more precise, in section \ref{Sec:deRham} we will introduce 
a special exterior derivative and its associated de Rham complex.

\subsubsection{Two different notions of ``halves''}\label{Sec:Halves}

We make a small digression here to comment on two different meanings of 
``halves'' regarding complex geometry in different signatures.
This is originated in the different possibilities for spinor complex conjugations: 
in the Riemannian and split cases, this conjugation preserves chirality of the spin-spaces, and in the 
Lorentzian case it interchanges the chiralities.
In all cases, the space of orthogonal almost-complex structures has two components, 
one of them associated to positive chirality and the other one to negative chirality.

In Riemann and split signature, the requirement that an almost-complex structure be real-valued 
translates into the condition that the two spinors in \eqref{JGR} are complex conjugated from each other,
and this implies that there is no notion of half-integrability.
On the other hand, positive and negative chirality spinors are independent, which means, for example, 
that the SD and ASD Weyl spinors $\tilde\Psi_{A'B'C'D'}$ and $\Psi_{ABCD}$ are independent, 
so one can talk about one ``side'' being special while the other one remains general.
In particular, this makes the twistor construction in Riemann signature non-trivial: 
the twistor space is a complex 3-fold if and only if all orthogonal almost-complex structures in $M$
are integrable, and this implies $\Psi_{ABCD}\equiv 0$; but one still has a non-trivial geometry 
since $\tilde{\Psi}_{A'B'C'D'}$ remains general.

In Lorentz signature, we mentioned that an orthogonal almost-complex structure is necessarily complex-valued, 
and this implies that the eigenspaces are independent, so we have a notion of half-integrability.
On the other hand, positive and negative chirality spinors are related by complex conjugation,
and this means that the SD and ASD Weyl spinors are complex conjugated, $\tilde\Psi_{A'B'C'D'}=\bar{\Psi}_{A'B'C'D'}$.
As a consequence, unlike the rest of the cases, there are no two ``sides'' regarding the curvature.

\medskip
In summary, let us denote the eigenspaces of an almost-complex structure by $\tilde{L}$, $L$, 
the spin-spaces with opposite chirality by $\mathbb{S}$, $\mathbb{S}'$, and complex conjugation by a bar $\bar{}$ . 
In Riemann and split signature, we have $\tilde{L}=\bar{L}$ and $\bar{\mathbb{S}}\neq\mathbb{S}'$, 
and in Lorentz signature $\tilde{L}\neq\bar{L}$ and $\bar{\mathbb{S}}=\mathbb{S}'$.

\subsection{De Rham complexes}\label{Sec:deRham}

Given an involutive distribution $D$ on the tangent bundle, a simple way to associate a natural de Rham complex to $D$ 
is to observe that $D$ has the structure of a Lie algebroid. See e.g. \cite[Section 3.1]{Gualtieri}.
Then the distribution comes automatically with a natural exterior derivative for differential forms over $D$,
together with an associated de Rham complex. 

\smallskip
Taking as our involutive distribution the eigenbundle $\tilde{L}$,
the above is the basic logic that allows us to deduce the existence of potentials. 
However, we need a slight generalization of this: we need to consider 
differential forms over $\tilde{L}$ with values on the weighted spinor bundles $\mathbb{S}_{p,w}$
of section \ref{Sec:ConnectionC}. 
That is, denoting $\Lambda^{k}=\wedge^{k}\tilde{L}^{*}$, we need to consider the space 
$\Lambda^{k}\otimes\mathbb{S}_{p,w}$.
Notice that elements of the dual $\tilde{L}^{*}$ are of the form $\iota_{A}\kappa_{A'}$ for varying $\kappa_{A'}$.

\smallskip
The first problem that arises when replacing $\Lambda^{k}$ by differential forms with values on a vector bundle is 
that one needs a connection for the latter structure (i.e. a {\em covariant} exterior derivative).
In our case this is easily solved, since for $\mathbb{S}_{p,w}$ we have the 
complex-conformal connection $\mathcal{C}_{a}$ of section \ref{Sec:ConnectionC}.
The second problem is that, in order to get a differential complex, the resulting covariant exterior derivative 
should square to zero. This will follow from \eqref{FlatConnection}.

\smallskip
Since $\tilde{L}$ is 2-dimensional, we have $\Lambda^{k}=\emptyset$ for $k>2$, 
so we only need to consider 0-, 1- and 2-forms.
Mimicking the definition for the Lie algebroid differential (which in turn mimics the ordinary 
exterior derivative), we introduce
\begin{equation}
 {\rm d}^{\tC} : \Gamma(\Lambda^{k}\otimes\mathbb{S}_{p,w}) \to \Gamma(\Lambda^{k+1}\otimes\mathbb{S}_{p,w})
\end{equation}
by its action on $\mathbb{S}_{p,w}$-valued 0- and 1-forms (recall the notation \eqref{NotationIndices}):
\begin{subequations}
\begin{align}
 ({\rm d}^{\tC}h)^{\mathscr{A}}(X) ={}& \mathcal{C}_{X}h^{\mathscr{A}}, \label{CovExtDer0} \\
 ({\rm d}^{\tC}\omega)^{\mathscr{A}} (X,Y) 
 ={}& \mathcal{C}_{X}\omega^{\mathscr{A}}(Y) - \mathcal{C}_{Y}\omega^{\mathscr{A}}(X) 
   - \omega^{\mathscr{A}}([X,Y]_{\mathcal{C}}). \label{CovExtDer1}
\end{align}
\end{subequations}
where $h^{\mathscr{A}}\in\Lambda^{0}\otimes\mathbb{S}_{p,w}$, $\omega^{\mathscr{A}}\in\Lambda^{1}\otimes\mathbb{S}_{p,w}$, 
$X$ and $Y$ are sections of $\tilde{L}$, and we are using the bracket 
$[X,Y]_{\mathcal{C}} = \mathcal{C}_{X}Y-\mathcal{C}_{Y}X$. 
We also make the convention that $\mathscr{A}$ does not include the differential form indices from $\Lambda^{k}$.
To compute $({\rm d}^{\tC})^{2}$, we only need to consider 0-forms, since 
$({\rm d}^{\tC})^{2}\omega^{\mathscr{A}} \equiv 0$ 
for any 1-form $\omega^{\mathscr{A}}$ (on account of the 2-dimensionality of $\tilde{L}$).
From \eqref{CovExtDer0}-\eqref{CovExtDer1}, we find:
\begin{equation*}
 [{\rm d}^{\tC}{\rm d}^{\tC}h]^{\mathscr{A}}(X,Y) 
 = \mathcal{C}_{X}\mathcal{C}_{Y}h^{\mathscr{A}} - \mathcal{C}_{Y}\mathcal{C}_{X}h^{\mathscr{A}} 
 - \mathcal{C}_{[X,Y]_{\mathcal{C}}}h^{\mathscr{A}},
\end{equation*}
so we see that $({\rm d}^{\tC})^{2}$ is given by the curvature 2-form of $\mathcal{C}_{a}$ evaluated on elements of $\tilde{L}$.
More explicitly, since $X,Y\in\Gamma(\tilde{L})$, we have $X^{a}=o^{A}\tilde{x}^{A'}$, $Y^{a}=o^{A}\tilde{y}^{A'}$ 
for some $\tilde{x}^{A'}, \tilde{y}^{A'}$. 
Then a short calculation (analogous to the proof of lemma 4.8 in \cite{Araneda2021}) gives
\begin{equation*}
 [{\rm d}^{\tC}{\rm d}^{\tC}h]^{\mathscr{A}}(X,Y) = \tilde{x}_{B'}\tilde{y}^{B'}\tC_{A'}\tC^{A'}h^{\mathscr{A}}.
\end{equation*}
By virtue of \eqref{FlatConnection}, the right hand side vanishes, so we have indeed $({\rm d}^{\tC})^{2}=0$.
In other words:
\begin{lemma}[See Eq. (3.27) in \cite{Araneda20}, and Lemma 4.8 in \cite{Araneda2021}]
Suppose that $o^{A}$ is shear-free and a repeated principal spinor, eqs. \eqref{SFR} and \eqref{rPND}.
Then the following is a differential complex:
\begin{equation}\label{wderham}
 0 \to \Gamma(\Lambda^{0}\otimes\mathbb{S}_{p,w}) 
 \xrightarrow{{\rm d}^{\tC}} \Gamma(\Lambda^{1}\otimes\mathbb{S}_{p,w}) 
 \xrightarrow{{\rm d}^{\tC}} \Gamma(\Lambda^{2}\otimes\mathbb{S}_{p,w}) \to 0.
\end{equation}
\end{lemma}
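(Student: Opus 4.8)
The plan is to verify directly that $({\rm d}^{\tC})^{2}=0$, which is the only nontrivial thing to check (the maps are well-defined once $\mathcal{C}_{a}$ is, and $\Lambda^{k}=\emptyset$ for $k>2$ makes the complex terminate). Since $({\rm d}^{\tC})^{2}$ applied to a $1$-form vanishes automatically by $2$-dimensionality of $\tilde{L}$, it suffices to treat $0$-forms. First I would take $h^{\mathscr{A}}\in\Gamma(\Lambda^{0}\otimes\mathbb{S}_{p,w})$ and compute $[{\rm d}^{\tC}{\rm d}^{\tC}h]^{\mathscr{A}}(X,Y)$ from the definitions \eqref{CovExtDer0}--\eqref{CovExtDer1}, obtaining the curvature expression
\begin{equation*}
 [{\rm d}^{\tC}{\rm d}^{\tC}h]^{\mathscr{A}}(X,Y) = \mathcal{C}_{X}\mathcal{C}_{Y}h^{\mathscr{A}} - \mathcal{C}_{Y}\mathcal{C}_{X}h^{\mathscr{A}} - \mathcal{C}_{[X,Y]_{\mathcal{C}}}h^{\mathscr{A}}.
\end{equation*}

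Next I would use that $X,Y\in\Gamma(\tilde{L})$ can be written $X^{a}=o^{A}\tilde{x}^{A'}$, $Y^{a}=o^{A}\tilde{y}^{A'}$, so that $\mathcal{C}_{X}=\tilde{x}^{A'}\tC_{A'}$ and $\mathcal{C}_{Y}=\tilde{y}^{A'}\tC_{A'}$ with $\tC_{A'}=o^{A}\C_{AA'}$ as in \eqref{ProjectionsC}. Expanding the commutator and the bracket term, and pulling out the scalar factors $\tilde{x}^{A'},\tilde{y}^{B'}$ (being careful with the terms where $\C$ hits these factors, which cancel against the corresponding pieces of $\mathcal{C}_{[X,Y]_{\mathcal{C}}}$), the expression collapses to the antisymmetric part in the primed indices, i.e. to a multiple of $\tC_{A'}\tC^{A'}h^{\mathscr{A}}$. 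Concretely this reproduces
\begin{equation*}
 [{\rm d}^{\tC}{\rm d}^{\tC}h]^{\mathscr{A}}(X,Y) = \tilde{x}_{B'}\tilde{y}^{B'}\,\tC_{A'}\tC^{A'}h^{\mathscr{A}},
\end{equation*}
exactly as in the computation cited from \cite{Araneda2021}. This is the bookkeeping step and I would not grind through the index manipulations in detail, referring instead to the analogue in \cite{Araneda2021}.

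Finally I would invoke the Proposition of section \ref{Sec:HalfIntegrability}: since by hypothesis $o^{A}$ satisfies the shear-free condition \eqref{SFR} and the repeated-principal-spinor condition \eqref{rPND}, equation \eqref{FlatConnection}, equivalently $\tC^{A'}\tC_{A'}=0$, holds on any weighted spinor field $\varphi^{\mathscr{A}}$ — in particular on $h^{\mathscr{A}}$. Hence the right-hand side above vanishes for all $X,Y\in\Gamma(\tilde{L})$, so $({\rm d}^{\tC})^{2}=0$ on $0$-forms, and therefore on all of $\Lambda^{\bullet}\otimes\mathbb{S}_{p,w}$. This establishes that \eqref{wderham} is a differential complex.

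The main obstacle, such as it is, is purely computational: correctly handling the terms in which $\mathcal{C}$ differentiates the coefficient spinors $\tilde{x}^{A'},\tilde{y}^{A'}$ and checking that they cancel against the $\mathcal{C}_{[X,Y]_{\mathcal{C}}}$ contribution, so that only the curvature-on-$\tilde{L}$ piece survives. Everything conceptual — well-definedness of ${\rm d}^{\tC}$, termination of the complex, and the vanishing of the curvature restricted to $\tilde{L}$ — is either immediate or already supplied by the preceding Proposition.
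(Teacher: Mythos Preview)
Your proposal is correct and follows essentially the same argument as the paper: the paper likewise reduces to checking $({\rm d}^{\tC})^{2}=0$ on $0$-forms, obtains the curvature expression $\mathcal{C}_{X}\mathcal{C}_{Y}h^{\mathscr{A}} - \mathcal{C}_{Y}\mathcal{C}_{X}h^{\mathscr{A}} - \mathcal{C}_{[X,Y]_{\mathcal{C}}}h^{\mathscr{A}}$, writes $X,Y\in\Gamma(\tilde{L})$ as $o^{A}\tilde{x}^{A'}$, $o^{A}\tilde{y}^{A'}$ to reduce this to $\tilde{x}_{B'}\tilde{y}^{B'}\tC_{A'}\tC^{A'}h^{\mathscr{A}}$, and then invokes \eqref{FlatConnection}. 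Your identification of the only delicate point (cancellation of the derivative-of-coefficient terms against the bracket term) is also what the paper defers to the cited computation in \cite{Araneda2021}.
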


The abstract index version of eqs. \eqref{CovExtDer0}-\eqref{CovExtDer1} is:
\begin{subequations}
\begin{align}
 ({\rm d}^{\tC}h)^{\mathscr{A}}_{a} ={}& (\tC_{A'}h^{\mathscr{A}})\iota_{A}, \label{CovExtDer0Abstract}  \\
 ({\rm d}^{\tC}\omega)_{ab}^{\mathscr{A}} ={}& (\tC_{C'}\tilde{\omega}^{C'\mathscr{A}})\iota_{A}\iota_{B}\epsilon_{A'B'},
 \label{CovExtDer1Abstract}
\end{align}
\end{subequations}
where in the second line we defined $\tilde{\omega}^{C'\mathscr{A}}  \equiv o^{C}\omega^{C'\mathscr{A} }_{C}$.
We have:

\begin{lemma}\label{Lemma:Potentials}
Suppose that \eqref{SFR} and \eqref{rPND} are satisfied. Assume that the manifold is complex, or real-analytic
with complexification $\mathbb{C}M$.
\begin{enumerate}
\item Let $\psi^{\mathscr{A}}_{A'}$ be a spinor field with weights $(p, w)$ such that
\begin{equation}
 \tC^{A'}\psi^{\mathscr{A}}_{A'}= 0. \label{dfield}
\end{equation}
Then there exists, locally, a spinor field $\Phi^{\mathscr{A}}$ in $\mathbb{C}M$, with weights $(p-1, w)$, such that
\begin{equation}
 \psi^{\mathscr{A}}_{A'} = \tC_{A'}\Phi^{\mathscr{A}}. \label{fieldd}
\end{equation}
\item For any $\theta^{\mathscr{A}}$ with weights $(p,w)$, there exists locally a spinor field 
$\xi^{\mathscr{A}}_{A'}$ in $\mathbb{C}M$, with weights $(p-1,w)$, such that 
\begin{equation}\label{integral}
 \theta^{\mathscr{A}} = \tC^{A'}\xi^{\mathscr{A}}_{A'}.
\end{equation}
\end{enumerate}
\end{lemma}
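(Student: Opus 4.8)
The plan is to reduce both statements to the local exactness (Poincar\'e lemma) of the differential complex \eqref{wderham}, which holds because, by the hypotheses \eqref{SFR} and \eqref{rPND}, the operator $\tC_{A'}$ has vanishing curvature on the eigenbundle $\tilde{L}$ (eq. \eqref{FlatConnection}), so that the leaves of the foliation integrating $\tilde{L}$ carry a flat connection on every $\mathbb{S}_{p,w}$. First I would set up the identification: using \eqref{CovExtDer0Abstract} a $0$-form $\Phi^{\mathscr{A}}$ has $({\rm d}^{\tC}\Phi)^{\mathscr{A}}_{a} = (\tC_{A'}\Phi^{\mathscr{A}})\iota_{A}$, so \eqref{fieldd} just says that the $\mathbb{S}_{p,w}$-valued $1$-form $\psi^{\mathscr{A}}_{A'}\iota_{A}$ (which is indeed a section of $\Lambda^{1}\otimes\mathbb{S}_{p,w}$ since elements of $\tilde L^{*}$ are of the form $\iota_{A}\kappa_{A'}$) is ${\rm d}^{\tC}$-exact; and the closedness condition $({\rm d}^{\tC}\omega)^{\mathscr{A}}_{ab}=0$ translates, via \eqref{CovExtDer1Abstract} with $\tilde\omega^{C'\mathscr{A}}=o^{C}\omega^{C'\mathscr{A}}_{C}=\psi^{\mathscr{A}C'}$, into precisely $\tC_{C'}\psi^{\mathscr{A}C'}=0$, i.e. \eqref{dfield}. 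So part 1 is the statement that a ${\rm d}^{\tC}$-closed $1$-form is locally ${\rm d}^{\tC}$-exact.

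To prove that, I would invoke the Lie algebroid / flat-connection Poincar\'e lemma: since $\tilde{L}$ is involutive (Frobenius), through each point there is a local $2$-dimensional integral leaf $\Sigma$, and restricting everything to $\Sigma$ the complex \eqref{wderham} becomes the de Rham complex of $\Sigma$ with coefficients in the flat bundle $(\mathbb{S}_{p,w}|_{\Sigma},\tC)$; flatness (from \eqref{FlatConnection}) lets one trivialize the bundle by a $\tC$-parallel frame on a contractible neighbourhood, and then the ordinary Poincar\'e lemma on the leaf gives the primitive $\Phi^{\mathscr{A}}$ on $\Sigma$. The only subtlety is that we want $\Phi$ defined on an open set of $\mathbb{C}M$, not just on a single leaf: here one uses that in the complex (or real-analytic, hence complexifiable) category the foliation has a local product structure, so one can solve the equation $\tC_{A'}\Phi^{\mathscr{A}}=\psi^{\mathscr{A}}_{A'}$ leafwise and then note that the transverse dependence is unconstrained — concretely, pick holomorphic coordinates in which $\tC_{A'}$ becomes (a frame-rotated version of) two of the coordinate derivatives, integrate, and carry the remaining coordinates as parameters. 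This is exactly the kind of argument behind the cited Lemma 4.8 of \cite{Araneda2021} and Eq. (3.27) of \cite{Araneda20}, so I would cite those for the details and only indicate the structure.

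For part 2, the point is that \eqref{wderham} has length two and is \emph{surjective} in top degree: every $\mathbb{S}_{p,w}$-valued $2$-form is ${\rm d}^{\tC}$-exact, because on a $2$-dimensional leaf there is no obstruction ($H^{2}$ of a flat bundle on a contractible surface vanishes for the same Poincar\'e-lemma reason). Translating back through \eqref{CovExtDer1Abstract}: given $\theta^{\mathscr{A}}$ of weight $(p,w)$, form the $2$-form $\theta^{\mathscr{A}}\iota_{A}\iota_{B}\epsilon_{A'B'}$ and find a $1$-form $\xi^{C'\mathscr{A}}\iota_{C}$ with $\tC_{C'}\xi^{C'\mathscr{A}}=\theta^{\mathscr{A}}$; relabelling gives \eqref{integral}. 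Equivalently and more directly, one just solves the single inhomogeneous equation $\tC^{A'}\xi^{\mathscr{A}}_{A'}=\theta^{\mathscr{A}}$ by the same leafwise integration — e.g. choose $\xi^{\mathscr{A}}_{A'}=o_{A'}\,\Xi^{\mathscr{A}}$ for a suitable dyad element and integrate one of the two operators $\tC_{A'}$ along its integral curves, the other component being free. No closedness hypothesis is needed precisely because the complex stops at $\Lambda^{2}$.

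\textbf{Main obstacle.} The genuinely delicate point is not the algebra but the passage from ``solvable along each leaf'' to ``a single potential on an open set of $\mathbb{C}M$ with the stated weights''. One must check that the leafwise primitives can be chosen to depend on the transverse directions in the same (holomorphic / real-analytic) class as the data, and that the construction is compatible with the $G_{o}$-structure so that $\Phi^{\mathscr{A}}$ genuinely has weights $(p-1,w)$ — the weight shift being forced by the extra $\iota_{A}$ (of GHP weight $-1$) that $\tC_{A'}$ carries relative to $\C_{AA'}$. This is where the complex/real-analytic assumption is essential (Cauchy–Kovalevskaya-type solvability of $\tC_{A'}\Phi=\psi$), and it is exactly the hypothesis made in the lemma; I would therefore lean on the cited references for this analytic input rather than reprove it.
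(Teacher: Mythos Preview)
Your proposal is correct and follows essentially the same route as the paper: translate \eqref{dfield} into ${\rm d}^{\tC}$-closedness of the $1$-form $\iota_{A}\psi^{\mathscr{A}}_{A'}$ via \eqref{CovExtDer1Abstract}, invoke local exactness of the complex \eqref{wderham} to get a $0$-form primitive $\Phi^{\mathscr{A}}$, and for part~2 package $\theta^{\mathscr{A}}$ as a top-degree form $\iota_{A}\iota_{B}\epsilon_{A'B'}\theta^{\mathscr{A}}$ and use that every $2$-form is ${\rm d}^{\tC}$-exact. The paper's proof is terser---it simply asserts local exactness once the complexification is in place---whereas you spell out the leafwise Poincar\'e-lemma mechanism and the role of the complex/real-analytic hypothesis; one small bookkeeping slip is that $\iota_{A}\psi^{\mathscr{A}}_{A'}$ lies in $\Lambda^{1}\otimes\mathbb{S}_{p-1,w}$ rather than $\Lambda^{1}\otimes\mathbb{S}_{p,w}$ (since $p(\iota_{A})=-1$), which is exactly why $\Phi^{\mathscr{A}}$ comes out with weight $p-1$.
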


\begin{proof}
First of all, the complexification $\mathbb{C}M$ is needed in order to be able to interpret 
the differential complex \eqref{wderham} as a de Rham complex, which is then locally exact.

For the first item, define $\Psi^{\mathscr{A}}_{a}= \iota_{A}\psi^{\mathscr{A}}_{A'}$, so 
$\Psi^{\mathscr{A}}_{a}\in\Gamma(\Lambda^{1}\otimes\mathbb{S}_{p-1,w})$.
Using equation \eqref{CovExtDer1Abstract}, we see that \eqref{dfield} is simply $({\rm d}^{\tC}\Psi)^{\mathscr{A}}_{ab}=0$.
Therefore, there exists, locally, a field $\Phi^{\mathscr{A}}\in\Gamma(\Lambda^{0}\otimes\mathbb{S}_{p-1,w})$ such that 
$\Psi^{\mathscr{A}}_{a} = ({\rm d}^{\tC}\Phi)^{\mathscr{A}}_{a}$. Using \eqref{CovExtDer0Abstract}:
\begin{align*}
 \Psi^{\mathscr{A}}_{a} = \iota_{A}\psi^{\mathscr{A}}_{A'} 
 = ({\rm d}^{\tC}\Phi)^{\mathscr{A}}_{a} = (\tC_{A'}\Phi^{\mathscr{A}})\iota_{A},
\end{align*}
thus \eqref{fieldd} follows.

For the second item, suppose that $\theta^{\mathscr{A}}$ is given, with weights $(p,w)$.
Define $\Theta^{\mathscr{A}}_{ab}=-\iota_{A}\iota_{B}\epsilon_{A'B'}\theta^{\mathscr{A}}$,
then $\Theta^{\mathscr{A}}_{ab}\in\Gamma(\Lambda^{2}\otimes\mathbb{S}_{p-2,w})$. 
From \eqref{wderham} we see that $({\rm d}^{\tC}\Theta)^{\mathscr{A}}_{abc}=0$, so there exists locally a field 
$\Xi^{\mathscr{A}}_{a}=\iota_{A}\xi^{\mathscr{A}}_{A'}\in\Gamma(\Lambda^{1}\otimes\mathbb{S}_{p-2,w})$
such that $\Theta^{\mathscr{A}}_{ab}=({\rm d}^{\tC}\Xi)^{\mathscr{A}}_{ab}$. That is:
\begin{align*}
 \Theta^{\mathscr{A}}_{ab}=-\iota_{A}\iota_{B}\epsilon_{A'B'}\theta^{\mathscr{A}}=({\rm d}^{\tC}\Xi)^{\mathscr{A}}_{ab} 
 = (\tC_{C'}\xi^{C'\mathscr{A}})\iota_{A}\iota_{B}\epsilon_{A'B'},
\end{align*}
so \eqref{integral} follows.
\end{proof}

\begin{remark}\label{Remark:InternalGaugeFreedom}
Notice that the representation \eqref{fieldd} is not unique: we have the ``gauge freedom''
\begin{equation}
 \Phi^{\mathscr{A}} \to \Phi^{\mathscr{A}} + g^{\mathscr{A}}, \qquad \tC_{A'}g^{\mathscr{A}}=0. \label{InternalGaugeFreedom}
\end{equation}
Similarly, for \eqref{integral} there is the freedom $\xi^{\mathscr{A}}_{A'} \to \xi^{\mathscr{A}}_{A'}+\tC_{A'}\gamma^{\mathscr{A}}$.
\end{remark}

\subsubsection{The Lee form}\label{Sec:FullComplexStructure}

As a first application of the integration machinery, let us study the Lee form \eqref{LeeForm}. 

\begin{proposition}
Suppose that the conditions \eqref{SFR} and \eqref{rPND} are satisfied.
Then there exists a scalar field $\phi$ with weights $p(\phi)=0$, $w(\phi)=-1$, and a scalar field 
$\eta$ with $p(\eta)=-2$, $w(\eta)=-1$, such that the Lee form \eqref{LeeForm} can be written as
\begin{equation}
 f_{AA'} = \nabla_{AA'}\log\phi - o_{A}\tC_{A'}\eta. \label{LeeForm2}
\end{equation}
\end{proposition}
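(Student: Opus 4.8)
The plan is to expand the Lee form in the spin frame $(o^{A},\iota^{A})$ and match the two resulting pieces against the two terms on the right of \eqref{LeeForm2}, using the de Rham complex of Lemma \ref{Lemma:Potentials} to produce the potentials. Write $f_{AA'}=o_{A}\alpha_{A'}-\iota_{A}\beta_{A'}$ with $\beta_{A'}:=o^{B}f_{BA'}$ and $\alpha_{A'}:=\iota^{B}f_{BA'}$. Since $o^{A}o_{A}=0$, the term $o_{A}\tC_{A'}\eta$ in \eqref{LeeForm2} is annihilated by $o^{A}$, so the $\iota_{A}$-component of \eqref{LeeForm2} is equivalent to $o^{A}\nabla_{AA'}\log\phi=\beta_{A'}$, while the $o_{A}$-component then reads $\tC_{A'}\eta=\iota^{A}(\nabla_{AA'}\log\phi-f_{AA'})=:\chi_{A'}$. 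In a fixed conformal gauge $o^{A}\nabla_{AA'}$ acts on a weight-zero function exactly as $\tC_{A'}=o^{A}\C_{AA'}$ (by \eqref{ActionOfC} and the $\nabla^{\rm f}$--$\nabla$ relation of appendix \ref{App:WeylConnections}), so the first equation becomes $\tC_{A'}(\log\phi)=\beta_{A'}$ and both equations are of the form handled by Lemma \ref{Lemma:Potentials}(1), once one passes to the complexification $\mathbb{C}M$.

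Thus the proof reduces to two $\tC$-closure statements: $\tC^{A'}\beta_{A'}=0$ and, after $\phi$ is fixed, $\tC^{A'}\chi_{A'}=0$. Given the first, Lemma \ref{Lemma:Potentials}(1) produces $\log\phi$ locally (equivalently, one may take $\phi$ to be a $\tC$-parallel conformal density, which exists by the flatness \eqref{FlatConnection}); given the second, it produces $\eta$. To check $\tC^{A'}\beta_{A'}=0$ I would use the GHP expression $f_{a}=\rho'\ell_{a}+\rho n_{a}-\tau'm_{a}-\tau\tilde{m}_{a}$, so that $\beta_{A'}=-\rho\,\nu_{A'}+\tau\,\mu_{A'}$, expand $\tC^{A'}\beta_{A'}$ using the action of the $\C$-covariant GHP operators on the spin coefficients and the primed frame, and reduce the outcome --- via \eqref{FlatConnection}, the parallelism $\C_{AA'}o^{B}=0$, and the curvature identities for $\C_{a}$ in appendix \ref{App:identities} (equivalently the GHP Ricci and Bianchi equations) --- to a combination of $\kappa$, $\sigma$ and of $\Psi_{0}$, $\Psi_{1}$, all of which vanish by \eqref{SFR} and \eqref{rPND}. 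Invariantly, $\tC^{A'}\beta_{A'}$ is a projection of $\mathrm{d}f$ onto $\tilde{L}$, which the involutivity of $\tilde{L}$ together with the repeated-principal-spinor condition forces to vanish. The second closure, $\tC^{A'}\chi_{A'}=0$, is treated the same way, now also feeding in the equation already satisfied by $\log\phi$; the identity \eqref{Co} may be used to streamline both computations.

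The remaining items are bookkeeping: the representations of $\log\phi$ and of $\eta$ are unique up to the addition of $\tC$-kernel elements, harmless by Remark \ref{Remark:InternalGaugeFreedom}; and the weight assignments $p(\phi)=0$, $w(\phi)=-1$, $p(\eta)=-2$, $w(\eta)=-1$ follow by tracking weights through the construction --- in particular, $\log\phi$ is delivered as a weight-$(0,0)$ function in each fixed gauge, and the inhomogeneous law $f_{a}\to f_{a}-\nabla_{a}\log\Omega$ under \eqref{ConformalTransformation} together with $p(o^{A})=1$ forces $\phi$ to rescale as a weight-$(0,-1)$ density.

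The main obstacle is the pair of closure identities $\tC^{A'}\beta_{A'}=0$ and $\tC^{A'}\chi_{A'}=0$: these are the only genuinely non-formal inputs and are precisely where the hypotheses \eqref{SFR} and \eqref{rPND} enter, through the flatness \eqref{FlatConnection} of $\tC$ and the structure of the curvature of $\C_{a}$. The rest --- the frame decomposition, the dictionary between $\nabla$, $\nabla^{\rm f}$ and $\C$, the appeal to local exactness of \eqref{wderham} on $\mathbb{C}M$, and the weight bookkeeping --- is routine.
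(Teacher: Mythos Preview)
Your proposal is correct, and in fact your parenthetical alternative --- ``take $\phi$ to be a $\tC$-parallel conformal density, which exists by the flatness \eqref{FlatConnection}'' --- is precisely the route the paper takes. The paper promotes that remark to the main argument: it simply picks a nonvanishing $(0,-1)$ scalar $\phi$ with $\tC_{A'}\phi=0$; unwinding $\C_a$ via \eqref{ActionOfC} then yields $o^{A}\nabla_{AA'}\log\phi=o^{A}f_{AA'}$ immediately, so no separate closure check $\tC^{A'}\beta_{A'}=0$ is needed. For the second potential the paper observes that $\tC_{A'}$ and $\C_{B'}$ commute on $p=0$ scalars, so $\tC^{A'}\C_{A'}\phi=\C_{A'}\tC^{A'}\phi=0$; applying Lemma~\ref{Lemma:Potentials} to $\C_{A'}\phi$ (a genuine section of $\mathbb{S}_{-1,-2}$) gives $\eta'$, and $\eta:=\phi^{-1}\eta'$ has the stated weights.

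Compared to your main GHP route, the paper's argument is shorter and sidesteps two awkward points in your write-up: (i) neither $\log\phi$ nor $f_a$ is a section of any $\mathbb{S}_{p,w}$ (the latter transforms inhomogeneously), so writing ``$\tC_{A'}(\log\phi)=\beta_{A'}$'' and then invoking Lemma~\ref{Lemma:Potentials} needs some care about what bundle you are in; and (ii) your ``invariant'' remark that $\tC^{A'}\beta_{A'}$ is a projection of $\mathrm{d}f$ is not quite right, since $f$ need not be closed here (that only follows after Proposition~\ref{Prop:LeeForm}). The GHP verification can be pushed through in a fixed gauge, but the paper's construction of $\phi$ as a parallel density makes both closures automatic and keeps every object inside the weighted-bundle formalism.
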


\begin{proof}
The result \eqref{FlatConnection} implies that, in particular, we can consider solutions to the 
equation $\tC_{A'}\phi=0$ where $\phi$ is a scalar field with arbitrary weights. 
Choosing $p(\phi)=0$, $w(\phi)=-1$, we see from \eqref{ActionOfC} that $o^{A}\nabla_{AA'}\phi-o^{A}f_{AA'}\phi=0$, 
or equivalently $o^{A}\nabla_{AA'}\log\phi=o^{A}f_{AA'}$.
In addition, for scalar fields with $p=0$ the operators $\tC_{A'}$ and $\C_{A'}$ commute, 
so $0=\C_{A'}\tC^{A'}\phi=\tC^{A'}\C_{A'}\phi$. Applying lemma \ref{Lemma:Potentials}, 
there exists a scalar field $\eta'$, with $p(\eta')=-2$, $w(\eta')=-2$, 
such that $\C_{A'}\phi=\tC_{A'}\eta'$.
Multiplying by $\phi^{-1}$, we get $\phi^{-1}\C_{A'}\phi = \tC_{A'}\eta$, where $\eta=\phi^{-1}\eta'$
has weights $p(\eta)=-2$, $w(\eta)=-1$. 
Using $\phi^{-1}\C_{A'}\phi=\iota^{A}\nabla_{AA'}\log\phi-\iota^{A}f_{AA'}$, the result \eqref{LeeForm2} follows.
\end{proof}

Now, as mentioned in the introduction, the central object in which our method 
is based is a maximally isotropic subspace $\tilde{L}$.
In Riemann (and split) signature, the isomorphisms \eqref{isomorphisms} imply that this is equivalent 
to both a projective spinor field and an almost-complex structure.
In Lorentz signature, we still have the equivalence between $\tilde{L}$ and a projective 
spinor field, but this determines only ``half'' of an almost-complex structure $J$;
in other words, the spinor $\iota^{A}$ in \eqref{JGR} remains arbitrary.
It turns out that we can choose it so as to eliminate $\eta$ in \eqref{LeeForm2}:

\begin{proposition}\label{Prop:LeeForm}
Suppose that \eqref{SFR} and \eqref{rPND} are satisfied. Then the spinor field $\iota^{A}$ 
can be chosen such that the Lee form of the associated almost-complex structure \eqref{JGR} is
\begin{equation}
 f_{a} = \partial_{a}\log\phi. \label{fgradient}
\end{equation}
\end{proposition}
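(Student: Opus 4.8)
The plan is to use the freedom in choosing $\iota^{A}$ to absorb the $\eta$-term in \eqref{LeeForm2}. Recall that $\tilde{L}$, and hence the projective spinor $[o^{A}]$, is fixed by the geometry (it is the involutive eigenbundle), but in Lorentz signature (or for complex metrics) the second spinor $\iota^{A}$ in the representation \eqref{JGR} is free, subject only to the normalization $o_{A}\iota^{A}=1$. The residual freedom is therefore $\iota^{A}\to\iota^{A}+c\,o^{A}$ for an arbitrary scalar $c$ of the appropriate weights (this preserves $o_{A}\iota^{A}=1$ since $o_{A}o^{A}=0$); this is precisely a shift of the ``other half'' of $J$ that leaves $\tilde{L}$, the shear-free condition \eqref{SFR}, and the repeated-principal-spinor condition \eqref{rPND} all intact, since those depend only on $o^{A}$ and on $\Psi_{ABCD}$.

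First I would work out how the connection $\C_{a}$ and the Lee form $f_{a}$ change under $\iota^{A}\to\iota^{A}+c\,o^{A}$. From the definitions $P_{a}=\iota_{B}\nabla^{\rm f}_{a}o^{B}$ and \eqref{ActionOfC}, together with \eqref{LeeForm}, a short computation should show that $f_{a}$ shifts by a term proportional to $o_{A}$ times a derivative of $c$ along $\tilde{L}$ — schematically $f_{a}\to f_{a}+o_{A}(\tC_{A'}c)$ up to constants, which is exactly the structural form of the $\eta$-term in \eqref{LeeForm2}. (One must be careful that changing $\iota^{A}$ also changes $\nabla^{\rm f}_{a}$ itself, since the Lee form depends on $J$; but the net effect on $f_a$ is governed by the transformation of the fundamental 2-form $\w_{ab}$, and the $o_A$-part is what survives.) Then, starting from the representation \eqref{LeeForm2} with the given $\phi$ and $\eta$, I would choose $c$ so that $\tC_{A'}c=\eta\,(\text{times the appropriate constant})$; this is solvable by part (1) or rather directly because $\eta$ itself is already of the form needed — more precisely, one checks the integrability condition $\tC^{A'}(\text{RHS})=0$ needed to invoke Lemma \ref{Lemma:Potentials}, or simply absorbs $\eta$ at the level of the potential since $\eta$ appears through $\tC_{A'}\eta$ already.

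The main obstacle I expect is bookkeeping the interdependence: since $\C_{a}$, $P_a$, and $f_a$ all depend on $\iota^{A}$, one must verify that after the shift the new Lee form is genuinely $\partial_a\log\phi$ with the \emph{same} $\phi$ (or a conformally equivalent one), and that no new $o_A$-term is regenerated — i.e. that the transformation of $f_a$ under the $\iota$-shift is exactly affine in $\tC_{A'}c$ with no higher-order feedback. I would handle this by computing everything in a GHP frame $(o_A,\iota_A,\mu_{A'},\nu_{A'})$ using the footnoted expression $f_a=\rho'\ell_a+\rho n_a-\tau' m_a-\tau\tilde m_a$: under $\iota_A\to\iota_A+c\,o_A$ only the spin coefficients $\rho',\tau'$ (the ones paired with $\ell_a$ and $m_a$, i.e. the $o_A$-components) change, and they change by derivatives of $c$ along $o^A\nabla_{AA'}$, so $f_a$ picks up only an $o_A$-term linear in the $\tC$-derivative of $c$. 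Matching this against $-o_A\tC_{A'}\eta$ in \eqref{LeeForm2} and solving $\tC_{A'}c\propto\tC_{A'}\eta$ — which holds identically for $c=\eta$ up to the constant — then yields \eqref{fgradient}. The weight assignments $p(c)=-2$, $w(c)=0$ (matching $\iota^A$'s shift against $o^A$) are consistent with $p(\eta)=-2$, and one should double-check this against \eqref{ConfWeights} to confirm the construction respects the chosen conformal weights.
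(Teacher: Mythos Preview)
Your approach is essentially the same as the paper's: shift $\iota^{A}\to\iota^{A}+c\,o^{A}$, compute that the Lee form picks up precisely $-o_{A}\tC_{A'}c$, and set $c=-\eta$ to cancel the second term in \eqref{LeeForm2}. One correction: matching $c\,o^{A}$ against $\iota^{A}$ using \eqref{ConfWeights} gives $w(c)=-1$ (not $0$), which is consistent with $w(\eta)=-1$; your worry about ``higher-order feedback'' is unfounded, since the change in $J$ is exactly $2ic\,o^{A}o_{B}\delta^{A'}{}_{B'}$ and plugging this into \eqref{LeeForm} yields a shift linear in $\tC_{A'}c$ with no further dependence on the new $\iota^{A}$.
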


\begin{proof}
Under a change $\iota^{A}\to\iota^{A}+b o^{A}$, where $p(b)=-2$, $w(b)=-1$, we have
$J^{a}{}_{b} \to J^{a}{}_{b} + 2ibo^{A}o_{B}\delta^{A'}{}_{B'}$.
Then a short calculation shows that the Lee form changes as $f_{a} \to f_{a} -o_{A}\tC_{A'}b$.
Choosing then $b\equiv-\eta$, the new Lee form has the structure \eqref{fgradient}.
\end{proof}

\begin{remark}\label{Remark:LeeForm}
The condition \eqref{fgradient} {\em is not} assumed in sections \ref{Sec:LinearFields} and \ref{Sec:YangMills},
but it {\em is} assumed in sections \ref{Sec:Einstein} and \ref{Sec:Reconstruction}.
\end{remark}

Proposition \ref{Prop:LeeForm} shows that 
for Lorentz signature and complex manifolds, we can fix the almost-complex structure $J$ 
to be such that the corresponding Lee form is a gradient \eqref{fgradient}. 
On the other hand, for Riemann and split signature, the spinor $o^{A}$ alone already fixes $J$, 
so it is not necessarily the case that \eqref{fgradient} holds (although \eqref{LeeForm2} is still true).
In fact, for these cases, \eqref{fgradient} would imply that the manifold is conformally K\"ahler
(since $J$ is integrable), which, as far as we can see, is not guaranteed to be the case from 
our current assumptions. 
However, as stressed by remark \ref{Remark:LeeForm}, the condition \eqref{fgradient} 
is only needed in sections \ref{Sec:Einstein} and \ref{Sec:Reconstruction}, 
where the Einstein-Weyl equations are assumed. 
It turns out that these equations, in the Riemannian and split cases, are sufficient to show 
that \eqref{fgradient} actually holds, see remark \ref{Remark:LeeForm2}.

\subsection{Formal solutions and constants of integration}\label{Sec:ConstantsOfIntegration}

In the next sections we will be concerned with integrating the field equations for several systems. 
As in ordinary calculus, there will be `constants of integration' involved, 
a feature which already appears, of course, in the original HH construction \cite{PlebanskiRobinson, PlebanskiFinley}.
The HH construction is based on a set of four special complex coordinates, and the constants of integration there 
are scalar (complex) functions that depend on only two of these coordinates. 
More specifically, Pleba\'nski {\it et al} use coordinates $(\tilde{z}^{\bf A'}, z^{\bf A'})$ 
where indices ${\bf A'}, {\bf B'},...=0',1'$ are {\em concrete numerical indices}, see section \ref{Sec:Coordinates} below.
(In the notation of \cite{PlebanskiFinley}, one has $\tilde{z}^{\bf A'} \equiv p^{\dot{A}}$ and $z^{\bf A'}\equiv q^{\dot{A}}$.) 
The equations to be integrated in the HH formulation involve only partial derivatives of $\tilde{z}^{\bf A'}$, so the result of an 
integration involves arbitrary functions $f(z^{\bf A'})$.
As a typical example, the solution to the equation $\partial_{\tilde{z}^{\bf A'}}\partial_{\tilde{z}^{\bf B'}}F=0$
is $F(\tilde{z},z)=J_{\bf A'}(z)\tilde{z}^{\bf A'}+\kappa(z)$, where $J_{\bf A'}(z)$ and $\kappa(z)$ 
are three arbitrary functions of $z^{\bf A'}$, see e.g. \cite[Eq. (3.1)]{PlebanskiFinley}.
More in general, solutions to equations of the form $\partial_{\tilde{z}^{\bf A'}}...\partial_{\tilde{z}^{\bf K'}}F=0$ 
are polynomials in $\tilde{z}^{\bf A'}$ with coefficients (`constants of integration') depending on $z^{\bf A'}$;
equivalently, the coefficients are in the kernel of $\partial_{\tilde{z}^{\bf A'}}$.

\smallskip
Our formulation is coordinate-free and it only involves abstract spinor fields, 
so we need an abstract version of the above, namely of both the constants of integration and 
of the coordinates $\tilde{z}^{\bf A'}$. First of all, 
the vector fields $\partial_{\tilde{z}^{\bf A'}}$ belong to the eigenspace $\tilde{L}$ of $J$ 
(they are anti-holomorphic; see eqs. \eqref{Coordinates2} and \eqref{JinCoordinates} below), 
so in our construction they correspond to the abstract operator $\tC_{A'}$, acting on sections of $\mathbb{S}_{p,w}$. 
Thus, the analogue of functions with dependence only on $z^{\bf A'}$ are spinor fields in the kernel of $\tC_{A'}$:

\begin{definition}\label{Def:Constants}
In the context of an integration procedure, we will say that a spinor field $K^{\mathscr{A}}$ is a ``constant of integration'' if 
\begin{equation}
 \tC_{A'}K^{\mathscr{A}} = 0. \label{ConstantOfIntegration}
\end{equation}
\end{definition}

On the other hand, starting from the observation $\partial_{\tilde{z}^{\bf A'}}\tilde{z}^{\bf B'}=\delta^{\bf B'}_{\bf A'}$,
we can think of the abstract analogue of the two coordinates $\tilde{z}^{\bf A'}$ as 
a spinor field $\pi^{A'}$ satisfying the twistor-like equation 
\begin{equation}
 \tC_{A'}\pi^{B'} = \delta^{B'}_{A'}. \label{TwistorLikeEq}
\end{equation}
The weights of $\pi^{A'}$ must be $p(\pi^{A'})=-1$, $w(\pi^{A'})=0$. 
Later, in section \ref{Sec:Coordinates}, we will see the explicit relation between a solution of \eqref{TwistorLikeEq} 
and the coordinates $\tilde{z}^{\bf A'}$, see eq. \eqref{RelationPiZetaTilde}.
Any other solution to \eqref{TwistorLikeEq} can be expressed in terms of $\pi^{A'}$ and constants of integration:
\begin{proposition}\label{Prop:TwistorLikeEq}
Let $\tau_{A'}$ be a spinor field such that $\tC_{(A'}\tau_{B')}=0$. 
Then $\tau_{A'}$ can be expressed as $\tau_{A'}=a\pi_{A'}+b_{A'}$, where $\tC_{A'}a=0=\tC_{A'}b_{B'}$.
\end{proposition}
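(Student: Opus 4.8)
The plan is to mimic the elementary calculus fact that a spinor $\tau_{A'}$ with $\tC_{(A'}\tau_{B')}=0$ differs from a solution of the inhomogeneous twistor-like equation \eqref{TwistorLikeEq} by a ``constant'', after extracting the right multiple of $\pi_{A'}$. First I would unpack the hypothesis $\tC_{(A'}\tau_{B')}=0$: since the symmetric part of $\tC_{A'}\tau_{B'}$ vanishes, only the antisymmetric part survives, so $\tC_{A'}\tau_{B'} = \tfrac12\epsilon_{A'B'}\,\tC^{C'}\tau_{C'}$. Set $a := \tfrac12 \tC^{C'}\tau_{C'}$ (a scalar with the appropriate weights $p(a)=0$, $w(a)=0$, matching those of a ``constant'' multiplying $\pi_{A'}$), so that $\tC_{A'}\tau_{B'} = a\,\epsilon_{A'B'}/... $ — more precisely $\tC_{A'}\tau_{B'} = -a\,\epsilon_{B'A'}$ up to the sign conventions of the paper, i.e. $\tC_{A'}\tau^{B'} = a\,\delta_{A'}^{B'}$ — wait, one must be careful with index placement, but the upshot is that $\tau^{B'}$ satisfies the same inhomogeneous equation \eqref{TwistorLikeEq} as $a\pi^{B'}$ would if $a$ were constant.

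Next I would check that $a$ is itself a constant of integration, i.e. $\tC_{A'}a=0$. This should follow from applying $\tC^{B'}$ to the relation $\tC_{A'}\tau^{B'} = a\,\delta_{A'}^{B'}$ (or its equivalent) and using the flatness $[\tC_{A'},\tC_{B'}]=0$ from \eqref{FlatConnection}: contracting appropriately, the left side involves $\tC^{B'}\tC_{A'}\tau^{B'}$, which by commutativity and the 2-dimensionality of the primed indices reduces to something that vanishes (a second-order operator $\tC^{A'}\tC_{A'}$ applied and antisymmetrized), while the right side gives $\tC_{A'}a$ up to a nonzero numerical factor. So $\tC_{A'}a=0$, establishing that $a$ is a constant in the sense of Definition \ref{Def:Constants}.

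Then I would define $b_{A'} := \tau_{A'} - a\,\pi_{A'}$ and compute $\tC_{A'}b_{B'} = \tC_{A'}\tau_{B'} - (\tC_{A'}a)\pi_{B'} - a\,\tC_{A'}\pi_{B'}$. The middle term vanishes since $\tC_{A'}a=0$; the last term is $a$ times $\tC_{A'}\pi_{B'}$, which by \eqref{TwistorLikeEq} (with indices lowered) equals $a\,\epsilon_{B'}{}_{A'}$-type expression, precisely matching the antisymmetric part of $\tC_{A'}\tau_{B'}$ computed above. Hence $\tC_{A'}b_{B'}=0$, so $b_{A'}$ is a constant of integration, and $\tau_{A'}=a\pi_{A'}+b_{A'}$ is the desired decomposition.

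I expect the only real subtlety — not a deep obstacle but the point that needs care — to be the index gymnastics and sign conventions in translating ``symmetric part vanishes'' into the statement ``$\tau^{A'}$ solves \eqref{TwistorLikeEq} with inhomogeneity $a$'', together with confirming via \eqref{FlatConnection} that the extracted scalar $a$ has vanishing $\tC$-derivative; this last uses that $\tC^{A'}\tC_{A'}$ annihilates weighted fields, which is exactly the content of \eqref{FlatConnection}. One should also note the weight bookkeeping: $\tau_{A'}$ must carry weights $p=-1$, $w=0$ (those of $\pi_{A'}$) for the statement to be meaningfully posed, so that $a$ comes out with $p=w=0$ and $b_{A'}$ inherits $\tau_{A'}$'s weights, consistent with both being constants of integration.
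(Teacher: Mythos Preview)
Your proposal is correct and follows essentially the same route as the paper's proof: extract the scalar $a$ from the antisymmetric part of $\tC_{A'}\tau_{B'}$, use the flatness \eqref{FlatConnection} (equivalently $\tC^{A'}\tC_{A'}=0$) to show $\tC_{A'}a=0$, and then verify that $b_{A'}:=\tau_{A'}-a\pi_{A'}$ is annihilated by $\tC_{A'}$ via \eqref{TwistorLikeEq}. The paper's version is just a terser rendering of the same three steps, without the hedging about sign conventions.
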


\begin{proof}
The equation $\tC_{(A'}\tau_{B')}=0$ implies that $\tC_{A'}\tau_{B'}=a\epsilon_{A'B'}$ for some function $a$. 
Contracting with $\tC^{A'}$, we see that $\tC_{A'}a=0$.
Therefore: $\tC_{A'}\tau^{B'}=a\delta^{B'}_{A'}=a\tC_{A'}\pi^{B'}=\tC_{A'}(a\pi^{B'})$, 
so we have $\tC_{A'}(\tau^{B'}-a\pi^{B'})=0$. Defining $b^{B'}=\tau^{B'}-a\pi^{B'}$, the result follows.
\end{proof}

The `constants of integration' \eqref{ConstantOfIntegration} together with the spinor field $\pi^{A'}$ in \eqref{TwistorLikeEq} 
allow to give a formal, abstract solution to equations of the form $\tC_{A'}...\tC_{K'}F_{L'...Q'}=0$
or with symmetrizations.
The equations we will need in later sections are
\begin{align}
 \tC_{A'_{1}}...\tC_{A'_{m}}F ={}& 0, \label{SpecialEqs1} \\
 \tC_{(A'_{1}}...\tC_{A'_{k}}\tau_{A'_{k+1}...A'_{n})} ={}& 0. \label{SpecialEqs2}
\end{align}
For \eqref{SpecialEqs1}, it is straightforward to check that the solution can be written as
\begin{equation}
 F = K^{0} + K^{1}_{A'}\pi^{A'} + K^{2}_{A'_{1}A'_{2}}\pi^{A'_{1}}\pi^{A'_{2}} + ... + K^{m-1}_{A'_{1}...A'_{m-1}}\pi^{A'_{1}}...\pi^{A'_{m-1}},
 \label{SolSpecialEqs1}
\end{equation}
where $K^{k}_{A'_{1}...A'_{k}}$ (no sum over $k$) are arbitrary `constants of integration' in the sense above, 
namely $\tC_{B'}K^{k}_{A'_{1}...A'_{k}}=0$.
For \eqref{SpecialEqs2}, it is convenient to first analyse the case in which there is only one $\tC_{A'}$ derivative:

\begin{proposition}\label{Prop:GeneralTwistorLikeEq}
Let $n\in\mathbb{N}$. Let $\tau_{A'_{1}...A'_{n}}$ be a totally symmetric spinor field satisfying
the equation $\tC_{(A'_{1}}\tau_{A'_{2}...A'_{n+1})}=0$. Then $\tau_{A'_{1}...A'_{n}}$ can be written as 
\begin{equation}
 \tau_{A'_{1}...A'_{n}} = 
 c^{0}\pi_{A'_{1}}...\pi_{A'_{n}} + c^{1}_{(A'_{1}}\pi_{A'_{2}}...\pi_{A'_{n})} + c^{2}_{(A'_{1}A'_{2}}\pi_{A'_{3}}...\pi_{A'_{n})} 
  +...+c^{n-1}_{(A'_{1}...A'_{n-1}}\pi_{A'_{n)}} + c^{n}_{A'_{1}...A'_{n}}
  \label{SolGeneralTwistorLikeEq}
\end{equation}
where $\tC_{B'}c^{k}_{A'_{1}...A'_{k}}=0$ (no sum over $k$) for all $k=0,...,n$.
\end{proposition}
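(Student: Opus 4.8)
The plan is to prove Proposition \ref{Prop:GeneralTwistorLikeEq} by induction on $n$, peeling off one factor of $\pi_{A'}$ at each step, and to reduce the inductive step to Proposition \ref{Prop:TwistorLikeEq} (the $n=1$ case) together with the already-available machinery for solving $\tC_{A'}(\cdot)=0$ via constants of integration. The base case $n=1$ is exactly Proposition \ref{Prop:TwistorLikeEq}: the hypothesis $\tC_{(A'_{1}}\tau_{A'_{2})}=0$ gives $\tau_{A'}=c^{0}\pi_{A'}+c^{1}_{A'}$ with $\tC_{B'}c^{0}=\tC_{B'}c^{1}_{A'}=0$, which is \eqref{SolGeneralTwistorLikeEq} for $n=1$.

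For the inductive step, assume the result for $n-1$ and let $\tau_{A'_{1}\dots A'_{n}}$ satisfy $\tC_{(A'_{1}}\tau_{A'_{2}\dots A'_{n+1})}=0$. First I would extract the "leading coefficient''. Since $\tC_{A'}\pi^{B'}=\delta^{B'}_{A'}$, contracting the defining relation for $\pi_{A'}$ shows $\pi^{A'}\pi_{A'}$ is not constant but $\tC_{B'}(\pi^{A'}\pi_{A'}) = 2\pi_{B'}$; more to the point, one has $\tC_{B'}(\pi_{A'_{1}}\cdots\pi_{A'_{n}})$ expressed through lower-degree $\pi$-polynomials, so $\pi_{A'_{1}}\cdots\pi_{A'_{n}}$ is itself a solution of \eqref{SpecialEqs2} (its totally symmetrized single derivative vanishes by the $\epsilon$-skew of $\delta^{B'}_{(A'}\pi_{C')}\cdots$). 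The strategy: show that $\tau_{A'_{1}\dots A'_{n}} - c^{0}\,\pi_{A'_{1}}\cdots\pi_{A'_{n}}$, for a suitable constant $c^{0}$, is "divisible by $\pi$'' in the sense that it lies in the image of multiplication by $\pi_{A'}$ modulo a totally constant remainder. Concretely, I would contract $\tau_{A'_{1}\dots A'_{n}}$ with $\pi^{A'_{1}}\cdots\pi^{A'_{n}}$ or apply $\tC$'s repeatedly to identify the top coefficient $c^{0}$ as (a multiple of) $\tC^{A'_{1}}\cdots\tC^{A'_{n}}\tau_{A'_{1}\dots A'_{n}}$, verify via \eqref{FlatConnection} that this is annihilated by $\tC_{B'}$ (so it is a genuine constant of integration), and then show that $\sigma_{A'_{1}\dots A'_{n}} := \tau_{A'_{1}\dots A'_{n}} - c^{0}\pi_{A'_{1}}\cdots\pi_{A'_{n}}$ still satisfies $\tC_{(A'_{1}}\sigma_{A'_{2}\dots A'_{n+1})}=0$ and in addition has vanishing contraction with the full $\pi^{\otimes n}$. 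The vanishing of that contraction, together with $\tC_{(A'_{1}}\sigma_{A'_{2}\dots A'_{n+1})}=0$, should force $\sigma_{A'_{1}\dots A'_{n}} = \pi_{(A'_{1}}\rho_{A'_{2}\dots A'_{n})}$ for some symmetric $\rho_{A'_{2}\dots A'_{n}}$ of rank $n-1$; applying $\tC$ and symmetrizing then gives that $\rho$ satisfies the rank-$(n-1)$ version of the hypothesis, so the inductive hypothesis expresses $\rho$ in the desired form \eqref{SolGeneralTwistorLikeEq} with constants $c^{1},\dots,c^{n}$, and reassembling yields the claim for $\tau$.

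The main obstacle I anticipate is the algebraic step asserting that a symmetric spinor $\sigma_{A'_{1}\dots A'_{n}}$ with $\sigma_{A'_{1}\dots A'_{n}}\pi^{A'_{1}}\cdots\pi^{A'_{n}}=0$ is necessarily of the form $\pi_{(A'_{1}}\rho_{A'_{2}\dots A'_{n})}$: pointwise this is just the statement that a binary form vanishing at $[\pi]$ is divisible by the corresponding linear factor, but one must check that $\rho$ can be chosen globally (or at least on the relevant local patch) with the correct weights $p,w$, and that it is smooth/holomorphic where $\tau$ is — this uses that $\pi^{A'}$ is nowhere zero, which follows from \eqref{TwistorLikeEq}. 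A cleaner alternative that avoids this divisibility lemma is to run the induction the other way: use \eqref{SolSpecialEqs1}-type expansions directly. Namely, from $\tC_{(A'_{1}}\tau_{A'_{2}\dots A'_{n+1})}=0$ one shows by a short computation (contracting the free index structure and using $\tC^{A'}\tC_{A'}=0$) that $\tC_{B'}\tC_{C'}\tau_{A'_{1}\dots A'_{n}}$ is totally symmetric in all its indices and then that $\tC_{D'}\tC_{B'}\tC_{C'}\tau_{\dots}=0$; iterating, $\tau$ satisfies an equation of type \eqref{SpecialEqs1} in each index slot after enough derivatives, and \eqref{SolSpecialEqs1} applied slot-by-slot, combined with resymmetrization and absorbing redundancies into redefined constants, produces \eqref{SolGeneralTwistorLikeEq}. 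I would present the induction-with-divisibility argument as the main line since it is the most transparent, and remark that the bookkeeping of weights is automatic because $\pi^{A'}$ carries $p=-1$, $w=0$ and each stripped factor lowers the GHP weight by one, matching the weights $\tC_{B'}c^{k}_{A'_{1}\dots A'_{k}}=0$ demands.
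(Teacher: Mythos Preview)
Your main inductive step has a genuine gap. You want to subtract $c^{0}\pi_{A'_{1}}\cdots\pi_{A'_{n}}$ and then argue that the remainder $\sigma$ satisfies $\sigma_{A'_{1}\dots A'_{n}}\pi^{A'_{1}}\cdots\pi^{A'_{n}}=0$, so that $\sigma$ is divisible by $\pi_{A'}$. But since $\pi_{A'}\pi^{A'}=0$, the term you subtract already has vanishing contraction with $\pi^{\otimes n}$, so the subtraction does nothing: $\sigma\cdot\pi^{\otimes n}=\tau\cdot\pi^{\otimes n}$, which is generically nonzero (for instance if $\tau=c^{n}$ is itself a nonzero constant of integration). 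In the expansion \eqref{SolGeneralTwistorLikeEq}, the contraction with $\pi^{\otimes n}$ picks out the \emph{bottom} term $c^{n}_{A'_{1}\dots A'_{n}}\pi^{A'_{1}}\cdots\pi^{A'_{n}}$, not $c^{0}$; so peeling off $c^{0}$ cannot produce divisibility by $\pi_{A'}$. Your two proposed ways of ``identifying $c^{0}$'' (contracting with $\pi^{\otimes n}$ versus applying $\tC^{\otimes n}$) in fact extract different coefficients, which is a symptom of the same confusion.

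The paper's induction runs the other way and avoids any divisibility lemma. One sets $\rho_{A'_{1}\dots A'_{n-1}}:=\tC^{A'_{n}}\tau_{A'_{1}\dots A'_{n}}$; contracting the hypothesis $\tC_{(A'_{1}}\tau_{A'_{2}\dots A'_{n+1})}=0$ with $\tC^{A'_{n+1}}$ and using $\tC^{A'}\tC_{A'}=0$ shows $\tC_{(A'_{1}}\rho_{A'_{2}\dots A'_{n})}=0$, so the inductive hypothesis expands $\rho$ in $\pi$'s with constant coefficients $b^{k}$. Each term of that expansion is then written as a $\tC$-divergence, $b^{k}_{(A'_{1}\dots A'_{k}}\pi_{A'_{k+1}}\dots\pi_{A'_{n-1})}=m_{k}\,\tC^{A'_{n}}\!\bigl(b^{k}_{(A'_{1}\dots A'_{k}}\pi_{A'_{k+1}}\dots\pi_{A'_{n})}\bigr)$, giving $\rho=\tC^{A'_{n}}E$ with $E$ explicitly of the desired form. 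Hence $\tC^{A'_{n}}(\tau-E)_{A'_{1}\dots A'_{n}}=0$, and since $\tau-E$ still satisfies the symmetrized equation one concludes $\tau-E=c^{n}_{A'_{1}\dots A'_{n}}$ with $\tC_{B'}c^{n}=0$. The upshot: rather than factoring out a $\pi$, take a $\tC$-divergence to drop the rank---this kills the constant term $c^{n}$ and lowers the $\pi$-degree of every other term by one, which is exactly the reduction your argument was trying to achieve from the wrong end.
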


\begin{proof}
By induction in $n$. The case $n=1$ was already shown in proposition \ref{Prop:TwistorLikeEq}.
Now suppose that \eqref{SolGeneralTwistorLikeEq} is true for $n-1$.
Let $\tau_{A'_{1}...A'_{n}}$ be such that $\tC_{(A'_{1}}\tau_{A'_{2}...A'_{n+1})}=0$. Contracting with 
$\tC^{A'_{n+1}}$, we get $\tC_{(A'_{1}}\rho_{A'_{2}...A'_{n})}=0$, where 
$\rho_{A'_{2}...A'_{n}}=\tC^{A'_{n+1}}\tau_{A'_{2}...A'_{n}A'_{n+1}}$. Since we are assuming that \eqref{SolGeneralTwistorLikeEq}
is true for $n-1$, $\rho_{A'_{1}...A'_{n-1}}$ can be expressed in terms of $\pi^{A'}$ and constants of integration 
$b^{k}_{A'_{1}...A'_{k}}$ as in the right side of \eqref{SolGeneralTwistorLikeEq}:
\begin{equation*}
 \rho_{A'_{1}...A'_{n-1}} = 
 b^{0}\pi_{A'_{1}}...\pi_{A'_{n}} + b^{1}_{(A'_{1}}\pi_{A'_{2}}...\pi_{A'_{n})} + 
  ...+b^{n-2}_{(A'_{1}...A'_{n-2}}\pi_{A'_{n-1)}} + b^{n-1}_{A'_{1}...A'_{n-1}}.
\end{equation*}
Each term in the right hand side of this equation can be written as 
\begin{align*}
 b^{k}_{(A'_{1}...A'_{k}}\pi_{A'_{k+1}}...\pi_{A'_{n-1})} = m_{k} \tC^{A'_{n}}(b^{k}_{(A'_{1}...A'_{k}}\pi_{A'_{k+1}}...\pi_{A'_{n})})
\end{align*}
where $m_{k}$ is some numerical factor, so $\rho_{A'_{1}...A'_{n-1}}=\tC^{A'_{n}}E_{A'_{1}...A'_{n-1}A'_{n}}$
where $E_{A'_{1}...A'_{n}}$ is the expression that results from these replacements.
But we also have $\rho_{A'_{1}...A'_{n-1}}=\tC^{A'_{n}}\tau_{A'_{1}...A'_{n-1}A'_{n}}$, 
thus, $\tau_{A'_{1}...A'_{n}}=E_{A'_{1}...A'_{n}}+\tC_{A'_{1}}...\tC_{A'_{n}}A$ for some $A$. 
Since $\tC_{(A'_{1}}\tau_{A'_{2}...A'_{n+1})}=0=\tC_{(A'_{1}}E_{A'_{2}...A'_{n+1})}$, we have 
$\tC_{A'_{1}}c^{n}_{A'_{2}...A'_{n+1}}=0$, where $c^{n}_{A'_{1}...A'_{n}}=\tC_{A'_{1}}...\tC_{A'_{n}}A$.
Replacing everything, the result follows.
\end{proof}

The result of proposition \ref{Prop:GeneralTwistorLikeEq} can be used to find the solution to \eqref{SpecialEqs2}
for higher order derivatives. The specific case we will need later is $\tC_{(A'}\tC_{B'}\tau_{C'D')}=0$, 
so we give the solution to this equation (the general case can be analysed following the same reasoning):

\begin{proposition}
Let $\tau_{A'B'}$ be a symmetric spinor field such that $\tC_{(A'}\tC_{B'}\tau_{C'D')}=0$.
Then $\tau_{A'B'}$ can be expressed as
\begin{align}
 \tau_{A'B'} = \pi_{A'}\pi_{B'}\pi^{C'}a^{1}_{C'} + \pi_{(A'}\pi^{C'}a^{2}_{B')C'}+\pi^{C'}a^{3}_{A'B'C'} + a^{4}_{A'B'} 
 \label{DoubleTwistorLikeEq}
\end{align}
where $\pi^{A'}$ is defined by \eqref{TwistorLikeEq} and the rest of the spinors in the right are constants of integration.
\end{proposition}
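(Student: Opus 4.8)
The plan is to reduce the second-order equation $\tC_{(A'}\tC_{B'}\tau_{C'D')}=0$ to a first-order equation of the type already handled in Proposition \ref{Prop:GeneralTwistorLikeEq}, by peeling off one $\tC$-derivative via an application of Lemma \ref{Lemma:Potentials}. First I would set $\sigma_{B'C'D'} := \tC_{(B'}\tau_{C'D')}$, which is totally symmetric by construction, and observe that the hypothesis says exactly $\tC_{(A'}\sigma_{B'C'D')}=0$. By Proposition \ref{Prop:GeneralTwistorLikeEq} (with $n=3$), $\sigma_{A'B'C'}$ can be written as a polynomial in $\pi^{A'}$ with constant-of-integration coefficients $d^{k}_{A'_1\dots A'_k}$, $k=0,\dots,3$. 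The key algebraic step is then to recognize that each monomial $d^{k}_{(A'_1\dots A'_k}\pi_{A'_{k+1}}\pi_{A'_{k+2})}$ is, up to a numerical factor, of the form $\tC_{(A'_1}$ applied to a monomial of one higher degree in $\pi$ — this is the same ``integration by raising the $\pi$-degree'' trick used inside the proof of Proposition \ref{Prop:GeneralTwistorLikeEq}, and it relies only on $\tC_{A'}\pi^{B'}=\delta^{B'}_{A'}$ from \eqref{TwistorLikeEq} together with $\tC_{B'}d^{k}_{A'_1\dots A'_k}=0$.

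Carrying this out, I obtain $\sigma_{A'B'C'} = \tC_{(A'}G_{B'C')}$ for an explicit symmetric $G_{B'C'}$ that is a $\pi$-polynomial with constant coefficients. Hence $\tC_{(A'}(\tau_{B'C')}-G_{B'C'})=0$, so by Proposition \ref{Prop:GeneralTwistorLikeEq} again (now with $n=2$) the difference $\tau_{A'B'}-G_{A'B'}$ is itself a $\pi$-polynomial of degree $2$ with constant-of-integration coefficients. Adding the two polynomial expressions and regrouping by total $\pi$-degree, one lands on an expression of the general shape
\begin{equation*}
 \tau_{A'B'} = e^{0}\pi_{A'}\pi_{B'} + e^{1}_{(A'}\pi_{B')} + e^{2}_{A'B'}
\end{equation*}
with $\tC$-constant coefficients $e^{k}$, where however $e^0$ need not itself be $\tC$-constant: the degree-$2$-in-$\pi$ terms coming from $G_{A'B'}$ carry coefficients that are linear in $\pi$. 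Writing those remaining $\pi$-dependences out explicitly — $e^0 = \pi^{C'}a^1_{C'}$ for a constant $a^1_{C'}$, and similarly distributing the mixed terms — produces precisely the four-term form \eqref{DoubleTwistorLikeEq}: $\tau_{A'B'} = \pi_{A'}\pi_{B'}\pi^{C'}a^{1}_{C'} + \pi_{(A'}\pi^{C'}a^{2}_{B')C'} + \pi^{C'}a^{3}_{A'B'C'} + a^{4}_{A'B'}$, with each $a^{j}$ annihilated by $\tC_{B'}$.

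The main obstacle I anticipate is purely combinatorial bookkeeping rather than anything conceptual: one must track which symmetrized monomials in $\pi^{A'}$ appear, verify that the numerical coefficients $m_k$ (the analogues of those in Proposition \ref{Prop:GeneralTwistorLikeEq}) are all nonzero so the ``anti-differentiation'' step is legitimate, and confirm that after regrouping no spurious higher-degree terms survive given that $\pi^{A'}\pi_{A'}=0$ forces $\pi$-polynomials in two-component spinors to truncate. A secondary subtlety is checking that the coefficients produced by the double application of Proposition \ref{Prop:GeneralTwistorLikeEq} can indeed be re-expressed so that the final $a^{1}_{C'}$ is a genuine spin-$\tfrac12$ constant rather than a spin-$\tfrac32$ object contracted into $\pi$'s — this follows because $\pi^{C'}a^1_{C'}$ only sees the part of the coefficient symmetric in a way compatible with the $\pi_{A'}\pi_{B'}$ prefactor, but it should be spelled out. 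None of these steps requires new input beyond \eqref{TwistorLikeEq}, Definition \ref{Def:Constants}, and Proposition \ref{Prop:GeneralTwistorLikeEq}.
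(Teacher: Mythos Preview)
Your approach is essentially identical to the paper's: define $\sigma_{B'C'D'}=\tC_{(B'}\tau_{C'D')}$, apply Proposition~\ref{Prop:GeneralTwistorLikeEq} with $n=3$, anti-differentiate term by term, then apply the proposition again with $n=2$ to the remainder, and regroup.

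There is, however, one genuine gap hiding in the place you yourself flagged. You write that one must ``verify that the numerical coefficients $m_k$ \dots are all nonzero so the anti-differentiation step is legitimate.'' For the top-degree term $d^{0}\pi_{A'}\pi_{B'}\pi_{C'}$ this verification \emph{fails}: there is no symmetric two-index $\pi$-polynomial $H_{B'C'}$ with $\tC_{(A'}H_{B'C')}=\pi_{A'}\pi_{B'}\pi_{C'}$, since any candidate would force $\tC^{A'}\tC^{B'}\tC^{C'}(\pi_{A'}\pi_{B'}\pi_{C'})=0$, whereas a direct computation from \eqref{TwistorLikeEq} gives a nonzero constant. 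The paper resolves this not by finding such an $H$, but by first observing that $d^{0}$ itself must vanish: since $\sigma_{A'B'C'}=\tC_{(A'}\tau_{B'C')}$, one has $\tC^{A'}\tC^{B'}\tC^{C'}\sigma_{A'B'C'}=0$ automatically, and this contraction kills every term in the expansion of $\sigma$ except the one proportional to $d^{0}$. With $d^{0}=0$ established, only the three lower-degree terms remain, and for each of those the anti-differentiation you describe does work (e.g.\ $b^{1}_{(A'}\pi_{B'}\pi_{C')}=\tC_{(A'}(\pi_{B'}\pi_{C')}b^{1}_{D'}\pi^{D'})$, etc.). After that, the rest of your argument goes through verbatim and matches the paper's proof.
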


\begin{proof}
Putting $\nu_{B'C'D'}=\tC_{(B'}\tau_{C'D')}$, we have $\tC_{(A'}\nu_{B'C'D')}=0$, 
so $\nu_{A'B'C'}$ can be expressed as in the right hand side of \eqref{SolGeneralTwistorLikeEq}. 
In addition, the condition $\nu_{A'B'C'}=\tC_{(A'}\tau_{B'C')}$ gives $\tC^{A'}\tC^{B'}\tC^{C'}\nu_{A'B'C'}=0$, 
which implies that the term of highest order in $\pi_{A'}$ in \eqref{SolGeneralTwistorLikeEq} is absent  for $\nu_{A'B'C'}$, so 
\begin{align*}
 \nu_{A'B'C'} = b^{1}_{(A'}\pi_{B'}\pi_{C')} + b^{2}_{(A'B'}\pi_{C')} + b^{3}_{A'B'C'}.
\end{align*}
We have $b^{1}_{(A'}\pi_{B'}\pi_{C')}=\tC_{(A'}(\pi_{B'}\pi_{C')}b^{1}_{D'}\pi^{D'})$, etc., which leads to $\tC_{(A'}K_{B'C')}=0$, 
where $K_{A'B'}=\tau_{A'B'}-\pi_{A'}\pi_{B'}b^{1}_{D'}\pi^{D'}-\pi_{(A'}b^{2}_{B')D'}\pi^{D'}-b^{3}_{A'B'D'}\pi^{D'}$.
Using proposition \ref{Prop:GeneralTwistorLikeEq}, we get $K_{A'B'}=c^{0}\pi_{A'}\pi_{B'}+c^{1}_{(A'}\pi_{B')}+c^{2}_{A'B'}$, so
\begin{align*}
 \tau_{A'B'} = c^{0}\pi_{A'}\pi_{B'}+c^{1}_{(A'}\pi_{B')}+c^{2}_{A'B'}
 +\pi_{A'}\pi_{B'}b^{1}_{D'}\pi^{D'}+\pi_{(A'}b^{2}_{B')D'}\pi^{D'}+b^{3}_{A'B'D'}\pi^{D'}.
\end{align*}
Renaming the constants of integration and reorganizing, \eqref{DoubleTwistorLikeEq} follows.
\end{proof}

We will need the result \eqref{DoubleTwistorLikeEq} in section \ref{Sec:Reconstruction} 
(see proposition \ref{Prop:RelationSpinorAndScalarPotentials}).
For higher order twistor-like equations \eqref{SpecialEqs2}, the solution can be found in a similar way.


\section{Linear fields}\label{Sec:LinearFields}

In this section we assume that the conditions \eqref{SFR} and \eqref{rPND} are satisfied. 

\subsection{Dirac-Weyl fields}

The right-handed (RH) spin $1/2$ equation is $\nabla_{A}{}^{A'}\phi_{A'}=0$.
The weights of the field are $p(\phi_{A'})=0$, $w(\phi_{A'})=-1$.
Using identity \eqref{identityRHF}, we can write the field equation as $\mathcal{C}_{A}{}^{A'}\phi_{A'}=0$. 
Contracting with $o^{A}$, we get $\tC^{A'}\phi_{A'}=0$. Applying then lemma \ref{Lemma:Potentials}, 
we deduce that there exists, locally, a scalar field $\Phi$ with weights $p(\Phi)=-1$ and $w(\Phi)=-1$, 
such that $\phi_{A'}=\tC_{A'}\Phi$.
Notice that $\Phi$ is defined only up to the addition of a function $g$ such that $\tC_{A'}g=0$, 
see remark \ref{Remark:InternalGaugeFreedom}.
Contracting $\C_{A}{}^{A'}\phi_{A'}=0$ with $\iota^{A}$ and replacing $\phi_{A'}=\tC_{A'}\Phi$
gives $\C^{A'}\tC_{A'}\Phi=0$. Using identity \eqref{C-tC-Scalars} we get
$(\Box^{\mathcal{C}} + \mathcal{R}/4)\Phi=0$, 
where $\mathcal{R}$ is the `scalar curvature' of $\C_{a}$, see \eqref{ScalarCurvatureC}, \eqref{ScalarCurvatureCSpecial}.
Since $\phi_{A'}$ was arbitrary, the following result is obtained:
\begin{theorem}\label{Thm:DiracWeyl}
Any solution to the RH spin $1/2$ equation can be locally written as $\phi_{A'}=\tC_{A'}\Phi$, where $\Phi$ 
satisfies $(\Box^{\mathcal{C}} + \mathcal{R}/4)\Phi=0$ and is defined only up to the freedom \eqref{InternalGaugeFreedom}.
\end{theorem}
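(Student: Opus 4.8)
The plan is to follow exactly the logic already sketched in the paragraph preceding the statement, organizing it as a clean chain of three reductions. First I would start from the right-handed spin $1/2$ equation $\nabla_{A}{}^{A'}\phi_{A'}=0$ and record the weights $p(\phi_{A'})=0$, $w(\phi_{A'})=-1$; using the identity relating $\nabla^{\rm f}_{a}$ (and hence $\nabla_{a}$ contracted appropriately) to $\C_{a}$ on such a field (identity \eqref{identityRHF} in the appendix), rewrite the equation as $\C_{A}{}^{A'}\phi_{A'}=0$. The point here is that the correction terms in \eqref{ActionOfC} involving $f_{a}$ and $P_{a}$, when contracted in the pattern $A{}^{A'}$ against a primed spinor of these weights, either cancel or reorganize into the Levi-Civita equation, so the two formulations coincide.

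Next I would split this single spinor equation into its two irreducible pieces by contracting with the spin frame $o^{A}$ and $\iota^{A}$. Contracting with $o^{A}$ gives $\tC^{A'}\phi_{A'}=0$, which is precisely the hypothesis of item 1 of Lemma \ref{Lemma:Potentials} (with no free $\mathscr{A}$ indices and weights $(0,-1)$). Invoking that lemma — legitimate under the standing assumptions \eqref{SFR}, \eqref{rPND}, and passing to the complexification $\mathbb{C}M$ if the manifold is only real-analytic — produces a scalar potential $\Phi$ with weights $(p-1,w)=(-1,-1)$ such that $\phi_{A'}=\tC_{A'}\Phi$, unique up to $\Phi\to\Phi+g$ with $\tC_{A'}g=0$, i.e. the gauge freedom \eqref{InternalGaugeFreedom}. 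Contracting instead with $\iota^{A}$ gives $\C^{A'}\phi_{A'}=0$; substituting $\phi_{A'}=\tC_{A'}\Phi$ yields $\C^{A'}\tC_{A'}\Phi=0$, and then applying the commutation/box identity \eqref{C-tC-Scalars} converts this into $(\Box^{\C}+\mathcal{R}/4)\Phi=0$, where $\mathcal{R}$ is the scalar curvature of $\C_a$ from \eqref{ScalarCurvatureC}, \eqref{ScalarCurvatureCSpecial}. Finally, since $\phi_{A'}$ was an arbitrary solution, this establishes the ``any solution can be written as'' direction; I would note in one line that conversely, given such a $\Phi$, setting $\phi_{A'}=\tC_{A'}\Phi$ and running the identities backwards recovers a solution, so the correspondence is genuine.

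I do not expect a serious obstacle: this is essentially an application of Lemma \ref{Lemma:Potentials} plus bookkeeping with the appendix identities \eqref{identityRHF}, \eqref{C-tC-Scalars}. The one place to be careful is weight tracking — verifying that $p(\Phi)=-1$, $w(\Phi)=-1$ are the weights actually delivered by the lemma (it outputs $(p-1,w)$ from input $(p,w)=(0,-1)$, so this checks out) and that these are consistent with the weights required for the identity \eqref{C-tC-Scalars} to apply to $\Phi$. A secondary subtlety, worth a remark rather than a full argument, is that the representation $\phi_{A'}=\tC_{A'}\Phi$ is only local and, in the real-analytic case, lives a priori on $\mathbb{C}M$; whether $\Phi$ descends to the real slice is a reality question that the theorem, as stated, does not claim to resolve.
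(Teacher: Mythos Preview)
Your proposal is correct and follows essentially the same approach as the paper: the paper's proof is precisely the chain you describe --- rewrite via \eqref{identityRHF}, contract with $o^{A}$ to get $\tC^{A'}\phi_{A'}=0$, apply Lemma~\ref{Lemma:Potentials} for the potential $\Phi$ with weights $(-1,-1)$, then contract with $\iota^{A}$ and use \eqref{C-tC-Scalars} to obtain the wave equation. Your added remarks on the converse direction and on reality/complexification are sound supplements but not part of the paper's argument.
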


\begin{remark}
Using the expression \eqref{boxC2} for $\Box^{\C}$ in terms of GHP operators, we have
\begin{equation}
 (\Box^{\C}+\mathcal{R}/4)\Phi=2[(\tho'-\tilde\rho')\tho-(\edt'-\tilde\t)\edt]\Phi.  \label{DebyeEqDiracGHP}
\end{equation}
\end{remark}

\subsection{Maxwell fields}

\subsubsection{Preliminaries}\label{Sec:MaxwellPreliminaries}

Maxwell equations for a 2-form $F_{ab}$ are
\begin{subequations}
\begin{align}
 \nabla^{a}F_{ab} ={}& 4\pi J_{b}, \label{MaxwellEqs1} \\
 \nabla_{[a}F_{bc]} ={}& 0,\label{MaxwellEqs2}
\end{align}
\end{subequations}
where $J_{a}$ is assumed to be given and it physically represents the source.
Since we are assuming the manifold to be orientable, \eqref{MaxwellEqs2} can be written as $\nabla^{a} {}^{*}F_{ab}=0$, 
where ${}^{*}F_{ab}=\frac{1}{2}\varepsilon_{ab}{}^{cd}F_{cd}$. The spinor decomposition of the field strength is
\begin{align}
 F_{ab} = \phi_{A'B'}\epsilon_{AB} + \varphi_{AB}\epsilon_{A'B'}. \label{SpinorDecMaxwell}
\end{align}
$\phi_{A'B'}$ and $\varphi_{AB}$ are respectively the SD and ASD parts of $F_{ab}$, 
and we are ignoring reality conditions for the moment. 
Thus, Maxwell equations can be written in spinor terms as
\begin{subequations}
\begin{align}
 \nabla_{B}{}^{A'}\phi_{A'B'} + \nabla_{B'}{}^{A}\varphi_{AB} ={}& 4\pi J_{BB'}, \label{MaxwellEqsSpinors1} \\
 \nabla_{B}{}^{A'}\phi_{A'B'} - \nabla_{B'}{}^{A}\varphi_{AB} ={}& 0. \label{MaxwellEqsSpinors2}
\end{align}
\end{subequations}
Imposing \eqref{MaxwellEqsSpinors2}, and replacing in \eqref{MaxwellEqsSpinors1}:
\begin{equation}
 \nabla_{B}{}^{A'}\phi_{A'B'} = 2\pi J_{BB'}. \label{MaxwellEqs3}
\end{equation}
Although \eqref{MaxwellEqs3} may suggest that the ASD part $\varphi_{AB}$ disappears from the picture and 
is ``decoupled'' from the SD part, it is actually `hidden' in the fact that we are already using \eqref{MaxwellEqsSpinors2}:
this equation implies that $\nabla_{[a}F_{bc]}=0$, so $F_{ab}$ can (at least locally) be expressed in terms of
a vector potential $A_{a}$ by $F_{ab}=2\nabla_{[a}A_{b]}$. In spinor terms this is
\begin{equation}
 \phi_{A'B'} = \nabla_{A(A'}A^{A}_{B')}, \qquad \varphi_{AB} = \nabla_{A'(A}A^{A'}_{B)} \label{VectorPotential}
\end{equation}
so the SD and ASD parts are related by the vector potential.
Roughly speaking, if we solve \eqref{MaxwellEqs3} for $\phi_{A'B'}$, this should give us an expression 
for $A_{AA'}$, which then, replacing in the second equation in \eqref{VectorPotential}, should in turn give us 
an expression for $\varphi_{AB}$.

\medskip
Now, let us take into account reality conditions. 
For complex manifolds, the 2-form $F_{ab}$ is complex, and, if we do not impose anything else on $F_{ab}$, 
the SD and ASD parts of $F_{ab}$ are independent.
However, if we impose \eqref{MaxwellEqs2}, then $\phi_{A'B'}$ and $\varphi_{AB}$ are actually related by \eqref{VectorPotential}.
So, if we study the (complex) Maxwell equations only in terms of $\phi_{A'B'}$ via \eqref{MaxwellEqs3}, 
we can still get information of $\varphi_{AB}$ by using \eqref{VectorPotential}.
This point will be particularly important not only for Maxwell fields but also for the other more general fields 
that we will study in later sections.

\smallskip
For real manifolds, we want to impose $F_{ab}$ to be a real 2-form. Since we are working with spinors, 
this reality condition will take different forms according to signature.
For Riemann signature, we mentioned that there is a notion of `real spinors' via the involutive 
Euclidean spinor complex conjugation $\dagger$.
For split signature, spinors are themselves real objects.
So, for these two cases, the discussion about the SD and ASD parts of $F_{ab}$ is very similar to the 
complex case: we can study the Maxwell equation \eqref{MaxwellEqs3} for the SD part $\phi_{A'B'}$, 
and the ASD part will be given by the second eq. in \eqref{VectorPotential} once we determine an expression for $A_{AA'}$.
The only difference is that the spinors are now required to be real.

\smallskip
For Lorentz signature, there are no real spinors, since complex conjugation changes chirality.
The reality requirement for the 2-form $F_{ab}$ translates into the condition 
that the SD and ASD parts must be complex-conjugated: $\varphi_{AB}=\bar{\phi}_{AB}$.
This means that, for this case, we can focus entirely on $\phi_{A'B'}$: once we solve \eqref{MaxwellEqs3} for it, 
the full Maxwell field $F_{ab}$ will be given by \eqref{SpinorDecMaxwell} with $\varphi_{AB}=\bar{\phi}_{AB}$.

\medskip
Since we are leaving the signature arbitrary, in what follows we will study the SD and ASD parts separately.
This will also be useful for later sections, where the treatment will be very similar.
For the SD part, we will consider both trivial and non-trivial sources; but for the ASD part 
we will focus only on the source-free case.
In addition, for the treatment of the ASD field we will need to assume that this part is algebraically special 
(except in Lorentz signature).

\subsubsection{The self-dual part}\label{Sec:SDMaxwell}

Here we will study the system 
\begin{equation}
 \nabla_{B}{}^{A'}\phi_{A'B'} = 0. \label{RHspin1}
\end{equation}
Notice that this equation, by itself, is equivalent to the vacuum Maxwell equations {\em only in Lorentz signature, 
and not in the other cases}. For this reason, we will refer to \eqref{RHspin1} as the ``spin 1 system''.
This difference will be important for the study of the ASD field $\varphi_{AB}$ in section \ref{Sec:ASDMaxwell}.

\subsubsection*{The source-free case}

\begin{theorem}\label{Theorem:Maxwell:Sourcefree}
Any solution to the spin 1 system \eqref{RHspin1} can be locally written as 
\begin{equation}\label{spin1-vacuum}
 \phi_{A'B'}=\tilde{\mathcal{C}}_{A'}\tilde{\mathcal{C}}_{B'}\Phi
\end{equation}
where $\Phi$ is a scalar field with weights $p(\Phi)=-2$ and $w(\Phi)=-1$, defined only up to the 
freedom \eqref{InternalGaugeFreedom}, that solves the equation 
\begin{equation}\label{debye-spin1}
 (\Box^{\mathcal{C}}+\mathcal{R}/2)\Phi=0,
\end{equation}
where $\mathcal{R}$ is the scalar curvature of $\C_{a}$ \eqref{ScalarCurvatureC}-\eqref{ScalarCurvatureCSpecial}.
\end{theorem}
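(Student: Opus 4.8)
The plan is to mirror, at spin weight one, the strategy already used in the spin-$1/2$ case (Theorem~\ref{Thm:DiracWeyl}), but now applying Lemma~\ref{Lemma:Potentials} twice. First I would rewrite the spin~1 system \eqref{RHspin1} in terms of $\C_a$ using the appropriate identity from the appendix (the analogue of \eqref{identityRHF} for a valence-two primed spinor), obtaining $\C_B{}^{A'}\phi_{A'B'}=0$. Contracting with $o^B$ gives $\tC^{A'}\phi_{A'B'}=0$, so Lemma~\ref{Lemma:Potentials}(1), applied with the free index $\mathscr{A}=\{{}_{B'}\}$ and weights $(p,w)=(0,-1)$, yields a local spinor field $\psi_{B'}$ with weights $(-1,-1)$ such that $\phi_{A'B'}=\tC_{A'}\psi_{B'}$. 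Since $\phi_{A'B'}$ is symmetric, we have $\tC_{[A'}\psi_{B']}=0$, i.e. $\tC^{A'}\psi_{A'}$-type antisymmetrization vanishes; I must then check that this forces $\tC^{A'}$ applied to $\psi_{B'}$ to be "closed" in the right sense so that Lemma~\ref{Lemma:Potentials}(1) applies once more. Concretely, the symmetry $\phi_{A'B'}=\phi_{(A'B')}$ gives $\tC_{A'}\psi_{B'}=\tC_{(A'}\psi_{B')}$, hence $\tC^{A'}\psi_{B'}\epsilon_{A'}{}^{\phantom{A'}}$... cleaner: $\tC_{A'}\psi_{B'}-\tC_{B'}\psi_{A'}=0$ means $\epsilon^{A'B'}\tC_{A'}\psi_{B'}=0$, i.e. $\tC^{B'}\psi_{B'}=0$. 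Applying Lemma~\ref{Lemma:Potentials}(1) again, with $\mathscr{A}$ empty and weights adjusted, produces a scalar $\Phi$ with $p(\Phi)=-2$, $w(\Phi)=-1$ such that $\psi_{B'}=\tC_{B'}\Phi$, giving \eqref{spin1-vacuum}.

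Next I would derive the scalar equation \eqref{debye-spin1}. Contract the $\C_a$-form of the field equation with $\iota^B$ instead of $o^B$: $\iota^B\C_B{}^{A'}\phi_{A'B'}=\C^{A'}\phi_{A'B'}=0$. Substituting $\phi_{A'B'}=\tC_{A'}\tC_{B'}\Phi$ gives $\C^{A'}\tC_{A'}(\tC_{B'}\Phi)=0$. Here I expect to use the commutator/curvature identities for $\C^{A'}\tC_{A'}$ acting on weighted primed spinors (the valence-one analogue of the scalar identity \eqref{C-tC-Scalars}, presumably also in the appendix) to convert this into $(\Box^{\C}+\text{curvature})(\tC_{B'}\Phi)=0$, and then commute $\tC_{B'}$ past the wave operator using \eqref{FlatConnection} and the structure equations for $\C_a$ on the special spacetimes. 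The bookkeeping of curvature terms — getting exactly the coefficient $\mathcal{R}/2$ rather than $\mathcal{R}/4$ — is where I'd be most careful; the shift from the spin-$1/2$ value $\mathcal{R}/4$ to $\mathcal{R}/2$ should come from the extra spin weight, and I would cross-check against the GHP form (the analogue of \eqref{DebyeEqDiracGHP}) if available. Alternatively, and perhaps more robustly, one can first establish $\tC^{B'}(\C^{A'}\phi_{A'B'})=0$ at the level of $\Phi$ directly: since $\C^{A'}\phi_{A'B'}=0$ is the actual equation, one shows it is equivalent to a scalar condition on $\Phi$ modulo the gauge freedom \eqref{InternalGaugeFreedom}, by peeling off two $\tC$'s using Lemma~\ref{Lemma:Potentials}(1) in reverse.

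Finally I would note the gauge freedom: by Remark~\ref{Remark:InternalGaugeFreedom}, $\Phi$ is determined only up to $\Phi\to\Phi+g$ with $\tC_{A'}g=0$, applied at each of the two potentialization steps, and one checks that the composite ambiguity is still of this form (a $\tC_{A'}$-constant added to $\Phi$, possibly after absorbing lower-order constants of integration via Definition~\ref{Def:Constants} and \eqref{SolSpecialEqs1}); and conversely any such $\Phi$ solving \eqref{debye-spin1} yields via \eqref{spin1-vacuum} a solution of \eqref{RHspin1}, because running the identities backwards shows $\C^{A'}\phi_{A'B'}=\tC_{A'}[(\Box^{\C}+\mathcal{R}/2)\Phi]\,(\text{times }\tC_{B'}\text{-type factors})=0$ and the $o^B$-contraction is automatic from $\phi_{A'B'}=\tC_{(A'}\tC_{B')}\Phi$. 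The main obstacle is the curvature computation in the second paragraph: ensuring the $\tC$-derivatives commute correctly through $\Box^{\C}$ using \eqref{FlatConnection} and \eqref{SpecialConditions}, and tracking all Weyl/curvature contributions so that everything except the stated $\mathcal{R}/2$ term cancels, which relies crucially on \eqref{rPND} (so that $\Psi_{ABCD}o^Bo^Co^D=0$ kills the would-be obstruction).
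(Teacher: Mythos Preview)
Your potentialization step is essentially the paper's: applying Lemma~\ref{Lemma:Potentials} twice to get $\phi_{A'B'}=\tC_{A'}\tC_{B'}\Phi$ is exactly what the paper does (it just compresses the two applications into one sentence, using the symmetry of $\phi_{A'B'}$). That part is fine.

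The derivation of the scalar equation, however, is where your plan diverges and becomes shakier. You propose to apply a spinor-valued analogue of \eqref{C-tC-Scalars} to $\tC_{B'}\Phi$ first and then commute $\tC_{B'}$ past $\Box^{\C}$. That route exists in principle but drags in the full $\tilde{X}^{\rm f}_{A'B'C'D'}$ curvature acting on the primed index, and commuting $\tC_{B'}$ through $\Box^{\C}$ is not clean. The paper does the commutation \emph{first} and at the level of the first-order operators: from $\C^{A'}\tC_{A'}\tC_{B'}\Phi=0$ it writes
\[
\tC_{B'}\C^{A'}\tC_{A'}\Phi + [\tC_{B'},\C_{A'}]\tC^{A'}\Phi = 0,
\]
and observes that the commutator term vanishes identically by \eqref{ContractedCommutatorPS}, precisely because $p(\tC^{A'}\Phi)=-1$ makes the $-(p+1)$ prefactor zero. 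Then the scalar identity \eqref{C-tC-Scalars} applied to $\Phi$ (with $p=-2$) gives $\tC_{B'}(\Box^{\C}+\mathcal{R}/2)\Phi=0$ directly, and the coefficient $\mathcal{R}/2$ falls out of $-p\mathcal{R}/4$ with no further bookkeeping. Your worry about ``getting exactly $\mathcal{R}/2$'' disappears once you take this order of operations.

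There is also a genuine step you only gesture at: from $\tC_{B'}(\Box^{\C}+\mathcal{R}/2)\Phi=0$ you get $(\Box^{\C}+\mathcal{R}/2)\Phi=2K$ with $\tC_{A'}K=0$, \emph{not} zero. The paper removes $K$ by exploiting the gauge freedom at the intermediate level, $\tC_{B'}\Phi\to\tC_{B'}\Phi+\tC_{B'}a$, and choosing $a$ to solve the (always locally solvable) inhomogeneous wave equation $(\Box^{\C}+\mathcal{R}/2)a=-2K$; then $\Phi\mapsto\Phi+a$ satisfies \eqref{debye-spin1} exactly. Your final paragraph alludes to ``absorbing lower-order constants of integration'' but does not identify that this absorption requires solving an auxiliary wave equation rather than a purely algebraic adjustment.
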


\begin{remark}
Using identity \eqref{boxC2}, in GHP notation we have
\begin{equation}
 (\Box^{\C}+\mathcal{R}/2)\Phi=2[(\tho'-\tilde\rho')(\tho+\rho)-(\edt'-\tilde\t)(\edt+\t)]\Phi. \label{DebyeEqMaxwellGHP}
\end{equation}
\end{remark}

\begin{proof}[Proof of Theorem \ref{Theorem:Maxwell:Sourcefree}]
Let $\phi_{A'B'}$ be a solution to \eqref{RHspin1}.
From \eqref{identityRHF} we know $\nabla_{B}{}^{A'}\phi_{A'B'}=\mathcal{C}_{B}{}^{A'}\phi_{A'B'}$. 
Contraction with $o^{B}$ leads to $\tC^{A'}\phi_{A'B'}=0$.
Applying lemma \ref{Lemma:Potentials}, and using the symmetry $\phi_{A'B'}=\phi_{(A'B')}$,
we deduce that there exists, locally, 
a scalar field $\mr{\Phi}$, with weights $p(\mr\Phi)=-2$ and $w(\mathring{\Phi})=-1$, such that
\begin{equation}
 \phi_{A'B'}=\tilde{\C}_{A'}\tilde{\C}_{B'}\mr\Phi. \label{spin1-0}
\end{equation}
Contracting now $\mathcal{C}_{B}{}^{A'}\phi_{A'B'}=0$ with $\iota^{B}$, we get $\mathcal{C}^{A'}\phi_{A'B'}=0$, 
so replacing the expression above for $\phi_{A'B'}$:
\begin{equation}
 0 = \C^{A'}\phi_{A'B'} =\C^{A'}\tC_{A'}\tilde{\C}_{B'}\mr\Phi  = \tC_{B'}\C^{A'}\tC_{A'}\mr\Phi + [\tC_{B'},\C_{A'}]\tC^{A'}\mr\Phi.
 \label{EqWithCommutator1}
\end{equation}
Noticing that $p(\tC^{A'}\mr\Phi)=-1$ and applying identity \eqref{ContractedCommutatorPS}, 
the commutator term on the right vanishes. Using then \eqref{C-tC-Scalars}, we get:
\begin{equation}
 \tilde{\C}_{B'}(\Box^{\C}+\mathcal{R}/2)\mr\Phi=0. \label{cond1}
\end{equation}
Thus, $(\Box^{\C}+\mathcal{R}/2)\mr\Phi \equiv 2K$, where $K$ is a constant of integration: $\tilde{\C}_{A'}K=0$ 
(see section \ref{Sec:ConstantsOfIntegration}).
Notice that $K=\C^{A'}\tilde{\C}_{A'}\mr\Phi$.
Now, similarly to the ``gauge freedom'' \eqref{InternalGaugeFreedom},
in the expression \eqref{spin1-0} for $\phi_{A'B'}$ we see that we can perform a ``gauge transformation''
\begin{equation}
 \tC_{B'}\mr\Phi \to \tC_{B'}\mr\Phi + g_{B'}, \label{InternalGaugeSpin1}
\end{equation}
where $\tC_{(A'}g_{B')}=0$, and the field \eqref{spin1-0} remains invariant, but $K$ is changed according to
\begin{equation*}
 K=\C^{A'}\tC_{A'}\mr\Phi \;\; \to \;\; K' = \C^{A'}\tC_{A'}\mr\Phi+\C^{A'}g_{A'}=K+\C^{A'}g_{A'}.
\end{equation*}
In particular, choose $g_{A'}=\tC_{A'}a$ for some auxiliary field $a$.
Then $K'=K+\C^{A'}\tC_{A'}a = K+\frac{1}{2}(\Box^{\C}+\mathcal{R}/2)a$. 
The solution to the problem $(\Box^{\C}+\mathcal{R}/2)a=-2K$ always exists (at least locally), 
since it is a wave equation with source.
In addition, replacing in \eqref{InternalGaugeSpin1} we have $\tC_{B'}\mr\Phi + g_{B'} = \tC_{B'}\Phi$, 
where $\Phi=\mr\Phi+a$. 
The field \eqref{spin1-0} is then $\phi_{A'B'}=\tC_{A'}\tC_{B'}\Phi$, 
and $\Phi$ satisfies $\C^{A'}\tC_{A'}\Phi=\frac{1}{2}(\Box^{\C}+\mathcal{R}/2)\Phi=0$, 
thus the result follows.
\end{proof}

\subsubsection*{The case with sources}

For general sources in \eqref{MaxwellEqs3}, it is clear that a field of the form \eqref{spin1-vacuum} is not a solution, 
since contracting with $o^{B}$ and replacing \eqref{spin1-vacuum} leads to
\begin{equation}
 2\pi o^{B}J_{BB'}=o^{B}\nabla_{B}{}^{A'}\phi_{A'B'}=\tC^{A'}\phi_{A'B'} = \tC^{A'}\tC_{A'}\tC_{B'}\Phi \equiv 0,
\end{equation}
therefore, if $o^{B}J_{BB'}\neq0$, this equation is inconsistent. 
In order to solve this issue, we follow a strategy similar to the one adopted in \cite{Hollands2020}, 
which involves the introduction of a `corrector' potential.
Our approach is however slightly different in that it does not assume any particular form for the 
`corrector term', and it only uses the integration machinery given in Section \ref{Sec:deRham}.

\begin{theorem}
The solution $\phi_{A'B'}$ to the equation \eqref{MaxwellEqs3} for a given source $J_{BB'}$ can be locally written as 
\begin{equation}\label{solution:MaxwellWithSources}
 \phi_{A'B'}=\tC_{A'}\tC_{B'}\Phi+x_{A'B'},
\end{equation}
where $x_{A'B'}$ is determined from $J_{BB'}$ as a particular solution to the equation
\begin{equation}\label{xJ}
 \tC^{A'}x_{A'B'}=2\pi o^{B}J_{BB'},
\end{equation}
and $\Phi$ is a scalar field with weights $p(\Phi)=-2$ and $w(\Phi)=-1$ that solves the inhomogeneous equation
\begin{equation}
 \tfrac{1}{2}(\Box^{\mathcal{C}}+\mathcal{R}/2)\Phi=T
\end{equation}
with the source $T$ determined by $x_{A'B'}$.
\end{theorem}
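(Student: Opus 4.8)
The plan is to follow the same pattern as in the source-free case (Theorem \ref{Theorem:Maxwell:Sourcefree}), but with an extra ``corrector'' term that absorbs the source. First I would rewrite \eqref{MaxwellEqs3} using identity \eqref{identityRHF} as $\C_B{}^{A'}\phi_{A'B'}=2\pi J_{BB'}$, and then take the two contractions with $o^B$ and $\iota^B$. The contraction with $o^B$ gives
\begin{equation*}
 \tC^{A'}\phi_{A'B'}=2\pi o^{B}J_{BB'}.
\end{equation*}
Since the right-hand side is prescribed, I would first produce a particular solution $x_{A'B'}$ of \eqref{xJ}: this is possible by lemma \ref{Lemma:Potentials}(2) applied componentwise (write $o^B J_{BB'}$ as $\tC^{A'}\xi_{A'B'}$ for some $\xi_{A'B'}$, and then take $x_{A'B'}$ to be a symmetrized/adjusted version; one must check weights — $J_{BB'}$ has $p=-1,w=-3$ roughly, so $o^B J_{BB'}$ has $p=0$, $w=-3$, and $x_{A'B'}$ then has $p=0$, $w=-3$, which forces $\phi_{A'B'}-x_{A'B'}$ to have the weights needed for a potential of weights $(-2,-1)$). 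Having fixed such an $x_{A'B'}$, I set $\psi_{A'B'}:=\phi_{A'B'}-x_{A'B'}$, which satisfies $\tC^{A'}\psi_{A'B'}=0$, and by lemma \ref{Lemma:Potentials}(1) (together with the symmetry that $\psi_{A'B'}$ can be taken symmetric after accounting for $x$) there is locally a scalar $\mathring\Phi$ of weights $(-2,-1)$ with $\psi_{A'B'}=\tC_{A'}\tC_{B'}\mathring\Phi$, giving \eqref{solution:MaxwellWithSources} with $\mathring\Phi$ in place of $\Phi$.

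Next I would extract the scalar equation from the $\iota^B$-contraction. Contracting $\C_B{}^{A'}\phi_{A'B'}=2\pi J_{BB'}$ with $\iota^B$ yields $\C^{A'}\phi_{A'B'}=2\pi\iota^B J_{BB'}$, and substituting $\phi_{A'B'}=\tC_{A'}\tC_{B'}\mathring\Phi+x_{A'B'}$ gives, exactly as in \eqref{EqWithCommutator1},
\begin{equation*}
 \tC_{B'}\C^{A'}\tC_{A'}\mathring\Phi + [\tC_{B'},\C_{A'}]\tC^{A'}\mathring\Phi + \C^{A'}x_{A'B'} = 2\pi\iota^{B}J_{BB'}.
\end{equation*}
The commutator term vanishes by identity \eqref{ContractedCommutatorPS} since $p(\tC^{A'}\mathring\Phi)=-1$, and using \eqref{C-tC-Scalars} this becomes $\tC_{B'}\big[\tfrac12(\Box^{\C}+\mathcal R/2)\mathring\Phi\big] = 2\pi\iota^{B}J_{BB'}-\C^{A'}x_{A'B'} =: 2S_{B'}$, a prescribed spinor source. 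Applying lemma \ref{Lemma:Potentials}(2) to $S_{B'}$ (after checking $\tC^{B'}S_{B'}=0$, which should follow from compatibility with \eqref{FlatConnection} and the defining property of $x_{A'B'}$), one writes $S_{B'}=\tC_{B'}T$ for some scalar $T$ determined by $x_{A'B'}$ and $J$, whence $\tC_{B'}\big[\tfrac12(\Box^{\C}+\mathcal R/2)\mathring\Phi - T\big]=0$; so $\tfrac12(\Box^{\C}+\mathcal R/2)\mathring\Phi - T$ is a constant of integration $K$ in the sense of Definition \ref{Def:Constants}.

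The final step is to absorb $K$ via the residual gauge freedom, exactly as in the source-free proof: using the transformation \eqref{InternalGaugeSpin1} with $g_{A'}=\tC_{A'}a$ one has $\mathring\Phi\to\Phi=\mathring\Phi+a$ with $K\to K+\tfrac12(\Box^{\C}+\mathcal R/2)a$, and choosing $a$ to solve the (locally solvable) inhomogeneous wave equation $\tfrac12(\Box^{\C}+\mathcal R/2)a=-K$ removes $K$, leaving $\tfrac12(\Box^{\C}+\mathcal R/2)\Phi=T$ while \eqref{solution:MaxwellWithSources} is unchanged (the gauge transformation \eqref{InternalGaugeSpin1} leaves $\tC_{A'}\tC_{B'}\mathring\Phi$ invariant, so $x_{A'B'}$ is untouched). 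The main obstacle I expect is the bookkeeping around $x_{A'B'}$: one must verify the integrability condition $\tC^{B'}(\iota^B J_{BB'}) $-type relation that makes $S_{B'}$ lie in the image of $\tC_{B'}$ (equivalently $\tC^{B'}S_{B'}=0$), and one must pin down the precise weights and symmetry adjustments so that lemma \ref{Lemma:Potentials} applies cleanly at each stage; the genuinely new content beyond the source-free case is entirely in this corrector construction, since once $x_{A'B'}$ and $T$ are in hand the scalar reduction and gauge-fixing are identical.
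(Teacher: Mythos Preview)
Your approach is essentially the paper's, just organized in reverse: you start from an arbitrary solution $\phi_{A'B'}$, subtract the corrector, potentialize, and then gauge-fix away a constant of integration $K$; the paper instead constructs $x_{A'B'}$, $T$, and then \emph{defines} $\Phi$ as a solution to the sourced wave equation and verifies that $\tC_{A'}\tC_{B'}\Phi+x_{A'B'}$ solves \eqref{MaxwellEqs3}, so no gauge-fixing step is needed.

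The one step you correctly flag as the obstacle --- $\tC^{B'}S_{B'}=0$ --- is exactly where the paper does the work, and the argument is clean once you package things with both indices. Set $S_{BB'}:=2\pi J_{BB'}-\nabla_B{}^{A'}x_{A'B'}$; by construction $o^B S_{BB'}=0$, so $S_{BB'}=o_B y_{B'}$ with $y_{B'}$ your $2S_{B'}$. Then $\nabla^{BB'}S_{BB'}=0$ follows from current conservation $\nabla^{b}J_{b}=0$ (a consequence of \eqref{MaxwellEqs3}) together with the second term having the same divergence structure. Since $S_{BB'}$ has $p=0$, $w=-1$, one has $\nabla^{BB'}S_{BB'}=\C^{BB'}S_{BB'}$, and with $\C_a o_B=0$ this reduces to $\tC^{B'}y_{B'}=0$. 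Then Lemma~\ref{Lemma:Potentials}(1) (not (2)) gives $y_{B'}=\tC_{B'}T$.

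Two minor corrections: your weight bookkeeping is off --- $\phi_{A'B'}$ has $(p,w)=(0,-1)$, hence so does $J_{BB'}$ via \eqref{identityRHF}, so $o^B J_{BB'}$ has $(1,-1)$ and $x_{A'B'}$ has $(0,-1)$, exactly matching $\phi_{A'B'}$ as needed --- and, as noted, the potentialization $S_{B'}=\tC_{B'}T$ uses part (1) of Lemma~\ref{Lemma:Potentials}, not part (2).
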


\begin{proof}
Suppose that the source $J_{BB'}$ in \eqref{MaxwellEqs3} is given. 
We want to find the general solution $\phi_{A'B'}$ to this equation.
Applying item 2. in Lemma \ref{Lemma:Potentials} for the field $\theta^{\mathscr{A}}\equiv 2\pi o^{B}J_{BB'}$,
we know that there exists $x_{A'B'}=x_{(A'B')}$ such that eq. \eqref{xJ} holds.
We can choose any particular solution to this equation.
Using identity \eqref{identityRHF}, eq. \eqref{xJ}
implies that $o^{B}\nabla_{B}{}^{A'}x_{A'B'}=2\pi o^{B}J_{BB'}$, therefore
\begin{equation}\label{difference}
 o^{B}\nabla_{B}{}^{A'}(\phi_{A'B'}-x_{A'B'})=0.
\end{equation}
Now define
\begin{equation*}
 S_{BB'}:=2\pi J_{BB'}-\nabla_{B}{}^{A'}x_{A'B'}.
\end{equation*}
From \eqref{MaxwellEqs3} and \eqref{difference} we see that $o^{B}S_{BB'}=0$, therefore $S_{BB'}=o_{B}y_{B'}$
where $y_{B'}=\iota^{B}S_{BB'}=2\pi J_{1B'}-\iota^{B}\nabla_{B}{}^{A'}x_{A'B'}$. 
This $y_{B'}$ is known once we integrate \eqref{xJ} to find $x_{A'B'}$.
Now, a straightforward computation shows that $\nabla^{BB'}S_{BB'}=0$.
But we also have the identity $\nabla^{BB'}S_{BB'}=\C^{BB'}S_{BB'}$, thus, replacing $S_{BB'}=o_{B}y_{B'}$, 
we get $\tC^{B'}y_{B'}=0$, from where we deduce that there exists, locally, a scalar field $T$ such that
\begin{equation}\label{Defzeta}
 y_{B'}=\tC_{B'}T.
\end{equation}
Now we use $T$ as a source for the equation
\begin{equation*}
 T=\C^{A'}\tC_{A'}\Phi=\tfrac{1}{2}(\Box^{\C}+\mathcal{R}/2)\Phi
\end{equation*}
where $\Phi$ is undetermined at the moment.
Since $T$ is known from \eqref{Defzeta} (assuming we already determined $x_{A'B'}$), this equation can 
be solved for $\Phi$. Therefore:
\begin{align*}
 S_{BB'}=o_{B}\tC_{B'}\C^{A'}\tC_{A'}\Phi = 2\pi J_{BB'}-\nabla_{B}{}^{A'}x_{A'B'},
\end{align*}
from which we get, after a straightforward calculation, $2\pi J_{BB'}=\nabla_{B}{}^{A'}\mr{\phi}_{A'B'}+\nabla_{B}{}^{A'}x_{A'B'}$, 
where $\mr{\phi}_{A'B'}=\tC_{A'}\tC_{B'}\Phi$. Thus the result \eqref{solution:MaxwellWithSources} follows.
\end{proof}

\subsubsection{The anti-self-dual part}\label{Sec:ASDMaxwell}

From the discussion in section \ref{Sec:MaxwellPreliminaries}, recall that for Lorentz signature 
we do not need to worry about different treatments for the SD and ASD parts of the Maxwell field, 
since they are related by complex conjugation. 
However, for all the other cases, even if we solve the problem \eqref{RHspin1} for $\phi_{A'B'}$ 
we still must consider separately the issue of determining the ASD field $\varphi_{AB}$. 
We will see here that we can still ``determine'' the ASD part $\varphi_{AB}$ {\em as long as it is algebraically special}.
More precisely, provided this is the case, $\varphi_{AB}$ will be given in terms of arbitrary constants of integration.

\smallskip
First, notice that since $A_{a}$ has vanishing $p$ and $w$ weights, we have $\nabla_{[a}A_{b]}=\C_{[a}A_{b]}$, 
so the SD and ASD parts of $F_{ab}$ can be expressed as
$\phi_{A'B'}=\C_{A(A'}A^{A}_{B')}$, $\varphi_{AB}=\C_{A'(A}A^{A'}_{B)}$. Equivalently:
\begin{subequations}
\begin{align}
 \phi_{A'B'} ={}& \tC_{(A'}A_{B')} - \C_{(A'}\tilde{A}_{B')}, \label{SDMaxwell0} \\
 \varphi_{AB} ={}& o_{A}o_{B}(\C_{A'}A^{A'}-\sigma^{1}_{A'}\tilde{A}^{A'}) -o_{(A}\iota_{B)}(\C_{A'}\tilde{A}^{A'}+\tC_{A'}A^{A'})
 +\iota_{A}\iota_{B}\tC_{A'}\tilde{A}^{A'}, \label{ASDMaxwell0}
\end{align}
\end{subequations}
where $\tilde{A}_{A'}=o^{A}A_{AA'}$, $A_{A'}=\iota^{A}A_{AA'}$, and $\sigma^{1}_{A'}$ is defined in \eqref{sigma1}.
Now, {\em assume that $\varphi_{AB}$ is algebraically special} with repeated principal spinor $o^{A}$, 
that is $o^{A}o^{B}\varphi_{AB}=0$.
From \eqref{ASDMaxwell0} we then see that $\tC_{A'}\tilde{A}^{A'}=0$, so there exists (locally)
a scalar field $\mu$, with weights $p(\mu)=0$, $w(\mu)=0$, such that $\tilde{A}_{A'}=\tC_{A'}\mu$. 
But then we can get rid of $\mu$ by a gauge transformation $A_{a} \to A_{a} - \nabla_{a}\mu$, 
so we can simply put $\tilde{A}_{A'}=0$. 
The vector potential is then $A_{AA'}=o_{A}A_{A'}$, and we can focus on the determination of $A_{A'}$.
The Maxwell field becomes
\begin{subequations}
\begin{align}
 \phi_{A'B'} ={}& \tC_{(A'}A_{B')}, \label{SDMaxwell1} \\
 \varphi_{AB} ={}& o_{A}o_{B}\C_{A'}A^{A'} -o_{(A}\iota_{B)}\tC_{A'}A^{A'}. \label{ASDMaxwell1}
\end{align}
\end{subequations}

Now, the vacuum Maxwell equations are $\nabla_{B}{}^{A'}\phi_{A'B'}=\nabla_{B'}{}^{A}\varphi_{AB}$
and $\nabla_{B}{}^{A'}\phi_{A'B'}=0$.  From the latter equation, we know that 
$\phi_{A'B'}=\tC_{A'}\tC_{B'}\Phi$, for some locally defined scalar field $\Phi$.
Comparing this expression to \eqref{SDMaxwell1}, we deduce
\begin{equation}
 A_{A'} = \tC_{A'}\Phi + \tau_{A'}, \label{ConnectionMaxwell}
\end{equation}
where $\tau_{A'}$ is defined only by the condition
\begin{equation}
 \tC_{(A'}\tau_{B')} = 0. \label{tauMaxwell}
\end{equation}
The formal solution to this equation is given by proposition \ref{Prop:TwistorLikeEq} 
in terms of constants of integration and the solution to \eqref{TwistorLikeEq}.
On the other hand, as in the proof of theorem \ref{Theorem:Maxwell:Sourcefree}, 
the equation $\C^{A'}\phi_{A'B'}=0$
leads to $\C^{A'}\tC_{A'}\Phi=K$, where $K$ is a constant of integration, $\tC_{A'}K=0$.
For the spin 1 system \eqref{RHspin1}, we noticed that we had the ``internal gauge freedom'' \eqref{InternalGaugeSpin1}, 
which we used to set $K=0$.
In our current situation, equations \eqref{ConnectionMaxwell}-\eqref{tauMaxwell} seem to be 
the analogue of \eqref{InternalGaugeSpin1}, however, in this case {\em this is not a `gauge freedom'}: 
while the SD curvature $\phi_{A'B'}$ is insensitive to $\tau_{A'}$, the field $\tau_{A'}$ appears explicitly in 
the ASD part $\varphi_{AB}$, as can be seen from replacing \eqref{ConnectionMaxwell} into \eqref{ASDMaxwell1}.
More explicitly, using $\C_{A'}\tC^{A'}\Phi=-K$, we find:
\begin{equation}
 \varphi_{AB} = o_{A}o_{B}(-K+\C_{A'}\tau^{A'}) - o_{(A}\iota_{B)}\tC_{A'}\tau^{A'}. \label{ASDMaxwell2}
\end{equation}
In particular, the trick that we used in the spin 1 case (i.e. $\tau^{A'}=\tC^{A'}a$ and $\C^{A'}\tC_{A'}a=-K$)
now leads to a self-dual Maxwell field, that is $\varphi_{AB}\equiv 0$.
In the present section, however, we do not have this freedom, and the ASD Maxwell field $\varphi_{AB}$ 
depends on the constants of integration $K$ and $\tau_{A'}$.
This difference between the SD and ASD fields will also appear 
in the Yang-Mills and gravitational systems.

\subsubsection{Parallel frames}\label{Sec:ParallelFramesMaxwell}

Although for the Maxwell case the treatment of the previous sections is sufficient, 
an analogous strategy for the Yang-Mills and Einstein-Weyl systems is much harder,
so it is convenient to give an alternative formulation of the analysis above in terms of ``parallel frames'', 
since this will extend to the other more complicated cases.

\smallskip
We now think of Maxwell theory as a gauge theory. 
That is, we consider a principal bundle $P^{1}$ over $M$ with a 1-dimensional structure group $G^{1}$;
the vector potential then arises by assuming that $P^{1}$ is equipped with a local connection 1-form, $A_{a}$. 
The `charge $e$ representations' $\rho_{e}:G^{1}\to{\rm GL}(\mathbb{C})$ are given by
$\rho_{e}(a)\psi = a^{e}\psi$, with $a\in G^{1}$, $\psi\in\mathbb{C}$, 
and a charged scalar field with charge $e$ is simply a section of the associated vector bundle 
$E_{e}=P^{1}\times_{\rho_{e}}\mathbb{C}$.
By taking the tensor product of $E_{e}$ with the various spinor bundles, we obtain 
charged spinor fields as the corresponding sections.
The fact that $A_{a}$ is now seen as a connection form implies that it induces a covariant derivative 
on these bundles, ${\rm D}_{a}$. 
That is, if a spinor field $\psi^{\mathscr{A}}$ has charge $e$, then
${\rm D}_{a}\psi^{\mathscr{A}} = \nabla_{a}\psi^{\mathscr{A}}+e A_{a}\psi^{\mathscr{A}}$
The curvature of ${\rm D}_{a}$, given by the commutator $[{\rm D}_{a},{\rm D}_{b}]$, involves both the spacetime curvature 
and the curvature of the connection $A_{a}$, the latter being given by $2\nabla_{[a}A_{b]}=F_{ab}$.

\smallskip
We can then consider spinor fields which, in addition to the electromagnetic charge $e$, have also weights $p$ and $w$.
The covariant derivative on such objects can be obtained by simply replacing $\nabla_{a}$ by 
${\rm D}_{a}$ in the previous expressions for the complex-conformal connection $\C_{a}$, 
that is, one defines 
\begin{equation}
 \Cym_{a} := \C_{a}+ eA_{a}.
\end{equation}
For fields with charge $e=0$, $\Cym_{a}$ reduces to $\C_{a}$.
We also introduce the projections $\tilde\Cym_{A'} = o^{A}\Cym_{AA'}$ and $\Cym_{A'} = \iota^{A}\Cym_{AA'}$.
The curvature is $[\Cym_{a},\Cym_{b}] = [\C_{a},\C_{b}]+eF_{ab}$.
Let $\alpha$ be a scalar field with charge $e(\alpha)=1$ and weights $p(\alpha)=0$, $w(\alpha)=0$,
then $[\Cym_{a},\Cym_{b}] \alpha = F_{ab}\alpha$. Assuming $\alpha\neq 0$ in some region $U$, 
we get $F_{ab}=\alpha^{-1}[\Cym_{a},\Cym_{b}] \alpha$.
The SD and ASD fields are 
\begin{subequations}
\begin{align}
 \phi_{A'B'} ={}& \alpha^{-1}\Box^{\Cym}_{A'B'}\alpha, \label{SDBoxCMaxwell} \\
 \varphi_{AB} ={}& \alpha^{-1}\Box^{\Cym}_{AB}\alpha. \label{ASDBoxCMaxwell}
\end{align}
\end{subequations}

Now, from $[\Cym_{a},\Cym_{b}] = [\C_{a},\C_{b}]+eF_{ab}$ we see that the new operator $\tilde\Cym_{A'}$ satisfies
$[\tilde\Cym_{A'},\tilde\Cym_{B'}] = \epsilon_{A'B'} e o^{A}o^{B} \varphi_{AB}$.
Therefore, we deduce that $[\tilde\Cym_{A'},\tilde\Cym_{B'}]=0$ if and only if the ASD part 
is algebraically special, $o^{A}o^{B} \varphi_{AB}=0$. 
In such case, the integration technique from section \ref{Sec:deRham} can be generalized 
to the operator $\tilde\Cym_{A'}$.
In addition, a key point is that the equation $\tilde\Cym^{A'}\tilde\Cym_{A'}=0$ allows us to consider 
fields that are {\em parallel} under $\tilde\Cym_{A'}$, so we can take
\begin{equation}
 \tilde\Cym_{A'}\alpha=0. \label{ParallelFrameMaxwell}
\end{equation}
The generalization of this equation will correspond in later sections to {\em parallel frames} 
(see e.g. \cite[Theorem 2.4.1]{WW} for the connection between flat connections and parallel frames).
Using identities \eqref{BoxYMSD0}-\eqref{BoxYMASD0}, the SD and ASD fields are then:
\begin{subequations}
\begin{align}
 \phi_{A'B'} ={}& \tilde\Cym_{(A'}\gamma_{B')}, \label{CSDMaxwell1} \\
 \varphi_{AB} ={}& o_{A}o_{B}\Cym_{A'}\gamma^{A'} - o_{(A}\iota_{B)}\tilde\Cym_{A'}\gamma^{A'}, \label{CASDMaxwell1}
\end{align}
\end{subequations}
where we define
\begin{equation}
 \gamma_{A'}:=\alpha^{-1}\Cym_{A'}\alpha. \label{gammaMaxwell0}
\end{equation}
The field $\gamma_{A'}$ is essentially\footnote{From the definition of $\Cym_{a}$, 
we have $\gamma_{A'}=\iota^{A}\partial_{AA'}\log\alpha+A_{A'}$, so $\gamma_{A'}$ is gauge-equivalent to $A_{A'}$.} 
$A_{A'}$ from section \ref{Sec:ASDMaxwell}, but now expressed 
in terms of a ``parallel frame'' \eqref{ParallelFrameMaxwell}.
The charge of $\gamma_{A'}$ is $e=0$, so in \eqref{CSDMaxwell1}-\eqref{CASDMaxwell1} we can 
replace $\Cym_{a}$ by $\C_{a}$ and, thus, these formulas coincide 
with the earlier expressions \eqref{SDMaxwell1}-\eqref{ASDMaxwell1}, and the analysis is the same as before.
The current formulation has the advantage that it will extend to the Yang-Mills and Einstein-Weyl systems.

\begin{remark}[Charged Dirac fields]
The formulation in this section allows us also to consider the possibility of spacetime fields 
that are charged with respect to the electromagnetic field. 
For example, the charged Dirac-Weyl equation is
\begin{equation}
 {\rm D}_{A}{}^{A'}\phi_{A'} = 0, \label{ChargedDiracEq}
\end{equation}
where the field $\phi_{A'}$ is assumed to have a non-trivial charge $e(\phi_{A'})=e$. 
Equation \eqref{ChargedDiracEq} can be written as $\Cym_{A}{}^{A'}\phi_{A'}=0$, or
$\tilde\Cym^{A'}\phi_{A'}=0$ and $\Cym^{A'}\phi_{A'}=0$. 
The former equation leads to $\phi_{A'}=\tilde\Cym_{A'}\chi$, where $p(\chi)=-1$, $w(\chi)=-1$, and $e(\chi)=e$.
The other equation is $\Cym^{A'}\tilde\Cym_{A'}\chi=0$. In terms of wave operators and curvature, 
a short calculation shows that this is
\begin{equation}
 (\Box^{\Cym}+\mathcal{R}/4 - 2e\varphi_{01})\chi=0,
\end{equation}
where $\Box^{\Cym}=g^{ab}\Cym_{a}\Cym_{b}$ and $\varphi_{01}=o^{A}\iota^{B}\varphi_{AB}$.
\end{remark}

\section{Yang-Mills fields}\label{Sec:YangMills}

We assume that conditions \eqref{SFR} and \eqref{rPND} are satisfied.

\subsection{Preliminaries}

Let $P$ be a principal bundle over $M$, whose structure group is some Lie group $G$.
We consider the natural representation of $G$ in which its elements act as matrices on some $n$-dimensional vector space $V$.
We will use abstract indices $i, j, k,...$ to denote elements of $V$.
We refer to this kind of indices as `Yang-Mills' or `internal' indices, and we make the convention 
that they {\em do not} include spacetime indices.
The action of $m\in G$ on an element $\mu^{i}\in V$ is  
$\mu^{i}\mapsto m_{j}{}^{i}\mu^{j}$. 
We then take the associated vector bundles $E=P\times_{G}V$, and a spacetime field that 
is `charged with respect to $G$' is a section of $E$. 
As usual, fields with more Yang-Mills indices are obtained by simply considering tensor products of $V$ and its dual, 
together with the corresponding representations of $G$.
More generally, fields with mixed spacetime/internal indices are sections of 
vector bundles that are obtained as a tensor product of the corresponding spinor/tensor bundle and 
the Yang-Mills bundle.

\smallskip
Assume that $P$ is equipped with a local connection 1-form $A$. By definition, this takes values 
in the Lie algebra ${\rm Lie}(G)$, so in abstract indices we can write $A_{a i }{}^{j}$.
The covariant derivative induced on associated vector bundles, which we denote by ${\rm D}_{a}$
as in the Maxwell case, is given by
\begin{equation}\label{Def:YMderivative}
{\rm D}_{a}\mu^{\mathscr{A} i} = \nabla_{a}\mu^{\mathscr{A} i} + A_{a j}{}^{i}\mu^{\mathscr{A} j},
\end{equation}
where $\mu^{\mathscr{A}i}$ is arbitrary, and $\nabla_{a}$ on the right hand side acts only on spacetime indices $\mathscr{A}$.
The curvature of ${\rm D}_{a}$ can be computed from 
\eqref{Def:YMderivative} by applying an additional derivative and taking the commutator:
\begin{equation*}
 [{\rm D}_{a},{\rm D}_{b}] \mu^{\mathscr{A}i}
 = [\nabla_{a},\nabla_{b}]\mu^{\mathscr{A}i} + F_{ab j}{}^{i}\mu^{\mathscr{A} j},
\end{equation*}
where we define the Yang-Mills curvature by 
\begin{equation}\label{Def:YMcurvature}
  F_{abi}{}^{j} = 2\nabla_{[a}A_{b]i}{}^{j} + 2A_{[a|k|}{}^{j}A_{b]i}{}^{k}.
\end{equation}
The skew-symmetry $F_{ab i}{}^{j}=F_{[ab]i}{}^{j}$ allows 
the following spinor decomposition:
\begin{equation}\label{YM-spinordecomposition}
F_{ab i}{}^{j} = \chi_{A'B' i }{}^{j}\epsilon_{AB} + \varphi_{AB i}{}^{j}\epsilon_{A'B'},
\end{equation}
where $\chi_{A'B' i}{}^{j} = \tfrac{1}{2}\epsilon^{AB}F_{ab i}{}^{j}$ 
and $\varphi_{AB i}{}^{j} = \tfrac{1}{2}\epsilon^{A'B'}F_{abi}{}^{j}$
are, respectively, the SD and ASD pieces of $F_{abi}{}^{j}$.
Notice that reality conditions for $F_{abi}{}^{j}$ are subtle 
since they depend on the gauge group $G$, which we are here leaving unspecified. 
Thus, we will consider $\chi_{A'B'i}{}^{j}$ and $\varphi_{ABi}{}^{j}$ 
to be independent, so in this sense the situation is analogous to the Maxwell case in section \ref{Sec:ASDMaxwell}.

Using the identity $[{\rm D}_{[c},{\rm D}_{a}]{\rm D}_{b]}\mu^{i} = {\rm D}_{[c}[{\rm D}_{a},{\rm D}_{b]}]\mu^{i}$,
one deduces\footnote{On spacetime indices, $[{\rm D}_{a},{\rm D}_{b}]$ acts as usual via the Riemann tensor, 
which satisfies $R_{[cab]}{}^{d}=0$.} 
that $F_{abi}{}^{j}$ satisfies the Bianchi identities: ${\rm D}_{[c}F_{ab]i}{}^{j}=0$.
Alternatively, these can be expressed as ${\rm D}^{a} {}^{*}F_{abi}{}^{j}=0$, or in spinors:
${\rm D}_{B}{}^{A'}\chi_{A'B'i}{}^{j} = {\rm D}_{B'}{}^{A}\varphi_{ABi}{}^{j}$. 
The Yang-Mills field equations are ${\rm D}^{a}F_{abi}{}^{j}=0$. In spinor terms
(using Bianchi identities):
\begin{align}
  {\rm D}_{B}{}^{A'}\chi_{A'B'i}{}^{j} ={}& 0, \label{YMFieldEq} \\
  {\rm D}_{B'}{}^{A}\varphi_{ABi}{}^{j} ={}& 0. \label{YMFieldEqASD}
\end{align}

The complex-conformal connection of section \ref{Sec:ConnectionC} can be generalized 
to act covariantly also on Yang-Mills indices. This generalization is achieved by simply defining 
\begin{equation}\label{YMC}
 \Cym_{a}\mu^{\mathscr{A}i} = \C_{a}\mu^{\mathscr{A}i} + A_{aj}{}^{i}\mu^{\mathscr{A}j},
\end{equation}
together with the obvious extension for more Yang-Mills indices.
We also define the projections $\tilde\Cym_{A'} = o^{A}\Cym_{AA'}$ and $\Cym_{A'} = \iota^{A}\Cym_{AA'}$.
The weights of the Yang-Mills potential are $p(A_{ai}{}^{j})=0$, $w(A_{ai}{}^{j})=0$.
The curvature is $[\Cym_{a},\Cym_{b}]\mu^{\mathscr{A}i} 
 = [\C_{a},\C_{b}]\mu^{\mathscr{A}i}+F_{abj}{}^{i}\mu^{\mathscr{A}j}$.
Similarly to the Maxwell case, we find $[\tilde\Cym_{A'},\tilde\Cym_{B'}]\mu^{\mathscr{A}i} 
 =\epsilon_{A'B'}o^{A}o^{B} \varphi_{ABj}{}^{i}\mu^{\mathscr{A}j}$.
\begin{proposition}\label{Prop:YM}
Assume that $o^{A}$ satisfies \eqref{SFR} and \eqref{rPND}, and
that the ASD Yang-Mills curvature satisfies 
\begin{equation}\label{conditionASDYM}
 o^{A}o^{B}\varphi_{ABi}{}^{j}=0.
\end{equation}
Then for any field $\mu^{\mathscr{A}i...}_{j...}$, it holds
\begin{equation}\label{CtYMflat}
 [\tilde\Cym_{A'}, \tilde\Cym_{B'}]\mu^{\mathscr{A}i...}_{j...}=0. 
\end{equation}
\end{proposition}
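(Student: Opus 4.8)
The plan is to mimic, in the Yang--Mills setting, the short curvature computation carried out for the Maxwell field in Section \ref{Sec:ParallelFramesMaxwell}: reduce $[\tilde\Cym_{A'},\tilde\Cym_{B'}]$ to a single algebraic curvature term and then kill it using the speciality hypothesis \eqref{conditionASDYM}.

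First I would unwind the commutator using that $o^{A}$ is parallel. Since $o^{A}$ carries no Yang--Mills index, $\Cym_{AA'}o^{B}=\C_{AA'}o^{B}$, which vanishes by \eqref{ParallelSpinor} (equivalently, by the shear-free condition \eqref{SFR}). Hence $\tilde\Cym_{A'}\tilde\Cym_{B'}\mu^{\mathscr{A}i\ldots}_{j\ldots}=o^{A}o^{B}\Cym_{AA'}\Cym_{BB'}\mu^{\mathscr{A}i\ldots}_{j\ldots}$, and relabelling the dummy unprimed index in the antisymmetrised piece gives $[\tilde\Cym_{A'},\tilde\Cym_{B'}]\mu = o^{A}o^{B}[\Cym_{AA'},\Cym_{BB'}]\mu$.

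Next I would split the curvature of $\Cym_{a}$ into its ``geometric'' and ``internal'' parts, $[\Cym_{a},\Cym_{b}]\mu^{\mathscr{A}i}=[\C_{a},\C_{b}]\mu^{\mathscr{A}i}+F_{abj}{}^{i}\mu^{\mathscr{A}j}$, with the evident extra terms (one curvature insertion per Yang--Mills index) when $\mu$ also carries lower or additional internal indices. For the geometric part, the same parallel-transport argument now applied to $\C_{AA'}$ alone shows $o^{A}o^{B}[\C_{AA'},\C_{BB'}]\mu=[\tC_{A'},\tC_{B'}]\mu$, which vanishes by \eqref{FlatConnection} (this is where \eqref{rPND} enters); here it is important that $\C_{a}$ is blind to Yang--Mills indices, so \eqref{FlatConnection} applies componentwise in $i,j,\ldots$. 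For the internal part, I would insert the spinor decomposition \eqref{YM-spinordecomposition}, use $o^{A}o^{B}\epsilon_{AB}=0$ to discard the SD piece $\chi_{A'B'i}{}^{j}$, and be left with a sum of terms each proportional to $\epsilon_{A'B'}\,o^{A}o^{B}\varphi_{ABk}{}^{l}$.

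Finally, the hypothesis \eqref{conditionASDYM} is exactly $o^{A}o^{B}\varphi_{ABi}{}^{j}=0$, so every surviving term vanishes and \eqref{CtYMflat} follows. I do not anticipate a genuine obstacle; the only points requiring care are the index bookkeeping for $\mu^{\mathscr{A}i\ldots}_{j\ldots}$ (so that each curvature insertion is really controlled by \eqref{conditionASDYM}) and the observation that $\C_{a}$, as opposed to $\Cym_{a}$, acts trivially on internal indices, which is what lets the earlier flatness result \eqref{FlatConnection} be invoked verbatim.
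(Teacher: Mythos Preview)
Your proposal is correct and follows essentially the same route as the paper: the paper establishes just before the proposition that $[\tilde\Cym_{A'},\tilde\Cym_{B'}]\mu^{\mathscr{A}i}=\epsilon_{A'B'}o^{A}o^{B}\varphi_{ABj}{}^{i}\mu^{\mathscr{A}j}$ by exactly the curvature splitting and contraction argument you describe, so that \eqref{conditionASDYM} immediately yields \eqref{CtYMflat}. If anything you are slightly more explicit than the paper, both in spelling out why $o^{A}$ can be pulled through $\Cym_{AA'}$ and in noting that the argument extends to fields with several upper and lower internal indices.
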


Using this result, we now proceed to impose the Yang-Mills field equations \eqref{YMFieldEq}-\eqref{YMFieldEqASD}
and analyse the potentialization of the system.

\subsection{The self-dual part of the curvature}\label{Sec:YMSD}

The Yang-Mills curvature has weights $p(F_{abi}{}^{j})=0$, $w(F_{abi}{}^{j})=0$. 
It then follows that $\Cym_{[a}F_{bc]i}{}^{j}={\rm D}_{[a}F_{bc]i}{}^{j}$, 
so we can write the Bianchi identities as $\Cym_{[a}F_{bc]i}{}^{j}=0$, or 
$\Cym^{a} {}^{*}F_{abi}{}^{j}=0$. 
Similarly, the field equations can be written as $\Cym^{a} F_{abi}{}^{j}=0$. 
In spinor terms:
\begin{align}
 \Cym_{B}{}^{A'}\chi_{A'B'i}{}^{j} ={}& 0, \label{YMFE1} \\
 \Cym_{B'}{}^{A}\varphi_{ABi}{}^{j} ={}& 0. \label{YMFE2}
\end{align}
Notice that $p(\chi_{A'B'i}{}^{j})=0$ and $w(\chi_{A'B'i}{}^{j})=-1$, 
and the same for $\varphi_{ABi}{}^{j}$.

We can summarize the results of this subsection as follows:
\begin{theorem}\label{Thm:YM}
Assume that the Yang-Mills curvature satisfies \eqref{conditionASDYM},  
and denote $\varphi_{01i}{}^{j}=o^{A}\iota^{B}\varphi_{ABi}{}^{j}$.
Any solution to the Yang-Mills field equation \eqref{YMFieldEq} can be locally written as
\begin{equation}\label{PotentialYM}
 \chi_{A'B'i}{}^{j} = \tilde\Cym_{A'}\tilde\Cym_{B'}\Phi_{i}{}^{j},
\end{equation}
where $\Phi_{i}{}^{j}$ is a (spacetime scalar) field with weights 
$p(\Phi_{i}{}^{j})=-2$ and $w(\Phi_{i}{}^{j})=-1$, such that 
\begin{equation}\label{HHYM}
 (\Box^{\Cym} +\mathcal{R}/2)\Phi_{i}{}^{j} 
 - 2(\tilde\Cym_{A'}\Phi_{i}{}^{k})(\tilde\Cym^{A'}\Phi_{k}{}^{j})
 +4(\varphi_{01i}{}^{k}\Phi_{k}{}^{j}-\varphi_{01k}{}^{j}\Phi_{i}{}^{k})
 = 2K_{i}{}^{j} 
\end{equation}
where $\Box^{\Cym}=g^{ab}\Cym_{a}\Cym_{b}$, and $K_{i}{}^{j}$ 
is a constant of integration, $\tilde\Cym_{A'}K_{i}{}^{j} =0$.
\end{theorem}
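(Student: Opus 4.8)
The plan is to rerun the proof of Theorem~\ref{Theorem:Maxwell:Sourcefree} with the complex-conformal connection $\C_{a}$ replaced by its Yang-Mills-coupled version $\Cym_{a}$. Since $p(\chi_{A'B'i}{}^{j})=0$ and $w(\chi_{A'B'i}{}^{j})=-1$, the analogue of identity~\eqref{identityRHF} turns \eqref{YMFieldEq} into $\Cym_{B}{}^{A'}\chi_{A'B'i}{}^{j}=0$, and contracting with $o^{B}$ gives $\tilde\Cym^{A'}\chi_{A'B'i}{}^{j}=0$. Proposition~\ref{Prop:YM} says that $[\tilde\Cym_{A'},\tilde\Cym_{B'}]$ annihilates any field carrying spinor and Yang-Mills indices, so the special de Rham complex of Section~\ref{Sec:deRham}, and hence Lemma~\ref{Lemma:Potentials}, extend verbatim to forms valued in $\mathbb{S}_{p,w}$ tensored with the Yang-Mills bundle. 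Applying that lemma twice and using $\chi_{A'B'i}{}^{j}=\chi_{(A'B')i}{}^{j}$ produces, locally, a field $\Phi_{i}{}^{j}$ with $p(\Phi_{i}{}^{j})=-2$, $w(\Phi_{i}{}^{j})=-1$, such that $\chi_{A'B'i}{}^{j}=\tilde\Cym_{A'}\tilde\Cym_{B'}\Phi_{i}{}^{j}$ (defined up to the freedom~\eqref{InternalGaugeFreedom}, as in Theorem~\ref{Theorem:Maxwell:Sourcefree}); this is \eqref{PotentialYM}. One may equivalently first pass to a $\tilde\Cym$-parallel frame as in Section~\ref{Sec:ParallelFramesMaxwell}.

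For \eqref{HHYM} I would then contract \eqref{YMFieldEq} with $\iota^{B}$ instead, getting $\Cym^{A'}\chi_{A'B'i}{}^{j}=0$, substitute \eqref{PotentialYM}, and commute $\Cym^{A'}$ past $\tilde\Cym_{A'}\tilde\Cym_{B'}$ using $\Cym_{a}o^{B}=0$ (eq.~\eqref{ParallelSpinor}) and $[\tilde\Cym_{A'},\tilde\Cym_{B'}]=0$, which leaves
\begin{equation*}
 0=\tilde\Cym_{B'}\bigl(\Cym^{A'}\tilde\Cym_{A'}\Phi_{i}{}^{j}\bigr)+[\Cym^{A'},\tilde\Cym_{B'}]\,\tilde\Cym_{A'}\Phi_{i}{}^{j}.
\end{equation*}
The curvature $[\Cym^{A'},\tilde\Cym_{B'}]$ splits into a $\C$-part, which kills $\tilde\Cym_{A'}\Phi_{i}{}^{j}$ by identity~\eqref{ContractedCommutatorPS} — here $p(\tilde\Cym_{A'}\Phi_{i}{}^{j})=-1$, and \eqref{rPND} enters through that identity, exactly as in Theorem~\ref{Theorem:Maxwell:Sourcefree} — and a Yang-Mills part, which by \eqref{conditionASDYM} reduces to terms built from $\chi_{A'B'i}{}^{j}$ and from $\varphi_{01i}{}^{j}=o^{A}\iota^{B}\varphi_{ABi}{}^{j}$. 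Re-inserting $\chi=\tilde\Cym\tilde\Cym\Phi$ in the $\chi$-terms and rearranging the resulting triple-derivative terms by the Leibniz rule, using the symmetry and flatness of $\tilde\Cym_{(A'}\tilde\Cym_{B')}\Phi$ (the key rearrangement being $[\tilde\Cym_{A'}\tilde\Cym_{B'}\Phi,\tilde\Cym^{A'}\Phi]=\tfrac12\tilde\Cym_{B'}[\tilde\Cym_{A'}\Phi,\tilde\Cym^{A'}\Phi]$), collapses them to $-\tilde\Cym_{B'}\bigl[(\tilde\Cym_{A'}\Phi_{i}{}^{k})(\tilde\Cym^{A'}\Phi_{k}{}^{j})\bigr]$; and the $\varphi_{01}$-terms become $\tilde\Cym_{B'}$ of an algebraic expression once one observes that $\varphi_{01i}{}^{j}$ is itself a constant of integration, $\tilde\Cym_{A'}\varphi_{01i}{}^{j}=0$ (obtained by contracting equation~\eqref{YMFE2} with $o^{B}$ and using \eqref{conditionASDYM}; recall \eqref{YMFE2} follows from \eqref{YMFieldEq} via the Bianchi identities). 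Finally, the Yang-Mills counterpart of identity~\eqref{C-tC-Scalars} writes $\Cym^{A'}\tilde\Cym_{A'}\Phi_{i}{}^{j}$ as $\tfrac12(\Box^{\Cym}+\mathcal{R}/2)\Phi_{i}{}^{j}$ plus a $\varphi_{01}$-correction. Collecting everything, the whole right-hand side is $\tilde\Cym_{B'}$ applied to the bracket in \eqref{HHYM}, so by flatness of $\tilde\Cym$ that bracket equals a constant of integration $2K_{i}{}^{j}$, $\tilde\Cym_{A'}K_{i}{}^{j}=0$ (Definition~\ref{Def:Constants}). Conversely, retracing the steps, any $\Phi_{i}{}^{j}$ solving \eqref{HHYM} yields through \eqref{PotentialYM} a $\chi_{A'B'i}{}^{j}$ with $\tilde\Cym^{A'}\chi_{A'B'i}{}^{j}=0$ automatically and $\Cym^{A'}\chi_{A'B'i}{}^{j}=\tfrac12\tilde\Cym_{B'}(2K_{i}{}^{j})=0$, i.e.\ a solution of \eqref{YMFieldEq}.

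The main obstacle is the feedback of the non-abelian curvature. Unlike the Dirac-Weyl and Maxwell cases, $[\Cym_{a},\Cym_{b}]$ reproduces the very field $\chi_{A'B'i}{}^{j}$ just put into potential form, so the linear manipulations of Theorem~\ref{Theorem:Maxwell:Sourcefree} now generate quadratic contributions in $\Phi_{i}{}^{j}$, and the substance of the computation is that they organize — after careful bookkeeping of matrix orderings and $\epsilon$-contractions — into precisely $-2(\tilde\Cym_{A'}\Phi_{i}{}^{k})(\tilde\Cym^{A'}\Phi_{k}{}^{j})+4(\varphi_{01i}{}^{k}\Phi_{k}{}^{j}-\varphi_{01k}{}^{j}\Phi_{i}{}^{k})$, with those coefficients. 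The two delicate inputs are: (i) that the $\C$-curvature terms cancel identically, which uses \eqref{rPND} and the weight $p=-1$ of $\tilde\Cym_{A'}\Phi_{i}{}^{j}$; and (ii) the auxiliary identity $\tilde\Cym_{A'}\varphi_{01i}{}^{j}=0$, without which the $\varphi_{01}$-terms could not be absorbed inside $\tilde\Cym_{B'}$. Assumption~\eqref{conditionASDYM} is used twice over: it makes $\tilde\Cym$ flat (Proposition~\ref{Prop:YM}), which is what produces the potential, and it collapses the $\iota^{B}$-contraction of the ASD curvature to the single object $\varphi_{01i}{}^{j}$.
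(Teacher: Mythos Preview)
Your proposal is correct and follows essentially the same route as the paper: contract \eqref{YMFE1} with $o^{B}$ to obtain \eqref{PotentialYM} via the $\tilde\Cym$-de Rham complex, then contract with $\iota^{B}$ and evaluate the commutator $[\tilde\Cym_{B'},\Cym_{A'}]$ on $\tilde\Cym^{A'}\Phi_{i}{}^{j}$ using the Yang-Mills version of \eqref{ContractedCommutatorPS} (identity \eqref{IdentityCommutatorYM} in the paper), where the $\C$-curvature piece drops out because $p(\tilde\Cym^{A'}\Phi_{i}{}^{j})=-1$, the $\chi$-terms repackage via \eqref{PotentialYM} into the quadratic term, and the $\varphi_{01}$-terms go inside $\tilde\Cym_{B'}$ thanks to $\tilde\Cym_{A'}\varphi_{01i}{}^{j}=0$. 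The two delicate inputs you flag, and the final passage from $\Cym^{A'}\tilde\Cym_{A'}\Phi_{i}{}^{j}$ to $\tfrac12(\Box^{\Cym}+\mathcal{R}/2)\Phi_{i}{}^{j}$ plus a $\varphi_{01}$-correction via \eqref{Cym-tCym} and \eqref{ASDBoxYM}, are exactly what the paper does.
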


\begin{remark}
The potentialization \eqref{PotentialYM} of the Yang-Mills field, together with 
the equation \eqref{HHYM} for the potential, 
are the generalization of the Maxwell results \eqref{spin1-vacuum}-\eqref{debye-spin1} to a non-abelian gauge theory. 
(Notice that for the abelian case, the non-linear term in \eqref{HHYM}, as well as the coupling term between 
$\varphi_{01i}{}^{j}$ and $\Phi_{i}{}^{j}$, are absent.)
These results are a conformally invariant generalization of Jeffryes' results \cite[Eqs. (4.11) and (4.19)]{Jeffryes}.
\end{remark}

In the rest of this subsection we will prove the result \ref{Thm:YM}. 
The proof is similar to the Maxwell case, theorem \ref{Theorem:Maxwell:Sourcefree}.
The first observation is that, contracting \eqref{YMFE1} with $o^{B}$, we get $\tilde\Cym^{A'}\chi_{A'B'i}{}^{j}=0$, 
so the potentialization \eqref{PotentialYM} follows. 
Contracting then \eqref{YMFE1} with $\iota^{B}$, one gets
\begin{equation}
 0 = \Cym^{A'}\tilde\Cym_{A'}\tilde\Cym_{B'}\Phi_{i}{}^{j}
 = \tilde\Cym_{B'}\Cym^{A'}\tilde\Cym_{A'}\Phi_{i}{}^{j} 
 + [\tilde\Cym_{B'},\Cym_{A'}]\tilde\Cym^{A'}\Phi_{i}{}^{j},
 \label{EqWithCommutator2}
\end{equation}
so we need to compute the commutator on the right hand side. 
This is analogous to the calculation we needed in eq. \eqref{EqWithCommutator1},
but here it is a little bit longer.
In this case we need the identity \eqref{IdentityCommutatorYM}, and we put
$\mu^{A'}{}_{i}{}^{j}=\tilde\Cym^{A'}\Phi_{i}{}^{j}$, 
so $p(\mu^{A'}{}_{i}{}^{j})=-1$. This gives immediately:
\begin{align*}
 [\tilde\Cym_{B'},\Cym_{A'}]\tilde\Cym^{A'}\Phi_{i}{}^{j} ={}& 
 -\chi_{B'A'i}{}^{k}\tilde\Cym^{A'}\Phi_{k}{}^{j} +\chi_{B'A'k}{}^{j}\tilde\Cym^{A'}\Phi_{i}{}^{k} 
  + \varphi_{01i}{}^{k}\tilde\Cym_{B'}\Phi_{k}{}^{j} - \varphi_{01k}{}^{j}\tilde\Cym_{B'}\Phi_{i}{}^{k}.
\end{align*}
Now, we already established that \eqref{PotentialYM} is true. 
Replacing this in the terms involving $\chi_{A'B'i}{}^{j}$ in the equation above, it is easy to see that
\begin{equation*}
 -\chi_{A'B'i}{}^{k}\tilde\Cym^{A'}\Phi_{k}{}^{j} + \chi_{A'B'k}{}^{j}\tilde\Cym^{A'}\Phi_{i}{}^{k}
 = -\tilde\Cym_{B'}\left[ (\tilde\Cym_{A'}\Phi_{i}{}^{k})(\tilde\Cym^{A'}\Phi_{k}{}^{j}) \right].
\end{equation*}
On the other hand, contracting \eqref{YMFE2} with $o^{B}$ and using \eqref{conditionASDYM}, it follows 
that $\tilde\Cym_{B'}\varphi_{01i}{}^{j}=0$, so the terms involving $\varphi_{01i}{}^{j}$ are 
\begin{equation*}
 \varphi_{01i}{}^{k}\tilde\Cym_{B'}\Phi_{k}{}^{j} - \varphi_{01k}{}^{j}\tilde\Cym_{B'}\Phi_{i}{}^{k}
 = \tilde\Cym_{B'}\left( \varphi_{01i}{}^{k}\Phi_{k}{}^{j} - \varphi_{01k}{}^{j}\Phi_{i}{}^{k} \right).
\end{equation*}
Putting all together, we get:
\begin{equation*}
 \tilde\Cym_{B'} \left[ \Cym^{A'}\tilde\Cym_{A'}\Phi_{i}{}^{j} - (\tilde\Cym_{A'}\Phi_{i}{}^{k})(\tilde\Cym^{A'}\Phi_{k}{}^{j}) 
 +\varphi_{01i}{}^{k}\Phi_{k}{}^{j} -\varphi_{01k}{}^{j}\Phi_{i}{}^{k} \right] = 0,
\end{equation*}
which gives
\begin{equation}
\Cym^{A'}\tilde\Cym_{A'}\Phi_{i}{}^{j} -(\tilde\Cym_{A'}\Phi_{i}{}^{k})(\tilde\Cym^{A'}\Phi_{k}{}^{j}) 
 +\varphi_{01i}{}^{k}\Phi_{k}{}^{j} - \varphi_{01k}{}^{j}\Phi_{i}{}^{k} = K_{i}{}^{j} \label{HHYM2}
\end{equation}
where $K_{i}{}^{j} $ is a constant of integration, $\tilde\Cym_{A'}K_{i}{}^{j} =0$. 
(Compare to Jeffryes' result \cite[Eq. (4.19)]{Jeffryes}.)
Using now identities \eqref{Cym-tCym} and \eqref{ASDBoxYM} for $\Cym^{A'}\tilde\Cym_{A'}$, 
equation \eqref{HHYM} follows.

\subsection{The anti-self dual part of the curvature}\label{Sec:YMASD}

While the analysis of the SD Yang-Mills curvature is similar to what we did in section \ref{Sec:SDMaxwell}
for Maxwell, for the ASD curvature the method is analogous to the discussion of section \ref{Sec:ParallelFramesMaxwell}.
Recall that in that case, the point was to use a ``frame'' $\alpha$ to express the curvatures 
$\Box^{\Cym}_{A'B'}$ and $\Box^{\Cym}_{AB}$ (equations \eqref{SDBoxCMaxwell}-\eqref{ASDBoxCMaxwell}), 
and then to take advantage of the fact that, due to the flatness of $\tilde\Cym_{A'}$, 
we can use a frame that is parallel under $\tilde\Cym_{A'}$.
The logic now is the same.

\smallskip
We denote concrete numerical Yang-Mills indices by ${\bf i}, {\bf j}, {\bf k},... =1,...,n$.
Let $\alpha^{i}_{\bf j}$ be a set of linearly independent sections of $E$ over some neighbourhood $U\subset M$. 
These fields form a basis for the fiber $E_x$ for all $x\in U$. We denote the dual basis by $\alpha^{\bf j}_{i}$, 
so that it holds $\alpha^{\bf i}_{i}\alpha^{i}_{\bf j}=\delta^{\bf i}_{\bf j}$ and $\alpha^{\bf j}_{i}\alpha^{j}_{\bf j}=\delta^{j}_{i}$.
The action of the Yang-Mills spinor curvature operators $\Box^{\Cym}_{A'B'}$ and $\Box^{\Cym}_{AB}$
on $\alpha^{i}_{\bf k}$ is given by formulas \eqref{BoxYMSD1}-\eqref{BoxYMASD1}. 
Choosing $p(\alpha^{i}_{\bf k})=0$ and multiplying by $\alpha^{\bf k}_{i}$, we get:
\begin{subequations}
\begin{align}
 \chi_{A'B'i}{}^{j} ={}& \alpha^{\bf k}_{i}\Box^{\Cym}_{A'B'}\alpha^{j}_{\bf k}, \label{SDYM1} \\
 \varphi_{ABi}{}^{j} ={}& \alpha^{\bf k}_{i}\Box^{\Cym}_{AB}\alpha^{j}_{\bf k}. \label{ASDYM1}
\end{align}
\end{subequations}
These are the analogues of eqs. \eqref{SDBoxCMaxwell}-\eqref{ASDBoxCMaxwell} for Maxwell.
Now, since the connection $\tilde\Cym_{A'}$ is flat, we can choose $\alpha^{i}_{\bf k}$ to satisfy
\begin{equation}
 \tilde\Cym_{A'}\alpha^{i}_{\bf k} = 0.
\end{equation}
Define now the generalization of \eqref{gammaMaxwell0} as
\begin{equation}
 \gamma_{A'i}{}^{j} := \alpha^{\bf k}_{i}\Cym_{A'}\alpha^{j}_{\bf k}. \label{gammaYM0}
\end{equation}
Using identities \eqref{BoxYMSD0}-\eqref{BoxYMASD0}, a short calculation shows that 
\begin{subequations}
\begin{align}
 \chi_{A'B'i}{}^{j} ={}& \tilde\Cym_{(A'}\gamma_{B')i}{}^{j}, \label{SDYM} \\
 \varphi_{ABi}{}^{j} ={}& 
 o_{A}o_{B}(\Cym_{A'}\gamma^{A'}{}_{i}{}^{j}+\gamma_{A'i}{}^{k}\gamma^{A'}{}_{k}{}^{j})
 -o_{(A}\iota_{B)}\tilde\Cym_{A'}\gamma^{A'}{}_{i}{}^{j}. \label{ASDYM}
\end{align}
\end{subequations}
These are the non-abelian generalization of \eqref{CSDMaxwell1}-\eqref{CASDMaxwell1}.

\smallskip
Notice that, in deducing formulas \eqref{SDYM}-\eqref{ASDYM}, we did not use the field equations; 
rather, \eqref{SDYM}-\eqref{ASDYM} are a consequence of the geometric structure. 
Imposing now the field equations, we can compare \eqref{SDYM} and \eqref{PotentialYM}, so we deduce that 
\begin{equation}
 \gamma_{A'i}{}^{j} = \tilde\Cym_{A'}\Phi_{i}{}^{j} + \tau_{A'i}{}^{j}, \label{gammaYM1}
\end{equation}
where $\tau_{A'i}{}^{j}$ is such that 
\begin{equation}
 \tilde\Cym_{(A'}\tau_{B')i}{}^{j} = 0. \label{TwistorLikeEqYM}
\end{equation}
The solution to \eqref{TwistorLikeEqYM} is given by the obvious generalization of proposition \ref{Prop:TwistorLikeEq}
to the Yang-Mills case.
Replacing \eqref{gammaYM1} into \eqref{ASDYM}, we get $\varphi_{ABi}{}^{j}$ in terms of 
the potential $\Phi_{i}{}^{j}$ and the constants of integration involved in $\tau_{A'i}{}^{j}$.

\begin{remark}
After the integration procedure, the ASD Yang-Mills curvature
depends explicitly on the solution $\Phi_{i}{}^{j}$ to \eqref{HHYM}, 
as can be seen from replacing \eqref{gammaYM1} into \eqref{ASDYM}. 
This is a consequence of the non-abelian character of the theory, 
and is in contrast to the Maxwell case, eq. \eqref{ASDMaxwell2}, where the ASD Maxwell field 
was given by constants of integration.
If we set $\tau_{A'i}{}^{j}\equiv 0$ in \eqref{gammaYM1} and $K_{i}{}^{j}\equiv 0$
in \eqref{HHYM}, the resulting Yang-Mills field is self-dual, $\varphi_{ABi}{}^{j}=0$.
\end{remark}

\section{The Einstein-Weyl equations}\label{Sec:Einstein}

Recall that the Einstein-Weyl equations for a Weyl connection $\nabla^{\rm w}_{a}$ are \eqref{EWeq}. 
In this section we will study these equations for the case in which the Weyl 1-form is closed: we have, locally
\begin{equation}
 {\rm w}_{a}=\partial_{a}\log u \equiv \partial_{a}\log\mr\Omega^{-1}. \label{DefOmega0}
\end{equation}
The introduction of the field $\mr\Omega\equiv u^{-1}$ is simply a matter of convenience: 
we can interpret $\mr\Omega$ as a conformal factor (its conformal weight is $w(\mr\Omega)=1$).
The Einstein-Weyl equations can be expressed in spinor terms as
\begin{equation}
 \Phi^{\rm w}_{ABC'D'} = 0, \label{EinsteinWeylEquations0}
\end{equation}
where $\Phi^{\rm w}_{ABC'D'}$ is defined by \eqref{SpinorDecCurvatureWeylC}, and given 
in terms of a Levi-Civita connection by \eqref{Weyl-LC2}.

\medskip
We assume that the spinor field $o^{A}$ satisfies \eqref{SFR} and \eqref{rPND}.

\begin{remark}\label{Remark:LeeForm2}
In this section we assume that the Lee form $f_{a}$ has the expression \eqref{fgradient}.
For Lorentz signature and complex manifolds, this can always be achieved,
as long as \eqref{SFR} and \eqref{rPND} hold, see proposition \ref{Prop:LeeForm}.
For Riemann and split signature, on the other hand, conditions \eqref{SFR} and \eqref{rPND} {\rm do not} imply 
that \eqref{fgradient} holds, see the comment below remark \ref{Remark:LeeForm}.
However, in these signatures, the condition \eqref{rPND} implies that the Weyl spinor $\Psi_{ABCD}$ is actually type D.
From the Bianchi identities expressed in the form of eq. \eqref{BianchiAuxWeylPsi} below, 
together with the Einstein-Weyl equations \eqref{EinsteinWeylEquations0}, it then follows that 
\begin{equation}
 \C_{a}(\mr\Omega^{-1}\Psi_{2}) = 0, \label{EqForPsi2}
\end{equation}
where $\Psi_{2}=\Psi_{ABCD}o^{A}o^{B}\iota^{C}\iota^{D}$. 
Since $p(\mr\Omega^{-1}\Psi_{2})=0$, $w(\mr\Omega^{-1}\Psi_{2})=-3$,
equation \eqref{EqForPsi2} implies that 
\begin{equation}
 f_{a}=\nabla_{a}\log(\mr\Omega^{-1}\Psi_{2})^{1/3}, \label{LeeForm4}
\end{equation}
so \eqref{fgradient} holds for these cases as well.
\end{remark}

\subsection{Relations between the two Weyl connections}\label{Sec:AdditionalWeylConnection}

For the study of the Einstein-Weyl system, there are two Weyl connections involved. 
One of them, $\nabla^{\rm w}_{a}$, is the one that defines the field equations \eqref{EinsteinWeylEquations0} 
of the Einstein-Weyl system we want to study, 
and the other one, $\nabla^{\rm f}_{a}$, is associated to the almost-complex structure (section \ref{Sec:ConnectionC})
and is not related to field equations. For both of them, the Weyl 1-forms are closed. 

\smallskip
Some general identities for Weyl connections, with closed Weyl 1-forms, are given in Appendix \ref{App:WeylConnections}. 
We denote the curvature spinors associated to $\nabla^{\rm w}_{a}$ by 
$\tilde{X}^{\rm w}_{A'B'C'D'}$, $X^{\rm w}_{ABCD}$, $\Phi^{\rm w}_{ABC'D'}$ and $\Lambda^{\rm w}$. 
The definitions of these objects are given by \eqref{SpinorDecCurvatureWeylC}--\eqref{CrucialRelationsCurvatureWeylC},
and the relations between them and the curvature spinors of the Levi-Civita connection of an arbitrary metric in the conformal 
class are given by formulas \eqref{Weyl-LC1}--\eqref{Weyl-LC4}. 
The corresponding objects for $\nabla^{\rm f}_{a}$ will be denoted by 
$\tilde{X}^{\rm f}_{A'B'C'D'}$, $X^{\rm f}_{ABCD}$, $\Phi^{\rm f}_{ABC'D'}$ and $\Lambda^{\rm f}$. 

\smallskip
We recall that the totally symmetric parts of $\tilde{X}^{\rm w}_{A'B'C'D'}$ and $\tilde{X}^{\rm f}_{A'B'C'D'}$ 
coincide with the ordinary SD Weyl spinor $\tilde{\Psi}_{A'B'C'D'}$, 
and similarly $X^{\rm w}_{(ABCD)}=X^{\rm f}_{(ABCD)}=\Psi_{ABCD}$.

\smallskip
We will need the relations between the curvature spinors of $\nabla^{\rm w}_{a}$ and those of $\nabla^{\rm f}_{a}$,
as well as the Bianchi identities expressed in terms of $\C_{a}$.
All these relations can be obtained by using formulas \eqref{Weyl-LC1}--\eqref{Weyl-LC4}
and \eqref{BianchiWeyl1}--\eqref{ContractedBianchiWeyl} in appendix \ref{App:WeylConnections}.
After some straightforward calculations, we get:

\begin{proposition}
Assume that \eqref{SFR}, \eqref{rPND}, \eqref{fgradient} hold, and let $\mr\Omega$ be given by \eqref{DefOmega0}.
\begin{enumerate}
\item We have the identities
\begin{subequations}
\begin{align}
 \Lambda^{\rm w} = {}&  \Lambda^{\rm f} + \tfrac{1}{4} \mr\Omega\Box^{\mathcal{C}}\mr\Omega^{-1}, \\
 \Phi^{\rm w}_{ABC'D'} ={}& \Phi^{\rm f}_{ABC'D'} + \mr{\Omega}^{-1}\mathcal{C}_{A(C'}\mathcal{C}_{D')B}\mr{\Omega}.
\end{align}
\end{subequations}
In particular:
\begin{subequations}
\begin{align}
 o^{A}o^{B}\Phi^{\rm w}_{ABC'D'} ={}& o^{A}o^{B}\Phi^{\rm f}_{ABC'D'} + \mr\Omega^{-1}\tC_{C'}\tC_{D'}\mr\Omega, \label{ooPhi0} \\
 o^{A}\iota^{B}\Phi^{\rm w}_{ABC'D'} ={}& o^{A}\iota^{B}\Phi^{\rm f}_{ABC'D'} + \mr\Omega^{-1}\tC_{(C'}\C_{D')}\mr\Omega, \label{oiPhi0} \\
 \iota^{A}\iota^{B}\Phi^{\rm w}_{ABC'D'} ={}& \iota^{A}\iota^{B}\Phi^{\rm f}_{ABC'D'} 
 + \mr\Omega^{-1}(\C_{(C'}\C_{D')}-\sigma^{1}_{(C'}\tC_{D')})\mr\Omega, \label{iiPhi0}
\end{align}
\end{subequations}
where $\sigma^{1}_{A'}$ is defined in \eqref{Ci}.
\item The Bianchi identities for the curvature spinors of $\nabla^{\rm w}_{a}$ can be expressed as:
\begin{subequations}
\begin{align}
 & \mr\Omega \; \C_{B}{}^{A'}[\mr\Omega^{-1}\tilde{\Psi}_{A'B'C'D'}]
 = \mr\Omega^{-1}\C_{(B'}{}^{A}[\mr\Omega\Phi^{\rm w}_{C'D')AB}], \label{BianchiAuxWeyl2}  \\
 & \mr\Omega \; \C_{B'}{}^{A}[\mr\Omega^{-1}\Psi_{ABCD}]
 = \mr\Omega^{-1}\C_{(B}{}^{A'}[\mr\Omega\Phi^{\rm w}_{CD)A'B'}],  \label{BianchiAuxWeylPsi} \\ 
 & \mr\Omega^{2}\C^{AC'}(\mr\Omega^{-2}\Phi^{\rm w}_{ABC'D'}) 
 + 3\mr\Omega^{-2} \; \C_{BD'}(\mr\Omega^{2}\Lambda^{\rm w}) = 0.
\end{align}
\end{subequations}
\end{enumerate}
\end{proposition}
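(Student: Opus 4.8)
The plan is to reduce both parts to the Levi-Civita formulas of Appendix~\ref{App:WeylConnections} and then re-express everything through $\C_a$ and $\mr\Omega$. Fix a representative $g_{ab}\in[g_{ab}]$ with Levi-Civita connection $\nabla_a$. Both Weyl $1$-forms are closed, with ${\rm w}_a=-\partial_a\log\mr\Omega$ by \eqref{DefOmega0} and $f_a=\partial_a\log\phi$ by \eqref{fgradient}; write $\beta_a:={\rm w}_a-f_a$. Since $p(\mr\Omega)=0$ and $w(\mr\Omega)=1$, formula \eqref{ActionOfC} gives immediately
\[
 \mr\Omega^{-1}\C_a\mr\Omega=\partial_a\log\mr\Omega+f_a=-\beta_a ,
\]
and this single algebraic fact is what converts every ``$\beta$-term'' produced below into a $\C$-derivative of $\mr\Omega$.

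For Part 1, I would apply \eqref{Weyl-LC1}--\eqref{Weyl-LC4} twice, once with the Weyl $1$-form ${\rm w}_a$ (producing $\Lambda^{\rm w},\Phi^{\rm w}_{ABC'D'}$) and once with $f_a$ (producing $\Lambda^{\rm f},\Phi^{\rm f}_{ABC'D'}$), and subtract. The $\tilde X$- and $X$-pieces do not matter, since their symmetric parts are $\tilde\Psi,\Psi$ for either connection and the remainder is fixed by $\Phi$ and $\Lambda$; so only the $\Phi$- and $\Lambda$-components need attention. The subtraction kills the common $\nabla$-curvature and leaves a term linear in $\nabla\beta$ and one quadratic in the $1$-forms, $\tfrac12({\rm w}{\rm w}-ff)$. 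Writing ${\rm w}=f+\beta$ the quadratic part becomes $f\beta+\tfrac12\beta\beta$, and the $K$-tensor hidden inside the Weyl-connection derivative of $\beta$ combines with the $f\beta$ cross-term so that, after tracking the symmetrizations, the whole difference reorganizes into $\mr\Omega^{-1}\C_{A(C'}\C_{D')B}\mr\Omega$ for $\Phi$ and $\tfrac14\mr\Omega\Box^{\C}\mr\Omega^{-1}$ for $\Lambda$ (the latter one checks directly from $\C_a\mr\Omega^{-1}=\mr\Omega^{-1}\beta_a$). For the projections, the hypotheses \eqref{SFR}--\eqref{rPND} yield \eqref{ParallelSpinor}, $\C_{AA'}o^B=0$, so $o^A\C_{AC'}$ passes through as $\tC_{C'}$; contracting the $\Phi$-identity with $o^Ao^B$ gives $\mr\Omega^{-1}\tC_{(C'}\tC_{D')}\mr\Omega=\mr\Omega^{-1}\tC_{C'}\tC_{D'}\mr\Omega$ by \eqref{FlatConnection}, i.e. \eqref{ooPhi0}. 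For the $o^A\iota^B$ and $\iota^A\iota^B$ projections one uses that $\C_{AA'}o^B=0$ together with $\C_{AA'}(o_B\iota^B)=0$ forces $\C_{AA'}\iota^B=c_{AA'}o^B$ for a covector $c_{AA'}$ whose $\iota^A$-contraction involves the spin coefficient $\sigma^1_{A'}$ of \eqref{Ci}; feeding this through the product rule produces \eqref{oiPhi0} and the extra $-\sigma^1_{(C'}\tC_{D')}\mr\Omega$ in \eqref{iiPhi0}.

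For Part 2, I would start from the Weyl-connection Bianchi identities \eqref{BianchiWeyl1}--\eqref{ContractedBianchiWeyl}, which read schematically ``(contracted $\nabla^{\rm w}$ of $\tilde\Psi$) $=$ ($\nabla^{\rm w}$ of $\Phi^{\rm w}$)'', and similarly for $\Psi$ and for the contracted $\Phi^{\rm w}$/$\Lambda^{\rm w}$ identity. Each of $\tilde\Psi_{A'B'C'D'},\Psi_{ABCD},\Phi^{\rm w}_{ABC'D'},\Lambda^{\rm w}$ has GHP weight zero, so when $\nabla^{\rm f}_a$ is traded for $\C_a$ the $P_a$-term in \eqref{ActionOfC} drops and only the conformal-weight term $wf_a$ survives; combining this with the $\beta_a$-dependent $K$-tensor separating $\nabla^{\rm w}_a$ from $\nabla^{\rm f}_a$ and with $\mr\Omega^{-1}\C_a\mr\Omega=-\beta_a$, one finds that in each of the relevant contracted expressions $\nabla^{\rm w}_a$ acting on a curvature spinor of conformal weight $w$ becomes $\mr\Omega^{\,k}\C_a(\mr\Omega^{-k}\,\cdot\,)$, with $k$ fixed by $w$ and by the requirement that all residual $\beta$-terms cancel upon the contraction. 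Inserting the conformal weights dictated by the conventions then yields the displayed powers $\mr\Omega^{\mp1}$ and $\mr\Omega^{\pm2}$, hence \eqref{BianchiAuxWeyl2}, \eqref{BianchiAuxWeylPsi} and the last identity.

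All the calculations are finite and, once \eqref{ParallelSpinor}, \eqref{FlatConnection}, $\mr\Omega^{-1}\C_a\mr\Omega=-\beta_a$ and the vanishing GHP weights of the curvature spinors are in hand, essentially routine; the one genuine hazard is organizational. In Part 1 it is checking that the $K$-tensor and quadratic-$f$ contributions from \eqref{Weyl-LC1}--\eqref{Weyl-LC4} collapse exactly into $\mr\Omega^{-1}\C_{A(C'}\C_{D')B}\mr\Omega$ with no leftover, and placing the single $\sigma^1$-term of \eqref{iiPhi0} correctly. In Part 2 it is pinning down the precise powers of $\mr\Omega$ and all the signs so that the plain Weyl-connection Bianchi identities become genuinely $\C_a$-covariant.
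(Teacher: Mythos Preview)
Your approach is correct and essentially the same as the paper's: the paper simply states that these relations follow from applying formulas \eqref{Weyl-LC1}--\eqref{Weyl-LC4} and \eqref{BianchiWeyl1}--\eqref{ContractedBianchiWeyl} of Appendix~\ref{App:WeylConnections} via ``straightforward calculations,'' which is exactly what you outline. Your key organizing identity $\mr\Omega^{-1}\C_a\mr\Omega=-\beta_a$ (with $\beta_a={\rm w}_a-f_a$) is precisely the mechanism that makes those calculations work, and your identification of the $\sigma^1$-term via \eqref{Ci} for the $\iota^A\iota^B$ projection is the correct bookkeeping.
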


\subsection{The conformally invariant HH equation}\label{Sec:CHHequation}

\begin{theorem}\label{Thm:ConformalHHequation}
Assume that the Einstein-Weyl equations \eqref{EinsteinWeylEquations0} 
are satisfied. Then there exists, locally, a scalar field $\Phi$ with weights $p(\Phi)=-4$ and $w(\Phi)=-1$
such that the SD Weyl curvature spinor of the conformal structure is 
\begin{equation}
  \tilde{\Psi}_{A'B'C'D'} = \mr{\Omega}\tC_{A'}\tC_{B'}\tC_{C'}\tC_{D'}\Phi, \label{PotentializationWeyl}
\end{equation}
and such that $\Phi$ satisfies the conformally invariant equation
\begin{equation}
 (\Box^{\C} + 3\mathcal{R}/2)\Phi +\mr\Omega(\tC_{A'}\tC_{B'}\Phi)(\tC^{A'}\tC^{B'}\Phi)
 -4(\tC^{A'}\mr\Omega)(\tC^{B'}\Phi)(\tC_{A'}\tC_{B'}\Phi) = 2K, \label{conformalHHeq}
\end{equation}
where $\Box^{\C}=g^{ab}\C_{a}\C_{b}$, $\mathcal{R}$ is the `scalar curvature' of $\C_{a}$ 
\eqref{ScalarCurvatureC}-\eqref{ScalarCurvatureCSpecial}, and $K$ is an arbitrary scalar field defined 
only by the condition $\tC_{A'}\tC_{B'}\tC_{C'}K=0$.
We call \eqref{conformalHHeq} the ``conformally invariant hyper-heavenly (HH) equation''.
\end{theorem}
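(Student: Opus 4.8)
The plan is to imitate the Maxwell and Yang-Mills arguments of sections \ref{Sec:SDMaxwell} and \ref{Sec:YMSD}, now applied to the SD Weyl spinor $\tilde{\Psi}_{A'B'C'D'}$, using the Bianchi identities of the previous proposition as the analogue of the field equations, and using the $\mr\Omega$-twisted operators forced on us by the weights. First I would start from the Bianchi identity \eqref{BianchiAuxWeyl2} and contract it with $o^{B}$; since the Einstein-Weyl equations \eqref{EinsteinWeylEquations0} say $\Phi^{\rm w}_{ABC'D'}=0$, the right-hand side drops out and we get $o^{B}\C_{B}{}^{A'}[\mr\Omega^{-1}\tilde\Psi_{A'B'C'D'}]=0$, i.e. $\tC^{A'}(\mr\Omega^{-1}\tilde\Psi_{A'B'C'D'})=0$. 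Now $\mr\Omega^{-1}\tilde\Psi_{A'B'C'D'}$ has GHP weight $-4$ and the right conformal weight, it is totally symmetric in its four primed indices, and it lies in the kernel of $\tC^{A'}$, so four successive applications of Lemma \ref{Lemma:Potentials} produce a scalar $\Phi$ with $p(\Phi)=-4$, $w(\Phi)=-1$ such that $\mr\Omega^{-1}\tilde\Psi_{A'B'C'D'}=\tC_{A'}\tC_{B'}\tC_{C'}\tC_{D'}\Phi$, which is \eqref{PotentializationWeyl}. (One must check that the intermediate potentials can be chosen totally symmetric, exactly as in the $\phi_{A'B'}=\tC_{(A'}\tC_{B')}\mr\Phi$ step of Theorem \ref{Theorem:Maxwell:Sourcefree}; the freedom \eqref{InternalGaugeFreedom} at each stage handles this.)

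Next I would extract the scalar equation. Contracting \eqref{BianchiAuxWeyl2} with $\iota^{B}$ instead and again using $\Phi^{\rm w}=0$ gives $\C^{A'}(\mr\Omega^{-1}\tilde\Psi_{A'B'C'D'})=0$, and then I substitute the potentialized form \eqref{PotentializationWeyl}. This yields $\C^{A'}\tC_{A'}\tC_{B'}\tC_{C'}\tC_{D'}\Phi=0$. The strategy is then to commute $\C^{A'}\tC_{A'}$ to the left through the three remaining $\tC$'s, picking up commutator terms $[\tC_{B'},\C_{A'}]$ acting on weighted spinors; these are controlled by the curvature identities of Appendix \ref{App:identities} (the analogues of \eqref{ContractedCommutatorPS}, \eqref{IdentityCommutatorYM}). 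The key algebraic fact, exactly as in the Yang-Mills computation around \eqref{HHYM2}, is that each commutator term, after inserting \eqref{PotentializationWeyl} back in, can be rewritten as $\tC_{B'}$ (or a symmetrized product of $\tC$'s) acting on something, so that the whole expression collapses to $\tC_{B'}\tC_{C'}\tC_{D'}[\,\cdots\,]=0$ with $[\,\cdots\,]$ a scalar. This forces $[\,\cdots\,]=2K$ with $\tC_{A'}\tC_{B'}\tC_{C'}K=0$. Identifying $[\,\cdots\,]$ explicitly — using $\C^{A'}\tC_{A'}=\tfrac12(\Box^{\C}+\text{curvature})$ from \eqref{C-tC-Scalars} and the relation between $\C_{a}$ curvature and $\mathcal R$, and collecting the nonlinear pieces coming from the $\mr\Omega$-dependence of the curvature of $\C_{a}$ together with the $(\tC\tC\Phi)(\tC\tC\Phi)$ and $(\tC\mr\Omega)(\tC\Phi)(\tC\tC\Phi)$ terms — gives precisely \eqref{conformalHHeq}.

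The last point to address is conformal invariance and the role of $\mr\Omega$: since $\Phi^{\rm w}$, $\tilde\Psi_{A'B'C'D'}$, $\mr\Omega$ and $\C_a$ are all conformally weighted objects with the weights recorded in the excerpt, both \eqref{PotentializationWeyl} and \eqref{conformalHHeq} are weighted equations and hence conformally invariant by construction; I would just remark this rather than recompute. The main obstacle I anticipate is the bookkeeping in the commutator manipulation: unlike the spin-1 case there are three nested $\tC$'s to push through, the connection $\C_a$ now carries nontrivial curvature (we are not assuming SD Weyl, only \eqref{rPND}), and the $\mr\Omega$ factors mean the relevant flat operator is really $\mr\Omega\,\tC\,\mr\Omega^{-1}$-type, so one must be careful that the curvature corrections organize into exactly the quadratic and cubic terms displayed in \eqref{conformalHHeq} and not into extra unwanted terms. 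Verifying that cancellation — that every non-$\tC$-exact remainder vanishes by \eqref{rPND}, \eqref{SFR} and the Einstein-Weyl condition — is the technical heart of the proof; everything else is a direct transcription of the linear-field argument.
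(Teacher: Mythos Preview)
Your approach is essentially the same as the paper's: start from \eqref{BianchiAuxWeyl2}, kill the right-hand side with $\Phi^{\rm w}=0$, contract with $o^B$ to obtain the potentialization \eqref{PotentializationWeyl} via Lemma \ref{Lemma:Potentials}, contract with $\iota^B$ and commute $\C^{A'}$ through three $\tC$'s to reduce everything to $\tC_{B'}\tC_{C'}\tC_{D'}(\cdots)=0$. The paper carries out exactly this computation in Appendix \ref{App:HH}.

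Two points to sharpen. First, a key intermediate result you do not mention is $\tC_{A'}\tC_{B'}\mr\Omega=0$ (eq.~\eqref{CtCtOmega}), which the paper derives separately from the Einstein-Weyl equations via \eqref{ooPhi0} and \eqref{ooPhiw}; without this (together with $\tC_{A'}\tC_{B'}\Lambda^{\rm f}=0$, eq.~\eqref{tCtCLambdaf}) the commutator terms do not organize into $\tC$-exact form. Second, your diagnosis of where the $\mr\Omega$-terms come from is slightly off: the connection $\C_a$ is built from the Lee form $f_a$, not from the Weyl 1-form ${\rm w}_a=\partial_a\log\mr\Omega^{-1}$, so its curvature does not depend on $\mr\Omega$. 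The $(\tC\mr\Omega)(\tC\Phi)(\tC\tC\Phi)$ term arises instead because when you substitute $\tilde\Psi_{A'B'C'D'}=\mr\Omega\,\tC_{A'}\tC_{B'}\tC_{C'}\tC_{D'}\Phi$ back into the commutator pieces, the single derivative $\tC_{A'}\mr\Omega$ survives (while $\tC_{A'}\tC_{B'}\mr\Omega$ vanishes by \eqref{CtCtOmega}).
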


\begin{remark}
The arbitrary function $K$ in \eqref{conformalHHeq} can be expressed in terms of $\pi^{A'}$ (defined by \eqref{TwistorLikeEq})
and three constants of integration as in equation \eqref{SolSpecialEqs1} for $m=3$.
\end{remark}

\begin{remark}
The potentialization \eqref{PotentializationWeyl} of the SD Weyl spinor, together with equation \eqref{conformalHHeq}, 
are the generalization to the Einstein-Weyl system of the corresponding results for Dirac, Maxwell, and Yang-Mills fields.
In principle, we could also think of \eqref{PotentializationWeyl}-\eqref{conformalHHeq} as a `particular case' of a 
Yang-Mills system, by making the identifications $\chi_{A'B'i}{}^{j} \equiv \mr\Omega^{-1}\tilde\Psi_{A'B'C'}{}^{D'}$ 
and $\Phi_{i}{}^{j} \equiv \tC_{C'}\tC^{D'}\Phi$.
More formally, this would look like a Yang-Mills theory for the gauge group $G={\rm SL}(2,\mathbb{C})$.
We notice, however, that in our treatment it is the field $\tilde\Psi_{A'B'C'}{}^{D'}$, not $\mr\Omega^{-1}\tilde\Psi_{A'B'C'}{}^{D'}$,
which arises as the curvature of a connection.
\end{remark}

The proof of theorem \ref{Thm:ConformalHHequation} is very similar to the proof of the corresponding
Maxwell and Yang-Mills results, seen in sections \ref{Sec:SDMaxwell} and \ref{Sec:YMSD} respectively;
the only difference is that the computations are harder.
We leave most of the details to appendix \ref{App:HH}, and here we simply give the main points for this proof.

First of all, the Einstein-Weyl equations \eqref{EinsteinWeylEquations0} together with
the Bianchi identities expressed in the form \eqref{BianchiAuxWeyl2} imply that 
\begin{equation}
 \C_{B}{}^{A'}(\mr{\Omega}^{-1}\tilde{\Psi}_{A'B'C'D'})=0. \label{BianchiOnShell}
\end{equation}
Contracting with $o^{B}$, this is $\tC^{A'}(\mr{\Omega}^{-1}\tilde{\Psi}_{A'B'C'D'})=0$, 
therefore, there exists (locally) a scalar field $\Phi$, with weights $p(\Phi)=-4$ and $w(\Phi)=-1$, such that 
\eqref{PotentializationWeyl} is true.
Contracting \eqref{BianchiOnShell} with $\iota^{B}$, we get $\C^{A'}(\mr\Omega^{-1}\tilde{\Psi}_{A'B'C'D'})=0$. 
Replacing \eqref{PotentializationWeyl}:
\begin{equation}
 0=\C^{A'}\tC_{A'}\tC_{B'}\tC_{C'}\tC_{D'}\Phi 
 = \tC_{B'}\tC_{C'}\tC_{D'}\C_{A'}\tC^{A'}\Phi +[\C_{A'}, \tC_{B'}\tC_{C'}\tC_{D'}]\tC^{A'}\Phi. \label{EqWithCommutator3}
\end{equation}
We see that, as in the Maxwell and Yang-Mills cases (equations \eqref{EqWithCommutator1} and \eqref{EqWithCommutator2} 
respectively), the derivation of equation \eqref{conformalHHeq} boils down to 
the computation of the commutator in the right hand side of \eqref{EqWithCommutator3}. 
The Maxwell case was almost trivial, the Yang-Mills case required a little bit more work, 
and now the calculation is in turn a little harder, but, using the identities given in appendix \ref{App:SFR}, 
the required computation is tedious but straightforward. We leave the details to appendix \ref{App:HH}.

An important point that is needed in the intermediate computations is the following:
\begin{proposition}
Suppose that the Einstein-Weyl equations \eqref{EinsteinWeylEquations0} are satisfied. 
Then the scalar field $\mr\Omega$ satisfies 
\begin{equation}
 \tC_{A'}\tC_{B'}\mr{\Omega}=0. \label{CtCtOmega}
\end{equation}
\end{proposition}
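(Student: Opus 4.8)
The plan is to derive \eqref{CtCtOmega} directly from the relation \eqref{ooPhi0} between the two Weyl connections, evaluated on shell. Recall that $\mr\Omega$ is defined by \eqref{DefOmega0} as the potential for the Weyl $1$-form ${\rm w}_a$ of the Einstein-Weyl structure, while $\nabla^{\rm f}_a$ is the Lee connection of the almost-complex structure, whose Lee form is the gradient \eqref{fgradient} under the current assumptions. The identity \eqref{ooPhi0} reads
\begin{equation*}
 o^{A}o^{B}\Phi^{\rm w}_{ABC'D'} = o^{A}o^{B}\Phi^{\rm f}_{ABC'D'} + \mr\Omega^{-1}\tC_{C'}\tC_{D'}\mr\Omega.
\end{equation*}
The Einstein-Weyl equations \eqref{EinsteinWeylEquations0} say precisely that $\Phi^{\rm w}_{ABC'D'}=0$, so the left-hand side vanishes identically. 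Hence it suffices to show that the remaining term $o^{A}o^{B}\Phi^{\rm f}_{ABC'D'}$ also vanishes, i.e. that the trace-free Ricci-type curvature spinor of the \emph{Lee} connection $\nabla^{\rm f}_a$ satisfies $o^{A}o^{B}\Phi^{\rm f}_{ABC'D'}=0$.

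First I would establish this last fact. This is where the shear-free and repeated-principal-spinor conditions \eqref{SFR}, \eqref{rPND} enter: by the proposition in Section \ref{Sec:HalfIntegrability}, these imply $[\tC_{A'},\tC_{B'}]=0$ acting on any weighted spinor field. Applying this flatness to a spinor field and reading off the curvature of $\C_a$ restricted to $\tilde L$, one extracts the component of the curvature of $\nabla^{\rm f}_a$ tangent to $\tilde L$; the relevant identities are collected in Appendix \ref{App:identities}, and the $o^Ao^B$-contracted curvature $2$-form of $\C_a$ on $\tilde L\times\tilde L$ is built precisely from $o^{A}o^{B}\Phi^{\rm f}_{ABC'D'}$ together with $o^Ao^Bo^Co^D\Psi_{ABCD}$, the latter vanishing by \eqref{rPND}. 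Concretely, the statement $\tC^{A'}\tC_{A'}\varphi^{\mathscr A}=0$ for all weighted $\varphi^{\mathscr A}$ forces the $\tilde L$-tangent part of the $\C_a$-curvature to vanish, and decomposing that curvature into its ASD-Weyl piece (killed by \eqref{rPND}), its $\Phi^{\rm f}$ piece, and its scalar piece, isolates $o^{A}o^{B}\Phi^{\rm f}_{ABC'D'}=0$. Combining with $\Phi^{\rm w}_{ABC'D'}=0$ in \eqref{ooPhi0} then gives $\mr\Omega^{-1}\tC_{C'}\tC_{D'}\mr\Omega=0$, and since $\mr\Omega\neq 0$ this is \eqref{CtCtOmega}.

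An alternative, more self-contained route I would keep in reserve: compute $\tC_{A'}\tC_{B'}\mr\Omega$ directly. Since $\mr\Omega$ has weights $p(\mr\Omega)=0$, $w(\mr\Omega)=1$, equation \eqref{ActionOfC} gives $\tC_{A'}\mr\Omega = o^{A}\nabla^{\rm f}_{AA'}\mr\Omega + f_{a}o^A\mr\Omega = o^{A}\nabla^{\rm f}_{AA'}\mr\Omega$ up to the Lee-form correction, and because ${\rm w}_a=-\partial_a\log\mr\Omega$ while $f_a=\partial_a\log\phi$, the difference ${\rm w}_a-f_a$ is again exact; one then applies a second $\tC$-derivative, uses the commutator identity \eqref{ContractedCommutatorPS} (or \eqref{C-tC-Scalars}) to reorder derivatives, and feeds in the trace part of the Einstein-Weyl equations, which controls exactly the symmetrized second derivative. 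Either way the decisive input is that the on-shell vanishing of $\Phi^{\rm w}_{ABC'D'}$, once transported to the $\nabla^{\rm f}$-frame, leaves no room for a nonzero $\tC_{C'}\tC_{D'}\mr\Omega$.

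The main obstacle I anticipate is purely bookkeeping: correctly tracking the weight-dependent terms in \eqref{ActionOfC} and the various curvature-spinor conversions \eqref{Weyl-LC1}--\eqref{Weyl-LC4} between the Levi-Civita, the $\nabla^{\rm f}$, and the $\nabla^{\rm w}$ connections, and making sure the $\sigma^{1}_{A'}$-type inhomogeneous terms that appear in \eqref{oiPhi0}--\eqref{iiPhi0} genuinely drop out of the purely $o^Ao^B$-contracted identity \eqref{ooPhi0} (they should, since $\tC_{A'}\tC_{B'}$ involves no $\iota^A$). Modulo that, the argument is short: it is essentially ``$\Phi^{\rm w}=0$ plus half-flatness of $\tC$ implies $\tC\tC\mr\Omega=0$.''
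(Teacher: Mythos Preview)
Your proposal is correct and follows essentially the same route as the paper: use \eqref{ooPhi0}, kill the left-hand side with the Einstein--Weyl equations \eqref{EinsteinWeylEquations0}, and kill $o^{A}o^{B}\Phi^{\rm f}_{ABC'D'}$ using the consequences of $\C_{a}o^{B}=0$. The only difference is that the paper cites the already-established identity \eqref{ooPhiw} (derived in Appendix~\ref{App:SFR} by applying $\Box^{\C}_{A'B'}$ to $o_C$ and contracting with $o^C$) rather than re-deriving it from the flatness of $\tC_{A'}$ as you sketch; your argument for this vanishing is valid but slightly more roundabout, and your ``alternative route'' is unnecessary.
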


\begin{proof}
This follows from \eqref{ooPhi0} and \eqref{ooPhiw}, 
after imposing the field equations \eqref{EinsteinWeylEquations0}.
\end{proof}

From the discussion in section \ref{Sec:ConstantsOfIntegration}, we can express the solution to 
\eqref{CtCtOmega} in terms of $\pi^{A'}$ and constants of integration:
\begin{equation}
 \mr\Omega = K^{1}_{A'}\pi^{A'} + K^{0},
\end{equation}
where $\tC_{A'}K^{1}_{B'}=0=\tC_{A'}K^{0}$.

\subsection{The ordinary Einstein equations}\label{Sec:OrdinaryEE}

The Einstein equations $R_{ab}=\lambda g_{ab}$ are not conformally invariant. 
In particular, $R_{ab}$ is not a conformal density;
so the conformal machinery associated to $\C_{a}$ seems, in principle, inapplicable to this case.
However, there is a simple trick we can use to study this system as well.
Recalling expression \eqref{DefOmega0} for the Weyl 1-form, we see that if we set $\mr\Omega \equiv 1$, then
\begin{equation}
 \nabla^{\rm w}_{a}|_{\mr\Omega=1} = \nabla_{a}, \label{BreakingCI}
\end{equation}
that is, the Weyl connection $\nabla^{\rm w}_{a}$ reduces to a Levi-Civita connection $\nabla_{a}$. 
All the formulas we obtained above for the Einstein-Weyl system remain valid for the particular value $\mr\Omega=1$, 
the only thing that happens is that we break conformal invariance.
Correspondingly, the Einstein-Weyl equations reduce to the ordinary Einstein equations, 
and the potentialization result applies to them.
In particular, putting $\mr\Omega \equiv 1$, the SD Weyl spinor is
\begin{equation}
 \tilde{\Psi}_{A'B'C'D'} = \tC_{A'}\tC_{B'}\tC_{C'}\tC_{D'}\Phi, \label{NonConfSDWeylSpinor}
\end{equation}
and the conformal HH equation \eqref{conformalHHeq} becomes
\begin{equation}
 (\Box^{\C} + 3\mathcal{R}/2)\Phi +(\tC_{A'}\tC_{B'}\Phi)(\tC^{A'}\tC^{B'}\Phi) -4\tilde{f}^{A'}(\tC^{B'}\Phi)(\tC_{A'}\tC_{B'}\Phi) = 2K,
\label{NonConfHHeq}
\end{equation}
where $\tilde{f}_{A'}=o^{A}f_{AA'}$. Conformal invariance is of course broken in 
\eqref{NonConfSDWeylSpinor} and \eqref{NonConfHHeq}. 
Choosing a special set of coordinates (see section \ref{Sec:Coordinates} below), 
\eqref{NonConfHHeq} should reduce to the hyper-heavenly equation of 
Pleba\'nski and Robinson \cite{PlebanskiRobinson} and Pleba\'nski and Finley \cite[Eq. (3.14)]{PlebanskiFinley} 
(but we have not attempted to show this).

\section{Reconstruction of the conformal structure}\label{Sec:Reconstruction}

After deducing the conformal HH equation \eqref{conformalHHeq} for the 
Einstein-Weyl system \eqref{EinsteinWeylEquations0},
we will now see how to reconstruct the full conformal curvature and metric from a solution to \eqref{conformalHHeq}.
The procedure for the reconstruction of the curvature is analogous to 
the one used for the Maxwell and Yang-Mills cases, seen in sections \ref{Sec:ParallelFramesMaxwell} 
and \ref{Sec:YMASD} respectively.
That is: our method is based on the consideration of suitable parallel frames.
An important difference is that we now have {\em two} types of parallel frames, namely primed and unprimed spin frames.

On the other hand, the reconstruction of the conformal metric requires a different technique, 
which uses as an auxiliary intermediate step the special coordinates associated to twistor surfaces. 
The final expression we get is, however, in terms of abstract spinor fields.

\medskip
We assume that the conditions \eqref{SFR}, \eqref{rPND} and \eqref{fgradient} are satisfied.

\subsection{Curvature}\label{Sec:RecCurvature}

The curvature of $\mathcal{C}_{a}$ is described in Appendix \ref{App:curvatureC}. 
From formulas \eqref{BoxCSpecial1}--\eqref{BoxCSpecial4} we see that
the action of $\Box^{\mathcal{C}}_{A'B'}$ and $\Box^{\mathcal{C}}_{AB}$ on primed and unprimed spinor fields
with $p=0$ involves only one of the curvature spinors $\tilde{X}^{\rm f}_{A'B'C'D'}$, $\Phi^{\rm f}_{ABC'D'}$, $X^{\rm f}_{ABCD}$.
Considering then a primed spin frame $\tilde{\alpha}^{A'}_{\bf A'}=(\tilde{\alpha}^{A'}_{\bf 0'}, \tilde{\alpha}^{A'}_{\bf 1'})$
with  $p(\tilde{\alpha}^{A'}_{\bf A'})=0$, 
and an unprimed spin frame $\alpha^{A}_{\bf A}=(\alpha^{A}_{\bf 0}, \alpha^{A}_{\bf 1})$
with $p(\alpha^{A}_{\bf A})=0$, we get the following expressions for the curvature spinors:
\begin{subequations}
\begin{align}
 \tilde{X}^{\rm f}_{A'B'C'D'} ={}& \tilde\alpha^{\bf E'}_{C'} \Box^{\C}_{A'B'} \tilde\alpha_{{\bf E'} D'}
  \label{XCtilde} \\
  \Phi^{\rm f}_{ABC'D'} ={}& \tilde\alpha^{\bf E'}_{C'} \Box^{\C}_{AB} \tilde\alpha_{{\bf E'}D'}
  \label{PhiCtilde} \\
 X^{\rm f}_{ABCD} ={}& \alpha^{\bf E}_{C} \Box^{\C}_{AB} \alpha_{{\bf E}D}, 
  \label{XC}\\
 \Phi^{\rm f}_{A'B'CD} ={}& \alpha^{\bf E}_{C} \Box^{\C}_{A'B'} \alpha_{{\bf E}D}.
 \label{PhiC} 
\end{align}
\end{subequations}
These expressions are analogous to the Maxwell case, eqs. \eqref{SDBoxCMaxwell}-\eqref{ASDBoxCMaxwell}, 
and to the Yang-Mills case, eqs. \eqref{SDYM1}-\eqref{ASDYM1}.

\begin{remark}
A crucial point for the analysis that follows is the equality between equations \eqref{PhiCtilde} and \eqref{PhiC}, 
which follows from assuming that the Lee form $f_{a}$ is a gradient \eqref{fgradient}. 
If only the weaker condition \eqref{LeeForm2} is used, it is still possible to get formulas like \eqref{XCtilde}--\eqref{PhiC}
provided that one chooses specific conformal weights for the spin frames. The analysis is however quite more complicated.
\end{remark}

We now choose the frames $\tilde\alpha^{A'}_{\bf A'}$ and $\alpha^{A}_{\bf A}$ to be parallel under $\tC_{A'}$:
\begin{equation}
 \tC_{A'}\tilde{\alpha}^{B'}_{\bf B'}=0,   \qquad    \tC_{A'}\alpha^{B}_{\bf B}=0. \label{parallelframes}
\end{equation}
Define
\begin{equation}\label{gammaGravity0}
 \tilde{\gamma}_{A'B'}{}^{C'} := \tilde{\alpha}^{\bf D'}_{B'}\C_{A'}\tilde{\alpha}^{C'}_{\bf D'}, 
 \qquad 
 \gamma_{A'B}{}^{C} := \alpha^{\bf D}_{B}\C_{A'}\alpha^{C}_{\bf D}.
\end{equation}
These objects are the generalization of \eqref{gammaMaxwell0} and \eqref{gammaYM0}.
By using identities \eqref{IdentitySDCC}-\eqref{IdentityASDCC} together with \eqref{parallelframes}, 
a short calculation shows that
\begin{subequations}
\begin{align}
 \tilde{X}^{\rm f}_{A'B'C'D'} ={}& \tC_{(A'}\tilde{\gamma}_{B')C'D'} \label{curvatureC1} \\
 \Phi^{\rm f}_{ABC'D'} ={}& o_{A}o_{B} (\C_{A'}\tilde{\gamma}^{A'}{}_{C'D'} 
  - \tilde{\gamma}_{A'B'C'}\tilde{\gamma}^{A'B'}{}_{D'}) - o_{(A}\iota_{B)}\tC_{A'}\tilde{\gamma}^{A'}{}_{C'D'}, 
  \label{PhiCtilde1} \\
 \Lambda^{\rm f} ={}& \tfrac{1}{6}\tC_{A'}\tilde\gamma_{B'}{}^{A'B'} \label{LambdaCtilde1} \\
 X^{\rm f}_{ABCD} ={}& o_{A}o_{B}(\C_{A'}\gamma^{A'}{}_{CD}-\gamma_{A'BC}\gamma^{A'B}{}_{D})
  - o_{(A}\iota_{B)}\tC_{A'}\gamma^{A'}{}_{CD}, \label{XC1} \\
 \Phi^{\rm f}_{A'B'CD} ={}& \tC_{(A'}\gamma_{B')CD}, \label{PhiC1} \\
 \Lambda^{\rm f} ={}& -\tfrac{1}{6}\tC_{A'}(\gamma^{A'}{}_{CD}o^{C}\iota^{D}). \label{LambdaC1}
\end{align}
\end{subequations}
These formulas, in turn, are the generalization of \eqref{CSDMaxwell1}-\eqref{CASDMaxwell1} for Maxwell
and \eqref{SDYM}-\eqref{ASDYM} for Yang-Mills.

\subsubsection{The self-dual part}

From identities \eqref{curvatureC1}--\eqref{LambdaC1} we deduce, after straightforward calculations, 
some further formulas that we will need below:
\begin{subequations}
\begin{align}
 \tilde{X}^{\rm f}_{A'B'C'D'} +\epsilon_{A'B'}o^{A}\iota^{B}\Phi^{\rm f}_{ABC'D'} 
 ={}& \tC_{A'}\tilde\gamma_{B'C'D'}, \label{IdCurvC1} \\
 \Phi^{\rm f}_{A'B'CD} + \epsilon_{A'B'}o^{A}\iota^{B}X^{\rm f}_{ABCD} ={}& \tC_{A'}\gamma_{B'CD}, 
 \label{IdCurvC2} \\
 o^{C}\iota^{D}\Phi^{\rm f}_{A'B'CD}+3\epsilon_{A'B'}\Lambda^{\rm f} ={}& \tC_{B'}\gamma_{A'01} 
 \label{IdCurvC3}
\end{align}
\end{subequations}
where $\gamma_{A'01}=\gamma_{A'BC}o^{B}\iota^{C}$.

The basic variable that allows us to reconstruct the conformal structure (by which we mean the 
full curvature and the conformal metric) is a spinor field $\theta_{A'B'}$ defined by the following:

\begin{proposition}\label{Prop:Existencetheta}
There exists a symmetric spinor field $\theta_{A'B'}=\theta_{(A'B')}$, 
with weights $p(\theta_{A'B'})=-2$ and $w(\theta_{A'B'})=0$, such that 
\begin{equation}
 \tilde{\gamma}_{A'B'C'}+\epsilon_{B'C'}\gamma_{A'01} = 2\tilde{\mathcal{C}}_{B'}\theta_{C'A'}.  \label{theta0}
\end{equation}
\end{proposition}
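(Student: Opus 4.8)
The plan is to produce $\theta_{A'B'}$ by applying the integration machinery of Lemma \ref{Lemma:Potentials} twice, once for each primed index, after first checking the relevant closure conditions. First I would observe that the object on the left of \eqref{theta0}, namely $\tau_{A'B'C'} := \tilde{\gamma}_{A'B'C'}+\epsilon_{B'C'}\gamma_{A'01}$, has weights $p=-2$, $w=0$ in its free primed indices, and I would compute $\tC^{B'}\tau_{A'B'C'}$ and $\tC^{C'}\tau_{A'B'C'}$ using the identities \eqref{IdCurvC1}--\eqref{IdCurvC3} together with \eqref{FlatConnection}. The point is that $\tC_{A'}$ acting on $\tilde\gamma_{B'C'D'}$, $\gamma_{B'CD}$, and $\gamma_{A'01}$ produces the curvature combinations $\tilde{X}^{\rm f}$, $\Phi^{\rm f}$, $\Lambda^{\rm f}$ which are totally symmetric on their primed indices (their symmetric parts being $\tilde\Psi_{A'B'C'D'}$ etc.), so the appropriate antisymmetrizations in a suitable pair of primed indices will vanish. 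This should show that $\tau_{A'B'C'}$ is, in the appropriate sense, ``closed'' in the indices $B'$ and $C'$: concretely, I expect to find $\tC^{C'}\tau_{A'B'C'}=0$ (or a similar contraction vanishing), which by part 1 of Lemma \ref{Lemma:Potentials} yields a spinor $\sigma_{A'B'}$ with $\tau_{A'B'C'} = \tC_{C'}\sigma_{A'B'}$, with weights shifted by $p \to p-1$ as in \eqref{fieldd}.

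Next I would need to pass from this $\sigma_{A'B'}$ to the desired $\theta_{A'B'}$, matching the precise index structure $\tilde{\mathcal{C}}_{B'}\theta_{C'A'}$ in \eqref{theta0}: note the derivative index in \eqref{theta0} is the \emph{second} slot $B'$, not the last, so after extracting $\tC_{C'}$ I would relabel and use the symmetry $\theta_{A'B'}=\theta_{(A'B')}$ together with the $\epsilon_{B'C'}$ term to reorganize. The factor of $2$ and the symmetrization should fall out of decomposing $\tau_{A'B'C'}$ into its symmetric part plus the trace part carried by $\epsilon_{B'C'}\gamma_{A'01}$; the term $\epsilon_{B'C'}\gamma_{A'01}$ is precisely what is needed so that $\tau_{A'B'C'}$ is the $\tC_{B'}$-derivative of something symmetric in the remaining two indices. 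I would also verify the claimed weights: since $\tilde\gamma_{A'B'C'}$ has $p=-2$ (from the spin-frame construction \eqref{gammaGravity0} with $p$-weight-zero frames, the $\C_{A'}$ contributing $p=-1$ and the index structure the rest), and $\tC_{B'}$ lowers $p$ by $1$, we need $p(\theta_{A'B'}) = -2$, $w(\theta_{A'B'})=0$, consistent with the statement.

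The main obstacle I anticipate is the bookkeeping of the closure condition: it is not immediately obvious which contraction of $\tau_{A'B'C'}$ vanishes, and establishing it requires carefully combining \eqref{IdCurvC1}--\eqref{IdCurvC3} with the commutator identities from Appendix \ref{App:identities} (in particular \eqref{ContractedCommutatorPS} or its analogues) and exploiting that the Einstein-Weyl equations are \emph{not} assumed here — only \eqref{SFR}, \eqref{rPND}, \eqref{fgradient} — so $\Phi^{\rm f}_{ABC'D'}$ need not vanish and must be tracked. A secondary subtlety is the internal gauge freedom \eqref{InternalGaugeFreedom} in the potential: $\theta_{A'B'}$ will only be defined up to addition of a spinor annihilated by $\tC_{B'}$ in the appropriate slot, and one should check this freedom is harmless (or is exactly what is expected of a ``constant of integration''). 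Once the closure is in hand, the existence of $\theta_{A'B'}$ is a direct application of Lemma \ref{Lemma:Potentials}, and the rest is relabelling.
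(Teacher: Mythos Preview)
Your overall strategy---show a $\tC$-closure condition on $\tau_{A'B'C'}:=\tilde\gamma_{A'B'C'}+\epsilon_{B'C'}\gamma_{A'01}$ and then invoke Lemma~\ref{Lemma:Potentials}---is correct and is exactly what the paper does. However, several details in your proposal are off and one step is genuinely incomplete.

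First, the integration lemma is applied \emph{once}, not twice: the paper contracts \eqref{IdCurvC1} with $\epsilon^{A'C'}$ to obtain $\tC^{B'}\tilde\gamma_{A'B'C'}=-\tC_{C'}\gamma_{A'01}$, i.e.\ $\tC^{B'}\tau_{A'B'C'}=0$, and a single use of Lemma~\ref{Lemma:Potentials} (with $A',C'$ treated as passive $\mathscr{A}$-indices) immediately gives $\tau_{A'B'C'}=2\tC_{B'}\theta_{C'A'}$. Your guess $\tC^{C'}\tau_{A'B'C'}=0$ is the wrong index; the derivative slot in \eqref{theta0} is $B'$, so the closure must be in $B'$. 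Also, $\tC_{B'}=o^{B}\C_{BB'}$ \emph{raises} $p$ by $1$ (cf.\ the weight shift $(p,w)\to(p-1,w)$ for the potential in Lemma~\ref{Lemma:Potentials}), not lowers it; your weight bookkeeping reaches the right answer $p(\theta)=-2$ but through a cancelled pair of errors, since $p(\tilde\gamma_{A'B'C'})=-1$, not $-2$.

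The genuine gap is the symmetry of $\theta_{A'B'}$. Your sketch (``the symmetrization should fall out of decomposing $\tau_{A'B'C'}$'') does not work: Lemma~\ref{Lemma:Potentials} produces a $\theta_{C'A'}$ with no a priori symmetry, and one cannot simply symmetrize without changing the left side of \eqref{theta0}. The paper's argument is different and more indirect: it contracts \eqref{theta0} with $\epsilon^{A'C'}$ and uses a second contraction of \eqref{IdCurvC1} (this time with $\epsilon^{B'D'}$) to show $\tC_{A'}\tC_{C'}\theta^{B'}{}_{B'}=0$. Since every curvature formula \eqref{curvatureC1}--\eqref{LambdaCtilde1} involves $\theta$ only through two $\tC$-derivatives, the antisymmetric part $\epsilon_{A'B'}\theta^{C'}{}_{C'}$ is invisible to the curvature and can be set to zero. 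You should incorporate this step.
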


\begin{remark}\label{Remark:Freedomtheta}
The spinor field $\theta_{A'B'}$ is defined only up to
\begin{equation}
 \theta_{A'B'} \to \theta_{A'B'} + k_{A'B'}, \label{Freedomtheta}
\end{equation}
where $\tC_{A'}k_{B'C'}=0$. We will use this freedom later (see the proof of proposition \ref{Prop:RelationwP}).
\end{remark}

\begin{proof}[Proof of Proposition \ref{Prop:Existencetheta}]
Contracting \eqref{IdCurvC1} with $\epsilon^{A'C'}$ and renaming indices,
we get $\tC^{B'}\tilde{\gamma}_{A'B'C'} = -\tC_{C'}\gamma_{A'01}$, or equivalently:
\begin{equation*}
 \tC^{B'}(\tilde{\gamma}_{A'B'C'}+\epsilon_{B'C'}\gamma_{A'01})=0.
\end{equation*}
Therefore, from lemma \ref{Lemma:Potentials}, there exists (locally) a spinor field $\theta_{C'A'}$ such that \eqref{theta0} holds.
It remains to show that $\theta_{A'B'}$ can be chosen to be symmetric.
Contraction of \eqref{theta0} with $\epsilon^{A'C'}$ shows that 
$\tilde{\gamma}_{A'B'}{}^{A'}+\gamma_{B'01}=2\tilde{\mathcal{C}}_{B'}\theta^{C'}{}_{C'}$, so taking an 
additional derivative and renaming indices: 
$\tC_{A'}\tilde{\gamma}_{B'C'}{}^{B'}+\tC_{A'}\gamma_{C'01} = 2 \tC_{A'} \tC_{C'}\theta^{B'}{}_{B'}$.
On the other hand, contracting \eqref{IdCurvC1} with $\epsilon^{B'D'}$, we have 
$\tC_{A'}\tilde{\gamma}_{B'C'}{}^{B'} = - \tC_{A'}\gamma_{C'01}$; thefefore $\tC_{A'} \tC_{C'}\theta^{B'}{}_{B'}=0$.
Now, when we replace $\tilde{\gamma}_{A'B'C'}$ in \eqref{curvatureC1}, \eqref{PhiCtilde1}, \eqref{LambdaCtilde1}, 
all formulas involve two $\tC_{A'}$-derivatives of $\theta_{A'B'}$. Decomposing $\theta_{A'B'}$ into its 
symmetric and antisymmetric parts, the fact that $\tC_{A'} \tC_{C'}\theta^{B'}{}_{B'}=0$ implies that the curvature depends only on 
the symmetric part of $\theta_{A'B'}$. Thus we can take $\theta^{B'}{}_{B'} \equiv 0$, i.e. $\theta_{A'B'}=\theta_{(A'B')}$.
\end{proof}

Taking the symmetric and antisymmetric parts in $B'C'$ in \eqref{theta0}, we get
\begin{align}
 \tilde{\gamma}_{A'B'C'} ={}& 2\tC_{(B'}\theta_{C')A'}, \label{gammatilde} \\
 \gamma_{A'01} ={}& \tC_{B'}\theta^{B'}{}_{A'}. \label{gamma01}
\end{align}
From formulas \eqref{curvatureC1}, \eqref{PhiCtilde1}, \eqref{LambdaCtilde1} we see that
the SD curvature $\tilde{X}^{\rm f}_{A'B'C'D'}$ together with $\Phi^{\rm f}_{ABC'D'}$, $\Lambda^{\rm f}$ 
depend only on $\tilde{\gamma}_{A'B'C'}$, therefore, \eqref{gammatilde} allows us to express 
this part of the curvature in terms of the spinor field $\theta_{A'B'}$.
On the other hand, from \eqref{XC1} we see that $X^{\rm f}_{ABCD}$ involves $\gamma_{A'BC}$, 
and \eqref{gamma01} is only the component $\gamma_{A'01}$.
Thus, in order to obtain the full curvature in terms of $\theta_{A'B'}$, we need to determine 
the whole $\gamma_{A'BC}$. We will do this in section \ref{Sec:RecASDcurvature} below.

\medskip
Imposing the field equations \eqref{EinsteinWeylEquations0}, we get the following:

\begin{proposition}\label{Prop:RelationSpinorAndScalarPotentials}
Suppose that the Einstein-Weyl equations \eqref{EinsteinWeylEquations0} are satisfied. 
Let $\Phi$ be the potential for the Weyl spinor, eq. \eqref{PotentializationWeyl}. Then the spinor field 
$\theta_{A'B'}$ defined by \eqref{theta0} is given by
\begin{equation}
 \theta_{A'B'} = \mr{\Omega}\tC_{A'}\tC_{B'}\Phi -2(\tC_{(A'}\mr{\Omega})(\tC_{B')}\Phi ) + \tau_{A'B'}, \label{Potentialtheta}
\end{equation}
where $\tau_{A'B'}$ is such that $\tC_{(A'}\tC_{B'}\tau_{C'D')}=0$.
\end{proposition}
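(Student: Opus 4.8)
The plan is to compare two expressions for the totally symmetric SD Weyl spinor $\tilde\Psi_{A'B'C'D'}$ — one coming from $\theta_{A'B'}$ via the curvature identity \eqref{curvatureC1}, the other being the potentialization \eqref{PotentializationWeyl} in terms of $\Phi$ — and then to ``integrate away'' the two outer $\tC_{A'}$-derivatives, the key input being that $\mr\Omega$ is annihilated by two such derivatives, eq. \eqref{CtCtOmega}.

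First I would record that, by Proposition \ref{Prop:Existencetheta} together with \eqref{gammatilde}, the field $\tilde\gamma_{A'B'C'}$ is expressed entirely through the symmetric spinor $\theta_{A'B'}$. Substituting this into \eqref{curvatureC1} and passing to the totally symmetric part, which (as recalled just above, the Weyl-connection curvature $\tilde X^{\rm f}$ being fully symmetric only after the $\epsilon$-traces are discarded) equals the ordinary SD Weyl spinor, gives an identity of the schematic form $\tilde\Psi_{A'B'C'D'}=c\,\tC_{(A'}\tC_{B'}\theta_{C'D')}$ with a definite numerical constant $c$. On the other hand, imposing the Einstein--Weyl equations \eqref{EinsteinWeylEquations0} produces, through \eqref{BianchiOnShell}, the potentialization \eqref{PotentializationWeyl}, i.e. $\tilde\Psi_{A'B'C'D'}=\mr\Omega\,\tC_{A'}\tC_{B'}\tC_{C'}\tC_{D'}\Phi$. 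Equating the two, $c\,\tC_{(A'}\tC_{B'}\theta_{C'D')}=\mr\Omega\,\tC_{A'}\tC_{B'}\tC_{C'}\tC_{D'}\Phi$.

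The second step is to find an explicit primitive. The naive guess $\theta_{A'B'}\sim\mr\Omega\,\tC_{A'}\tC_{B'}\Phi$ fails because applying two $\tC$'s by the Leibniz rule leaves extra terms in which a derivative lands on $\mr\Omega$; the role of the correction $-2(\tC_{(A'}\mr\Omega)(\tC_{B')}\Phi)$ is precisely to cancel those. Concretely, I would set $\hat\theta_{A'B'}:=\mr\Omega\,\tC_{A'}\tC_{B'}\Phi-2(\tC_{(A'}\mr\Omega)(\tC_{B')}\Phi)$ and compute $\tC_{(A'}\tC_{B'}\hat\theta_{C'D')}$ by expanding with Leibniz; using \eqref{CtCtOmega} repeatedly kills every term where a $\tC$ hits $\mr\Omega$ a second time, and using $[\tC_{A'},\tC_{B'}]=0$ on weighted fields (so iterated $\tC$'s are totally symmetric) the surviving terms recombine, after symmetrization over $A'B'C'D'$, into $\mr\Omega\,\tC_{A'}\tC_{B'}\tC_{C'}\tC_{D'}\Phi$ with exactly the constant $c$ matching the first step. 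Hence $\tC_{(A'}\tC_{B'}(\theta_{C'D')}-\hat\theta_{C'D')})=0$. Defining $\tau_{A'B'}:=\theta_{A'B'}-\hat\theta_{A'B'}$, which is symmetric because $\theta_{A'B'}$ is (Proposition \ref{Prop:Existencetheta}) and $\hat\theta_{A'B'}$ manifestly is, and which has weights $p=-2,\ w=0$ matching $\theta_{A'B'}$, we obtain $\tC_{(A'}\tC_{B'}\tau_{C'D')}=0$; rearranging $\theta_{A'B'}=\hat\theta_{A'B'}+\tau_{A'B'}$ is precisely \eqref{Potentialtheta}.

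The main obstacle is the bookkeeping in the first step: correctly extracting the totally symmetric part of $\tilde X^{\rm f}_{A'B'C'D'}$ (discarding the $\epsilon$-contractions that appear because $\nabla^{\rm f}_a$ is not metric) and carrying the numerical factors from \eqref{curvatureC1} and \eqref{gammatilde} faithfully through the symmetrizations, so that the coefficient in $\tilde\Psi_{A'B'C'D'}=c\,\tC_{(A'}\tC_{B'}\theta_{C'D')}$ agrees with the one produced by the Leibniz computation for $\hat\theta$; this agreement is exactly what makes $\tau_{A'B'}$ satisfy the \emph{homogeneous} equation $\tC_{(A'}\tC_{B'}\tau_{C'D')}=0$ rather than an inhomogeneous variant. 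The second step is routine once \eqref{CtCtOmega} is available.
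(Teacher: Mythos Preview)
Your approach is correct and essentially the same as the paper's: both compare the curvature identity \eqref{curvatureC1} (fed through \eqref{gammatilde}) with the potentialization \eqref{PotentializationWeyl}, and both produce the primitive $\hat\theta_{A'B'}=\mr\Omega\,\tC_{A'}\tC_{B'}\Phi-2(\tC_{(A'}\mr\Omega)(\tC_{B')}\Phi)$ by Leibniz together with \eqref{CtCtOmega}. The only difference is that the paper integrates in two one-step stages---first solving for $\tilde\gamma_{(A'B'C')}$ modulo a totally symmetric remainder $\rho_{A'B'C'}$ with $\tC_{(A'}\rho_{B'C'D')}=0$, and then solving $\tC_{(A'}\theta_{B'C')}=\ldots+\rho_{A'B'C'}$ for $\theta_{A'B'}$ (the existence of $\tau_{A'B'}$ with $\tC_{(A'}\tau_{B'C')}=\rho_{A'B'C'}$ being justified via $\tC^{A'}\tC^{B'}\tC^{C'}\rho_{A'B'C'}=0$)---whereas you collapse both integrations into a single step by defining $\tau_{A'B'}:=\theta_{A'B'}-\hat\theta_{A'B'}$ directly. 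Your route is slightly more economical; the paper's route makes the intermediate object $\rho_{A'B'C'}$ explicit, which can be useful elsewhere. Your flagged ``main obstacle'' about the numerical constant $c$ is exactly the right thing to watch: tracking the factor of $2$ from \eqref{gammatilde} through the full symmetrization is the only place the argument can go wrong.
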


\begin{remark}
As discussed in section \ref{Sec:ConstantsOfIntegration}, the spinor field $\tau_{A'B'}$ in \eqref{Potentialtheta}
can be expressed in terms of $\pi^{A'}$ (defined by \eqref{TwistorLikeEq}) and constants of integration, 
see \eqref{DoubleTwistorLikeEq}.
\end{remark}

\begin{proof}[Proof of proposition \ref{Prop:RelationSpinorAndScalarPotentials}]
Since the primed Weyl curvature spinor is $\tilde{\Psi}_{A'B'C'D'}=\tilde{X}^{\rm f}_{(A'B'C'D')}$,
using \eqref{curvatureC1} we get
\begin{equation}
 \tilde{\mathcal{C}}_{(A'}\tilde{\gamma}_{B'C'D')}=
  \mathring{\Omega}\tilde{\mathcal{C}}_{A'}\tilde{\mathcal{C}}_{B'}\tilde{\mathcal{C}}_{C'}\tilde{\mathcal{C}}_{D'}\Phi.
\end{equation}
Notice that this equation involves only the totally symmetric part $\tilde{\gamma}_{(A'B'C')}$.
Using equation \eqref{CtCtOmega}, it is straightforward to check that the solution is 
\begin{equation}
 \tilde{\gamma}_{(A'B'C')} = \mr{\Omega}\tC_{A'}\tC_{B'}\tC_{C'}\Phi - (\tC_{(A'}\mr{\Omega})(\tC_{B'}\tC_{C')}\Phi ) + \rho_{A'B'C'},
  \label{SolGammaTilde1}
\end{equation}
where $\rho_{A'B'C'}=\rho_{(A'B'C')}$ is such that $\tC_{(A'}\rho_{B'C'D')}=0$.
Recalling now formula \eqref{theta0} for $\tilde{\gamma}_{A'B'C'}$, we have
\begin{equation}
 \tC_{(A'}\theta_{B'C')} = \mr{\Omega}\tC_{A'}\tC_{B'}\tC_{C'}\Phi 
 - (\tC_{(A'}\mr{\Omega})(\tC_{B'}\tC_{C')}\Phi ) + \rho_{A'B'C'}.
\end{equation}
The solution to this equation can be easily found as:
\begin{equation}
 \theta_{A'B'} = \mathring{\Omega}\tilde{\mathcal{C}}_{A'}\tilde{\mathcal{C}}_{B'}\Phi 
 -2(\tilde{\mathcal{C}}_{(A'}\mathring{\Omega})(\tilde{\mathcal{C}}_{B')}\Phi ) + \tau_{A'B'},
\end{equation}
where $\tau_{A'B'}$ is such that $\tC_{(A'}\tau_{B'C')}=\rho_{A'B'C'}$. 
(The existence of such a spinor follows from the condition $\tC^{A'}\tC^{B'}\tC^{C'}\rho_{A'B'C'}=0$, which in turn 
follows from applying $\tC^{A'}\tC^{B'}\tC^{C'}$ to \eqref{SolGammaTilde1}.)
\end{proof}

\subsubsection{The anti-self-dual part}\label{Sec:RecASDcurvature}

We observed before that, in order to determine the ASD curvature $X^{\rm f}_{ABCD}$, 
we need to determine $\gamma_{A'BC}$, and \eqref{theta0} only gives us $\gamma_{A'01}$.
To determine the rest, we notice that we can choose the unprimed spin frame $\alpha^{A}_{\bf A}$ 
to be given by some rescaling of $o^{A},\iota^{A}$ so as to make $p(\alpha^{A}_{\bf A})=0$. 
That is, we set $\alpha^{A}_{0} \equiv \mr{\lambda} o^{A}$, $\alpha^{A}_{1} \equiv \mr{\lambda}^{-1}\iota^{A}$, 
where $\mr{\lambda}$ is a weighted scalar field such that 
\begin{equation}
 p(\mr{\lambda}) = -1, \qquad \tC_{A'}\mr{\lambda} = 0.  \label{lambdao}
\end{equation}
The role of this field $\mr\lambda$ at this point is only auxiliary; however, 
we will see in section \ref{Sec:Coordinates} below that there is a convenient specific choice of $\mr\lambda$, 
see around eq. \eqref{Choicelambdao}.
The resulting spin frame $\alpha^{A}_{\bf A}$ satisfies then $\tC_{A'}\alpha^{B}_{\bf B}=0$ and 
$p(\alpha^{A}_{\bf A})=0$, so all the above formulas for the curvature apply. 
Furthermore, we get:
\begin{equation}
 \gamma_{A'BC} = o_{B}o_{C}\sigma^{1}_{A'} - 2 o_{(B}\iota_{C)} \mr\lambda^{-1}\C_{A'}\mr{\lambda}, \label{gammaGravity1}
\end{equation}
where $\sigma^{1}_{A'}$ is defined in \eqref{Ci}, and it represents the `shear' of $\iota^{A}$.
We see that $\mr\lambda^{-1}\C_{A'}\mr{\lambda}=\gamma_{A'01}=\tC_{B'}\theta^{B'}{}_{A'}$, so $\mr\lambda$ 
is only auxiliary as mentioned. 
The part of $\gamma_{A'BC}$ that remains to be determined then is $\sigma^{1}_{A'}$.

\begin{proposition}
The relation between the spinor field $\sigma^{1}_{A'}$ in \eqref{gammaGravity1} (defined by \eqref{Ci})
and $\theta_{A'B'}$ is:
\begin{equation}
 \sigma^{1}_{A'} = 2(\C_{B'}\theta_{A'}{}^{B'}-\tilde\gamma_{A'B'C'}\theta^{B'C'}) + k_{A'}, \label{sigma1}
\end{equation}
where $k_{A'}$ satisfies $\tC_{A'}\tC_{B'}k_{C'}=0$.
\end{proposition}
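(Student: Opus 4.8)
The strategy is to read off $\sigma^{1}_{A'}$ as a component of $\gamma_{A'BC}$ and to pin it down through the mixed curvature spinor $\Phi^{\rm f}_{ABC'D'}$, whose totally primed ``shadow'' $\Phi^{\rm f}_{A'B'CD}$ has already been tied to $\theta_{A'B'}$ in the self-dual analysis.

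First I would note that contracting \eqref{gammaGravity1} with $\iota^{B}\iota^{C}$ and using $o_{A}\iota^{A}=1$, $\iota_{A}\iota^{A}=0$ gives $\sigma^{1}_{A'}=\iota^{B}\iota^{C}\gamma_{A'BC}$, so it suffices to compute $\iota^{B}\iota^{C}\gamma_{A'BC}$. Since $\C_{a}o^{B}=0$ (eq. \eqref{ParallelSpinor}) and $\C_{a}$ preserves $\epsilon_{AB}$, one has $\tC_{A'}o^{B}=0$ and $\tC_{A'}\iota^{B}=\tilde{\sigma}_{A'}o^{B}$ for a covector $\tilde{\sigma}_{A'}$; moreover \eqref{FlatConnection} applied to $\iota^{B}$ forces $\tC^{A'}\tilde{\sigma}_{A'}=0$, so by Lemma \ref{Lemma:Potentials} locally $\tilde{\sigma}_{A'}=\tC_{A'}\mu$ for some scalar $\mu$. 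Contracting \eqref{PhiC1} with $\iota^{C}\iota^{D}$ and carrying the $\iota$'s through $\tC_{(A'}$ then yields
\begin{equation*}
 \iota^{C}\iota^{D}\Phi^{\rm f}_{A'B'CD}=\tC_{(A'}\sigma^{1}_{B')}-2\tilde{\sigma}_{(A'}\gamma_{B')01},
\end{equation*}
where $\gamma_{A'01}=\tC_{B'}\theta^{B'}{}_{A'}$ by \eqref{gamma01}.

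The crucial input is now the pair symmetry $\Phi^{\rm f}_{A'B'CD}=\Phi^{\rm f}_{CDA'B'}$, i.e. the equality of \eqref{PhiCtilde} and \eqref{PhiC} guaranteed by the gradient condition \eqref{fgradient}. Applying the unprimed-frame expansion \eqref{PhiCtilde1} to $\Phi^{\rm f}_{CDA'B'}$ and contracting with $\iota^{C}\iota^{D}$ gives
\begin{equation*}
 \iota^{C}\iota^{D}\Phi^{\rm f}_{A'B'CD}=\C_{E'}\tilde{\gamma}^{E'}{}_{A'B'}-\tilde{\gamma}_{E'F'(A'}\tilde{\gamma}^{E'F'}{}_{B')}.
\end{equation*}
I would then substitute $\tilde{\gamma}_{A'B'C'}=2\tC_{(B'}\theta_{C')A'}$ from \eqref{gammatilde} and, using the commutativity $[\tC_{A'},\tC_{B'}]=0$, the curvature identity \eqref{curvatureC1} and Leibniz, reorganise the right-hand side as $2\tC_{(A'}\bigl(\C_{C'}\theta_{B')}{}^{C'}-\tilde{\gamma}_{B')C'D'}\theta^{C'D'}\bigr)$ plus a remainder that is annihilated by $\tC_{A'}\tC_{B'}$. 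Equating the two expressions for $\iota^{C}\iota^{D}\Phi^{\rm f}_{A'B'CD}$, moving the $\tilde{\sigma}$-term to the right (it equals $2\tC_{(A'}(\mu\,\gamma_{B')01})$ minus a term that can be folded into the other remainders), and ``integrating'' with the formal-solution machinery of Section \ref{Sec:ConstantsOfIntegration}, one arrives at \eqref{sigma1} with $k_{A'}$ satisfying $\tC_{A'}\tC_{B'}k_{C'}=0$. This residual ambiguity is precisely compatible with the freedom \eqref{Freedomtheta} in the choice of $\theta_{A'B'}$.

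The main obstacle is the intermediate identity: verifying that, after inserting $\tilde{\gamma}=2\tC\theta$, the combination $\C_{E'}\tilde{\gamma}^{E'}{}_{A'B'}-\tilde{\gamma}_{E'F'(A'}\tilde{\gamma}^{E'F'}{}_{B')}$ really equals $2\tC_{(A'}\bigl(\C_{C'}\theta_{B')}{}^{C'}-\tilde{\gamma}_{B')C'D'}\theta^{C'D'}\bigr)$ modulo terms killed by two $\tC$'s. This requires careful tracking of which index pairs of $\tilde{\gamma}_{A'B'C'}$ are symmetric, repeated use of $[\tC_{A'},\tC_{B'}]=0$ to symmetrise iterated $\tC$-derivatives, and control of the curvature pieces produced when $\C_{A'}$ is commuted past $\tC_{B'}$; together with the parallel bookkeeping of the $\tC_{A'}\iota^{B}$ correction, this is exactly what forces the weaker condition $\tC_{A'}\tC_{B'}k_{C'}=0$ (rather than $\tC_{(A'}k_{B')}=0$) in the statement.
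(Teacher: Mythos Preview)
Your route is genuinely different from the paper's. The paper does not use the pair symmetry $\Phi^{\rm f}_{ABC'D'}=\Phi^{\rm f}_{C'D'AB}$ directly; instead it combines the Bianchi components \eqref{BianchiXwt1} and \eqref{ContractedBianchiw1} with \eqref{IdCurvC1}, \eqref{IdCurvC3}, \eqref{theta0} and \eqref{Phiwii} to obtain
\[
 2\C^{A'}\tC_{B'}\tC_{C'}\theta_{D'A'} \;=\; -\tC_{B'}\tC_{C'}\sigma^{1}_{D'},
\]
and then proves the exact commutator identity \eqref{CommutatorForSigma1} (the computation $Z_{B'C'D'}\equiv 0$ in Appendix~\ref{App:Sigma1}). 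This lands directly at the $\tC\tC$ level, so the conclusion $\tC_{B'}\tC_{C'}k_{D'}=0$ follows without any further ``integration''. Your approach, by contrast, works one derivative lower: you equate the two frame expressions for $\Phi^{\rm f}_{11A'B'}$ and try to peel off a single $\tC_{(A'}$. Conceptually this is attractive, since it bypasses the Bianchi identities altogether.

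A minor correction: the $\tilde\sigma$-subplot is unnecessary. From \eqref{Ci} (or \eqref{HalfParallelSpinors}) one always has $\tC_{A'}\iota^{B}=0$, so your $\tilde\sigma_{A'}$ vanishes identically and $\iota^{C}\iota^{D}\Phi^{\rm f}_{A'B'CD}=\tC_{(A'}\sigma^{1}_{B')}$ on the nose, which is just \eqref{Phiwii}.

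The substantive gap is the handling of the remainder. Using the single-commutator identity from Appendix~\ref{App:Sigma1}, namely $[\tC_{A'},\C_{E'}]\theta_{B'}{}^{E'}=(\tC_{A'}\tilde\gamma_{B'C'D'})\theta^{C'D'}$, one finds
\[
 \Phi^{\rm f}_{11A'B'} \;=\; 2\tC_{(A'}\bigl[\C_{E'}\theta_{B')}{}^{E'}-\tilde\gamma_{B')C'D'}\theta^{C'D'}\bigr] \;+\; R_{A'B'},
\]
with $R_{A'B'}=2\tilde\gamma_{(B'|C'D'|}\tC_{A')}\theta^{C'D'}-\tilde\gamma_{E'F'A'}\tilde\gamma^{E'F'}{}_{B'}$. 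This $R_{A'B'}$ is \emph{not} zero in general (indeed it equals $\tC_{(A'}k_{B')}$, which need not vanish). What you actually need is $\tC_{E'}R_{A'B'}=0$, not merely $\tC_{E'}\tC_{F'}R_{A'B'}=0$: only then does $\tC_{E'}\tC_{(A'}k_{B')}=0$, and the symmetry argument (symmetric in the first two slots, hence in all pairs, yet antisymmetric in the last two) forces $\tC_{E'}\tC_{A'}k_{B'}=0$. If you only had $\tC\tC R=0$, you could not conclude $\tC\tC k=0$. Verifying $\tC_{E'}R_{A'B'}=0$ is exactly the same ``tedious but straightforward'' quadratic identity in $\tC\theta$ that the paper records as $Z_{B'C'D'}\equiv 0$; so the computational core of the two arguments coincides, but your write-up both misstates the order of vanishing needed and leaves that key step unverified.
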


\begin{proof}
The Bianchi identity \eqref{BianchiXwt1} can be written as 
$\C^{A'}(\tilde{X}^{\rm f}_{B'A'C'D'}+\epsilon_{B'A'}\Phi^{\rm f}_{C'D'01})=-\tC_{B'}\Phi^{\rm f}_{C'D'11}$.
Replacing \eqref{IdCurvC1}, we get:
\begin{align*}
 \C^{A'}\tC_{B'}\tilde\gamma_{A'C'D'} = -\tC_{B'}\Phi^{\rm f}_{C'D'11}.
\end{align*}
In addition, \eqref{ContractedBianchiw1} can be expressed as 
$\C^{C'}(\Phi^{\rm f}_{C'D'01}+3\Lambda^{\rm f}\epsilon_{C'D'})-\tC^{C'}\Phi^{\rm f}_{C'D'11}=0$,
which, when combined with \eqref{IdCurvC3}, gives
\begin{align*}
 \C^{A'}\tC_{B'}\gamma_{A'01} = \tC^{A'}\Phi^{\rm f}_{A'B'11}.
\end{align*}
Adding the two previous equations and using \eqref{theta0} and \eqref{Phiwii}, we get
\begin{align*}
 2\C^{A'}\tC_{B'}\tC_{C'}\theta_{D'A'} = -\tC_{B'}\tC_{C'}\sigma^{1}_{D'}.
\end{align*}
We then see that, in the left hand side of this equation, we have to commute the operators $\C_{A'}$ and $\tC_{A'}$.
This can be done with the help of the general identities that we give for this commutator in appendix \ref{App:SFR}.
For the sake of brevity, we defer this calculation to appendix \ref{App:Sigma1}, where we prove that
\begin{equation}
 \C_{A'}\tC_{B'}\tC_{C'}\theta_{D'}{}^{A'} = \tC_{B'}\tC_{C'}(\C_{A'}\theta_{D'}{}^{A'}-\tilde\gamma_{D'E'F'}\theta^{E'F'}),
 \label{CommutatorForSigma1}
\end{equation}
and therefore
\begin{align*}
\tC_{B'}\tC_{C'}\left(\sigma^{1}_{D'}-2(\C_{A'}\theta_{D'}{}^{A'}-\tilde\gamma_{D'E'F'}\theta^{E'F'}) \right)=0.
\end{align*}
Thus the result \eqref{sigma1} follows.
\end{proof}

The explicit expression for $\gamma_{A'BC}$ in terms of $\theta_{A'B'}$ is thus:
\begin{equation}
 \gamma_{A'BC} = 2o_{B}o_{C} \left( \C_{B'}\theta_{A'}{}^{B'}-2(\tC_{B'}\theta_{C'A'})\theta^{B'C'} + k_{A'} \right)
 -2o_{(B}\iota_{C)}\tC_{B'}\theta^{B'}{}_{A'}.
\end{equation}
Imposing the Einstein-Weyl equations, we can replace $\theta_{A'B'}$ by its expression \eqref{Potentialtheta} 
in terms of the solution $\Phi$ to the conformal HH equation \eqref{conformalHHeq} and the field $\tau_{A'B'}$, 
which consists of terms with $\pi^{A'}$ and constants of integration. 
Inserting then the resulting expression of $\gamma_{A'BC}$ into the ASD curvature, we obtain 
$X^{\rm f}_{ABCD}$ as a function of $\Phi$ and $\tau_{A'B'}$.

\medskip
In summary, we see that the SD Weyl spinor $\tilde{\Psi}_{A'B'C'D'}$ depends only on the solution $\Phi$ to \eqref{conformalHHeq}, 
and the ASD Weyl spinor $\Psi_{ABCD}$ depends on $\Phi$ but also on constants of integration. 
This is analogous to the results we obtained for Maxwell and Yang-Mills fields. 
More explicitly, the Weyl scalars $\Psi_{3}$ and $\Psi_{4}$ can be computed in terms of $\sigma^{1}_{A'}$ 
as given by formulas \eqref{Psi3} and \eqref{Psi4},
and for $\Psi_{2}$ we use \eqref{Psi2Lambdaw} and \eqref{LambdaCtilde1}, which give 
\begin{subequations}
\begin{align}
 \Psi_{2} ={}& -\tfrac{1}{3}\tC_{A'}\tC_{B'}\theta^{A'B'}, \\
 \Psi_{3} ={}& \tC_{A'}(\C_{B'}\theta^{A'B'}-\tilde\gamma^{A'B'C'}\theta_{B'C'}) + \tfrac{1}{2}\tC_{A'}k^{A'}, \\
 \Psi_{4} ={}& 2\C_{A'}(\C_{B'}\theta^{A'B'}-\tilde\gamma^{A'B'C'}\theta_{B'C'}) + \C_{A'}k^{A'}.
\end{align}
\end{subequations}

\subsection{Twistor surfaces and Pleba\'nski coordinates}\label{Sec:Coordinates}

In the previous section we reconstructed the full conformal curvature in a covariant (coordinate-free) form, 
from a solution to the conformal HH equation \eqref{conformalHHeq} together with constants of integration.
In this section we will see how to obtain, in a coordinate-dependent manner, the general structure of an 
arbitrary metric in the conformal class, based on the original insights of Pleba\'nski and collaborators 
in the hyper-heavenly construction \cite{PlebanskiRobinson, PlebanskiFinley, Boyer}. 
This general structure does not depend on any field equations.
In the next section we will see the relation between 
this coordinate-dependent expression and the abstact objects we considered in previous sections.
In what follows, besides the usual abstract spinor indices that we have been using,
we will also use {\em concrete numerical indices} ${\bf A,B,}...=0,1$ and ${\bf A',B',}...=0',1'$.

\medskip
Our construction so far is based on the use of the preferred unprimed spin frame $(o^{A},\iota^{A})$, 
but we did not need to specify anything about primed spinors.
It turns out that the special structure of the geometries we consider is sufficiently rich so as to also single out 
a {\em primed} spin frame, and this is needed for our approach to the reconstruction of the metric.

\medskip
In the hyper-heavenly construction, Pleba\'nski {\it et al} use four special complex coordinates 
(denoted by $(u,v,x,y)$ in \cite{PlebanskiRobinson}), which arise roughly as follows.
If the manifold is complex, or real-analytic with complexification $\mathbb{C}M$, 
then the involutivity of the distribution $\tilde{L}$ implies the existence of a foliation of $\mathbb{C}M$ by
2-dimensional complex surfaces $\Sigma\subset\mathbb{C}M$ such that $T\Sigma = \tilde{L}|_{\Sigma}$.
These `totally null surfaces' may be considered as the basic object of twistor theory, 
and we call them {\em twistor surfaces}\footnote{In the 
terminology of Pleba\'nski and collaborators, these are called {\em null strings}. In twistor terminology, we can also call them 
{\em $\beta$-surfaces} (since, according to our conventions, the surface element is anti-self-dual), see \cite{PR2}.}.
Then the four special complex coordinates for $\mathbb{C}M$ are given by: 
two coordinates {\em constant} on the surfaces, and two coordinates {\em along} the surfaces.
The original HH construction is entirely formulated and expressed in these coordinates 
\cite{PlebanskiRobinson, PlebanskiFinley, Boyer}.

\subsubsection*{Coordinates constant on twistor surfaces}

At the level of the cotangent bundle, the almost-complex structure $J$ given by \eqref{JGR}
produces the decomposition $T^{*}M\otimes\mathbb{C}= L^{*}\oplus\tilde{L}^{*}$, 
where $L^{*}$ and $\tilde{L}^{*}$ are the $(+i)$- and $(-i)$-eigenbundles 
(and also the duals of $L$ and $\tilde{L}$). The $(+i)$-subbundle is 
$L^{*}=\{\omega_{a}\in T^{*}\otimes\mathbb{C} \;|\; \omega_{a}=o_{A}\mu_{A'} \}$,
for fixed $o_{A}$ and varying $\mu_{A'}$.
The integrability condition implies the existence of two coordinates $z^{\bf A'}=(z^{0'},z^{1'})$ 
(denoted by $(u,v)$ in \cite{PlebanskiRobinson})
such that ${\rm d}z^{\bf A'}\in L^{*}$, where the twistor surfaces are given by $z^{\bf A'}={\rm constant}$.
We then use ${\rm d}z^{\bf A'}$ as two elements of a coframe $e^{\bf AA'}=(e^{0{\bf A'}}, e^{1{\bf A'}})$:
\begin{equation}
 e^{0{\bf A'}} \equiv {\rm d}z^{\bf A'}. \label{Special1Forms}
\end{equation}
Since ${\rm d}z^{\bf A'}\in L^{*}$, there must exist two spinors $\zeta^{0'}_{A'}, \zeta^{1'}_{A'}$ 
with $\epsilon^{A'B'}\zeta^{0'}_{A'}\zeta^{1'}_{B'}\neq0$, such that, in abstract indices\footnote{We could also use a notation like 
$\omega=\omega_{BB'}{\rm d}x^{BB'}$; however, while the identification $\omega_{BB'}\equiv \omega_{b}$
is well-established, the object ``${\rm d}x^{BB'}$'' might be misleading here since, on curved spacetimes, 
it is not really the exterior derivative of four scalar fields.},
\begin{equation}
 e^{0{\bf A'}}_{b}=o_{B}\zeta^{\bf A'}_{B'}. \label{SpecialPSF}
\end{equation}
We require \eqref{Special1Forms} (or \eqref{SpecialPSF}) to be ordinary 1-forms, 
in the sense that they must have vanishing $p$- and $w$-weights.
This implies that the spinors $\zeta^{\bf A'}_{A'}$ must have $p(\zeta^{\bf A'}_{A'})=-1$, $w(\zeta^{\bf A'}_{A'})=-1$,
which in turn implies that $\epsilon^{A'B'}\zeta^{0'}_{A'}\zeta^{1'}_{B'}\neq1$, so $\zeta^{\bf A'}_{A'}$ is a non-normalized spin frame. 
We can express this as
\begin{equation}
 \epsilon^{A'B'}\zeta^{\bf A'}_{A'}\zeta^{\bf B'}_{B'}=N\epsilon^{\bf A'B'} \label{NormalizationPSF}
\end{equation}
for some function $N\neq 0$ with weights $p(N)=-2$, $w(N)=-3$,  
and where $\epsilon^{\bf A'B'}=-\epsilon^{\bf B'A'}$, $\epsilon^{0'1'}=1$.
Thus, we can take the other two elements of the coframe, that is $e^{1{\bf A'}}=e^{1{\bf A'}}_{b}{\rm d}x^{b}$, as
\begin{equation}
 e^{1{\bf A'}}_{b} = N^{-1}\iota_{B}\zeta^{\bf A'}_{B'}. \label{coframe1}
\end{equation}
One can then check that the usual abstract index expression for the metric is recovered, 
namely $g_{ab}=2\epsilon_{\bf A'B'}e^{0{\bf A'}}_{(a}e^{1{\bf B'}}_{b)}=\epsilon_{AB}\epsilon_{A'B'}$.

\begin{proposition}
The spinor fields $\zeta^{\bf A'}_{A'}$ defined in \eqref{SpecialPSF} satisfy the following:
\begin{equation}
 \tC_{A'}\zeta^{\bf B'}_{B'}=0, \qquad \C_{A'}\zeta^{{\bf B'} A'}=0. \label{PropertiesSpecialPSF}
\end{equation}
\end{proposition}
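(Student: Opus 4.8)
The plan is to exploit the single fact that $z^{\bf A'}$ are honest scalar functions of vanishing weights, so that the gauge-covariant derivative $\C_a$ reduces on them to the torsion-free Weyl connection $\nabla^{\rm f}_a$, together with the parallelism \eqref{ParallelSpinor}, $\C_{AA'}o^B=0$ (which holds under the standing assumptions \eqref{SFR}, \eqref{rPND}). Concretely, since $p(z^{\bf A'})=w(z^{\bf A'})=0$ we have $\C_a z^{\bf A'}=\partial_a z^{\bf A'}=e^{0{\bf A'}}_a=o_A\zeta^{\bf A'}_{A'}$, and $\C_a e^{0{\bf A'}}_b$ is again a derivative of a $(0,0)$-form, so
\begin{equation*}
 \C_{[a}\C_{b]}z^{\bf A'}=\nabla^{\rm f}_{[a}\nabla^{\rm f}_{b]}z^{\bf A'}=\tfrac12({\rm d}\,{\rm d}z^{\bf A'})_{ab}=0.
\end{equation*}
This is the only analytic input; the rest is spinor bookkeeping.

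First I would substitute $\C_b z^{\bf D'}=o_B\zeta^{\bf D'}_{B'}$ into $\C_{[a}\C_{b]}z^{\bf D'}=0$ and use $\C_{AA'}o_B=0$ (an immediate consequence of \eqref{ParallelSpinor}, since $\C_a$ also annihilates $\epsilon_{AB}$) to obtain the symmetry relation
\begin{equation*}
 o_B\,\C_{AA'}\zeta^{\bf D'}_{B'}=o_A\,\C_{BB'}\zeta^{\bf D'}_{A'}.
\end{equation*}
Contracting with $o^A$ kills the right-hand side because $o^Ao_A=0$, leaving $o_B\,\tC_{A'}\zeta^{\bf D'}_{B'}=0$ with $\tC_{A'}=o^A\C_{AA'}$; since $o_B$ is a nowhere-vanishing spinor this forces $\tC_{A'}\zeta^{\bf D'}_{B'}=0$, the first equation in \eqref{PropertiesSpecialPSF}.

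For the second equation I would feed the first one back in: $\tC_{A'}\zeta^{\bf D'}_{B'}=0$ says $\C_{AA'}\zeta^{\bf D'}_{B'}$ is proportional to $o_A$, so write $\C_{AA'}\zeta^{\bf D'}_{B'}=o_A\,\Theta^{\bf D'}_{A'B'}$ with $\Theta^{\bf D'}_{A'B'}=\C_{A'}\zeta^{\bf D'}_{B'}$ (where $\C_{A'}:=\iota^A\C_{AA'}$). Inserting this into the symmetry relation above and using $o_Ao_B=o_Bo_A$ yields $\Theta^{\bf D'}_{A'B'}=\Theta^{\bf D'}_{(A'B')}$. Then
\begin{equation*}
 \C_{A'}\zeta^{{\bf D'}A'}=\iota^A\epsilon^{A'B'}\C_{AA'}\zeta^{\bf D'}_{B'}=(\iota^Ao_A)\,\epsilon^{A'B'}\Theta^{\bf D'}_{A'B'}=0,
\end{equation*}
which is the second equation in \eqref{PropertiesSpecialPSF}. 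Note that $\Theta^{\bf D'}_{A'B'}$ itself need not vanish — it encodes the second fundamental form of the twistor surfaces — so genuinely only its skew part, i.e.\ the contraction $\C_{A'}\zeta^{{\bf D'}A'}$, is killed, and the two statements in the proposition are not redundant.

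The only subtle point is the one flagged at the outset: that closedness of ${\rm d}z^{\bf A'}$ should be translated into $\C_{[a}\C_{b]}z^{\bf A'}=0$, which relies on recognising that $\C_a$ acts as the torsion-free connection $\nabla^{\rm f}_a$ both on the unweighted scalars $z^{\bf A'}$ and on the $(0,0)$-form $e^{0{\bf A'}}$. Everything after that is a two-line contraction argument. An essentially equivalent route that sidesteps any discussion of the weights of coordinates works directly from the closed 1-form $e^{0{\bf A'}}$: since it has vanishing weights, $\C_{[a}e^{0{\bf A'}}_{b]}=\nabla^{\rm f}_{[a}e^{0{\bf A'}}_{b]}=\tfrac12({\rm d}e^{0{\bf A'}})_{ab}=0$, after which one proceeds exactly as above.
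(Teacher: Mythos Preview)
Your proof is correct and follows essentially the same route as the paper: use ${\rm d}^2 z^{\bf A'}=0$ together with the fact that $e^{0{\bf A'}}_a=o_A\zeta^{\bf A'}_{A'}$ has vanishing weights to obtain $\C_{[a}e^{0{\bf A'}}_{b]}=0$, then expand using $\C_a o_B=0$. The paper's proof is terse (it simply says ``the result follows'' after reaching $\C_{[a}e^{0{\bf A'}}_{b]}=0$), whereas you spell out the contraction argument leading from the symmetry relation to each of the two identities; this is helpful but not a different idea.
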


\begin{proof}
We have ${\rm d}^{2}z^{\bf A'}=0$, so, from \eqref{SpecialPSF}, this implies $\partial_{[a}e^{1{\bf A'}}_{b]}=0$.
Since $e^{1{\bf A'}}_{a}$ has vanishing weights, it holds $\partial_{[a}e^{1{\bf A'}}_{b]}=\C_{[a}e^{1{\bf A'}}_{b]}$. 
Replacing \eqref{SpecialPSF}, the result follows.
\end{proof}

We can use the object $\epsilon^{\bf A'B'}$ and its inverse $\epsilon_{\bf A'B'}$ to raise and lower concrete 
indices, analogously to abstract spinor indices. 
For example, we define $\zeta^{A'}_{\bf A'}:=\epsilon^{A'B'}\epsilon_{\bf B'A'}\zeta^{\bf B'}_{B'}$. 
Notice that with this definition, we have $\zeta^{\bf A'}_{A'}\zeta^{A'}_{\bf B'}=-N\delta^{\bf A'}_{\bf B'}$.

\smallskip
From \eqref{PropertiesSpecialPSF}, we see that the existence of twistor surfaces gives origin to 
a primed, non-normalized spin frame that is parallel under $\tC_{A'}$.
In particular, then, we can use $\zeta^{\bf A'}_{A'}$ to define a parallel, normalized spin frame 
by $\tilde\alpha^{\bf A'}_{A'} = iN^{-1/2}\zeta^{\bf A'}_{A'}$, whose weights are 
$p(\tilde\alpha^{\bf A'}_{A'})=0$ and $w(\tilde\alpha^{\bf A'}_{A'})=1/2$. 
The factor `$i$' in $\tilde\alpha^{\bf A'}_{A'} = iN^{-1/2}\zeta^{\bf A'}_{A'}$ ensures that the 
`dual' of $\tilde\alpha^{\bf A'}_{A'}$ is $\tilde\alpha^{A'}_{\bf A'}=\epsilon^{A'B'}\epsilon_{\bf B'A'}\tilde\alpha^{\bf B'}_{B'}$, 
namely $\tilde\alpha^{\bf A'}_{A'}\tilde\alpha^{A'}_{\bf B'}=\delta^{\bf A'}_{\bf B'}$.
All the formulas for curvature, connections, etc. given in section \ref{Sec:RecCurvature} are, therefore, 
valid if we use this particular spin frame. 
In addition, since $p(N^{1/2})=-1$ and $\tC_{A'}N^{1/2}=0$, we see that $N$ gives us 
a specific choice for the scalar field $\mr\lambda$ introduced in \eqref{lambdao}, that is, we can set:
\begin{equation}
 \mr\lambda \equiv N^{1/2}. \label{Choicelambdao}
\end{equation}

\begin{proposition}
With the choice \eqref{Choicelambdao}, the relation between the spinor field $\theta_{A'B'}$ introduced in 
\eqref{theta0} and the primed, weighted spin frame $\zeta^{\bf A'}_{A'}$ defined by \eqref{SpecialPSF} is:
\begin{equation}
 2\tC_{B'}\theta_{C'A'} = -N^{-1}\zeta^{\bf D'}_{B'}\C_{A'}\zeta_{{\bf D'}C'}.\label{theta-PSF}
\end{equation}
\end{proposition}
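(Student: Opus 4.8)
The plan is to evaluate the two terms on the left-hand side of \eqref{theta0}, namely $\tilde\gamma_{A'B'C'}$ and $\gamma_{A'01}$, using the explicit parallel frames attached to the twistor-surface coordinates, and to show that their combination collapses to the asserted right-hand side. The ingredients are: by \eqref{Choicelambdao} and the discussion preceding it, the normalized parallel primed spin frame is $\tilde\alpha^{\bf A'}_{A'}=iN^{-1/2}\zeta^{\bf A'}_{A'}$, and the parallel unprimed spin frame of section \ref{Sec:RecASDcurvature} is $\alpha^{A}_{\bf 0}=N^{1/2}o^{A}$, $\alpha^{A}_{\bf 1}=N^{-1/2}\iota^{A}$, i.e. $\mr\lambda=N^{1/2}$. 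I will also use that $\C_{a}$ annihilates $\epsilon_{A'B'}$ (its weights are $p=0$, $w=1$, and the conformal term in \eqref{ActionOfC} exactly cancels $\nabla^{\rm f}_{a}\epsilon_{A'B'}=-f_{a}\epsilon_{A'B'}$, so $\C_{a}\epsilon_{A'B'}=0$), together with the contraction identity $\zeta^{\bf D'}_{B'}\zeta^{C'}_{\bf D'}=-N\,\delta^{C'}_{B'}$, which follows from \eqref{NormalizationPSF}.

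First I would substitute $\tilde\alpha^{\bf A'}_{A'}=iN^{-1/2}\zeta^{\bf A'}_{A'}$ into the definition \eqref{gammaGravity0} of $\tilde\gamma_{A'B'}{}^{C'}$. Expanding $\C_{A'}(iN^{-1/2}\zeta^{C'}_{\bf D'})$ by the Leibniz rule produces one term carrying $\C_{A'}N^{-1/2}$ and one carrying $\C_{A'}\zeta^{C'}_{\bf D'}$; applying the contraction identity to the first one gives $\tilde\gamma_{A'B'}{}^{C'}=N^{1/2}(\C_{A'}N^{-1/2})\,\delta^{C'}_{B'}-N^{-1}\zeta^{\bf D'}_{B'}\C_{A'}\zeta^{C'}_{\bf D'}$. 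Lowering $C'$ with $\epsilon$, and pulling it through the derivative using $\C_{a}\epsilon_{A'B'}=0$, yields $\tilde\gamma_{A'B'C'}=N^{1/2}(\C_{A'}N^{-1/2})\,\epsilon_{B'C'}-N^{-1}\zeta^{\bf D'}_{B'}\C_{A'}\zeta_{{\bf D'}C'}$, which is precisely the decomposition of $\tilde\gamma_{A'B'C'}$ into its $B'C'$-trace part (proportional to $\epsilon_{B'C'}$) and its trace-free remainder.

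Next I would identify the trace part with $-\gamma_{A'01}$. By the discussion around \eqref{gammaGravity1}, with $\mr\lambda=N^{1/2}$ one has $\gamma_{A'01}=\mr\lambda^{-1}\C_{A'}\mr\lambda=N^{-1/2}\C_{A'}N^{1/2}$; since $\C_{A'}(N^{1/2}N^{-1/2})=0$, the Leibniz rule gives $N^{1/2}\C_{A'}N^{-1/2}=-N^{-1/2}\C_{A'}N^{1/2}=-\gamma_{A'01}$. Hence $\tilde\gamma_{A'B'C'}=-\gamma_{A'01}\,\epsilon_{B'C'}-N^{-1}\zeta^{\bf D'}_{B'}\C_{A'}\zeta_{{\bf D'}C'}$, so that $\tilde\gamma_{A'B'C'}+\epsilon_{B'C'}\gamma_{A'01}=-N^{-1}\zeta^{\bf D'}_{B'}\C_{A'}\zeta_{{\bf D'}C'}$. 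By \eqref{theta0} the left-hand side is $2\tC_{B'}\theta_{C'A'}$, which is the assertion.

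The only real obstacle is index and sign bookkeeping: carrying out the $B'C'$-decomposition of $\tilde\gamma_{A'B'C'}$ correctly, tracking the signs that appear when raising and lowering \emph{concrete} indices with $\epsilon_{\bf A'B'}$ as opposed to abstract ones, and confirming that the trace piece extracted this way is exactly the component $\gamma_{A'01}=\tC_{B'}\theta^{B'}{}_{A'}$ recorded in \eqref{gamma01}. No conceptual input beyond $\C_{a}\epsilon_{A'B'}=0$ and \eqref{NormalizationPSF} is required; alternatively one may run the same computation by writing $\zeta^{\bf A'}_{A'}=-iN^{1/2}\tilde\alpha^{\bf A'}_{A'}$ and expanding the right-hand side of \eqref{theta-PSF} directly in terms of $\tilde\gamma_{A'B'C'}$ and $\gamma_{A'01}$.
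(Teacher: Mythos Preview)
Your argument is correct and follows exactly the paper's own proof: you substitute $\tilde\alpha^{\bf A'}_{A'}=iN^{-1/2}\zeta^{\bf A'}_{A'}$ into \eqref{gammaGravity0}, use $\mr\lambda=N^{1/2}$ so that $\gamma_{A'01}=N^{-1/2}\C_{A'}N^{1/2}$, and recognise that the $\epsilon_{B'C'}$-piece cancels, leaving the asserted expression via \eqref{theta0}. You have simply made explicit the ``short calculation'' that the paper leaves to the reader.
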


\begin{proof}
Using the spin frame $\tilde\alpha^{\bf A'}_{A'}=iN^{-1/2}\zeta^{\bf A'}_{A'}$ and its dual, 
we can define the `connection form' $\tilde\gamma_{A'B'C'}$ as in \eqref{gammaGravity0}. 
Defining $\mr\lambda$ by \eqref{Choicelambdao}, 
recalling that $\gamma_{A'01}=\mr\lambda^{-1}\C_{A'}\mr\lambda$, 
and expressing $\tilde\gamma_{A'B'C'}$ in terms of $\mr\lambda=N^{1/2}$ and $\zeta^{\bf A'}_{A'}$, 
a short calculation gives
\begin{align*}
 \tilde\gamma_{A'B'C'}+\epsilon_{B'C'}\gamma_{A'01} = -N^{-1}\zeta^{\bf D'}_{B'}\C_{A'}\zeta_{{\bf D'}C'}.
\end{align*}
Recalling now formula \eqref{theta0}, the relation \eqref{theta-PSF} follows.
\end{proof}

The result \eqref{theta-PSF} will be key below to connect the expression for the metric and the solution 
to the conformal HH equation.

\subsubsection*{Coordinates along twistor surfaces}

In order to get an expression for the metric, we first need to understand how
the second pair of complex coordinates arises. 
The frame of vector fields dual to the coframe $e^{\bf AA'}$ will be denoted by $e_{\bf AA'}=e^{b}_{\bf AA'}\partial_{b}$.
It is defined by the condition $e^{\bf AA'}(e_{\bf BB'})=\delta^{\bf A}_{\bf B}\delta^{\bf A'}_{\bf B'}$.
One can check that
\begin{subequations}
\begin{align}
 e^{b}_{0{\bf A'}} ={}& -N^{-1}\iota^{B}\zeta^{B'}_{\bf A'}, \label{SpecialFrame0} \\
 e^{b}_{1{\bf A'}} ={}& o^{B}\zeta^{B'}_{\bf A'}. \label{SpecialFrame1}
\end{align}
\end{subequations}
Notice that $p(e_{0{\bf A'}})=0$, $w(e_{0{\bf A'}})=0$ and $p(e_{1{\bf A'}})=0$, $w(e_{1{\bf A'}})=-2$.

\begin{proposition}
Let $\phi$ be the scalar field defined by the Lee form \eqref{fgradient}, and let $e_{1{\bf A'}}$ be the 
two vector fields \eqref{SpecialFrame1}.
Local coordinates $\tilde{z}^{\bf A'}=(\tilde{z}^{\bf 0'}, \tilde{z}^{\bf 1'})$ along the twistor surfaces are given by
\begin{equation}
 \frac{\partial}{\partial\tilde{z}^{\bf A'}} = \phi^{-2}e_{1{\bf A'}}. \label{Coordinates2}
\end{equation}
\end{proposition}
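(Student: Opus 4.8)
The plan is to show that the two vector fields $e_{1{\bf A'}}$, rescaled by $\phi^{-2}$, form a coordinate basis along each twistor surface $\Sigma$, i.e. that $\phi^{-2}e_{1{\bf A'}}$ are commuting, exact vector fields tangent to $\Sigma$. First I would observe that from \eqref{SpecialFrame1}, $e_{1{\bf A'}}^{b}=o^{B}\zeta^{B'}_{\bf A'}$ lies in $\tilde{L}$ (it is of the form $o^{B}\mu^{B'}$), hence the $e_{1{\bf A'}}$ are tangent to the twistor surfaces $\Sigma$, on which $T\Sigma=\tilde{L}|_{\Sigma}$. So the relevant differentiation is along $\tilde{L}$, which is exactly the domain of $\tC_{A'}$; concretely, acting on a weighted scalar $F$, one has $e_{1{\bf A'}}(F)=\zeta^{B'}_{\bf A'}\,\tC_{B'}F$ up to the weight terms, and since the coordinates $\tilde z^{\bf A'}$ we want to build will have $p=w=0$ (ordinary scalars), these weight terms are harmless. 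The existence of such coordinates then reduces to the Frobenius/Poincaré-type statement that the $\phi^{-2}e_{1{\bf A'}}$ pairwise commute and span the tangent space of $\Sigma$ at each point; local exactness on the (complexified) surface then produces the two functions $\tilde z^{\bf A'}$ with $\partial/\partial\tilde z^{\bf A'}=\phi^{-2}e_{1{\bf A'}}$, normalized so that $\tilde z^{\bf A'}$ restricted to a fixed surface are genuine coordinates.

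The key computation is therefore to verify $[\phi^{-2}e_{1{\bf 0'}},\phi^{-2}e_{1{\bf 1'}}]=0$ when both sides are restricted to $\Sigma$ (equivalently, projected to $\tilde L$). I would expand the Lie bracket of $\phi^{-2}o^{B}\zeta^{B'}_{\bf A'}$ with $\phi^{-2}o^{C}\zeta^{C'}_{\bf B'}$ using the Leibniz rule, trading $\nabla$ for $\C$ (the difference terms involve $f_a$, which is $\partial_a\log\phi$ by \eqref{fgradient}, and the GHP weight connection $P_a$, all of which are engineered by the specific choice of the $\phi^{-2}$ factor and the weights to cancel). The ingredients are: $\tC_{A'}\zeta^{\bf B'}_{B'}=0$ from \eqref{PropertiesSpecialPSF}, so $\zeta^{B'}_{\bf A'}$ is annihilated by differentiation along $\tilde L$; $o^A$ is parallel under $\C_a$ by \eqref{ParallelSpinor}; and the flatness $[\tC_{A'},\tC_{B'}]=0$ from \eqref{FlatConnection}. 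Putting these together, the $\tilde L$-component of the bracket collapses to a term proportional to $\zeta^{B'}_{[\bf A'}(\partial_{B'}\log\phi^{-2}+\dots)\zeta^{C'}_{\bf B']}\tC_{C'}$-type expressions plus $\zeta\zeta[\tC,\tC]$, and the weight bookkeeping — $w(e_{1{\bf A'}})=-2$, together with $w(\phi)=-1$ so that $\phi^{-2}e_{1{\bf A'}}$ has $w=0$ — is precisely what makes the $\phi$-derivative term cancel against the Lee-form piece hidden inside $\C_{A'}$. Once the bracket vanishes modulo $L$ (i.e. on $\Sigma$), the standard fact that $T\Sigma=\tilde L|_\Sigma$ means the $L$-component is irrelevant for the surfaces, and we are done.

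The main obstacle I anticipate is the weight/gauge bookkeeping: $\C_{a}$ mixes the Levi-Civita derivative with the terms $wf_a+pP_a$ from \eqref{ActionOfC}, and $e_{1{\bf A'}}$ carries conformal weight $-2$, so acting with it (or with $\C_{A'}$) on another weighted object produces several inhomogeneous terms that must be tracked carefully and shown to assemble into a closed form. The role of the exponent $-2$ and of the factor $\phi^{-2}$ (rather than some other power) is exactly to make this cancellation work, so the computation is not arbitrary but does require care. A secondary, more conceptual point is to argue that $\phi^{-2}e_{1{\bf A'}}$ — which is defined on all of $\mathbb{C}M$ — really does restrict to a coordinate frame on each leaf $\Sigma$: this is where one invokes the Lie-algebroid/de Rham picture of Section \ref{Sec:deRham}, or simply the classical Frobenius theorem on the integral surface, to pass from "commuting, linearly independent, tangent vector fields" to "honest coordinates $\tilde z^{\bf A'}$ along the surface." Everything else is routine once the bracket identity is in hand.
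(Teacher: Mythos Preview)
Your proposal is correct and follows essentially the same approach as the paper: show that the rescaled vector fields $\phi^{-2}e_{1{\bf A'}}$ commute (using $\tC_{A'}\zeta^{\bf B'}_{B'}=0$, $\C_{a}o^{B}=0$, and the weight bookkeeping that makes the $\phi^{-2}$ factor cancel the Lee-form terms), and then invoke the standard fact that commuting independent vector fields are locally coordinate vector fields. One small simplification: since $\tilde L$ is involutive and $L\cap\tilde L=0$, the bracket of two $\tilde L$-vector fields has no $L$-component at all, so your ``modulo $L$'' caveat is unnecessary --- the bracket genuinely vanishes on $\mathbb{C}M$, which is what you need to get coordinates $\tilde z^{\bf A'}$ defined coherently across the leaves (and not just leaf-by-leaf).
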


\begin{proof}
Using \eqref{PropertiesSpecialPSF}, a short calculation shows that the vector fields
defined by the right hand side of \eqref{Coordinates2} commute; therefore, they define local coordinates. 
Notice that the factor $\phi^{-2}$ is essential to ensure the commutation property, and it also 
ensures that the vector fields $\partial_{\tilde{z}^{\bf A'}}$ have conformal weight zero.
\end{proof}

We recall that, in terms of a null coframe $e^{\bf AA'}$, the metric can be written as
\begin{equation}
 g = e^{00'}\otimes e^{11'} + e^{11'}\otimes e^{00'} - e^{01'}\otimes e^{10'} - e^{10'}\otimes e^{01'}.
\end{equation}
In abstract indices, this is $g_{ab}=\epsilon_{\bf AB}\epsilon_{\bf A'B'}e^{\bf AA'}_{(a}e^{\bf BB'}_{b)}$.

\begin{proposition}[See \cite{PlebanskiFinley}]
In terms of the complex coordinates $(z^{\bf A'},\tilde{z}^{\bf A'})$,
the metric has the expression\footnote{Our convention for the symmetric tensor product is 
$a\odot b := \frac{1}{2}(a\otimes b + b\otimes a) $.}
\begin{equation}\label{HHmetricCoordinates}
 g = 2\phi^{-2}\epsilon_{\bf A'B'}{\rm d}z^{\bf A'}\odot{\rm d}\tilde{z}^{\bf B'}+2Q_{\bf A'B'}{\rm d}z^{\bf A'}\odot{\rm d}z^{\bf B'},
\end{equation}
where we define
\begin{equation}
 Q_{\bf A'}{}^{\bf B'} := \phi^{-2}e_{0{\bf A'}}(\tilde{z}^{\bf B'}). \label{QPlebanski}
\end{equation}
\end{proposition}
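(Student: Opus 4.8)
The plan is to rewrite the null coframe $e^{\bf AA'}$ entirely in terms of the coordinate differentials ${\rm d}z^{\bf A'}$ and ${\rm d}\tilde z^{\bf A'}$, and then to substitute this into the null-coframe expression for $g$. Since $e^{0\bf A'}={\rm d}z^{\bf A'}$ is given by \eqref{Special1Forms}, the only real task is to compute $e^{1\bf A'}$.

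First I would exploit the duality $e^{\bf AA'}(e_{\bf BB'})=\delta^{\bf A}_{\bf B}\delta^{\bf A'}_{\bf B'}$. Combined with $e^{0\bf A'}={\rm d}z^{\bf A'}$ this gives ${\rm d}z^{\bf A'}(e_{0\bf B'})=\delta^{\bf A'}_{\bf B'}$ and ${\rm d}z^{\bf A'}(e_{1\bf B'})=0$. Expanding each frame vector field in the coordinate basis $(\partial_{z^{\bf A'}},\partial_{\tilde z^{\bf A'}})$, its components are just that frame vector applied to the coordinate functions. The relation ${\rm d}z^{\bf A'}(e_{1\bf B'})=0$ says $e_{1\bf B'}$ has no $\partial_{z}$-component, so by \eqref{Coordinates2} we get $e_{1\bf B'}=\phi^{2}\,\partial_{\tilde z^{\bf B'}}$; the relation ${\rm d}z^{\bf A'}(e_{0\bf B'})=\delta^{\bf A'}_{\bf B'}$ gives $e_{0\bf B'}=\partial_{z^{\bf B'}}+e_{0\bf B'}(\tilde z^{\bf A'})\,\partial_{\tilde z^{\bf A'}}$, and by the definition \eqref{QPlebanski} of $Q_{\bf A'}{}^{\bf B'}$ this is $e_{0\bf B'}=\partial_{z^{\bf B'}}+\phi^{2}Q_{\bf B'}{}^{\bf A'}\partial_{\tilde z^{\bf A'}}$.

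Next I would invert this change of frame: a short check of the four pairings $e^{\bf AA'}(e_{\bf BB'})$ shows that the dual coframe is $e^{0\bf A'}={\rm d}z^{\bf A'}$ and $e^{1\bf A'}=\phi^{-2}{\rm d}\tilde z^{\bf A'}-Q_{\bf B'}{}^{\bf A'}{\rm d}z^{\bf B'}$. (Alternatively $e^{1\bf A'}$ could be read off from \eqref{coframe1}, \eqref{SpecialPSF} and \eqref{PropertiesSpecialPSF}, but the coordinate computation is more direct.)

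Finally, from $g_{ab}=\epsilon_{\bf AB}\epsilon_{\bf A'B'}e^{\bf AA'}_{(a}e^{\bf BB'}_{b)}$ one gets, after carrying out the sum over $\bf A,B$, the identity $g=2\epsilon_{\bf A'B'}\,e^{0\bf A'}\odot e^{1\bf B'}$. Substituting the coordinate expressions for the coframe and using $\epsilon_{\bf A'B'}$ to lower an index on $Q$ produces exactly \eqref{HHmetricCoordinates}. The one point to flag is that the term quadratic in ${\rm d}z$ emerges as $2Q_{(\bf A'B')}{\rm d}z^{\bf A'}\odot{\rm d}z^{\bf B'}$, so in \eqref{HHmetricCoordinates} the tensor $Q_{\bf A'B'}$ is to be understood symmetrised, its skew part dropping out against the symmetric product; there is no genuine obstacle in the proof, only this harmless convention together with the usual care needed for the concrete-index raising/lowering conventions and the overall sign in passing to $2\epsilon_{\bf A'B'}e^{0\bf A'}\odot e^{1\bf B'}$.
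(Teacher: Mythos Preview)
Your proposal is correct and follows essentially the same route as the paper. The only cosmetic difference is the order of operations: the paper expands ${\rm d}\tilde z^{\bf A'}$ in the null coframe and solves for $e^{1\bf A'}$ directly, whereas you first write the frame vectors $e_{\bf AA'}$ in the coordinate basis and then invert; both yield the same expression $e^{1\bf A'}=\phi^{-2}{\rm d}\tilde z^{\bf A'}-Q_{\bf B'}{}^{\bf A'}{\rm d}z^{\bf B'}$ and the rest is identical.
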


\begin{proof}
We have 
\begin{equation*}
g = \epsilon_{\bf AB}\epsilon_{\bf A'B'}e^{\bf AA'}\otimes e^{\bf BB'} = 2\epsilon_{\bf A'B'} e^{0{\bf A'}}\odot e^{1{\bf B'}} 
= 2\epsilon_{\bf A'B'} {\rm d}z^{\bf A'}\odot e^{1{\bf B'}}, 
\end{equation*}
so we need to find an expression for $e^{1{\bf B'}}$ in terms of $\tilde{z}^{\bf A'}, z^{\bf A'}$. 
To do this, we notice that
\begin{align*}
 {\rm d}\tilde{z}^{\bf A'} ={}& e_{\bf BB'}(\tilde{z}^{\bf A'}) e^{\bf BB'}  \\
 ={}& e_{0{\bf B'}}(\tilde{z}^{\bf A'}) e^{0{\bf B'}} + e_{1{\bf B'}}(\tilde{z}^{\bf A'}) e^{1{\bf B'}}  \\
 ={}& e_{0{\bf B'}}(\tilde{z}^{\bf A'}){\rm d}z^{\bf B'} + \phi^{2}e^{1{\bf A'}}
\end{align*}
where in the last line we used \eqref{Special1Forms} and \eqref{Coordinates2}; the latter implying that
$e_{1{\bf B'}}(\tilde{z}^{\bf A'})=\phi^{2}\delta^{\bf A'}_{\bf B'}$. Therefore:
\begin{equation}
 e^{1{\bf B'}} =  \phi^{-2}{\rm d}\tilde{z}^{\bf B'}  - \phi^{-2}e_{0{\bf C'}}(\tilde{z}^{\bf B'}){\rm d}z^{\bf C'}. \label{coframe1Coord}
\end{equation}
Defining then $Q_{\bf A'}{}^{\bf B'}$ via \eqref{QPlebanski}, the result follows.
\end{proof}

\begin{remark}\label{Remark:FlatMetric}
As noticed by Pleba\'nski and collaborators \cite{PlebanskiRobinson, PlebanskiFinley}, the tensor field
\begin{equation}
 \eta:=2\epsilon_{\bf A'B'}{\rm d}z^{\bf A'}\odot{\rm d}\tilde{z}^{\bf B'} 
 = 2({\rm d}z^{0'}{\rm d}\tilde{z}^{1'}-{\rm d}z^{1'}{\rm d}\tilde{z}^{0'}), \label{FlatMetric}
\end{equation}
is simply a flat metric expressed in (complex) double null coordinates (see e.g. \cite[Chapter 2]{MW}).
Therefore, the metric \eqref{HHmetricCoordinates} consists of a conformally flat part $\phi^{-2}\eta$
plus a $Q_{\bf A'B'}$-dependent part that codifies the non-trivial curvature of the conformal structure.
\end{remark}

\begin{remark}\label{Remark:pi}
Using $\tC_{A'}=\delta^{B'}_{A'}o^{B}\C_{BB'}$, the identity $\delta^{B'}_{A'}=-N^{-1}\zeta^{\bf A'}_{A'}\zeta^{B'}_{\bf A'}$, 
the definition $e^{b}_{1{\bf C'}}=o^{B}\zeta^{B'}_{\bf C'}$, and the result \eqref{Coordinates2}, 
a short calculation gives
\begin{equation}
 \tC_{A'}\tilde{z}^{\bf D'} = -N^{-1}\phi^{2}\zeta^{\bf D'}_{A'}. \label{tCtz}
\end{equation}
(Recall that $\tilde{z}^{\bf D'}$ are ordinary scalar fields.)
Multiplying by $\phi^{-2}\zeta^{B'}_{\bf D'}$, it follows that
\begin{equation}
 \tC_{A'}\pi^{B'} = \delta^{B'}_{A'}, \qquad \text{where } \qquad
 \pi^{B'} := \phi^{-2}\zeta^{B'}_{\bf D'}\tilde{z}^{\bf D'}. \label{RelationPiZetaTilde}
\end{equation}
This equation gives us the explicit relation between coordinates on twistor surfaces and 
a solution to the twistor-like equation \eqref{TwistorLikeEq}.
\end{remark}

\subsection{The conformal metric and the almost-complex structure}\label{Sec:AbstractMetric}

Having determined the general structure \eqref{HHmetricCoordinates} of the metric (for which we 
followed Pleba\'nski {\it et al} \cite{PlebanskiRobinson, PlebanskiFinley, Boyer}) together with some 
relations to abstract objects we used in previous sections (such as \eqref{Choicelambdao}-\eqref{theta-PSF}), 
we now want to understand how, in case the Einstein-Weyl equations \eqref{EinsteinWeylEquations0} are satisfied,
the solution to the conformal HH equation \eqref{conformalHHeq} enters the metric.

The first step to do this is to convert \eqref{HHmetricCoordinates} to an abstract expression.
Using \eqref{FlatMetric} and \eqref{Special1Forms}, a short calculation shows that  
\eqref{HHmetricCoordinates} can be written in abstract indices as
\begin{equation}
 g_{ab} = \phi^{-2}\eta_{ab} + 2o_{A}o_{B}Q_{A'B'} \label{AbstractMetric0}
\end{equation}
where we defined the spinor field
\begin{equation}
 Q_{A'B'} := \zeta^{\bf A'}_{A'}\zeta^{\bf B'}_{B'}Q_{\bf A'B'}. \label{AbstractQPlebanski}
\end{equation}
Notice that $p(Q_{A'B'})=-2$ and $w(Q_{A'B'})=0$.

\begin{proposition}\label{Prop:RelationwP}
The relation between the spinor field $Q_{A'B'}$ defined in \eqref{AbstractQPlebanski} and the 
spinor field $\theta_{A'B'}$ given by \eqref{theta0} is
\begin{equation}
 Q_{A'B'} = -2\theta_{A'B'}. \label{RelationwP}
\end{equation}
\end{proposition}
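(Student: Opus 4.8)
The plan is to compare the two expressions we already have for the combination $\tilde\gamma_{A'B'C'}+\epsilon_{B'C'}\gamma_{A'01}$: one in terms of $\theta_{A'B'}$ (equation \eqref{theta0}) and one in terms of the weighted spin frame $\zeta^{\bf A'}_{A'}$ (equation \eqref{theta-PSF}). Since by Proposition \ref{Prop:Existencetheta} the first reads $2\tC_{B'}\theta_{C'A'}$ and by \eqref{theta-PSF} the second equals $-N^{-1}\zeta^{\bf D'}_{B'}\C_{A'}\zeta_{{\bf D'}C'}$, the strategy is to re-express the right-hand side of \eqref{theta-PSF} as $\tC_{B'}$ of something explicitly proportional to $Q_{A'B'}$, and then conclude by the ``uniqueness up to constants of integration'' built into Remark \ref{Remark:Freedomtheta}.

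The key computational step is to unfold the definition \eqref{AbstractQPlebanski}, $Q_{A'B'}=\zeta^{\bf A'}_{A'}\zeta^{\bf B'}_{B'}Q_{\bf A'B'}$ with $Q_{\bf A'}{}^{\bf B'}=\phi^{-2}e_{0{\bf A'}}(\tilde z^{\bf B'})$, and compute $\tC_{B'}Q_{C'A'}$. Here I would use: the parallel-frame property \eqref{PropertiesSpecialPSF}, namely $\tC_{B'}\zeta^{\bf A'}_{A'}=0$, so $\tC$ passes through the $\zeta$'s and only hits $Q_{\bf A'B'}$; the formula \eqref{tCtz}, $\tC_{A'}\tilde z^{\bf D'}=-N^{-1}\phi^2\zeta^{\bf D'}_{A'}$, to handle derivatives of the coordinates $\tilde z^{\bf B'}$; the explicit frame expressions \eqref{SpecialFrame0}–\eqref{SpecialFrame1} for $e_{0{\bf A'}}$ and $e_{1{\bf A'}}$; and the commutation $[\partial_{\tilde z^{\bf A'}},\partial_{\tilde z^{\bf B'}}]=0$ of Proposition \eqref{Coordinates2}. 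Differentiating $Q_{\bf A'}{}^{\bf B'}$ along the twistor-surface directions and converting everything to abstract $\C_{A'}$, $\tC_{A'}$ derivatives should produce exactly $\tC_{B'}Q_{C'A'} = -N^{-1}\zeta^{\bf D'}_{B'}\C_{A'}\zeta_{{\bf D'}C'}\cdot(\text{sign})$ matching \eqref{theta-PSF} up to the factor $-2$; the bookkeeping of the $\phi^{-2}$ weight factors and the $N^{-1}$ normalizations is where care is needed.

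Comparing, we obtain $2\tC_{B'}\theta_{C'A'} = -\tC_{B'}Q_{C'A'}$, i.e. $\tC_{B'}(2\theta_{C'A'}+Q_{C'A'})=0$ for all $B'$, which says $2\theta_{A'B'}+Q_{A'B'}$ is a constant of integration in the sense of Definition \ref{Def:Constants}. By the freedom \eqref{Freedomtheta} in the definition of $\theta_{A'B'}$ (adding any $k_{A'B'}$ with $\tC_{A'}k_{B'C'}=0$), we may absorb this constant and normalize so that $2\theta_{A'B'}+Q_{A'B'}=0$, which is \eqref{RelationwP}. One should also check consistency of weights: $p(Q_{A'B'})=-2=p(\theta_{A'B'})$ and $w(Q_{A'B'})=0=w(\theta_{A'B'})$, which matches.

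The main obstacle I anticipate is not conceptual but the careful translation between the coordinate frame $e_{\bf AA'}$ and the abstract operators $\C_{A'},\tC_{A'}$: in particular correctly tracking the $\phi^{-2}$ factor that appears in \eqref{Coordinates2} and its interaction with the conformal weights, and getting the overall numerical factor (the ``$-2$'') right, since $Q_{\bf A'B'}$ is defined via $e_{0{\bf A'}}$ acting on $\tilde z^{\bf B'}$ while $\theta_{A'B'}$ is defined via a $\C_{A'}$-derivative of the spin frame $\zeta^{\bf A'}_{A'}$, and these two differ by a contraction pattern plus the normalization $\zeta^{\bf A'}_{A'}\zeta^{A'}_{\bf B'}=-N\delta^{\bf A'}_{\bf B'}$. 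I would therefore do this contraction explicitly rather than by a dimension count.
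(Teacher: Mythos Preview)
Your proposal is correct and follows essentially the same route as the paper: compute $\tC_{B'}Q_{C'A'}$ by writing $Q_{A'B'}$ abstractly via $e_{0{\bf A'}}=-N^{-1}\iota^{B}\zeta^{B'}_{\bf A'}$ and $\partial_c\tilde z^{\bf B'}=\C_c\tilde z^{\bf B'}$, pass $\tC$ through the parallel $\zeta$'s and $\phi$, commute $\tC$ with $\C$ on the ordinary scalar $\tilde z^{\bf B'}$, apply \eqref{tCtz}, and compare with \eqref{theta-PSF} to get $\tC_{A'}(Q_{B'C'}+2\theta_{B'C'})=0$, then absorb the constant via \eqref{Freedomtheta}. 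The only minor remark is that the relevant commutation is $[\tC_{A'},\C_{B'}]$ acting on the weight-zero scalar $\tilde z^{\bf B'}$ (which vanishes by \eqref{CommutatorPS} with $p=0$ and no spinor indices), rather than the coordinate commutation $[\partial_{\tilde z^{\bf A'}},\partial_{\tilde z^{\bf B'}}]=0$.
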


\begin{proof}
By definition \eqref{QPlebanski}, we have 
$Q_{\bf A'}{}^{\bf B'} = \phi^{-2}e^{c}_{0{\bf A'}}\partial_{c}\tilde{z}^{\bf B'}$.
Since $\tilde{z}^{\bf B'}$ are ordinary scalars, it holds $\partial_{c}\tilde{z}^{\bf B'}=\C_{c}\tilde{z}^{\bf B'}$.
Using then \eqref{SpecialFrame0}: $Q_{\bf A'}{}^{\bf B'}=-\phi^{-2}N^{-1}\zeta^{C'}_{\bf A'}\C_{C'}\tilde{z}^{\bf B'}$.
In terms of the spinor field \eqref{AbstractQPlebanski},  
this is $Q_{A'}{}^{B'} = \phi^{-2}\zeta^{B'}_{\bf D'}\mathcal{C}_{A'} p^{\bf D'}$,
where we used the identity $\zeta^{\bf A'}_{A'}\zeta^{C'}_{\bf A'}=-N\delta^{C'}_{A'}$.

\smallskip
Lowering an index, we have $Q_{B'C'}=\phi^{-2}\zeta_{{\bf D'} C'}\C_{B'} \tilde{z}^{\bf D'}$, and
since $\tC_{A'}\zeta_{{\bf D'} C'} = 0 = \tC_{A'}\phi$, and $\tC_{A'}$ and $\C_{A'}$ commute on ordinary scalars, we get
$\tC_{A'}Q_{B'C'} = \phi^{-2}\zeta_{{\bf D'} C'}\C_{B'}\tC_{A'}\tilde{z}^{\bf D'}$.
Using \eqref{tCtz} and $\C_{B'}\phi=0$, we have $\tC_{A'}Q_{B'C'} = -\zeta_{{\bf D'} C'}\C_{B'}(N^{-1}\zeta^{\bf D'}_{A'})$,
which can be rewritten as
\begin{align*}
 \tC_{A'}Q_{B'C'} = N^{-1}\zeta^{\bf D'}_{A'}\C_{B'}\zeta_{{\bf D'}C'},
\end{align*}
where we used $N^{-1}\zeta_{{\bf D'}C'}\zeta^{\bf D'}_{A'}=\epsilon_{C'A'}$.
Recalling now the result \eqref{theta-PSF}, we see that 
\begin{align*}
 \tC_{A'}(Q_{B'C'}+2\theta_{B'C'}) = 0,
\end{align*}
from where we deduce that $Q_{A'B'} = -2(\theta_{A'B'} + k_{A'B'})$ for some $k_{A'B'}$ with $\tC_{A'}k_{B'C'}=0$. 
But we noticed in remark \ref{Remark:Freedomtheta}
that for $\theta_{A'B'}$ we have the freedom \eqref{Freedomtheta}, which means that 
$k_{A'B'}$ can be absorbed in the definition of $\theta_{A'B'}$. Thus, the relation \eqref{RelationwP} follows.
\end{proof}

Imposing now the Einstein-Weyl equations \eqref{EinsteinWeylEquations0},
the constructions of previous sections apply, so there exists a scalar field $\Phi$ that solves the 
conformally invariant HH equation \eqref{conformalHHeq} and such that the SD Weyl spinor is \eqref{PotentializationWeyl}, 
and the spinor field $\theta_{A'B'}$ is given in terms of $\Phi$ and constants of integrations by formula \eqref{Potentialtheta}.
Replacing these expressions, we arrive at:

\begin{proposition}\label{Prop:CovariantMetric}
Assume the Einstein-Weyl equations \eqref{EinsteinWeylEquations0} hold.
In terms of the scalar field $\Phi$ that solves the conformal HH equation \eqref{conformalHHeq}, 
any metric in the conformal class can be expressed as
\begin{equation}\label{CovariantMetric}
 g_{ab} = \phi^{-2}\eta_{ab} + c_{ab} 
 -4 \mr\Omega^{3}\nabla_{c}[\mr\Omega^{-2}\phi^{-4}\nabla_{d}(P_{(a}{}^{c}{}_{b)}{}^{d}\phi^{4}\Phi)]
\end{equation}
where $\nabla_{a}$ is the Levi-Civita connection of $g_{ab}$,
$c_{ab} \equiv -4 o_{A}o_{B}\tau_{A'B'}$ represents constants of integration, 
and the tensor field $P_{abcd}$ is defined by $P_{abcd} = 4o_{A}o_{B}o_{C}o_{D}\epsilon_{A'B'}\epsilon_{C'D'}$.
\end{proposition}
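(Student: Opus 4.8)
The plan is to feed the already-established abstract forms of the metric and of $\theta_{A'B'}$ into a single algebraic identity for a Levi--Civita double divergence. First I would combine the abstract metric \eqref{AbstractMetric0} with Proposition~\ref{Prop:RelationwP} (i.e.\ $Q_{A'B'}=-2\theta_{A'B'}$) to write $g_{ab}=\phi^{-2}\eta_{ab}-4\,o_Ao_B\theta_{A'B'}$. Imposing the Einstein--Weyl equations \eqref{EinsteinWeylEquations0} and inserting \eqref{Potentialtheta} from Proposition~\ref{Prop:RelationSpinorAndScalarPotentials} splits this into
\[
 g_{ab}=\phi^{-2}\eta_{ab}+c_{ab}-4\,o_Ao_B\big[\mr\Omega\,\tC_{A'}\tC_{B'}\Phi-2(\tC_{(A'}\mr\Omega)(\tC_{B')}\Phi)\big],
\]
where I set $c_{ab}:=-4\,o_Ao_B\tau_{A'B'}$; since $\tau_{A'B'}$ is expressible through $\pi^{A'}$ and constants of integration by \eqref{DoubleTwistorLikeEq}, this $c_{ab}$ is exactly the ``constants of integration'' part. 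Thus the whole content of the proposition reduces to the single identity
\begin{equation*}
 o_Ao_B\big[\mr\Omega\,\tC_{A'}\tC_{B'}\Phi-2(\tC_{(A'}\mr\Omega)(\tC_{B')}\Phi)\big]
 =\mr\Omega^{3}\,\nabla_c\big[\mr\Omega^{-2}\phi^{-4}\,\nabla_d\big(P_{(a}{}^{c}{}_{b)}{}^{d}\phi^{4}\Phi\big)\big].
\end{equation*}

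To establish this I would work from the right-hand side. Using $P_{(a}{}^{c}{}_{b)}{}^{d}=4\,o_Ao_Bo^{C}o^{D}\epsilon_{(A'}{}^{C'}\epsilon_{B')}{}^{D'}$, the two contracted Levi--Civita derivatives become $o^{C}\nabla_c$ and $o^{D}\nabla_d$ which, after the $\epsilon$-contractions on the primed slots, act essentially as operators along $\tilde L$. I then rewrite each such derivative in terms of the complex--conformal connection $\C_a$ — hence in terms of $\tC_{A'}$ and $\C_{A'}$ — via the defining relation \eqref{ActionOfC} together with the Weyl-to-Levi--Civita formula. The inputs that make the bookkeeping close are $\C_a o^{B}=0$ (eq.~\eqref{ParallelSpinor}), the Lee-form condition $f_a=\nabla_a\log\phi$ (eq.~\eqref{fgradient}) and ${\rm w}_a=-\nabla_a\log\mr\Omega$ (eq.~\eqref{DefOmega0}): the powers $\phi^{4}$, $\phi^{-4}$, $\mr\Omega^{-2}$, $\mr\Omega^{3}$ are tuned precisely so that all the weight-correction terms (the $(wf_a+pP_a)$-pieces of $\C_a$ and the $\nabla o^{C}$, $\nabla o^{D}$ contributions) cancel against the prefactors. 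What remains is $o_Ao_B$ times $\mr\Omega\,\tC_{A'}\tC_{B'}\Phi$ plus cross terms quadratic in first derivatives, and these cross terms collapse to $-2(\tC_{(A'}\mr\Omega)(\tC_{B')}\Phi)$ after using $\tC_{A'}\tC_{B'}\mr\Omega=0$ (eq.~\eqref{CtCtOmega}) and $[\tC_{A'},\tC_{B'}]=0$ (eq.~\eqref{FlatConnection}). This computation is of exactly the type already packaged in the appendices; as a consistency check one recognizes the right-hand side as the linearized-gravity ``Hertz-potential'' operator $\nabla_c\nabla_d(\ell_{(a}{}^{c}\ell_{b)}{}^{d}\,\cdots)$ with $\ell_{ab}=2o_Ao_B\epsilon_{A'B'}$.

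The main obstacle is precisely this conversion step: tracking every $\phi$- and $\mr\Omega$-dependent connection correction when trading the Levi--Civita $\nabla_a$ of the chosen representative for $\C_a$, and handling the symmetrization $(a\ldots b)$ together with the primed $\epsilon$-contractions in $P_{(a}{}^{c}{}_{b)}{}^{d}$, so that the double divergence collapses to exactly $o_Ao_B\theta_{A'B'}$ with no stray terms. Everything else — the substitutions from \eqref{AbstractMetric0}, \eqref{RelationwP} and \eqref{Potentialtheta}, and the identification of $c_{ab}$ — is immediate from results already in hand.
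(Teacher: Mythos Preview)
Your proposal is correct and follows exactly the route the paper takes: the paper simply combines the abstract metric \eqref{AbstractMetric0} with $Q_{A'B'}=-2\theta_{A'B'}$ from Proposition~\ref{Prop:RelationwP} and then substitutes \eqref{Potentialtheta}, stating that ``replacing these expressions, we arrive at'' the result. Your outline of the conversion from $o_Ao_B[\mr\Omega\,\tC_{A'}\tC_{B'}\Phi-2(\tC_{(A'}\mr\Omega)(\tC_{B')}\Phi)]$ to the Levi--Civita Hertz-potential form is in fact more detailed than what the paper provides, since the paper leaves that algebraic identity implicit.
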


Regarding the almost-complex structure \eqref{JGR}, we have: 
\begin{proposition}
In the coordinates $(z^{\bf A'},\tilde{z}^{\bf A'})$, the almost-complex structure \eqref{JGR} is
\begin{equation}
 J = i\frac{\partial}{\partial z^{\bf A'}}\otimes{\rm d}z^{\bf A'} - i\frac{\partial}{\partial\tilde{z}^{\bf A'}}\otimes{\rm d}\tilde{z}^{\bf A'} 
  +2i\phi^{2}Q_{\bf A'}{}^{\bf B'}\frac{\partial}{\partial\tilde{z}^{\bf B'}}\otimes{\rm d}z^{\bf A'}. \label{JinCoordinates}
\end{equation}
\end{proposition}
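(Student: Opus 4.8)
The plan is to read off $J$ from its action on the coordinate coframe $\{\mathrm{d}z^{\bf A'},\mathrm{d}\tilde z^{\bf A'}\}$. Writing a $(1,1)$ tensor as $J=\partial_{z^{\bf A'}}\otimes J^{*}(\mathrm{d}z^{\bf A'})+\partial_{\tilde z^{\bf A'}}\otimes J^{*}(\mathrm{d}\tilde z^{\bf A'})$, where $(J^{*}\omega)_{b}=J^{a}{}_{b}\omega_{a}$, it suffices to compute $J^{*}(\mathrm{d}z^{\bf A'})$ and $J^{*}(\mathrm{d}\tilde z^{\bf A'})$, i.e.\ to locate $\mathrm{d}z^{\bf A'}$ and $\mathrm{d}\tilde z^{\bf A'}$ relative to the eigenbundle splitting $T^{*}M\otimes\mathbb{C}=L^{*}\oplus\tilde L^{*}$ of $J^{*}$.

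First I would record the eigen-relations for the frame one-forms. Contracting $J^{a}{}_{b}$ from \eqref{JGR} with the abstract expression $\mathrm{d}z^{\bf A'}=e^{0{\bf A'}}$ of \eqref{SpecialPSF}, and using $o_{A}o^{A}=0$, $o_{A}\iota^{A}=1$, yields $J^{*}(\mathrm{d}z^{\bf A'})=i\,\mathrm{d}z^{\bf A'}$; similarly, contracting with $e^{1{\bf A'}}$ from \eqref{coframe1} and using $\iota_{A}\iota^{A}=0$, $\iota_{A}o^{A}=-1$ gives $J^{*}(e^{1{\bf A'}})=-i\,e^{1{\bf A'}}$. (One could also obtain these from the characterization of the eigenbundles recalled in section \ref{Sec:AlmostComplexStructures} applied to the dual frame \eqref{SpecialFrame0}--\eqref{SpecialFrame1}.) Thus $\mathrm{d}z^{\bf A'}$ spans $L^{*}$ and $e^{1{\bf A'}}$ spans $\tilde L^{*}$.

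Next I would bring in the second coordinate. Inverting \eqref{coframe1Coord} and using the definition \eqref{QPlebanski} of $Q_{\bf A'}{}^{\bf B'}$ gives $\mathrm{d}\tilde z^{\bf A'}=\phi^{2}e^{1{\bf A'}}+\phi^{2}Q_{\bf C'}{}^{\bf A'}\,\mathrm{d}z^{\bf C'}$. Since $\phi$ and $Q_{\bf A'}{}^{\bf B'}$ have vanishing $p$- and $w$-weights, hence act as ordinary scalars, and $J^{*}$ is $C^{\infty}$-linear, applying $J^{*}$ and using the two eigenvalue relations gives $J^{*}(\mathrm{d}\tilde z^{\bf A'})=-i\phi^{2}e^{1{\bf A'}}+i\phi^{2}Q_{\bf C'}{}^{\bf A'}\mathrm{d}z^{\bf C'}$; eliminating $\phi^{2}e^{1{\bf A'}}=\mathrm{d}\tilde z^{\bf A'}-\phi^{2}Q_{\bf C'}{}^{\bf A'}\mathrm{d}z^{\bf C'}$ produces $J^{*}(\mathrm{d}\tilde z^{\bf A'})=-i\,\mathrm{d}\tilde z^{\bf A'}+2i\phi^{2}Q_{\bf C'}{}^{\bf A'}\mathrm{d}z^{\bf C'}$. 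Substituting $J^{*}(\mathrm{d}z^{\bf A'})$ and $J^{*}(\mathrm{d}\tilde z^{\bf A'})$ into $J=\partial_{z^{\bf A'}}\otimes J^{*}(\mathrm{d}z^{\bf A'})+\partial_{\tilde z^{\bf A'}}\otimes J^{*}(\mathrm{d}\tilde z^{\bf A'})$ and relabelling the summation indices in the last term gives exactly \eqref{JinCoordinates}.

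There is no real obstacle here; the computation is short. The one point deserving attention is the factor of $2$ in the cross term of \eqref{JinCoordinates}: it appears precisely because the $Q$-piece of $\mathrm{d}\tilde z^{\bf A'}$ lies in $L^{*}$ (eigenvalue $+i$) while its $e^{1{\bf A'}}$-piece lies in $\tilde L^{*}$ (eigenvalue $-i$), so the two contributions to $J^{*}(\mathrm{d}\tilde z^{\bf A'})$ carry opposite signs and, after trading $e^{1{\bf A'}}$ for $\mathrm{d}\tilde z^{\bf A'}$, reinforce instead of cancelling. One should also keep track of the normalization $o_{A}\iota^{A}=1$ (so $\iota_{A}o^{A}=-1$) used in the eigenvalue step, and of the weight assignments that let $\phi$ and $Q_{\bf A'}{}^{\bf B'}$ pass through $J^{*}$ as scalars.
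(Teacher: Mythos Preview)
Your proof is correct and follows essentially the same route as the paper's: both identify $e^{0{\bf A'}}=\mathrm{d}z^{\bf A'}$ and $e^{1{\bf A'}}$ as the $(\pm i)$-eigenforms of $J$, and then use the relation \eqref{coframe1Coord} between $e^{1{\bf A'}}$ and $\mathrm{d}\tilde z^{\bf A'}$ to convert to the coordinate coframe. The only cosmetic difference is that the paper first assembles $J=ie_{0{\bf A'}}\otimes e^{0{\bf A'}}-ie_{1{\bf A'}}\otimes e^{1{\bf A'}}$ via a spinor-index contraction (using $\delta^{A'}_{B'}=-N^{-1}\zeta^{\bf A'}_{B'}\zeta^{A'}_{\bf A'}$) and then substitutes the coordinate expressions for both frame and coframe, whereas you bypass the frame side by computing $J^{*}$ directly on the coordinate differentials; the content is the same.
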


\begin{proof}
Using the original expression \eqref{JGR} for $J$, together with
\eqref{SpecialPSF}, \eqref{coframe1}, \eqref{SpecialFrame0}, \eqref{SpecialFrame1}, and 
the identity $\delta^{A'}_{B'}=-N^{-1}\zeta^{\bf A'}_{B'}\zeta^{A'}_{\bf A'}$,
a short calculation shows that $J=ie_{0{\bf A'}}\otimes e^{0{\bf A'}}-ie_{1{\bf A'}}\otimes e^{1{\bf A'}}$.
Using now \eqref{Special1Forms}, \eqref{Coordinates2}, \eqref{coframe1Coord}, and 
\begin{align}
 e_{0{\bf A'}} = \frac{\partial}{\partial z^{\bf A'}} + \phi^{2}Q_{\bf A'}{}^{\bf B'}\frac{\partial}{\partial\tilde{z}^{\bf B'}}, 
 \label{frame0Coord}
\end{align}
we get \eqref{JinCoordinates}. Equation \eqref{frame0Coord} is obtained by noticing that we must have 
$e_{0{\bf A'}}(z^{\bf B'})=\delta^{\bf B'}_{\bf A'}$ and $e_{0{\bf A'}}(\tilde{z}^{\bf B'})=\phi^{2}Q_{\bf A'}{}^{\bf B'}$.
\end{proof}

From \eqref{JinCoordinates} we see that while $\partial/\partial\tilde{z}^{\bf A'}$ are anti-holomorphic vector fields for $J$,
the presence of the $Q_{\bf A'}{}^{\bf B'}$-term implies that the vector fields $\partial/\partial z^{\bf A'}$ 
are not holomorphic w.r.t. $J$.
Notice that the first two terms in \eqref{JinCoordinates}, namely 
\begin{equation}
 J_{0} \equiv i\frac{\partial}{\partial z^{\bf A'}}\otimes{\rm d}z^{\bf A'} - i\frac{\partial}{\partial\tilde{z}^{\bf A'}}\otimes{\rm d}\tilde{z}^{\bf A'},
\end{equation}
give an orthogonal complex structure for the conformally flat metric $\phi^{-2}\eta$, that is $(J_{0})^{2}=-\mathbb{I}$, 
$\phi^{-2}\eta(J_{0}\cdot, J_{0}\cdot)=\phi^{-2}\eta(\cdot,\cdot)$.

\subsection{Discussion}\label{Sec:Discussion}

From proposition \ref{Prop:CovariantMetric}, we see that the conformal metric can be expressed as 
$g_{ab}=\phi^{-2}\eta_{ab} + c_{ab} + h_{ab}$, where the tensor field $h_{ab}$ is defined by
\begin{equation}\label{HertzPotential}
 h_{ab} =  -4\mr\Omega^{3}\nabla_{c}[\mr\Omega^{-2}\phi^{-4}\nabla_{d}(P_{(a}{}^{c}{}_{b)}{}^{d}\phi^{4}\Phi)].
\end{equation}
We can then distinguish three different contributions to $g_{ab}$. 

First, as noticed in remark \ref{Remark:FlatMetric}, the tensor field $\phi^{-2}\eta_{ab}$ is a conformally flat metric; 
in other words, this gives a {\em flat conformal structure}.
It corresponds to setting $\Phi\equiv 0$ and all `constants of integrations' to zero.
Both SD and ASD Weyl spinors vanish in this case: $\tilde{\Psi}_{A'B'C'D'}\equiv0$ and $\Psi_{ABCD}\equiv0$.

Second, the part $c_{ab}$ is only associated to `integration constants', since it only involves 
$\tau_{A'B'}$, whose explicit expression is \eqref{DoubleTwistorLikeEq}. 
It corresponds to setting $\Phi\equiv 0$, which leaves a {\em half-flat conformal structure}
since the SD Weyl spinor then vanishes, $\tilde\Psi_{A'B'C'D'}\equiv 0$. 
In other words, this corresponds to a (possibly complex) conformally hyperk\"ahler structure.
The ASD Weyl spinor, in turn, only depends on constants of integration.

Finally, the third contribution to $g_{ab}$ is given by the tensor field \eqref{HertzPotential}, 
and it encodes the full curvature of the conformal structure.
We find this term particularly interesting for the following reason.
Setting $\mr\Omega\equiv1$ (i.e., breaking conformal invariance and restricting to an Einstein metric in the conformal class, 
as in section \ref{Sec:OrdinaryEE}), \eqref{HertzPotential} coincides exactly with 
the expression for a ``Hertz potential'' in {\em linearized} gravity (on curved backgrounds). 
In perturbation theory, the (ordinary) linearized Einstein equations are reduced to a scalar, linear, wave-like equation 
known as `Debye equation' or `adjoint Teukolsky equation',
and a Hertz potential is the reconstruction of the linearized metric in terms of a solution to this equation.
Here, however, all of our results are valid for the full, non-linear Einstein(-Weyl) equations, and there 
are no linearizations involved whatsoever.

Remarkably enough, the {\em full non-linear} curvature of the conformal structure is still coming from a ``Hertz potential'' 
(more precisely, its conformally covariant version \eqref{HertzPotential}), 
but now the scalar field $\Phi$ is required to satisfy the non-linear conformal HH equation \eqref{conformalHHeq}.

\medskip
Notice that, at the linear level, i.e. keeping only linear terms in $\Phi$, and setting the integration constant $K$ to zero,
the conformal HH equation \eqref{conformalHHeq} is $(\Box^{\C}+3\mathcal{R}/2)\Phi = 0$.
We can express this in GHP notation by using identity \eqref{boxC2} 
and recalling that $p(\Phi)=-4$, $w(\Phi)=-1$ and that $\mathcal{R}=-12\Psi_{2}$ (eq. \eqref{ScalarCurvatureCSpecial}). 
After a short calculation (using also \cite[Eq. (4.12.32)(f)]{PR1}) we get
\begin{align}
 (\Box^{\C}+3\mathcal{R}/2)\Phi =2[(\tho'-\tilde\rho')(\tho+3\rho)-(\edt'-\tilde\t)(\edt+3\t)-3\Psi_{2}]\Phi = 0.
 \label{DebyeEqGravGHP}
\end{align}
This is exactly the Debye (or adjoint Teukolsky) equation for gravitational perturbations.
Thus, from formulas \eqref{HertzPotential} (for $\mr\Omega=1$) and \eqref{DebyeEqGravGHP}, 
we see explicitly that the metric reconstruction and Teukolsky equation from perturbation theory 
in 4d general relativity are the linearized version of the conformally invariant HH construction obtained in this paper. 
This answers in the affirmative a question posed in \cite[Remark 3.14]{Araneda20}.

\section{Summary and conclusions}\label{Sec:Conclusions}

Inspired by the connections between integrability in complex geometry and in differential equations
that are derived from twistor theory and the heavenly equations in the hyperk\"ahler setting, and motivated by 
the observation that the integrability of an orthogonal almost-complex structure is conformally invariant,
in this work we have studied the potentialization of 4d closed Einstein-Weyl structures with 
a half-algebraically special Weyl tensor and a half-integrable almost-complex structure 
(equivalently, a shear-free spinor field), 
as well as the potentialization of the Dirac-Weyl, Maxwell, and Yang-Mills systems.
Our method was based on the construction of a conformally invariant connection, $\C_{AA'}$, associated 
to a choice of almost-complex structure $J$, that, when the half-integrability properties of $J$
are taken into account, produces a flat `connection' $\tC_{A'}=o^{A}\C_{AA'}$, 
that is at the heart of the integration machinery, 
since it gives origin to a de Rham complex and to parallel frames for all the systems considered.

\smallskip
As with any integration process, the general solution to the differential equations we studied 
depends on a sort of `constants of integration', that in our approach take the form of 
spinor fields in the kernel of $\tC_{A'}$.
The equations can be solved formally in terms of these constants of integration 
and a spinor field $\pi^{A'}$ that satisfies the twistor-like equation \eqref{TwistorLikeEq}, as
discussed in section \ref{Sec:ConstantsOfIntegration}.
These two objects correspond to an abstract, conformally invariant version of the coordinates 
$(\tilde{z}^{\bf A'},z^{\bf A'})$
adapted to twistor surfaces in which the original HH construction is based: 
our `constants of integration' correspond to functions which only depend on $z^{\bf A'}$, 
and the spinor field $\pi^{A'}$ is an abstract version of $\tilde{z}^{\bf A'}$ 
($z^{\bf A'}$ and $\tilde{z}^{\bf A'}$ are, respectively, coordinates {\em constant} and {\em along} 
twistor surfaces, see section \ref{Sec:Coordinates}).
The different dependence of each part of the reconstructed fields on integration constants
and potentials is subtle, see below.

\smallskip
The concept of ``potentialization'' is here understood as follows. The field equations lead to two main consequences.
First, {\em primed} spinor fields can be entirely expressed in terms of a scalar 
field\footnote{Here we mean of course {\em spacetime scalar}, since 
for the Yang-Mills case the scalar has internal indices.}: 
a `potential', which is the generalization of the Pleba\'nski potential in the second heavenly equation.
Second, the potential must satisfy a scalar, conformally invariant wave-like equation, which is linear or non-linear 
depending on whether the corresponding original system is itself linear or non-linear. 
Recall that the ordinary scalar conformal wave equation is $(\Box+R/6)\Phi=0$.
For linear systems, the equations we obtained are a generalization of this: 
\begin{equation}
 (\Box^{\C}+n\mathcal{R})\Phi=0, \label{GeneralizedCWE}
\end{equation}
where $n$ is some number depending on the system considered,
and $\Box^{\C}$ and $\mathcal{R}$ are respectively the wave operator and scalar curvature of $\C_{a}$.
Equation \eqref{GeneralizedCWE} includes the Teukolsky equations from 
perturbation theory, see eqs. \eqref{DebyeEqDiracGHP}, \eqref{DebyeEqMaxwellGHP}, \eqref{DebyeEqGravGHP}.
This connection to perturbation theory 
can be understood as a generalization to the conformal HH setting of the fact that 
linearized solutions to the heavenly equations satisfy the ordinary wave equation, as shown in \cite{DunajskiMason}.
For non-linear systems, eq. \eqref{GeneralizedCWE} 
involves also terms non-linear in the potential, as well as a constant of integration in the right hand side.

\smallskip
On the other hand, {\em unprimed} spinor fields are not directly involved in the potentialization,
but they can be related to the potential when one considers the primed and unprimed fields as 
the SD and ASD parts of the curvature of a connection (a fact that is itself independent of the field equations).
Thus, even though only ``one side'' of the geometry is special, both sides are related via the connection form.

\smallskip
A signature issue must be mentioned here: 
in Lorentz signature, reality conditions can be imposed so that the SD and ASD parts are complex-conjugated of each other.
In the Maxwell case, we saw that this can be exploited to eliminate the integration constants.
For the Einstein-Weyl system, the implementation of different reality conditions is more subtle and was not analysed here.
Naturally, each signature has its own peculiarities that deserve special attention; 
for example, in Riemann and split signature, our assumptions \eqref{SFR} and \eqref{rPND} lead to 
a half-type D Weyl tensor together with an (ordinary) integrable complex structure, so several formulas 
and identities get simplified. 
In addition, split signature allows the interesting possibility of real twistor surfaces.
We also mention that the integrability issue for the hyper-heavenly equation is 
subtle and it might also be related to signature, see \cite{Tod} and \cite{DP}.

\smallskip
The Einstein-Weyl system has the additional feature that one is also interested in reconstructing the conformal metric. 
When the field equations are satisfied, three different contributions to the structure of the conformal metric can be distinguished:
a conformally flat part $\phi^{-2}\eta_{ab}$, a conformally half-flat part $c_{ab}$ associated to constants of integration, 
and a potential part $h_{ab}$ \eqref{HertzPotential}, which encodes the full non-linear curvature 
(see sections \ref{Sec:AbstractMetric} and \ref{Sec:Discussion}).
When restricted to an Einstein metric in the conformal class, 
$h_{ab}$ has exactly the structure of a so-called Hertz potential, which is a tensor field 
arising in {\em linearized} Einstein theory (on curved backgrounds) that 
reconstructs the linearized metric in terms of a solution to the Teukolsky equation \eqref{GeneralizedCWE}.
In our case, the same tensor field (or, rather, a conformally covariant version)
arises in the {\em full, non-linear} Einstein(-Weyl) theory, where it satisfies the conformal HH equation \eqref{conformalHHeq}.

\smallskip
Finally, one of our main goals in this work has been to show that conformal invariance 
features prominently in the analysis of the geometric structure and potentialization of special spacetimes, even 
for systems that are not conformally invariant such as the ordinary Einstein equations 
(see section \ref{Sec:OrdinaryEE}, where we show how to deal with such non-conformal systems by 
using a simple trick).
In our approach, this symmetry originates in the fact that the integrability of an orthogonal almost-complex structure $J$
is conformally invariant. 
The connection $\C_{AA'}$ incorporates, by construction, these structures, 
and it also encodes integrability of $J$ in parallel spinors \eqref{ParallelSpinor}.
(Alternatively, in view of \eqref{HalfParallelSpinors}, eq. \eqref{ParallelSpinor} is equivalent to $\tC_{A'}o^{B}=0$, 
which we can interpret as saying that $o^{A}$ is {\em holomorphic}, 
in agreement with \cite[Chapter IV, Theorem 9.11]{Lawson}.)

\bigskip
\noindent
{\bf Acknowledgments.}
The possibility of relations among some of the structures considered in this work 
and Pleba\'nski's hyper-heavenly construction was suggested to me by Maciej Dunajski and Lionel Mason in 2019. 
I am very grateful to them for this suggestion and for conversations at that time, which led to
the development of this work, and also for helpful comments on the current manuscript.
I would also like to thank Steffen Aksteiner and Lars Andersson for discussions on related topics.
This work is supported by a postdoctoral fellowship from the Alexander von Humboldt Foundation.


\appendix

\section{Spinor decomposition for Weyl connections}\label{App:WeylConnections}

Let $\nabla^{\rm w}_{a}$ be an arbitrary Weyl connection, with Weyl 1-form ${\rm w}_{a}$. 
The curvature tensor is given by $[\nabla^{\rm w}_{a}, \nabla^{\rm w}_{b}]v^{d} = R^{\rm w}_{abc}{}^{d}v^{c}$.
If $R_{abc}{}^{d}$ is the Riemann curvature of an arbitrary metric in the conformal class, with Levi-Civita connection $\nabla_{a}$,
we have the relation
\begin{equation}
 R^{\rm w}_{abc}{}^{d} = R_{abc}{}^{d}+2\nabla_{[a}K_{b]c}{}^{d} +2K_{[a|e|}{}^{d}K_{b]c}{}^{e},
 \label{CurvatureTensorWeylC}
\end{equation}
where $K_{ab}{}^{c} = {\rm w}_{a}\delta_{b}^{c} + {\rm w}_{b}\delta_{a}^{c} - {\rm w}_{e}g^{ce}g_{ab}$.
We notice that the scalar curvature is
\begin{equation}
 R^{\rm w}:=g^{ac}\delta^{b}_{d}R^{\rm w}_{abc}{}^{d} = R + 6(\nabla_{a}{\rm w}^{a}+{\rm w}_{a}{\rm w}^{a}).
 \label{ScalarCurvatureWeylC}
\end{equation}

By definition, we have the symmetry $R^{\rm w}_{abc}{}^{d} = R^{\rm w}_{[ab]c}{}^{d}$.
In addition, the torsion-free property implies that $R^{\rm w}_{[abc]}{}^{d}=0$. 
From this it follows by a standard argument that the Bianchi identity is satisfied, $\nabla^{\rm w}_{[e}R^{\rm w}_{ab]c}{}^{d}=0$.
Defining $R^{\rm w}_{abcd} = R^{\rm w}_{abc}{}^{e}g_{ed}$, applying another $\nabla^{\rm w}_{a}$ derivative to 
$\nabla^{\rm w}_{a}g_{bc}=-2{\rm w}_{a}g_{bc}$ and taking the commutator, one finds
\begin{equation}
 R^{\rm w}_{abcd} = R^{\rm w}_{ab[cd]} + g_{cd}\partial_{[a}{\rm w}_{b]}.
\end{equation}
We thus see that $R^{\rm w}_{abcd}$ is antisymmetric in the last two indices 
if and only if ${\rm w}_{a}$ is closed, $\partial_{[a}{\rm w}_{b]}=0$.
{\em From now on we will assume that this is the case}. 
The condition $R^{\rm w}_{abcd}=R^{\rm w}_{ab[cd]}$, together with the other symmetries, imply the additional symmetry
$R^{\rm w}_{abcd} = R^{\rm w}_{cdab}$.

\medskip
The algebraic symmetries of $R^{\rm w}_{abcd}$ mentioned above lead to the following spinor decomposition 
(the proof is completely analogous to the discussion in \cite[Section 4.6]{PR1} for the ordinary Riemann tensor):
\begin{equation}
 R^{\rm w}_{abcd} = \tilde{X}^{\rm w}_{A'B'C'D'}\epsilon_{AB}\epsilon_{CD} + \tilde{\Phi}^{\rm w}_{ABC'D'}\epsilon_{A'B'}\epsilon_{CD}
 +\Phi^{\rm w}_{A'B'CD}\epsilon_{AB}\epsilon_{C'D'} + X^{\rm w}_{ABCD}\epsilon_{A'B'}\epsilon_{C'D'}, 
 \label{SpinorDecCurvatureWeylC}
\end{equation}
where $\tilde{X}^{\rm w}_{A'B'C'D'}=\frac{1}{4}\epsilon^{AB}\epsilon^{CD}R^{\rm w}_{abcd}$, etc.
The spinors $\tilde{X}^{\rm w}_{A'B'C'D'}$, $X^{\rm w}_{ABCD}$ can be furthermore decomposed into irreducible pieces as
\begin{subequations}
\begin{align}
 \tilde{X}^{\rm w}_{A'B'C'D'} ={}& \tilde{\Psi}^{\rm w}_{A'B'C'D'} 
 + \tilde{\Lambda}^{\rm w} (\epsilon_{A'C'}\epsilon_{B'D'}+\epsilon_{A'D'}\epsilon_{B'C'}),  \\
  X^{\rm w}_{ABCD} ={}& \Psi^{\rm w}_{ABCD} + \Lambda^{\rm w} (\epsilon_{AC}\epsilon_{BD}+\epsilon_{AD}\epsilon_{BC}),
\end{align}
\end{subequations}
where $\tilde{\Psi}^{\rm w}_{A'B'C'D'} = \tilde{X}^{\rm w}_{(A'B'C'D')}$, $\Psi^{\rm w}_{ABCD} = X^{\rm w}_{(ABCD)}$,
$\tilde{\Lambda}^{\rm w} = \tfrac{1}{6}\tilde{X}^{\rm w}_{A'B'}{}^{A'B'}$, $\Lambda^{\rm w} = \tfrac{1}{6}X^{\rm w}_{AB}{}^{AB}$.

\medskip
From the symmetries $R^{\rm w}_{abcd}=R^{\rm w}_{cdab}$ and $R^{\rm w}_{[abc]d} = 0$ it follows, respectively, that 
\begin{equation}
 \tilde{\Phi}^{\rm w}_{ABC'D'} = \Phi^{\rm w}_{C'D'AB}, \qquad \tilde{\Lambda}^{\rm w} =  \Lambda^{\rm w}.
 \label{CrucialRelationsCurvatureWeylC}
\end{equation}
We notice that, if ${\rm w}_{a}$ is not closed, the first of this relations does not hold.
We also notice that the relation between $\Lambda^{\rm w}$ and the scalar curvature \eqref{ScalarCurvatureWeylC} is
\begin{equation}
 \Lambda^{\rm w} = R^{\rm w} / 24.
\end{equation}

The relation between the curvature spinors of $\nabla^{\rm w}_{a}$ and those of an arbitrary Levi-Civita connection 
in the conformal class are as follows:
\begin{subequations}
\begin{align}
 \Lambda^{\rm w} ={}& \Lambda + \tfrac{1}{4}(\nabla_{e}{\rm w}^{e} + {\rm w}_{e}{\rm w}^{e} ), \label{Weyl-LC1}\\
 \Phi^{\rm w}_{ABC'D'} ={}& \Phi_{ABC'D'} - \nabla_{C'(A}{\rm w}_{B)D'}+{\rm w}_{C'(A}{\rm w}_{B)D'} \label{Weyl-LC2} \\
 \tilde{\Psi}^{\rm w}_{A'B'C'D'} ={}& \tilde{\Psi}_{A'B'C'D'}, \label{Weyl-LC3} \\
 \Psi^{\rm w}_{ABCD} ={}& \Psi_{ABCD}. \label{Weyl-LC4}
\end{align}
\end{subequations}
We see that the totally symmetric parts of $\tilde{X}^{\rm w}_{A'B'C'D'}$ and $X^{\rm w}_{ABCD}$ coincide 
with the ordinary Weyl curvature spinors;
so we can drop the superscript ${}^{\rm w}$ on $\tilde{\Psi}^{\rm w}_{A'B'C'D'}$, $\Psi^{\rm w}_{ABCD}$.

\medskip
Although the Weyl connection is fixed in the conformal class of metrics, it does not map general conformal densities 
to conformal densities. This issue can be fixed by considering a simple modification: 
a covariant derivative $\mathcal{D}_{a}$ acting on objects with conformal weight $w$ by
\begin{equation}
 \mathcal{D}_{a} := \nabla^{\rm w}_{a} + w \: {\rm w}_{a}. \label{ConformalDerGeneral}
\end{equation}
It follows that $\mathcal{D}_{a}g_{bc}=0$, so $\mathcal{D}_{a}$ has the advantage that, 
when working with an expression involving it, we can raise and lower indices unambiguously.
Since $\partial_{[a}{\rm w}_{b]}=0$, the curvature tensors of $\mathcal{D}_{a}$ and $\nabla^{\rm w}_{a}$ agree,
\begin{equation}
 [\mathcal{D}_{a}, \mathcal{D}_{b}] = [\nabla^{\rm w}_{a},\nabla^{\rm w}_{b}].
\end{equation}
The Bianchi identities can be written as $\mathcal{D}_{[e}R^{\rm w}_{ab]cd}=0$. In spinor terms:
\begin{subequations}
\begin{align}
 \mathcal{D}_{B}{}^{A'}\tilde{X}^{\rm w}_{A'B'C'D'} ={}& \mathcal{D}_{B'}{}^{A}\Phi^{\rm w}_{C'D'AB}, \label{BianchiWeyl1} \\
 \mathcal{D}_{B'}{}^{A}X^{\rm w}_{ABCD} ={}& \mathcal{D}_{B}{}^{A'}\Phi^{\rm w}_{CDA'B'}.
\end{align}
\end{subequations}
The irreducible pieces are
\begin{subequations}
\begin{align}
 & \mathcal{D}_{B}{}^{A'}\tilde{\Psi}^{\rm w}_{A'B'C'D'} = \mathcal{D}_{(B'}{}^{A}\Phi^{\rm w}_{C'D')AB}, \\
 & \mathcal{D}_{B'}{}^{A}\Psi^{\rm w}_{ABCD} = \mathcal{D}_{(B}{}^{A'}\Phi^{\rm w}_{CD)A'B'}, \\
 & \mathcal{D}^{AC'}\Phi^{\rm w}_{ABC'D'} + 3\mathcal{D}_{DB'}\Lambda^{\rm w} = 0. \label{ContractedBianchiWeyl}
\end{align}
\end{subequations}
Unlike the Bianchi identities for a Levi-Civita connection, each side of the equations in the 
identities above is a conformal density.

\section{Identities for $\C_{a}$}\label{App:identities}

We will often use the following notation:
subscripts $0$ and $1$ mean contraction with $o^{A}$ and $\iota^{A}$ respectively; that is, 
for an arbitrary spinor-indexed object $\phi_{AB...FGH...LA'...S'}$:
\begin{equation}
 \phi_{00...011...1A'...S'} = \phi_{AB...FGH...LA'...S'}o^{A}o^{B}...o^{F}\iota^{G}\iota^{H}...\iota^{L}. \label{Notation01}
\end{equation}

\subsection{Arbitrary spacetimes}\label{App:curvatureC}

The complex-conformal connection $\C_{a}$ was defined in section \ref{Sec:ConnectionC}.
If $o^{A}, \iota^{A}$ are the (projective) spinor fields that define $\C_{a}$, we have the following identities:
\begin{subequations}
\begin{align}
 \mathcal{C}_{AA'}o^{B} ={}& \sigma^{o}_{A'}\iota_{A}\iota^{B}, \qquad  \sigma^{o}_{A'} \equiv o^{A}o^{B}\nabla_{AA'}o_{B}, \label{Co} \\
 \mathcal{C}_{AA'}\iota^{B} ={}& \sigma^{1}_{A'}o_{A}o^{B}, \qquad  \sigma^{1}_{A'} \equiv \iota^{A}\iota^{B}\nabla_{AA'}\iota_{B}. \label{Ci}
\end{align}
\end{subequations}
Let $\tC_{A'}=o^{A}\C_{AA'}$, $\C_{A'}=\iota^{A}\C_{AA'}$. Notice that it always holds
\begin{equation}\label{HalfParallelSpinors}
 \C_{A'}o^{B}=0, \qquad \tC_{A'}\iota^{B} = 0.
\end{equation}
Acting on arbitrary spinor/tensor fields with weights $(p,w)$, the curvature of $\mathcal{C}_{a}$ can be decomposed as
\begin{equation}
 [\C_{a},\C_{b}] = [\nabla^{\rm f}_{a}, \nabla^{\rm f}_{b}] + w2\partial_{[a}f_{b]} + p2\partial_{[a}P_{b]}. 
 \label{DecompositionCurvatureC0}
\end{equation}
In particular, let $e^{a}_{\bf a}$ be a frame of vector fields with $w(e^{a}_{\bf a})=w$ and $p(e^{a}_{\bf a})=p$, 
and let $e^{\bf a}_{a}$ be the dual coframe. We then have:
\begin{equation}
 e^{\bf c}_{c}[\C_{a},\C_{b}] e^{d}_{\bf c} = R^{\rm f}_{abc}{}^{d} + 2(w\partial_{[a}f_{b]} + p\partial_{[a}P_{b]})\delta^{d}_{c},
\end{equation}
where $R^{\rm f}_{abc}{}^{d}$ is the curvature tensor of the Weyl connection $\nabla^{\rm f}_{a}$, see \eqref{CurvatureTensorWeylC}. 
We could define the left hand side as the curvature tensor of $\C_{a}$, ``$R^{\C}_{abc}{}^{d}$'', but this 
depends explicitly on the weights $(p,w)$ of the frame that is used for the definition.
However, the ``scalar curvature'', defined by the contraction with $g^{ac}\delta^{b}_{d}$, 
coincides with the scalar curvature of $R^{\rm f}_{abc}{}^{d}$ and so is independent of the frame:
\begin{equation}
 \mathcal{R} := e^{{\bf c}a}[\C_{a},\C_{b}] e^{b}_{\bf c} = R^{\rm f}. \label{ScalarCurvatureC}
\end{equation}
Recall from \eqref{ScalarCurvatureWeylC} that $R^{\rm f}=R+6(\nabla_{a}f^{a}+f_{a}f^{a})$.

\subsubsection*{The curvature operators $\Box^{\C}_{A'B'}$ and $\Box^{\C}_{AB}$}

The curvature $[\C_{a},\C_{b}]$ can also be decomposed into SD and ASD pieces:
\begin{equation}\label{SplittingCurvatureC}
 [\mathcal{C}_{a},\mathcal{C}_{b}] = \epsilon_{AB} \Box^{\mathcal{C}}_{A'B'} + \epsilon_{A'B'}\Box^{\mathcal{C}}_{AB}
\end{equation}
where $\Box^{\C}_{A'B'} = \tfrac{1}{2}\epsilon^{AB}[\C_{a},\C_{b}] = \C_{A(A'}\C^{A}_{B')}$ 
and $\Box^{\C}_{AB} = \tfrac{1}{2}\epsilon^{A'B'}[\C_{a},\C_{b}] = \C_{A'(A}\C^{A'}_{B)}$.
Performing an analogous SD-ASD decomposition for each term in the right hand side of \eqref{DecompositionCurvatureC0}, 
we get the following formulas:
\begin{subequations}
\begin{align}
 & \Box^{\C}_{A'B'} = \Box^{\rm f}_{A'B'}+w\nabla_{A(A'}f^{A}_{B')}+p\nabla_{A(A'}P^{A}_{B')}, \\
 & \Box^{\C}_{AB} = \Box^{\rm f}_{AB}+w\nabla_{A'(A}f^{A'}_{B)}+p\nabla_{A'(A}P^{A'}_{B)}.
\end{align}
\end{subequations}
Therefore, if $\psi_{C'...M'}$ and $\varphi_{C...M}$ have weights $(p,w)$ each, we have:
\begin{subequations}
\begin{align}
\nonumber  \Box^{\C}_{A'B'}\psi_{C'...M'} ={}& \tilde{X}^{\rm f}_{A'B'C'Q'}\psi^{Q'}{}_{D'...M'} + ... 
 + \tilde{X}^{\rm f}_{A'B'M'Q'}\psi_{C'...L'}{}^{Q'} \\
 & + \left[ w\nabla_{A(A'}f^{A}_{B')}+p\nabla_{A(A'}P^{A}_{B')} \right]\psi_{C'...M'},  \label{CurvatureCGeneral1} \\
\nonumber  \Box^{\C}_{AB}\psi_{C'...M'} ={}& \tilde{\Phi}^{\rm f}_{ABC'Q'}\psi^{Q'}{}_{D'...M'} + ... 
 + \tilde{\Phi}^{\rm f}_{ABM'Q'}\psi_{C'...L'}{}^{Q'} \\
 & + \left[ w\nabla_{A'(A}f^{A'}_{B)}+p\nabla_{A'(A}P^{A'}_{B)} \right]\psi_{C'...M'},  \label{CurvatureCGeneral2} \\
\nonumber  \Box^{\C}_{A'B'}\varphi_{C...M} ={}& \Phi^{\rm f}_{A'B'CQ}\varphi^{Q}{}_{D...M} + ... 
 + \Phi^{\rm f}_{A'B'MQ}\varphi_{C...L}{}^{Q} \\
 & + \left[ w\nabla_{A(A'}f^{A}_{B')}+p\nabla_{A(A'}P^{A}_{B')} \right]\varphi_{C...M},  \label{CurvatureCGeneral3} \\
\nonumber  \Box^{\C}_{AB}\varphi_{C...M} ={}& X^{\rm f}_{ABCQ}\varphi^{Q}{}_{D...M} + ... 
 + X^{\rm f}_{ABMQ}\varphi_{C...L}{}^{Q} \\
 & + \left[ w\nabla_{A'(A}f^{A'}_{B)}+p\nabla_{A'(A}P^{A'}_{B)} \right]\varphi_{C...M},  \label{CurvatureCGeneral4} 
\end{align}
\end{subequations}
where $\tilde{X}^{\rm f}_{A'B'C'D'}$, $\tilde{\Phi}^{\rm f}_{ABC'D'}$, $X^{\rm f}_{ABCD}$ and $\Phi^{\rm f}_{A'B'CD}$
are the curvature spinors of the Weyl connection $\nabla^{\rm f}_{a}$: 
in terms of the curvature of an arbitrary Levi-Civita connection in the conformal class, we have:
\begin{subequations}
\begin{align}
 & \tilde{X}^{\rm f}_{A'B'C'D'} = \tilde{X}_{A'B'C'D'} + \epsilon_{(A'|D'|}\left[ \nabla_{B')C}f^{C}_{C'}+f_{B')C}f^{C}_{C'} \right], \\
 & \tilde{\Phi}^{\rm f}_{ABC'D'} = \Phi_{ABC'D'} -\nabla_{D'(A}f_{B)C'} + f_{D'(A}f_{B)C'}, \\
 & X^{\rm f}_{ABCD} =  X_{ABCD} + \epsilon_{(A|D|}\left[ \nabla_{B)C'}f^{C'}_{C}+f_{B)C'}f^{C'}_{C} \right], \\
 & \Phi^{\rm f}_{A'B'CD} = \Phi_{A'B'CD} -\nabla_{D(A'}f_{B')C} + f_{D(A'}f_{B')C}.
\end{align}
\end{subequations}
Notice that these formulas are more general than the ones seen in appendix \ref{App:WeylConnections}, 
since here we do not assume that the 1-form $f_{a}$ is closed.

\smallskip
\noindent
Another useful expression for $\Box^{\mathcal{C}}_{A'B'}$ and $\Box^{\mathcal{C}}_{AB}$ is
in terms of $\tC_{A'}$, $\C_{A'}$: a short calculation gives
\begin{subequations}
\begin{align}
& \Box^{\mathcal{C}}_{A'B'} = \tilde{\mathcal{C}}_{(A'}\mathcal{C}_{B')} - \mathcal{C}_{(A'}\tilde{\mathcal{C}}_{B')}, 
 \label{IdentitySDCC} \\
& \Box^{\mathcal{C}}_{AB}= o_{A}o_{B}(\mathcal{C}_{A'}\mathcal{C}^{A'}-\sigma^{1}_{A'}\tilde{\mathcal{C}}^{A'}) 
 +\iota_{A}\iota_{B}(\tilde{\mathcal{C}}_{A'}\tilde{\mathcal{C}}^{A'}+\sigma^{o}_{A'}\mathcal{C}^{A'}) 
-o_{(A}\iota_{B)}(\tilde{\mathcal{C}}_{A'}\mathcal{C}^{A'}+\mathcal{C}_{A'}\tilde{\mathcal{C}}^{A'}),
 \label{IdentityASDCC}
\end{align}
\end{subequations}

\smallskip
\noindent
Contractions of $\tC_{A'}$ and $\C_{A'}$ can be expressed in terms of curvature and 
the wave operator $\Box^{\C}=g^{ab}\C_{a}\C_{b}$:
\begin{subequations}
\begin{align}
 & \tilde{\mathcal{C}}^{A'}\tilde{\C}_{A'}=-\Box^{\C}_{00}-\sigma^{oA'}\C_{A'}, \label{tildeC-tildeC}  \\
 & \C^{A'}\tilde{\C}_{A'}=\tfrac{1}{2}\Box^{\C}-\Box^{\C}_{01}, \label{C-tildeC}  \\
 & \tilde\C^{A'}\C_{A'}=-\tfrac{1}{2}\Box^{\C}-\Box^{\C}_{01}, \label{tildeC-C} \\
 & \C^{A'}\C_{A'}=-\Box^{\C}_{11}+\sigma^{1A'}\tilde\C_{A'}. \label{C-C}
\end{align}
\end{subequations}
($\Box^{\C}_{00}$, etc. are defined according to \eqref{Notation01}). 
One can show that for a scalar field $\Phi$ with weights $(p,w)$, it holds:
\begin{equation}
 \C^{A'}\tilde{\C}_{A'}\Phi = \tfrac{1}{2}(\Box^{\C}-p\mathcal{R}/4)\Phi. \label{C-tC-Scalars}
\end{equation}

\smallskip
\noindent
If $\phi_{A'...K'}$ and $\varphi_{A...K}$ are totally symmetric, and with weights $p=0$, $w=-1$ each, then:
\begin{subequations}
\begin{align}
 \nabla_{A}{}^{A'}\phi_{A'...K'} = \mathcal{C}_{A}{}^{A'}\phi_{A'...K'}, \label{identityRHF} \\
 \nabla_{A'}{}^{A}\varphi_{A...K} = \mathcal{C}_{A'}{}^{A}\varphi_{A...K}. \label{identityLHF}
\end{align}
\end{subequations}

\subsubsection*{Yang-Mills}

The connection $\C_{a}$ is generalized to $\Cym_{a}$ when we study Yang-Mills fields, as we describe in 
section \ref{Sec:YangMills}.
The curvature $[\Cym_{a},\Cym_{b}]$ decomposes into SD and ASD parts analogously to \eqref{SplittingCurvatureC}, 
so one has the operators $\Box^{\Cym}_{A'B'}=\Cym_{A(A'}\Cym^{A}_{B')}$ and $\Box^{\Cym}_{AB}=\Cym_{A'(A}\Cym^{A'}_{B)}$.
For arbitrary $\mu_{i}{}^{j}$, we have:
\begin{subequations}
\begin{align}
 & \Box^{\Cym}_{A'B'}\mu_{i}{}^{j} = \Box^{\C}_{A'B'}\mu_{i}{}^{j} - \chi_{A'B'i}{}^{k}\mu_{k}{}^{j} + \chi_{A'B'k}{}^{j}\mu_{i}{}^{k}, 
 \label{SDBoxYM} \\
 & \Box^{\Cym}_{AB}\mu_{i}{}^{j} = \Box^{\C}_{AB}\mu_{i}{}^{j} - \varphi_{ABi}{}^{k}\mu_{k}{}^{j} + \varphi_{ABk}{}^{j}\mu_{i}{}^{k}.
 \label{ASDBoxYM}
\end{align}
\end{subequations}

\smallskip
\noindent
Similarly to \eqref{tildeC-tildeC}--\eqref{C-C}, we have the general identities
\begin{subequations}
\begin{align}
 & \tilde\Cym^{A'}\tilde\Cym_{A'}=-\Box^{\Cym}_{00}-\sigma^{oA'}\Cym_{A'}, \label{tCym-tCym}  \\
 & \Cym^{A'}\tilde{\Cym}_{A'}=\tfrac{1}{2}\Box^{\Cym}-\Box^{\Cym}_{01}, \label{Cym-tCym} \\
 & \tilde\Cym^{A'}\Cym_{A'}=-\tfrac{1}{2}\Box^{\Cym}-\Box^{\Cym}_{01}, \label{tCym-Cym} \\
 & \Cym^{A'}\Cym_{A'}=-\Box^{\Cym}_{11}+\sigma^{1A'}\tilde\Cym_{A'}. \label{Cym-Cym}
\end{align}
\end{subequations}

\subsubsection*{Some GHP expressions}

When acting on {\em scalar} fields with weights $(p,w)$, the wave operator $\Box^{\C}$ can be expressed in GHP notation as:
\begin{align}
\nonumber \Box^{\C} ={}& 2[(\tho-p\rho-\tilde\rho)(\tho'-\rho')-(\edt-p\t-\tilde\t')(\edt'-\t')]\\
\nonumber &+2(w+1)(\rho\tho'+\rho'\tho-\t\edt'-\t'\edt-2\L-\Psi_2) \\
& -2(w+1)(p-w)(\rho\rho'-\t\t')+3p\Psi_2+2(p-1)(\k\k'-\sigma\sigma'). \label{boxC1}
\end{align}
Alternatively, commuting the GHP operators: 
\begin{align}
\nonumber \Box^{\C} ={}& 2[(\tho'+p\rho'-\tilde\rho')(\tho-\rho)-(\edt'+p\t'-\tilde\t)(\edt-\t)]\\
\nonumber &-2(p-(w+1))(\rho'\tho+\rho\tho'-\t'\edt-\t\edt'-2\L-\Psi_2) \\
& +2(p-(w+1)(p-w))(\rho\rho'-\t\t')-3p\Psi_2-2(w+1)(\k\k'-\sigma\sigma'). \label{boxC2} 
\end{align}

\subsection{The case in which $o^{A}$ is shear-free and a repeated principal spinor}\label{App:SFR}

Here we specialize the general identities given in appendix \ref{App:curvatureC} to the case in 
which $o^{A}$ satisfies \eqref{SFR} and \eqref{rPND}.
In addition, the other spinor field $\iota^{A}$ in the definition of $\C_{a}$ is chosen so that 
the Lee form is a gradient (see proposition \ref{Prop:LeeForm}):
\begin{equation}
 f_{a} = \partial_{a}\log\phi. \label{fgradientAppendix}
\end{equation}
In the general curvature identities \eqref{CurvatureCGeneral1}--\eqref{CurvatureCGeneral4}, eq. 
\eqref{fgradientAppendix} implies $\tilde{\Phi}^{\rm f}_{ABC'D'}=\Phi^{\rm f}_{C'D'AB}$, and
\begin{equation}
 \nabla_{A(A'}f^{A}_{B')}=0=\nabla_{A'(A}f^{A'}_{B)}. \label{CurvatureOfLeeForm0}
\end{equation}

\subsubsection*{The operators $\Box^{\C}_{A'B'}$ and $\Box^{\C}_{AB}$}

The fact that $\C_{AA'}o_{B}=0$ allows to describe the curvature of $\mathcal{C}_{AA'}$ entirely in terms of the 
curvature spinors of the Weyl-connection $\nabla^{\rm f}_{a}$, as we now show.
Applying another $\mathcal{C}_{a}$ derivative to $\C_{b}o_{C}=0$ and taking the commutator:
\begin{subequations}
\begin{align}
 0 ={}& \Box^{\C}_{A'B'}o_{C} = \Phi^{\rm f}_{A'B'CD}o^{D} +(\nabla_{A(A'}P^{A}_{B')})o_{C}, \\
 0 ={}& \Box^{\C}_{AB}o_{C} = X^{\rm f}_{ABCD}o^{D} + (\nabla_{A'(A}P^{A'}_{B)})o_{C},
\end{align}
\end{subequations}
from where we deduce:
\begin{subequations}
\begin{align}
 \Phi^{\rm f}_{A'B'01} ={}& - \nabla_{A(A'}P^{A}_{B')}, \label{oiPhiw} \\
 X^{\rm f}_{AB01} ={}& -\nabla_{A'(A}P^{A'}_{B)}, \label{oiXw} \\
 \Phi^{\rm f}_{A'B'00} ={}& 0, \label{ooPhiw} \\
 X^{\rm f}_{AB00} ={}& 0, \label{ooXw}
\end{align}
\end{subequations}
and therefore, from \eqref{oiPhiw}-\eqref{oiXw}:
\begin{equation}
 2\partial_{[a}P_{b]} = -(\epsilon_{A'B'}X^{\rm f}_{AB01}+\epsilon_{AB}\Phi^{\rm f}_{A'B'01}). \label{CurvatureOfP}
\end{equation}
Furthermore, \eqref{ooXw} can be written as
\begin{equation}
 \Psi_{ABCD}o^{C}o^{D} = -2\Lambda^{\rm f}o_{A}o_{B},
\end{equation}
which implies, in particular:
\begin{equation}
 \Psi_{2} = -2\Lambda^{\rm f}. \label{Psi2Lambdaw}
\end{equation}
Thus, recalling that the `scalar curvature' \eqref{ScalarCurvatureC} of $\C_{a}$ 
is $\mathcal{R}=24\Lambda^{\rm f}$, we get:
\begin{equation}
 \mathcal{R} = -12\Psi_{2}. \label{ScalarCurvatureCSpecial}
\end{equation}

\medskip
\noindent
The action of the operators $\Box^{\mathcal{C}}_{A'B'}$ and $\Box^{\mathcal{C}}_{AB}$ is then:
\begin{subequations}
\begin{align}
\Box^{\C}_{A'B'}\psi_{C'...M'} ={}& \tilde{X}^{\rm f}_{A'B'C'Q'}\psi^{Q'}{}_{D'...M'} + ... 
 + \tilde{X}^{\rm f}_{A'B'M'Q'}\psi_{C'...L'}{}^{Q'} -p\Phi^{\rm f}_{A'B' 01}\psi_{C'...M'},  \label{BoxCSpecial1} \\
\Box^{\C}_{AB}\psi_{C'...M'} ={}& \Phi^{\rm f}_{ABC'Q'}\psi^{Q'}{}_{D'...M'} + ... 
 + \Phi^{\rm f}_{ABM'Q'}\psi_{C'...L'}{}^{Q'} -pX^{\rm f}_{AB01}\psi_{C'...M'},  \label{BoxCSpecial2}  \\
\Box^{\C}_{A'B'}\varphi_{C...M} ={}& \Phi^{\rm f}_{A'B'CQ}\varphi^{Q}{}_{D...M} + ... 
 + \Phi^{\rm f}_{A'B'MQ}\varphi_{C...L}{}^{Q} -p\Phi^{\rm f}_{A'B' 01}\varphi_{C...M},  \label{BoxCSpecial3} \\
\Box^{\C}_{AB}\varphi_{C...M} ={}& X^{\rm f}_{ABCQ}\varphi^{Q}{}_{D...M} + ... 
 + X^{\rm f}_{ABMQ}\varphi_{C...L}{}^{Q}  -pX^{\rm f}_{AB01}\varphi_{C...M}.  \label{BoxCSpecial4}
\end{align}
\end{subequations}

\subsubsection*{Commutators}

The commutator $[\tC_{A'},\C_{B'}]$ can be computed by noticing that
\begin{equation}
 [\tC_{A'},\C_{B'}] = o^{A}\iota^{B}[\C_{a},\C_{b}] = \Box^{\C}_{A'B'} + \epsilon_{A'B'}\Box^{\C}_{01}.
\end{equation}
Straightforward calculations then give:
\begin{subequations}
\begin{align}
\nonumber [\tC_{A'},\C_{B'}]\psi_{C'...M'} ={}& (\tilde{X}^{\rm f}_{A'B'C'Q'}+\epsilon_{A'B'}\Phi^{\rm f}_{01C'Q'})\psi^{Q'}{}_{D'...M'} \\
 & +...+(\tilde{X}^{\rm f}_{A'B'M'Q'}+\epsilon_{A'B'}\Phi^{\rm f}_{01M'Q'})\psi_{C'...L'}{}^{M'}
  -p(\Phi^{\rm f}_{B'A'01}+3\Lambda^{\rm f}\epsilon_{B'A'})\psi_{C'...M'}
 \label{CommutatorPS} \\
\nonumber [\tC_{A'},\C_{B'}]\varphi_{C...M} ={}& (\Phi^{\rm f}_{A'B'CQ}+\epsilon_{A'B'}X^{\rm f}_{01CQ})\varphi^{Q}{}_{D...M} \\
 & +...+(\Phi^{\rm f}_{A'B'MQ}+\epsilon_{A'B'}X^{\rm f}_{01MQ})\varphi_{C...L}{}^{M}
  -p(\Phi^{\rm f}_{B'A'01}+3\Lambda^{\rm f}\epsilon_{B'A'})\varphi_{C...M}.
 \label{CommutatorUS}
\end{align}
\end{subequations}
The contraction of \eqref{CommutatorPS} with $\epsilon^{B'M'}$ is also needed in this work:
\begin{align}
\nonumber [\tC_{A'},\C_{B'}]\psi_{C'...L'}{}^{B'} ={}& 
(\tilde{X}^{\rm f}_{A'B'C'Q'}+\epsilon_{A'B'}\Phi^{\rm f}_{01C'Q'})\psi^{Q'}{}_{D'...L'}{}^{B'} \\
\nonumber & +...+(\tilde{X}^{\rm f}_{A'B'L'Q'}+\epsilon_{A'B'}\Phi^{\rm f}_{01L'Q'})\psi_{C'...K'}{}^{Q'}{}_{M'} \\
 & -(p+1)(\Phi^{\rm f}_{B'A'01}+3\Lambda^{\rm f})\psi_{C'...L'}{}^{B'}. \label{ContractedCommutatorPS}
\end{align}
All these formulas can be checked by considering the simplest case of a spinor with only one index.

\medskip
\noindent
Taking into account that $\sigma^{1}_{A'}=\iota^{B}\mathcal{C}_{A'}\iota_{B}$, applying the identities above 
we find:
\begin{subequations}
\begin{align}
 \tC_{(A'}\sigma^{1}_{B')} ={}& \Phi^{\rm f}_{A'B'11}, \label{Phiwii} \\
 \tC_{A'}\sigma^{1 A'}  ={}& 2\Psi_{3}, \label{Psi3} \\
 \C_{A'}\sigma^{1 A'}  ={}& \Psi_{4}. \label{Psi4}
\end{align}
\end{subequations}

\subsubsection*{Bianchi identities}

The Bianchi identities are
\begin{subequations}
\begin{align}
 & \C_{B}{}^{A'}\tilde{X}^{\rm f}_{A'B'C'D'} = \C_{B'}{}^{A}\Phi^{\rm f}_{C'D'AB}, \label{BianchiXwt} \\
 & \C_{B'}{}^{A}X^{\rm f}_{ABCD} = \C_{B}{}^{A'}\Phi^{\rm f}_{CDA'B'}, \label{BianchiXw}
\end{align}
\end{subequations}
or in irreducible components:
\begin{subequations}
\begin{align}
 & \mathcal{C}_{B}{}^{A'}\tilde{\Psi}_{A'B'C'D'} = \mathcal{C}_{(B'}{}^{A}\Phi^{\rm f}_{C'D')AB}, \\
 & \mathcal{C}_{B'}{}^{A}\Psi_{ABCD} = \mathcal{C}_{(B}{}^{A'}\Phi^{\rm f}_{CD)A'B'}, \\
 & \mathcal{C}^{AC'}\Phi^{\rm f}_{ABC'D'} + 3\mathcal{C}_{BD'}\Lambda^{\rm f} = 0. \label{ContractedBianchiw}
\end{align}
\end{subequations}

\medskip
\noindent
Contracting \eqref{BianchiXwt} and \eqref{ContractedBianchiw} with $o^{B}$ and $\iota^{B}$, and taking into account \eqref{ooPhiw}:
\begin{subequations}
\begin{align}
 & \tC^{A'}\tilde{X}^{\rm f}_{A'B'C'D'} = -\tC_{B'}\Phi^{\rm f}_{C'D'01}, \label{BianchiXwto} \\
 & \C^{A'}\tilde{X}^{\rm f}_{A'B'C'D'}  = \C_{B'}\Phi^{\rm f}_{C'D'01} - \tC_{B'}\Phi^{\rm f}_{C'D'11}, \label{BianchiXwt1} \\
 & -\tC^{C'}\Phi^{\rm f}_{C'D'01} + 3\tC_{D'}\Lambda^{\rm f} = 0, \label{ContractedBianchiwo} \\
 & \C^{C'}\Phi^{\rm f}_{C'D'01}-\tC^{C'}\Phi^{\rm f}_{C'D'11}+3\C_{D'}\Lambda^{\rm f} = 0. \label{ContractedBianchiw1}
\end{align}
\end{subequations}
Eqs. \eqref{BianchiXwto} and \eqref{BianchiXwt1} can be expressed in terms of the Weyl spinor 
by taking the parts totally symmetric in $B'C'D'$.

\subsubsection*{Yang-Mills}

The Yang-Mills spinor curvature operators $\Box^{\Cym}_{A'B'}$ and $\Box^{\Cym}_{AB}$ can 
be expressed in terms of $\tilde\Cym_{A'}$, $\Cym_{A'}$ analogously to \eqref{IdentitySDCC}-\eqref{IdentityASDCC}. 
For the case that $\sigma^{o}_{A'}=0$ and $\tilde\Cym_{A'}\tilde\Cym^{A'}=0$, we have:
\begin{subequations}
\begin{align}
 \Box^{\Cym}_{A'B'} ={}& \tilde\Cym_{(A'}\Cym_{B')} - \Cym_{(A'}\tilde\Cym_{B')}, \label{BoxYMSD0} \\
 \Box^{\Cym}_{AB}={}& o_{A}o_{B}(\Cym_{A'}\Cym^{A'}-\sigma^{1}_{A'}\tilde\Cym^{A'}) 
 -o_{(A}\iota_{B)}(\tilde\Cym_{A'}\Cym^{A'}+\Cym_{A'}\tilde\Cym^{A'}), \label{BoxYMASD0}.
\end{align}
\end{subequations}
The explicit action on a field $\mu^{i}$ with weights $(p,w)$ is:
\begin{subequations}
\begin{align}
 \Box^{\Cym}_{A'B'}\mu^{i} ={}& -p\Phi^{\rm f}_{A'B'01}\mu^{i}+\chi_{A'B'j}{}^{i}\mu^{j}, 
 \label{BoxYMSD1}\\
 \Box^{\Cym}_{AB}\mu^{i} ={}& -pX^{\rm f}_{AB01}\mu^{i}+\varphi_{ABj}{}^{i}\mu^{j}.
 \label{BoxYMASD1}
\end{align}
\end{subequations}

\medskip
\noindent
The commutator \eqref{CommutatorPS} acting on a field $\mu_{C'i}{}^{j}$, with arbitrary weights $(p,w)$, is
\begin{align}
\nonumber [\tilde\Cym_{A'},\Cym_{B'}]\mu_{C'i}{}^{j} ={}& [\tC_{A'},\C_{B'}]\mu_{C'i}{}^{j}
 -\chi_{A'B'i}{}^{k}\mu_{C'k}{}^{j}+\chi_{A'B'k}{}^{j}\mu_{C'i}{}^{k} \\
  & +\epsilon_{A'B'} \left( - \varphi_{01i}{}^{k}\mu_{C'k}{}^{j} + \varphi_{01k}{}^{j}\mu_{C'i}{}^{k} \right).
\end{align}
Contracting with $\epsilon^{B'C'}$:
\begin{align}
\nonumber [\tilde\Cym_{A'},\Cym_{B'}]\mu^{B'}{}_{i}{}^{j} ={}& 
 -(p+1)\left( \Phi^{\rm f}_{01A'B'}\mu^{B'}{}_{i}{}^{j} +3\Lambda^{\rm f}\mu_{A'i}{}^{j} \right) \\
 & -\chi_{A'B'i}{}^{k}\mu^{B'}{}_{k}{}^{j} + \chi_{A'B'k}{}^{j}\mu^{B'}{}_{i}{}^{k} 
 + \varphi_{01i}{}^{k}\mu_{A'k}{}^{j} - \varphi_{01k}{}^{j}\mu_{A'i}{}^{k}. \label{IdentityCommutatorYM}
\end{align}

\subsubsection*{Einstein-Weyl spaces}

Recall that the Einstein-Weyl equations are $\Phi^{\rm w}_{ABC'D'}=0$.
Using the Bianchi identities \eqref{BianchiAuxWeylPsi}, this gives $\C_{B'}{}^{A}(\mr\Omega^{-1}\Psi_{ABCD})=0$. 
These can be expressed in components as follows (using Newman-Penrose notation for the Weyl scalars):
\begin{subequations}
\begin{align}
 & \tC_{B'}(\mr\Omega^{-1}\Psi_{2}) =0, \label{BianchiComponents1} \\
 & \C_{B'}(\mr\Omega^{-1}\Psi_{2})-\tC_{B'}(\mr\Omega^{-1}\Psi_{3}) = 0, \\
 & \C_{B'}(\mr\Omega^{-1}\Psi_{3})-3\sigma^{1}_{B'}\mr\Omega^{-1}\Psi_{2}-\tC_{B'}(\mr\Omega^{-1}\Psi_{4}) = 0.
\end{align}
\end{subequations}

\medskip
\noindent
Using \eqref{Psi2Lambdaw} and \eqref{BianchiComponents1}, 
we see that $\tC_{A'}(\mr\Omega^{-1}\Lambda^{\rm f})=0$, so
\begin{equation}
 \tC_{A'}\Lambda^{\rm f}=\mr\Omega^{-1}\Lambda^{\rm f}\tC_{A'}\mr\Omega.
\end{equation}
Recalling \eqref{CtCtOmega}, we get
\begin{equation}
 \tC_{A'}\tC_{B'}\Lambda^{\rm f}=0. \label{tCtCLambdaf}
\end{equation}

\section{Some details of calculations}\label{App:DetailsCalculations}

\subsection{The conformal HH equation \eqref{conformalHHeq}}\label{App:HH}

The commutator term in \eqref{EqWithCommutator3} is
\begin{equation}
[\mathcal{C}_{A'}, \tilde{\mathcal{C}}_{B'}\tilde{\mathcal{C}}_{C'}\tilde{\mathcal{C}}_{D'}]\tilde{\mathcal{C}}^{A'}\Phi =
-\tilde{\mathcal{C}}_{B'}\tilde{\mathcal{C}}_{C'}K^{1}_{D'}-\tilde{\mathcal{C}}_{B'}K^{2}_{C'D'}-K^{3}_{B'C'D'},
\end{equation}
where
\begin{align}
 K^{1}_{D'} ={}& [\tilde{\mathcal{C}}_{D'},\mathcal{C}_{A'}]\tilde{\mathcal{C}}^{A'}\Phi, \\
 K^{2}_{C'D'} ={}& [\tilde{\mathcal{C}}_{C'},\mathcal{C}_{A'}]\tilde{\mathcal{C}}_{D'}\tilde{\mathcal{C}}^{A'}\Phi, \\
 K^{3}_{B'C'D'} ={}& [\tilde{\mathcal{C}}_{B'},\mathcal{C}_{A'}]\tilde{\mathcal{C}}_{C'}\tilde{\mathcal{C}}_{D'}\tilde{\mathcal{C}}^{A'}\Phi.
\end{align}
Using identity \eqref{ContractedCommutatorPS} and taking into account that $p(\Phi)=-4$ and $p(\tilde{\mathcal{C}}_{A'})=1$, we get:
\begin{align}
 K^{1}_{D'} ={}& 2\Phi^{\rm f}_{01E'D'}\tC^{E'}\Phi+6\Lambda^{\rm f}\tC_{D'}\Phi, \\
 K^{2}_{C'D'} ={}& \tilde{\Psi}_{E'F'C'D'}\tC^{E'}\tC^{F'}\Phi + 2\Lambda^{\rm f}\tC_{C'}\tC_{D'}\Phi 
  + \epsilon_{C'D'}\Phi^{\rm f}_{01E'F'}\tC^{E'}\tC^{F'}\Phi, \\
 K^{3}_{B'C'D'} ={}&  2\tilde{\Psi}_{E'F'B'(C'}\tC_{D')}\tC^{E'}\tC^{F'}\Phi - 2\Lambda^{\rm f}\tC_{B'}\tC_{C'}\tC_{D'}\Phi 
  +2\Phi^{\rm f}_{01(C'}{}^{E'}\tC_{D')}\tC_{B'}\tC_{E'}\Phi
\end{align}
Replacing, we have:
\begin{equation}
\C_{A'}\tC_{B'}\tC_{C'}\tC_{D'}\tC^{A'}\Phi = \tC_{B'}\tC_{C'}\tC_{D'} \left( \C_{A'}\tC^{A'}\Phi-6\Lambda^{\rm f}\Phi\right) + N_{B'C'D'},
 \label{AuxConformalHHeq}
\end{equation}
where for convenience we defined
\begin{align}
\nonumber N_{B'C'D'} \equiv &{} -\tC_{(B'} \left( \tilde{\Psi}_{C'D')E'F'}\tC^{E'}\tC^{F'}\Phi \right) 
  -2\tilde{\Psi}_{E'F'(B'C'}\tC_{D')}\tC^{E'}\tC^{F'}\Phi \\
\nonumber & +6\tC_{(B'}\tC_{C'} \left( \Phi\tC_{D')}\Lambda^{\rm f} \right) -2(\tC_{(B'}\Lambda^{\rm f})(\tC_{C'}\tC_{D')}\Phi) \\
 &-2\tC_{(B'}\tC_{C'}\left( \Phi^{\rm f}_{D')E'01}\tC^{E'}\Phi \right) +2\Phi^{\rm f}_{E'(B' 01}\tC_{C'}\tC_{D')}\tC^{E'}\Phi.
\end{align}
The idea now is to express everything in terms of $\tilde\Psi_{A'B'C'D'}$ and its derivatives, by means of the Bianchi identities.
Using \eqref{tCtCLambdaf}, the terms with $\Lambda^{\rm f}$ simplify to $+4(\tC_{(B'}\Lambda^{\rm f})(\tC_{C'}\tC_{D')}\Phi)$, 
and upon expanding the two derivatives in the first term of the last line, the terms with $\Phi^{\rm f}_{A'B'01}$ are
$$-2(\tC_{(B'}\tC_{C'}\Phi^{\rm f}_{D')E'01})(\tC^{E'}\Phi) -4 (\tC_{(B'}\Phi^{\rm f}_{C'|E'|01})(\tC_{D')}\tC^{E'}\Phi).$$
The Bianchi identity \eqref{BianchiXwto} can be written as
\begin{equation}
 \tC^{A'}\tilde{\Psi}_{A'B'C'D'} = -\tC_{B'}\Phi^{\rm f}_{C'D'01}-2\epsilon_{B'(C'}\tC_{D')}\Lambda^{\rm f}. 
\end{equation}
Taking another derivative and symmetrizing: $\tC_{(A'}\tC_{B'}\Phi^{\rm f}_{C')D'01} = -\tC_{(A'}\tC^{E'}\tilde{\Psi}_{B'C')D'E'}$. 
Therefore, after replacing the resulting expressions for $\tC_{(B'}\tC_{C'}\Phi^{\rm f}_{D')E'01}$ and $\tC_{B'}\Phi^{\rm f}_{C'E'01}$,
we arrive at
\begin{align*}
 N_{B'C'D'} = &{} -\tC_{(B'} \left( \tilde{\Psi}_{C'D')E'F'}\tC^{E'}\tC^{F'}\Phi \right) -2\tilde{\Psi}_{E'F'(B'C'}\tC_{D')}\tC^{E'}\tC^{F'}\Phi \\
  & + 2(\tC_{(B'}\tC^{F'}\tilde{\Psi}_{C'D')E'F'})(\tC^{E'}\Phi) + 4(\tC^{F'}\tilde{\Psi}_{E'F'(B'C'})(\tC_{D')}\tC^{E'}\Phi).
\end{align*}
Now, we have:
\begin{align*}
 \tilde{\mathcal{C}}_{B'}\tilde{\mathcal{C}}_{C'}\tilde{\mathcal{C}}_{D'} 
 \left[ \mr\Omega(\tilde{\mathcal{C}}_{E'}\tilde{\mathcal{C}}_{F'}\Phi)(\tilde{\mathcal{C}}^{E'}\tilde{\mathcal{C}}^{F'}\Phi) \right] 
 ={}& 2\tC_{(B'}(\tilde{\Psi}_{C'D')E'F'}\tC^{E'}\tC^{F'}\Phi)+4\tilde{\Psi}_{E'F'(B'C'}\tC_{D')}\tC^{E'}\tC^{F'}\Phi \\
 & +4\tilde{\Psi}_{E'F'(C'D'}(\mr\Omega^{-1}\tC_{D')}\mr\Omega)(\tC^{E'}\tC^{F'}\Phi) \\
 & +6(\tC_{(B'}\mr\Omega)(\tC_{C'}\tC^{E'}\tC^{F'}\Phi)(\tC_{D')}\tC_{E'}\tC_{F'}\Phi),
\end{align*}
and 
\begin{align*}
 \tC_{B'}\tC_{C'}\tC_{D'} \left[ (\tC^{E'}\mr\Omega)(\tC^{F'}\Phi)(\tC_{E'}\tC_{F'}\Phi) \right] ={}& 
 \tilde{\Psi}_{E'F'(B'C'}(\mr\Omega^{-1}\tC_{D')}\mr\Omega)(\tC^{E'}\tC^{F'}\Phi) \\
 & +\tfrac{3}{2}(\tC_{(B'}\mr\Omega)(\tC_{C'}\tC^{E'}\tC^{F'}\Phi)(\tC_{D')}\tC_{E'}\tC_{F'}\Phi) \\
 & +(\tC_{(B'}\tC^{F'}\tilde{\Psi}_{C'D')E'F'})(\tC^{E'}\Phi) \\
 & +2 (\tC^{E'}\tilde{\Psi}_{E'F'(B'C'})(\tC_{D')}\tC^{F'}\Phi).
\end{align*}
Therefore:
\begin{equation}
 N_{B'C'D'} = \tC_{B'}\tC_{C'}\tC_{D'} \left[ -\tfrac{1}{2}\mr\Omega(\tC_{E'}\tC_{F'}\Phi)(\tC^{E'}\tC^{F'}\Phi) 
+2(\tC^{E'}\mr\Omega)(\tC^{F'}\Phi)(\tC_{E'}\tC_{F'}\Phi) \right].
\end{equation}
Thus, the right hand side of \eqref{AuxConformalHHeq} is $\tC_{B'}\tC_{C'}\tC_{D'}(-K)$, where
\begin{equation}
 -K \equiv \C_{A'}\tC^{A'}\Phi-6\Lambda^{\rm f}\Phi-\tfrac{1}{2}\mr\Omega(\tC_{E'}\tC_{F'}\Phi)(\tC^{E'}\tC^{F'}\Phi) 
+2(\tC^{E'}\mr\Omega)(\tC^{F'}\Phi)(\tC_{E'}\tC_{F'}\Phi).
\end{equation}
Replacing the expressions for $\C_{A'}\tC^{A'}$ and $\Lambda^{\rm f}$, the conformal HH equation \eqref{conformalHHeq}
is obtained.

\subsection{Proof of \eqref{CommutatorForSigma1}}\label{App:Sigma1}

We have:
\begin{align}
 \C_{A'}\tC_{B'}\tC_{C'}\theta_{D'}{}^{A'} = \tC_{B'}\tC_{C'} \C_{A'}\theta_{D'}{}^{A'}
 -\tC_{B'}[\tC_{C'},\C_{A'}]\theta_{D'}{}^{A'} - [\tC_{B'}, \C_{A'}]\tC_{C'}\theta_{D'}{}^{A'},
 \label{AuxCommutatorSigma1}
\end{align}
The commutators can be computed
using \eqref{ContractedCommutatorPS}, \eqref{IdCurvC1}, \eqref{IdCurvC3}, and $p(\theta_{D'}{}^{A'})=-2$:
\begin{align*}
 [\tC_{C'},\C_{A'}]\theta_{D'}{}^{A'} ={}& (\tC_{C'}\tilde{\gamma}_{F'D'E'})\theta^{E'F'} + (\tC_{C'}\gamma_{E'01})\theta_{D'}{}^{E'}, \\
 [\tC_{B'},\C_{A'}]\tC_{C'}\theta_{D'}{}^{A'} ={}& (\tC_{B'}\tilde{\gamma}_{E'C'F'})(\tC^{F'}\theta_{D'}{}^{E'})
 +(\tC_{B'}\tilde{\gamma}_{E'D'F'})(\tC_{C'}\theta^{F'E'}).
\end{align*}
Using the identities $\tilde{\gamma}_{F'D'E'}=\epsilon_{F'D'}\tilde{\gamma}_{A'E'}{}^{A'}+\tilde{\gamma}_{D'F'E'}$ 
and $\tC_{C'}\tilde{\gamma}_{A'E'}{}^{A'} = -\tC_{C'}\gamma_{E'01}$, we get
\begin{align*}
 \tC_{B'}[\tC_{C'},\C_{A'}]\theta_{D'}{}^{A'} ={}& \tC_{B'}[(\tC_{C'}\tilde{\gamma}_{D'E'F'})\theta^{E'F'}] \\
 ={}& \tC_{B'}\tC_{C'}(\tilde{\gamma}_{D'E'F'}\theta^{E'F'})-\tC_{B'}(\tilde{\gamma}_{D'E'F'}\tC_{C'}\theta^{E'F'}).
\end{align*}
Then the terms with commutators in \eqref{AuxCommutatorSigma1} are:
\begin{align}
\tC_{B'}[\tC_{C'},\C_{A'}]\theta_{D'}{}^{A'} + [\tC_{B'}, \C_{A'}]\tC_{C'}\theta_{D'}{}^{A'} =
 \tC_{B'}\tC_{C'}(\tilde{\gamma}_{D'E'F'}\theta^{E'F'}) + Z_{B'C'D'}, 
 \label{AuxCommutatorSigma1-2}
\end{align}
where 
\begin{align*}
 Z_{B'C'D'} ={}& -(\tC_{B'}\tilde{\gamma}_{D'E'F'})(\tC_{C'}\theta^{E'F'}) - \tilde{\gamma}_{D'E'F'}\tC_{B'}\tC_{C'}\theta^{E'F'}\\
 & +(\tC_{B'}\tilde{\gamma}_{E'C'F'})(\tC^{F'}\theta_{D'}{}^{E'}) + (\tC_{B'}\tilde{\gamma}_{E'D'F'})(\tC_{C'}\theta^{E'F'}).
\end{align*}
Using now the fact that $\tilde{\gamma}_{D'E'F'}=2\tC_{(E'}\theta_{F')D'}$ and replacing in $Z_{B'C'D'}$, 
a tedious but straightforward calculation leads to $Z_{B'C'D'}\equiv 0$.
Replacing then \eqref{AuxCommutatorSigma1-2} into \eqref{AuxCommutatorSigma1},
equation \eqref{CommutatorForSigma1} follows.

\end{document}